\newtheorem{definition}{Definition}
\newtheorem{theorem}{Theorem}[section]
\newtheorem{lemma}[theorem]{Lemma}
\newtheorem{proposition}{Proposition}
\newtheorem{corollary}{Corollary}[section]
\newcommand{\colorcaption}[2][]{%
  \begingroup%
  \renewcommand{\@caption@fignum@sep}{ (color online). }%
  \caption[#1]{#2}%
  \endgroup%
}
\renewcommand{\maketag@@@}[1]{\hbox{\m@th\normalsize\normalfont#1}}%
\begin{document}

\title{Analytic exposition of the graviton modes in fractional quantum Hall effects\\ and its physical implications}

\author{Wang Yuzhu}
\email{yuzhu001@e.ntu.edu.sg}
\affiliation{School of Physical and Mathematical Sciences, Nanyang Technological University, 639798 Singapore}

\author{Yang Bo}
\email{yang.bo@ntu.edu.sg}
\affiliation{School of Physical and Mathematical Sciences, Nanyang Technological University, 639798 Singapore}
\affiliation{Institute of High Performance Computing, A*STAR, 138632 Singapore}

\date{\today}

\begin{abstract}
Neutral excitations in a fractional quantum Hall droplet define the incompressibility gap of the topological phase. In this work, we derived a set of analytical results for the energy gap of the graviton modes with two-body and three-body Hamiltonians in both the long-wavelength and thermodynamic limit. These allow us to construct model Hamiltonians for the graviton modes in different FQH phases, and to elucidate a hierarchical structure of conformal Hilbert spaces (null spaces of model Hamiltonians) with respect to the graviton modes and their corresponding ground states. Using the analytical tools developed, we perform numerical analysis with a particular focus on the Laughlin $\nu=1/5$ and the Gaffnian $\nu=2/5$ phases. Our calculation shows that for gapped phases, low-lying neutral excitations can undergo a ``phase transition" even when the ground state is invariant. We discuss about the compressibility of the Gaffnian phase, the possibility of multiple graviton modes, and the transition from the graviton modes to the ``hollow-core" modes, as well as their experimental consequences.
\end{abstract}

\maketitle

\section{\label{sec:intro}Introduction}
The cryogenic two-dimensional electrons in a strong magnetic field can form an incompressible quantum fluid from strong interactions, leading to a large number of fractional quantum Hall (FQH) states describing a zoology of strongly correlated topological phases\cite{klitzing1980new,tsui1982two}. The characteristic fractional plateau of the Hall resistivity and the quantized thermal Hall conductance 
in these phases have been observed in experiments, revealing the topological nature of the quantum fluids\cite{PhysRevLett.101.246806, banerjee2018observation, banerjee2017observed}. The charged excitations in the FQH systems are predicted to carry fractional charge, with anyonic and even non-abelian statistics\cite{laughlin1983anomalous, clark1988experimental, de1998direct, saminadayar1997observation, reznikov1999observation, camino20073, nakamura2020direct, mcclure2012fabry, goldman1995resonant, radu2008quasi, lin2012measurements, venkatachalam2011local, PhysRevB.77.205310, moore1991nonabelions}. Under the right conditions, these topological properties are expected to be invariant against local disturbance, making them desirable for the robust manipulation of quantum information\cite{freedman2003topological, PhysRevLett.94.166802, PhysRevA.102.022607}. 

Strictly speaking, topological systems are ideal systems where all energy scales in the system are sent to either zero or infinity. In realistic experiments or materials, all energy scales are finite, thus the dynamical aspects involving low-lying excitations cannot be ignored. In FQH systems, the low-lying excitations not only determine the quantization of certain transport properties in the thermodynamic limit, they can even form topological quantum fluids of their own\cite{arovas1984fractional, blok1990effective, read1996quasiholes, baraban2009numerical}. The robustness of different topological indices (e.g. the Hall conductivity, the topological shift, and the central charge from the quasihole counting, etc.) can depend on different energy scales, and the interplay between the corresponding low-lying excitations leads to rich physics even within the same topological phase. 

Understanding the low-lying excitations in strongly correlated systems is, however, a difficult task due to the lack of theoretical tools available. One can numerically study the energy spectrum of only small systems given the exponential increase of the Hilbert space with the system size, and the extrapolation to the thermodynamic limit is oftentimes unreliable. Analytically, perturbative calculations are not viable for FQH systems, because the kinetic energy is quenched and we are left with a purely interacting Hamiltonian. A popular approach is to construct model wavefunctions of these excitations either from the Jack polynomial formalism and its generalisation, or using the composite fermion theory\cite{laughlin1981quantized, laughlin1983quantized, jain1989composite, PhysRevLett.100.246802, PhysRevLett.101.246806, bernevig2008generalized, Coimbatore_Balram_2021, PhysRevB.40.8079, PhysRevB.98.035127, PhysRevResearch.2.032035}. These model states are indispensable for finite size numerical analysis, as well as offering insights into the universal nature of these excitations. The construction of these states also hints at the tantalising possibility of analytical treatment for very large system sizes, but so far the ability to rigorously compute physically relevant quantities in the thermodynamic limit is mostly lacking.

In this paper, we focus on the neutral excitations of the FQH fluids, and derive a number of analytical results that are valid in the long-wavelength and thermodynamic limit. Such excitations define the incompressibility gap of the topological phase, and they can also be responsible for quantum phase transitions within the same topological order\cite{girvin1986magneto, girvin1987off, yang2020microscopic, bid2010observation, inoue2014proliferation}. In the long-wavelength limit, neutral excitations have a quadrupolar structure and can be understood as a ``spin-2 graviton", which are closely related to the geometric deformation of the incompressible ground state\cite{Haldane2011, Yang2013, Golkar2016, Luo2016}. We will thus denote excitations in this limit as quadrupole excitations or graviton modes. For short-range interactions, numerical analysis shows the energy of the graviton is large as compared to the charge gap given by the roton minimum\cite{PhysRevLett.108.256807}. However, it also seems possible to lower the graviton excitation energy by tuning the interaction between electrons, while maintaining the charge gap\cite{kang2017neutral, jolicoeur2014absence, yang2019effective, yang2021gaffnian}. The analytical results we derive in this work help us construct model Hamiltonians not only for the graviton modes, but also for capturing such transitions related to the graviton gap.

It is also important to note that the neutral excitations can be experimentally measured with inelastic photon scattering. This is especially true for graviton modes, since the momentum transfer of the photons is small\cite{pinczuk1993observation, pinczuk1994inelastic, pinczuk1998light, wurstbauer2013resonant, Liou2019}. Acoustic crystalline wave is also predicted to act like gravitational wave, which can interact with the graviton modes as a probe\cite{Yang2016}. Furthermore, an optimal-control-based variational quantum algorithm has been designed for realizing the graviton mode in quantum computers\cite{kirmani2021realizing}. There has also been much interest in the graviton mode recently due to its spin structure, allowing it to couple selectively to circularly polarized light, making them useful for experimentally distinguishing different topological phases\cite{Golkar2016}. In addition, it has been recently suggested that the coupling of the incompressible ground state to the graviton mode from geometric deformation can be responsible for the quench dynamics in FQH\cite{PhysRevLett.126.076604}. In the context of these experimental proposals and numerical results, Dirac composite fermion theory conjectures that certain FQH phases may have more than one graviton mode\cite{Haldane2021, Nguyen2021}. Thus both from the theoretical and experimental perspectives, analytic results of long-wavelength neutral excitations can help us understand the fundamental nature of the geometric aspects of the FQH topological phases.

The organization of this paper will be as follows. The characteristic matrix formalism for calculating the energy of the graviton modes with two-body pseudopotentials is reviewed and a novel hierarchy structure of the density modes of the Laughlin states is rigorously proved with this method in Sec.~\ref{sec_2bdy}. This allows us to introduce the model Hamiltonians for the graviton modes in several FQH phases. Sec.~\ref{sec_3bdy} provides a more general formalism for the graviton modes with the three-body interactions, which can be used to study the graviton modes of non-abelian FQH states such as the Moore-Read state and the Gaffnian state. In Sec.~\ref{sec_numerics} we perform numerical analysis of the low-lying excitations of the Laughlin $\nu = 1/5$ state (Laughlin-$1/5$ state for short) and the Gaffnian $\nu = 2/5$ state (Gaffnian state for short), with the theoretical tools we have developed. The experimental implications of the graviton modes in the Laughlin-$1/5$ phase are discussed when the model interaction is tuned, and we also discuss the gaplessness of the Gaffnian phase. In Sec.~\ref{sec_con} we summarise our results and discuss about the future outlooks.

\begin{table}[]
\renewcommand\arraystretch{1.5}
\centering
\begin{tabular}{|c|c|}%
\hline
$\hat R_i^a$              & Guiding center operator  \\[3pt] \hline
$\hat\rho_{\bm q}$        & Guiding center density operator  \\ \hline
$\delta \hat\rho_{\bm q}$ & Regularized guiding center density operator  \\ \hline
$V_{\bm q}$               & Effective potential  \\[3pt] \hline
$\tilde S_{\bm q}$        & Static structure factor  \\[3pt] \hline
$s_{\bm q}$               & Reduced structure factor  \\ \hline
$L_{k}\left(q^2\right)$    &  Laguerre polynomials  \\[3pt] \hline
$c^m$               &  Expansion coefficients of effective potential\\ \hline
$d^n$               &  Expansion coefficients of reduced structure factor\\ \hline
$\left| \psi_{\boldsymbol{q}}\right. \rangle$               &  SMA Model wavefunction\\ \hline
$\Gamma^{\text{2bdy}}_{m n}$              &  Two-body characteristic matrix\\ \hline
\end{tabular}
\caption{Definition of various symbols used in the text.}
\label{table2}
\end{table}

\section{Two-body interactions}\label{sec_2bdy}
Let us start with the dynamics of the long-wavelength neutral excitations for FQH states with two-body interactions . This formalism was first developed in Ref.\cite{yang2020microscopic}, and we will summarise the results here and set the notations. This will be followed by a number of new results that are relevant to model two-body interactions, with rigorous statements on the nature of the neutral excitations that are valid in the thermodynamic limit.

The most general Hamiltonian for two-body interaction is:
\begin{eqnarray}
\hat H_{\text{2bdy}}&=&\int \frac{d^{2} \boldsymbol{q}}{(2 \pi)^2} V_{\bm q}\sum_{i\neq j}e^{iq_a\left(\hat R_i^a-\hat R_j^a\right)}\nonumber\\
&=&\int \frac{d^{2} \boldsymbol{q}}{(2 \pi)^2} V_{\bm q}\hat\rho_{\bm q}\hat\rho_{-\bm q}-N_e\int \frac{d^{2} \boldsymbol{q}}{(2 \pi)^2} V_{\bm q}
\label{h0}
\end{eqnarray}
Here $N_e$ is the number of electrons, $\hat H_{\text{2bdy}}$ is a gapped two-body interaction with the guiding center density operator $\hat\rho_{\bm q}=\sum_ie^{i q_a\hat R_i^a}$ and $\bm q=(q_x,q_y)$, where $i$ is the electron index, and $\hat R_i^a$ is the guiding center coordinates satisfying the commutation relation $[\hat R_i^a,\hat R_j^b]=-i\epsilon^{ab}\delta_{ij} l_B^2$. The length scale in the problem is the magnetic length $l_B=\sqrt{1/eB}$, with the perpendicular magnetic field $B$. Throughout this work we assume translational invariance, so the ground state $| \psi_0\rangle $ of $\hat H_{\text{2bdy}}$ is both rotationally and translational invariant, with $\hat H_{\text{2bdy}}|\psi_0\rangle = E_0|\psi_0\rangle$.

The static structure factor for the unperturbed ground state is given by:
\begin{eqnarray}
\tilde S_{\bm q}=\frac{1}{N_e}\langle\psi_{0}|\hat\rho_{\bm q}\hat\rho_{-\bm q}|\psi_0\rangle=1+s_{\bm q}
\end{eqnarray}
where the reduced structure factor is its own Fourier transform, given by:
\begin{eqnarray}
s_{\bm q}&=&\frac{1}{N_e}\langle\psi_{0}|\sum_{i\neq j}e^{iq_a\left(\hat R_i^a-\hat R_j^a\right)}|\psi_0\rangle=-\int\frac{d^2q'}{2\pi}e^{i\bm q\times\bm q'}s_{\bm q'}\qquad
\label{ft}
\end{eqnarray}
Since the Laguerre polynomials $L_k\left(x\right)$ are eigenfunctions of the two-dimensional Fourier transform, we can thus expand $s_{\bm q}$ in the basis of the Laguerre-Gaussian series as follows:
\begin{eqnarray}
s_{\bm q}=d^k L_{k}\left(q^2\right) e^{-\frac{1}{2}q^2}\label{expand}
\end{eqnarray}
where Einstein summation rule is assumed throughout this work for repeated indices. Note that $d^k$ is only non-zero for odd $k$ due to the fermion statistics, and Eq.(\ref{ft}) and Eq.(\ref{expand}) is true for generic rotationally invariant many-body states in a single Landau level. The regularised structure factor can be defined using the regularised guiding center density operator:
\begin{equation}
\delta \hat{\rho}_{\boldsymbol{q}}=\hat{\rho}_{\boldsymbol{q}}-\left\langle\psi_{0}\left|\hat{\rho}_{\boldsymbol{q}}\right| \psi_{0}\right\rangle=\hat{\rho}_{\boldsymbol{q}}-\frac{N_{e}}{2 \pi q}\delta_{q,0}
\end{equation}
where $q = |\bm q|$. so we have the following:
\begin{eqnarray}
S_{\bm q}&=&\frac{1}{N_e}\langle\delta\hat\rho_{\bm q}\delta\hat\rho_{\bm q}\rangle_0=\tilde S_{\bm q}-\frac{1}{N_e}\langle\hat\rho_{\bm q}\rangle_0\langle\hat\rho_{-\bm q}\rangle_0 \label{sf0}\\
&=&1+\sum_{k=0}^\infty d^{\left(0\right)}_{2k+1} L_{2k+1}\left(q^2\right)e^{-\frac{1}{2}q^2}-\frac{1}{N_e}\langle\hat\rho_{\bm q}\rangle_0\langle\hat\rho_{-\bm q}\rangle_0\quad\quad
\label{sf}
\end{eqnarray}
and we define $\langle\hat O\rangle_0$ to be the expectation value of $\hat O$ with respect to the ground state.

It is natural to expand the effective potential $V_{\boldsymbol{q}}$ using the same orthogonal basis, which gives:
\begin{equation}
V_{\bm{q}}=c^{m} L_{m}(q^2) e^{-\frac{q^2}{2}}
\end{equation}
Due to the orthogonality between the Laguerre polynomials, the ground state energy has the following simple expression:
\begin{equation}
E_0 = c^m d^n\delta_{m,n}
\label{gsenergy}
\end{equation}
where $\delta_{m,n}$ is the Kronecker delta. Once the expansion coefficients are given, one can easily calculate the ground state energy. In general $d^n$ still needs to be computed numerically as the variational energy of the ground state with respect to the two-body Haldane pseudopotential $\hat V_n^{\text{2bdy}}=\int d^2q L_{n}(q^2) e^{-\frac{q^2}{2}}\sum_{i\neq j}e^{i q_a\left(\hat R_i^a-\hat R_j^a\right)}$. However for model wavefunctions, certain coefficients of $d^n$ are known exactly, since they are exact zero energy states of their respective model Hamiltonians.

\subsection{Energetics of the graviton modes}
It is well-known that the model wavefunctions for the graviton modes in the long-wavelength limit can be exactly constructed using the single mode approximation (SMA), which is defined as follows\cite{girvin1986magneto}:
\begin{equation}
\left| \psi_{\boldsymbol{q}}\right. \rangle = \delta \hat{\rho}_{\boldsymbol{q}} \left| \psi_{\boldsymbol{0}}\right. \rangle
\label{SMAstate}
\end{equation}
This family of wavefunctions is orthogonal to the ground state and at the same time retains some of the intrinsic correlation properties of the ground state. Our task is to calculate the variational energy of the graviton mode given by $\lim_{\bm q\rightarrow 0}|\psi_{\bm q}\rangle$:
\begin{equation}
\begin{aligned}
\delta E_{\boldsymbol{q}\rightarrow 0}&=\lim_{\bm q\rightarrow 0}\frac{\langle\psi_{\boldsymbol{q}}|\hat{H}_{\text{2bdy}} \left| \psi_{\boldsymbol{q}}\right. \rangle}{\left. \langle\psi_{\boldsymbol{q}} \right| \psi_{\boldsymbol{q}}\rangle}-E_{0}\\
&=\lim_{\bm q\rightarrow 0}\frac{\left\langle\psi_{0}\left| \left[\delta \hat{\rho}_{-\boldsymbol{q}},\left[\hat{H}_{\text{2bdy}}, \delta \hat{\rho}_{\boldsymbol{q}}\right]\right] \right| \psi_{0}\right\rangle}{2 S_{\boldsymbol{q}}}\\
&=\lim_{\bm q\rightarrow 0} \sum_{i \ne j}\int \frac{d^{2} \boldsymbol{q}^\prime}{(2 \pi)^2} V_{\boldsymbol{q}^\prime}\\ 
&\quad \times \frac{\left\langle\psi_{0}\left| \left[\delta \hat{\rho}_{-\boldsymbol{q}},\left[\hat{\rho}^i_{\boldsymbol{q}^\prime} \hat{\rho}^j_{\boldsymbol{-q}^\prime}, \delta \hat{\rho}_{\boldsymbol{q}}\right]\right] \right| \psi_{0}\right\rangle}{2 S_{\boldsymbol{q}}}
\end{aligned}
\end{equation}

To calculate the double commutator, recall the GMP algebra for the density operators\cite{girvin1986magneto}:
\begin{equation}
\left[\delta \hat{\rho}_{\boldsymbol{q}_{1}}, \delta \hat{\rho}_{\boldsymbol{q}_{2}}\right]  = 2 i \sin \frac{\boldsymbol{q}_{1} \times \boldsymbol{q}_{2}}{2} \delta \hat{\rho}_{\boldsymbol{q}_{1}+\boldsymbol{q}_{2}}
\end{equation}
This allows us to derive the final result, as shown in Ref.\cite{yang2020microscopic}:
\begin{equation}
\delta E_{\boldsymbol{q} \rightarrow 0}= \Gamma^{\text{2bdy}}_{m n} c^{m} d^{n} + O\left(q^{2}\right)
\label{e2b}
\end{equation}
where $m, n$ are both positive odd numbers because of the fermionic nature of electrons, and the characteristic matrix has an exact form:
\begin{equation}
\begin{aligned}
\Gamma^{\text{2bdy}}_{m n}
=&\frac{(-1)^{m}}{2^8 \eta \pi} \times [2(m^2+m+1)\delta_{m,n} \\
& -(m+1)(m+2)\delta_{m,n-2}-m(m-1) \delta_{m,n+2}]
\label{Gamma1}
\end{aligned}
\end{equation}
It is worth noticing that in this characteristic matrix, the expansion coefficients of the interaction and the structure factor are symmetric. Furthermore, only the ``nearest neighbors" ($n=m \pm 2$) are involved for any given $m$, which implies that the interplay between the components of different orders in the expansions of the interaction and the state is short-ranged. 

An immediate result is the variational energy of the graviton modes for the model Laughlin states at filling factor $\nu<1/3$:  for example, the variational energy of the graviton modes of the Laughlin-$1/5$ states (Laughlin-$1/5$ graviton mod for short) with respect to a generic two-body interaction can be written as ($d^1 = d^3 = 0$):
\begin{equation}
\begin{aligned}
\delta E_{\boldsymbol{q} \rightarrow 0}&=\sum_{m>3}\left(\Gamma^{\text{2bdy}}_{mm} c^m d^m+\Gamma^{\text{2bdy}}_{m\pm 2,m} c^{m\pm2} d^m\right)
\end{aligned}
\end{equation}
Thus with respect to the model Hamiltonian $V^{\text{2bdy}}_1$ (only $c^1 > 0$) the variational energy is exactly zero: the Laughlin-$1/5$ graviton mode is in the null space of $\hat V_1^{\text{2bdy}}$. 

\begin{figure}
\centering
\includegraphics[scale=0.28]{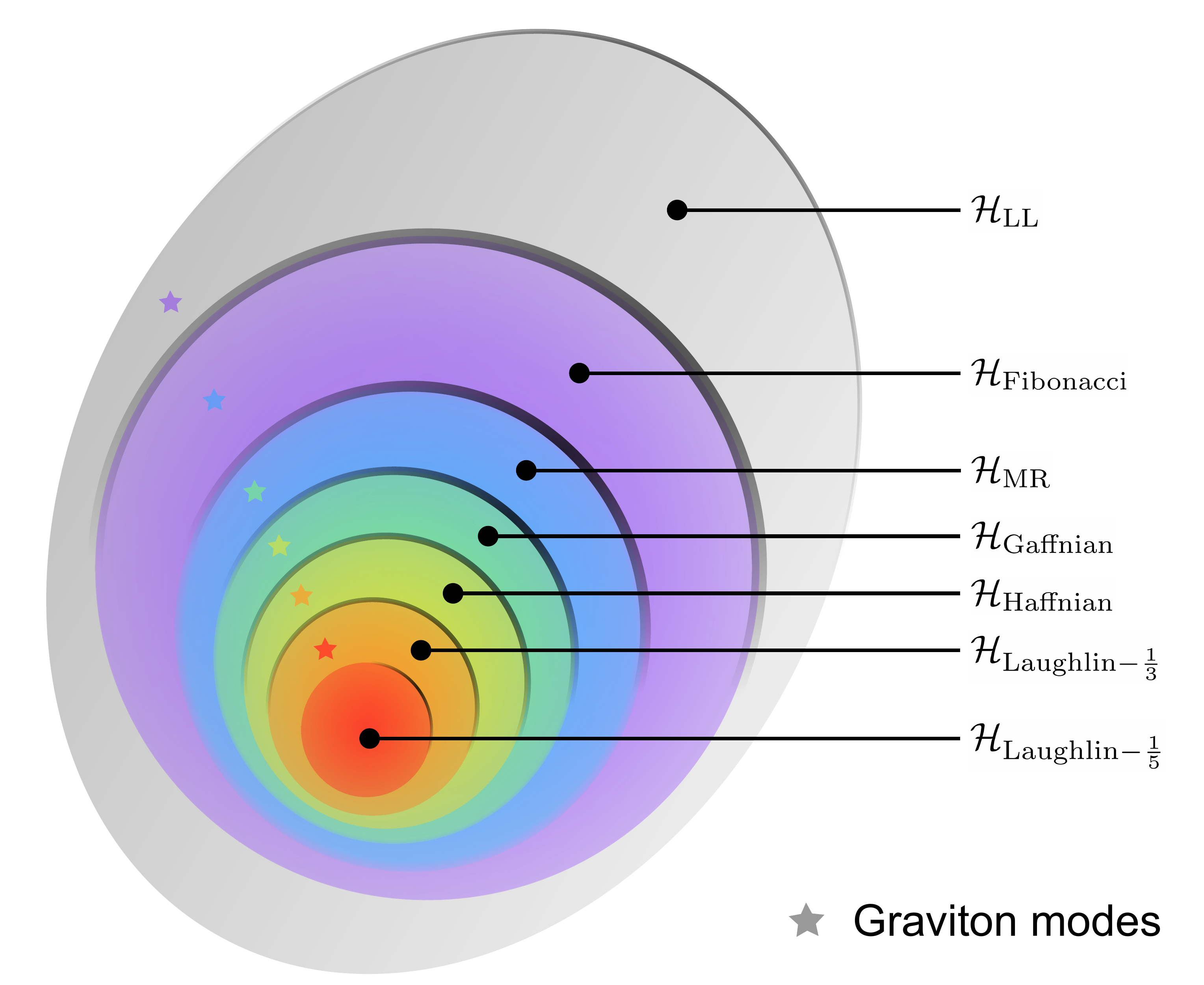}
\caption{\textbf{Hierarchy of the null spaces of different FQH states.} The null space of different model Hamiltonians can be organized as a hierarchical structure in the full Hilbert space. Meanwhile the density modes or more precisely, the graviton modes (denoted by the stars with the corresponding colors) constructed from model ground states live in the next null space. For example, the graviton modes of Laughlin-$1/3$ phase (orange star) live in the Haffnian null space (green circle), which contains the ground state and all the quasihole states of the Haffnian model Hamiltonian. It is efficient to verify this structure with the characteristic tensor formalism proposed in this work.}
\label{nullspace}
\end{figure}

\subsection{Model Hamiltonians for the graviton modes}

From both the theoretical and experimental point of view, it is useful to find the model Hamiltonians of which the graviton modes are the exact eigenstates. This will not only allow us to construct minimal models for phase transitions involving the softening of the graviton modes, but also help implement more systematic tuning of the realistic interactions in experiments for the observation of such transitions. For many topological phases, we can write down the root configurations of the graviton modes explicitly in the second quantized language. The model Hamiltonians can be identified accordingly using the numerically established local exclusion condition (LEC) formalism\cite{PhysRevB.100.241302, yang2021elementary}. We illustrate this useful procedure for the graviton modes of the Laughlin phase at filling factor $\nu=1/3$ and $\nu=1/5$. This will be followed by a rigorous proof for all the Laughlin states with $\nu < 1/3$. The proof for the $\nu = 1/3$ case involves three-body interactions, and will be given in the next section.

Let us start with a brief introduction of the Jack polynomial formalism on the spherical geometry. The pseudo-spin structure in the quantized Landau levels enables us to define second-quantized wavefunctions in an occupation-number-like basis for the many-body wavefunctions. The magnetic field can be introduced by putting a Dirac magnetic monopole at the center of the sphere with a total of $2S$ magnetic fluxes, rendering a spinor structure of the single particle the wavefunction  with total spin $S+N$, where $N$ is the Laudau level (LL) index\cite{PhysRevLett.51.605}. Without loss of generality we use the lowest LL (LLL) with $N=0$, so the total number of single particle orbitals in the LLL is $N_o=2S+1$. We can thus express a many-body state with a string of $N_o$ binary numbers, corresponding to the single particle orbitals sequentially from the north pole to the south pole. We use the integer $1$ to denote an occupied orbital, and $0$ for an empty orbital\cite{PhysRevLett.100.246802, PhysRevLett.101.246806, bernevig2008generalized}. For example, if $S=3$ and we have two electrons around the north pole of the sphere, this state can be denoted as $|1100000 \rangle$, with a total number of seven orbitals.

\begin{table}[]
\renewcommand\arraystretch{1.5}
\centering
\begin{tabular}{|c|c|}%
\hline
$\hat{\rho}^i_{\boldsymbol{q_1}}$ &\begin{tabular}[c]{@{}c@{}} Guiding center density operator\\ of the $i$-th electron \\ $\hat{\rho}^i_{\boldsymbol{q_1}} = e^{i q_{1a} \hat{R}^{a}_{i} }$\end{tabular}  \\[3pt] \hline
$\bar{S}_{\boldsymbol{q}_1,\boldsymbol{q}_2}$              & Reduced three-body structure factor  \\[3pt] \hline
$V_{\bm q_1 \bm q_2}$               & Three-body effective potential  \\[3pt] \hline
$\bm \tilde q_i$               & Momentum components in Jacobi coordinates  \\ \hline
$\tilde Q_i$               & Square of  $\bm \tilde q_i$ \\ \hline
$L^{(\alpha)}_{k}\left(q^2\right)$    & Generalized Laguerre polynomials  \\[3pt] \hline
$c^{m_1 m_2}$               &  \begin{tabular}[c]{@{}c@{}}Expansion coefficients of three-body\\ effective potential\end{tabular}\\ \hline
$\bar{d}^{n_1 n_2}$               &  \begin{tabular}[c]{@{}c@{}}Expansion coefficients of reduced\\ three-body structure factor\end{tabular}\\ \hline
$\tilde{\Gamma}_{m_1 m_2 n_1 n_2}^{\text{3bdy}}$              &  Three-body characteristic tensor\\ \hline
$\tilde{\Gamma}^{0/+/-}_{m_1 m_2 n_1 n_2}$              & \begin{tabular}[c]{@{}c@{}} Diagonal/Off-diagonal parts of\\ the three-body characteristic tensor\end{tabular}\\ \hline
\end{tabular}
\caption{Definition of various symbols used in the three-body calculations.}
\label{table3}
\end{table}

On the disk geometry, the occupation basis also corresponds to the first-quantized wavefunctions with the symmetric gauge (we ignore the unimportant Gaussian factor). Each digit in the occupation basis from left to right corresponds to the orbital from the origin ($z_k^0$) to the edge ($z_k^{N_o -1}$), where $z_k=x_k+iy_k$ is the holomorphic coordinate of the $k^{\text{th}}$ electron. The many-body wavefunctions for the FQH states are linear combinations of such monomials. For example, the monomial $|1101 \rangle$ in the first quantized form is given by:
\begin{equation}
\begin{aligned}
|1101 \rangle &\sim (z_1 - z_2) z^3_3 + (z_3 - z_1) z^3_2 + (z_2 - z_3) z^3_1 
\end{aligned}
\end{equation}

The model wavefunctions for the FQH states on the sphere in many cases are Jack polynomials (or Jacks), which is a family of symmetric polynomials characterised by the so-called generalized Pauli principle. One important characteristics of the Jack polynomial states is the existence of a root configuration, with all of the occupation basis of the state ``squeezed" from the root configuration\cite{PhysRevLett.100.246802,PhysRevLett.101.246806, bernevig2008generalized}. For instance, if one considers the Laughlin-$\frac{1}{3}$ state with 3 electrons, the wavefunction $(z_1 -z_2)^3(z_1 -z_3)^3(z_2 -z_3)^3$ is a Jack polynomial denoted by $J^{\alpha=-2}_{|1001001 \rangle}$. Here $1001001$ is the root configuration, and $\alpha=-2$ in the superscript is derived from the admission rule of the root configuration. All coefficients of the monomials in the Jack polynomial are determined by $\alpha$, and these monomials are ``squeezed" from the root configuration. We denote two monimials $m_1, m_2$ and that $m_2$ is squeezed from $m_1$ by $m_1 \succ m_2$. That implies $m_2$ is obtained from $m_1$ by repeatedly moving two electrons in the binary string towards each other, without changing the total angular momentum of the monomial. Explicitly for the Laughlin-$1/3$ state with three electrons we have the following:
\begin{equation}
\begin{aligned}
J^{-2}_{|1001001 \rangle} \sim& (z_1-z_2)^3(z_1-z_3)^3(z_2-z_3)^3\\
= & |1001001\rangle -3 |0110001\rangle -3 |1000110\rangle \\
& +6 |0101010\rangle -15 |0011100\rangle
\end{aligned}
\end{equation}
which consists of only the monomials squeezed from the root configuration $|1001001\rangle$, and the monomials are::
\begin{equation}
\begin{aligned}
|1001001\rangle=&  (z_1^3 - z_2^3) z_3^6 +  (z_3^3 - z_1^3) z_2^6 + (z_2^3 - z_3^3) z_1^6 \\ 
|0110001\rangle=&  (z_2^1 z_1^2 - z_1^1 z_2^2) z_3^6 + (z_1^1 z_3^2 - z_3^1 z_1^2) z_2^6\\ 
& + (z_3^1 z_2^2 - z^1_2 z_3^2) z_1^6\\
|1000110\rangle=&  (z_1^4 - z_2^4) z_3^5 +  (z_3^4 - z_1^4) z_2^5 + (z_2^4 - z_3^4) z_1^5 \\ 
|0101010\rangle=&  (z_2^1 z_1^3 - z_1^1 z_2^3) z_3^5 + (z_1^1 z_3^3 - z_3^1 z_1^3) z_2^5\\ 
& + (z_3^1 z_2^3 - z^1_2 z_3^3) z_1^5\\
|0011100\rangle=&  (z_2^2 z_1^3 - z_1^2 z_2^3 ) z_3^4 + (z_1^2 z_3^3 - z_3^2 z_1^3) z_2^4\\ 
& + (z_3^2 z_2^3 - z_2^2 z_3^3) z_1^4\\
\end{aligned}
\end{equation}

Let us now move on to the model wavefunction of the graviton mode, as constructed in Ref.\cite{PhysRevLett.108.256807}. In contrast to the FQH ground states, these wavefunctions are not Jack polynomials. Nevertheless, these wavefunctions have rich algebraic structure with a root configuration. Similar to the Jack polynomials, only monomials squeezed from the root configuration have non-zero coefficients in the basis expansion. We will also show later on they can be linear combinations of Jack polynomials of different admission rules in some cases. The root configurations of the graviton modes for the Laughlin-$1/3$ state and $1/5$ state for comparison are shown as follows:
\begin{eqnarray}
&1100001001001001001 \cdots \quad &(\text{Laughlin-}1/3)\label{LEC1}\\
&100100000010000100001\cdots \quad &(\text{Laughlin-}1/5)\label{LEC2}
\end{eqnarray}

Since all the occupation bases are squeezed from the root configurations above, we can immediately apply the local exclusion condition (LEC)\cite{PhysRevB.100.241302, yang2021elementary}. There are three parameters required to define an LEC, denoted by $\hat{n} = \{n, n_e, n_h\}$, giving the constraint that there can be no more than $n_e$ electrons or $n_h$ holes in a circular droplet containing $n$ fluxes anywhere in the quantum fluid. For a spherical geometry, one can simply look at the droplet at the north pole for the \emph{highest weight states}. Applying LEC to the graviton mode of Laughlin-$1/3$ state immediately leads to the conclusion that it is a ``Haffnian quasihole state" (thus a zero energy state of the Haffnian model Hamiltonian; more details will be given in the next section). This is because the LEC of the Haffnian state is given by $\hat{n} = \{4, 2, 4\}$ and the droplet at the north pole of the Laughlin-$1/3$ graviton mode does obey this condition as Eq.\ref{LEC1} shows. We would like to emphasize that the admissible rules for Jack polynomials cannot be used here, since the graviton modes are not Jack polynomials (note that the Haffnian states are also not Jack polynomials). Using the same reasoning, it is easy to see that the graviton modes of the Laughlin-$1/5$ state is the zero energy state of the Laughlin-$1/3$ model Hamiltonian, given that the LEC of the Laughlin-$1/3$ state is $\hat{n} = \{2, 1, 2\}$. 

It is worth noting that although using LEC offers a very simple way of determining if the graviton mode belongs to the null space of some model Hamiltonian, it only applies to the cases where the root configurations are easy to find, and fundamentally the LEC scheme is only ``proven" numerically. We will now proceed to analytically prove the more general cases for the Laughlin phases at $\nu_k =\frac{1}{2k+1}$, $k \in \mathrm{N}^+$. The $\nu=1/3$ case will be left to the next section focusing on the three-body model Hamiltonians. 

In the following discussion, we will denote the Laughlin state with the filling factor $\nu=\frac{1}{2k+1}$ by $| \psi_{\nu_k} \rangle$, the null space of the corresponding model Hamiltonian $\hat H_{\nu_k}$ by $\mathcal{H}_{\text{Laughlin}-\nu_k} $ ($\mathcal{H}_{\text{L}-\nu_k} $ for short) and the corresponding graviton mode $| \psi_{\nu_k, \boldsymbol{q}} \rangle = \delta \bar{\rho}_{\boldsymbol{q}} | \psi_{\nu_k} \rangle$ with $q\rightarrow 0$. The variational energy of $| \psi_{\nu_{k+1}, \boldsymbol{q}} \rangle$ with respect to the model Hamiltonian $\hat{H}_{\nu_k} $ is given by:
\begin{equation}
\begin{aligned}
\delta E_{\boldsymbol{q} \rightarrow 0}&= \Gamma^{\text{2bdy}}_{m n} c^{m} d^{n} \\
&= \sum_{m}^{2k+1} \frac{(-1)^{m} c^m}{2^8 \eta \pi} \times [2(m^2+m+1)d^{m}\\
& \quad -(m+1)(m+2)d^{m+2}-m(m-1) d^{m-2}]
\label{dmenergy}
\end{aligned}
\end{equation}
where $\hat H_{\nu_k}=\sum_{m=1}^{2k+1}c^mL_m\left(q^2\right)e^{-\frac{q^2}{2}}$, $c^m > 0$ and $m$ can only be odd. From Eq. \ref{gsenergy} we know that:
\begin{equation}
E_0 = c^m d^n\delta_{m,n}=\sum_{m}^{2k+3} c^m d^m =0
\end{equation}
Thus as long as $c^m>0$, the ground state energy should always be zero. This is only possible when all the $d^m = 0$ with $m < 2k + 3$. We thus obtain:
\begin{equation}
\begin{aligned}
\delta E_{\boldsymbol{q} \rightarrow 0}= 0
\end{aligned}
\end{equation}

Hence we proved that the graviton mode of the Laughlin state at $\nu = 1/(2k+3)$ (Laughlin-$1/(2k+3)$ graviton mode for short) is contained in the null space of $\hat H_{\nu_k}$. In particular, the graviton mode at $\nu=1/5$ is the exact zero-energy state of the $\hat V_1^{\text{2bdy}}$ pseudopotential. The latter can serve as the model Hamiltonian for the graviton mode. This is true for all the Laughlin states with $\nu<1/3$.

\section{Three-body interactions}\label{sec_3bdy}

To study the low-lying excitations and especially the graviton modes of the non-abelian FQH phases, it is important to extend the characteristic matrix formalism to the model Hamiltonians with three-body interactions. This is because the model Hamiltonians for the non-abelian phases consist of few-body pseudopotentials involving clusters of more than two electrons\cite{PhysRevB.75.075318}. Such interactions can also physically arise from LL mixing\cite{simon2013landau, sodemann2013landau, bishara2009effect, faugno2021unconventional}. Analogous to the two-body case, for three-body interactions we can define the following reduced structure factor for the unperturbed ground state :
\begin{equation}
\begin{aligned}
\bar{S}_{\boldsymbol{q}_1,\boldsymbol{q}_2}&=\sum_{i \ne j \ne k} \langle \psi_{0} \left| \hat{\rho}^i_{\boldsymbol{q_1}} \hat{\rho}^j_{\boldsymbol{q_2}} \hat{\rho}^k_{\boldsymbol{-q_1-q_2}}\right| \psi_{0}\rangle\\
&=\sum_{i \ne j \ne k}\left\langle\psi_{0}\left| e^{i q_{1a} \hat{R}^{a}_{i} } e^{i q_{2a} \hat{R}^{a}_{j} } e^{- i (q_{1a} + q_{2a})\hat{R}^{a}_{k} } \right| \psi_{0}\right\rangle
\label{eq4}
\end{aligned}
\end{equation}
where the indices $i$, $j$ and $k$ denote different electrons. A generic three-body Hamiltonian is given by:
\begin{equation}
\begin{aligned}
\hat{H}_{\text{3bdy}}
=&\int \frac{d^{2} \boldsymbol{q}_1 d^{2} \boldsymbol{q}_2}{(2 \pi)^4} V_{\boldsymbol{q_1, q_2}} \hat{\rho}_{\boldsymbol{q_1}} \hat{\rho}_{\boldsymbol{q_2}} \hat{\rho}_{\boldsymbol{-q_1-q_2}}\\
&-\int \frac{d^{2} \boldsymbol{q}_1 d^{2} \boldsymbol{q}_2}{(2 \pi)^4} V_{\boldsymbol{q_1, q_2}}  (\hat{\rho}_{\boldsymbol{q_1+q_2}} \hat{\rho}_{\boldsymbol{-q_1-q_2}}\\
&\quad + \hat{\rho}_{\boldsymbol{q_1}} \hat{\rho}_{\boldsymbol{-q_1}}+  \hat{\rho}_{\boldsymbol{q_2}} \hat{\rho}_{\boldsymbol{-q_2}})\\
&-N_e \int \frac{d^{2} \boldsymbol{q}_1 d^{2} \boldsymbol{q}_2}{(2 \pi)^4} V_{\boldsymbol{q_1, q_2}}\\
=&\sum_{i \ne j \ne k}\int \frac{d^{2} \boldsymbol{q}_1 d^{2} \boldsymbol{q}_2}{(2 \pi)^4} V_{\boldsymbol{q_1, q_2}} \hat{\rho}^i_{\boldsymbol{q_1}} \hat{\rho}^j_{\boldsymbol{q_2}} \hat{\rho}^k_{\boldsymbol{-q_1-q_2}}
\label{eq5}
\end{aligned}
\end{equation}
Here $V_{\boldsymbol{q}_1,\boldsymbol{q}_2}$ denotes the effective three-body interaction. The ground state energy is thus given by the expectation value of the Hamiltonian with respect to the ground state:
\begin{equation}
\begin{aligned}
E_0=&\sum_{i \ne j \ne k}\int \frac{d^{2} \boldsymbol{q}_1 d^{2} \boldsymbol{q}_2}{(2 \pi)^4} V_{\boldsymbol{q_1, q_2}} \langle \psi_{0} \left| \hat{\rho}^i_{\boldsymbol{q_1}} \hat{\rho}^j_{\boldsymbol{q_2}} \hat{\rho}^k_{\boldsymbol{-q_1-q_2}}\right| \psi_{0}\rangle\\
=& \int \frac{d^{2} \boldsymbol{q}_1 d^{2} \boldsymbol{q}_2}{(2 \pi)^4} V_{\boldsymbol{q_1, q_2}} \bar{S}_{\boldsymbol{q}_1,\boldsymbol{q}_2}
\label{3bge}
\end{aligned}
\end{equation}
which is analogous to the result in the two-body case. By constructing the SMA state as Eq.\ref{SMAstate} shows, the energy of the graviton mode can be written as:
\begin{equation}
\begin{aligned}
\delta E_{\boldsymbol{q}}&=\frac{\langle\psi_{\boldsymbol{q}}|\hat{H}_{\text{3bdy}} \left| \psi_{\boldsymbol{q}}\right. \rangle}{\left. \langle\psi_{\boldsymbol{q}} \right| \psi_{\boldsymbol{q}}\rangle}-E_{0}\\
&=\frac{\left\langle\psi_{0}\left| \left[\delta \hat{\rho}_{-\boldsymbol{q}},\left[\hat{H}_{\text{3bdy}}, \delta \hat{\rho}_{\boldsymbol{q}}\right]\right] \right| \psi_{0}\right\rangle}{2 S_{\boldsymbol{q}}}\\
&=\sum_{i \ne j \ne k}\int \frac{d^{2} \boldsymbol{q}_1 d^{2} \boldsymbol{q}_2}{(2 \pi)^4} V_{\boldsymbol{q_1, q_2}}\\ 
&\quad \times \frac{\left\langle\psi_{0}\left| \left[ \delta \hat{\rho}_{-\boldsymbol{q}},\left[\hat{\rho}^i_{\boldsymbol{q_1}} \hat{\rho}^j_{\boldsymbol{q_2}} \hat{\rho}^k_{\boldsymbol{-q_1-q_2}},  \delta \hat{\rho}_{\boldsymbol{q}}\right]\right] \right| \psi_{0}\right\rangle}{2 S_{\boldsymbol{q}}}
\label{eq6}
\end{aligned}
\end{equation}

For three-body interactions, it will be more convenient to use the Jacobi coordinates, which can separate the degree of freedom of the center-of-mass from other ones while maintaining the commutation relations between coordinates:
\begin{equation}
\begin{aligned}
\hat{R}_{i j}^{a} &=\frac{1}{\sqrt{2}}\left(\hat{R}_{i}^{a}-\hat{R}_{j}^{a}\right) \\
\hat{R}_{i j, k}^{a} &=\frac{1}{\sqrt{6}}\left(\hat{R}_{i}^{a}+\hat{R}_{j}^{a}-2 \hat{R}_{k}^{a}\right)\\
\hat{R}_{i j k}^{a}&=\frac{1}{\sqrt{3}}\left(\hat{R}_{i}^{a}+\hat{R}_{j}^{a}+\hat{R}_{k}^{a}\right)
\label{eq12}
\end{aligned}
\end{equation}
As expected, after coordinate transformation the Hamiltonian in Eq.\ref{eq5} contains no center-of-mass term and can be written as:
\begin{equation}
\hat{H}_{\text{3bdy}}= \sum_{i \ne j \ne k} \int \frac{d^{2} \tilde{\boldsymbol{q}}_1 d^{2} \tilde{\boldsymbol{q}}_2}{3 \times (2 \pi)^4}  V_{\tilde{\boldsymbol{q}}_{1} \tilde{\boldsymbol{q}}_{2}} e^{i \tilde{q}_{1a} \hat{R}_{ij}^{a}} e^{i \tilde{q}_{2a} \hat{R}_{ij,k}^{a}}
\end{equation}
which leads to a global linear transformation in the $k$-space:
\begin{equation}
\begin{aligned}
\tilde{\boldsymbol{q}}_1&=\frac{1}{\sqrt{2}}\left(\boldsymbol{q}_{1}-\boldsymbol{q}_{2}\right)\\
\tilde{\boldsymbol{q}}_2&=\sqrt{\frac{3}{2}}\left(\boldsymbol{q}_{1}+\boldsymbol{q}_{2}\right)
\end{aligned}
\end{equation}
The transformation to the Jacobi coordinates also enables us to properly decompose the three-body calculations into the product of two symmetric two-body ones, with the Laguerre-Gaussian expansions of the effective three-body potential given by:
\begin{equation}
\begin{aligned}
V_{\boldsymbol{\tilde{q}}_1,\boldsymbol{\tilde{q}}_2}&=\sum_{m_1 m_2} c^{m_1 m_2}  L^{(0)}_{m_1}\left(\frac{\tilde{Q}_1}{2}\right) L^{(0)}_{m_2}\left(\frac{\tilde{Q}_2}{2} \right)  e^{-\frac{1}{4}\left(\tilde{Q}_1+\tilde{Q}_2\right) }
\end{aligned}
\end{equation}
where $ L^{(i)}_{m}\left(x\right)$ is the generalised Laguerre polynomials, and $i=0$ gives the usual Laguerre polynomials. Without loss of generality only model Hamiltonians are considered here, so for the three-body interactions we take $i=0$. The reduced three-body structure factor can be expanded as:
\begin{equation}
\begin{aligned}
\bar{S}_{\boldsymbol{\tilde{q}}_1,\boldsymbol{\tilde{q}}_2}& = \sum_{i; n_1 n_2} d_{i}^{n_1 n_2} \cdot \sqrt{\frac{n_1 ! \cdot n_2 !}{(n_1+i) ! \cdot (n_2-i) !}} \cdot \left(\frac{\boldsymbol{\tilde{q}}_1}{\boldsymbol{\tilde{q}}_2}\right)^i\\
& \quad \times L^{(i)}_{n_1}\left(\frac{\tilde{Q}_1}{2}\right) L^{(-i)}_{n_2}\left(\frac{\tilde{Q}_2}{2} \right)  e^{-\frac{1}{4}\left(\tilde{Q}_1+\tilde{Q}_2\right) }
\label{expansion3b}
\end{aligned}
\end{equation}
where $i$ must be even integers because of the fermionic statistics, $|i| \le n_1+ n_2$ due to rotational invariance and $\tilde{Q}_j \equiv \left|\boldsymbol{\tilde{q}}_j \right|^2 $. Note that unlike the generic two-body case, the expansion of three-body interaction also contains the generalized Laguerre polynomials (when $i \ne 0$). Because the generalised Laguerre polynomials form a complete basis, any function of $\boldsymbol{\tilde{q}}_1$ and $\boldsymbol{\tilde{q}}_2$ can be expanded as in Eq.\ref{expansion3b}. Moreover, there seems to exist a singularity in the expansion when $|\boldsymbol{\tilde{q}}_j| \rightarrow 0$, but in fact this is not the case given that $\forall i \in \mathbb{Z} \backslash \mathbb{N}$, $|i| \le n$:
\begin{equation}
\begin{aligned}
\lim_{|\boldsymbol{\tilde{q}}_j| \rightarrow 0} \boldsymbol{\tilde{q}}_j^i \times L^{(i)}_{n}\left(\frac{\tilde{Q}_j}{2} \right) = 0
\end{aligned}
\end{equation}

It is useful to consider the expansion of a three-electron rotationally invariant state with zero center-of-mass angular momentum  in the magnetic field $| \psi_3 \rangle = \sum_{n_1, n_2} \alpha^{n_1, n_2} | n_1 ,n_2\rangle$, where the quantum number $n_1$ denotes the relative momentum between the first and the second electron, and $n_2$ represents the relative momentum between the center-of-mass of two electrons and the third electron\cite{laughlin1983quantized}. Due to the fermionic statistics, the coefficients of expansion $\alpha^{n_1, n_2} $ are fixed with an overall factor and their specific values can be found in Ref.\cite{yang2018three} and are also attached in the supplementary materials. Thus for any generic rotationally invariant many-body wavefunction with zero center-of-mass angular momentum, $d_i^{n_1 n_2}$ should be proportional to the product of two expansion coefficients of $| n_1 ,n_2\rangle$ based on Eq.\ref{eq4}:
\begin{equation}
d_i^{n_1 n_2} \propto \alpha^{* n_1, n_2} \alpha^{n_1+i, n_2-i}
\label{dn1n22}
\end{equation}
This immediately leads to the conclusion that if $d_0^{n_1 n_2}=0$, all the other $d_i^{n_1 n_2}$ with the same indices $n_1$ and $n_2$ vanish as well. By substituting Eq.\ref{expansion3b} to the ground state energy Eq.\ref{3bge}, a set of equations on the expansion coefficients can be derived from the orthogonality condition of the generalized Laguerre polynomials, which gives the ground state energy:
\begin{equation}
E_0 = \delta_{m_1, n_1} \delta_{m_2, n_2} c^{m_1 m_2} d_0^{n_1 n_2}
\label{3bgs}
\end{equation}

For model wavefunctions and the corresponding model Hamiltonians, the ground state energy is zero. Using the same technique as the two-body case, we can write down the zeroth-order term of the energy gap $\delta \tilde{E}_{\boldsymbol{q}}$ in the long-wavelength limit, given as follows:
\begin{equation}
\begin{aligned}
\delta \tilde{E}_{\boldsymbol{q} \rightarrow 0}
=&\Gamma_{m_1 m_2 n_1 n_2}^{i}  c^{m_1 m_2} d_{i}^{n_1 n_2}\\
\propto& \left( \Gamma_{m_1 m_2 n_1 n_2}^{i} \alpha_{n_1+i, n_2-i}\right) c^{m_1 m_2} \alpha^{* n_1, n_2}
\label{original_gamma}
\end{aligned}
\end{equation}
which means that the $i$-dependence in $d_{i}^{ n_1 n_2}$ can be absorbed into the characteristic tensor. The final result is given by:
\begin{equation}
\begin{aligned}
\delta \tilde{E}_{\boldsymbol{q} \rightarrow 0}
=& \tilde{\Gamma}_{m_1 m_2 n_1 n_2}^{\text{3bdy}} c^{m_1 m_2} \bar{d}^{n_1 n_2}\\
=&  (\Gamma_{m_1 m_2 n_1 n_2}^{0} +\Gamma_{m_1 m_2 n_1 n_2}^{+}+\Gamma_{m_1 m_2 n_1 n_2}^{-}) \\
& \quad \times c^{m_1 m_2} \bar{d}^{n_1 n_2} \times \mathcal{C}
\label{deltaEq}
\end{aligned}
\end{equation}
where the constant $\mathcal{C} = -1/(2^{6}\times3 \eta \pi^2)$ and we have defined the factor:
\begin{equation}
\begin{aligned}
\bar{d}^{n_1 n_2} \equiv& \scalebox{0.9}{$\int \frac{d^{2} \boldsymbol{\tilde{q}}_1 d^{2} \boldsymbol{\tilde{q}}_2}{(2 \pi)^4} \bar{S}_{\boldsymbol{\tilde{q}}_1,\boldsymbol{\tilde{q}}_2} L^{(0)}_{n_1}\left(\frac{\tilde{Q}_1}{2}\right) L^{(0)}_{n_2}\left(\frac{\tilde{Q}_2}{2} \right)  e^{-\frac{1}{4}\left(\tilde{Q}_1+\tilde{Q}_2\right)} $}\\
\propto & \alpha^{n_1, n_2} = \sum_{n_1, n_2} \langle n_1, n_2  | \psi_3\rangle
\label{d12}
\end{aligned}
\end{equation}
The explicit expressions of the characteristic tensor components are given by:
\begin{equation}
\begin{aligned}
\Gamma_{m_1 m_2 n_1 n_2}^{0}=&2(n_1^2 + n_2^2 + n_1 +n_2 +2) \delta_{n_1, m_1} \delta_{n_2, m_2}\\
&  -n_1 (n_1 -1) \delta_{n_1, m_1+2} \delta_{n_2, m_2}\\
&  - (n_1 +1) (n_1 + 2) \delta_{n_1, m_1-2} \delta_{n_2, m_2}  \\
&  -n_2 (n_2 -1)  \delta_{n_1, m_1} \delta_{n_2, m_2+2}\\
&  - (n_2 +1) (n_2 + 2)  \delta_{n_1, m_1} \delta_{n_2, m_2-2}
\label{3bgamma}
\end{aligned}
\end{equation}
which corresponds to the diagonal terms ($i=0$) in the Laguerre-Gaussian expansion of the structure factor. The expression is also reminiscent of two copies of two-body characteristic matrices given in Eq.\ref{Gamma1}. Meanwhile the contributions from the off-diagonal terms in the reduced ground state structure factor are given by:
\begin{equation}
\begin{aligned}
\Gamma_{m_1 m_2 n_1 n_2}^{+}=&\frac{\alpha_{n_1+2, n_2-2}}{\alpha_{n_1, n_2}} \times \sqrt{\frac{n_1 ! n_2! }{(n_1+2)! (n_2-2)!}} \\
& \times (n_1 +1)(n_1 + 2)\left( \delta_{m_1, n_1}  -\delta_{m_1, n_1+2} \right)\\
& \times (\delta_{m_2, n_2}-\delta_{m_2, n_2-2}  )
\label{3bomega2}
\end{aligned}
\end{equation}
and
\begin{equation}
\begin{aligned}
\Gamma_{m_1 m_2 n_1 n_2}^{-}=& \frac{\alpha_{n_1-2, n_2+2}}{\alpha_{n_1, n_2}} \times \sqrt{\frac{n_1 ! n_2! }{(n_1-2)! (n_2+2)!}} \\
& \times(n_2 +1)(n_2 + 2) \left( \delta_{m_2, n_2}  - \delta_{m_2, n_2+2} \right)\\
& \times (\delta_{m_1, n_1}-\delta_{m_1, n_1-2}  )
\label{3bomega3}
\end{aligned}
\end{equation}
which correspond to  $i = +2$ and $-2$ in Eq.\ref{original_gamma}. These terms physically captures the contributions to the energy of the graviton modes from different $| n_1 ,n_2\rangle$ components.  Based on the anti-symmetric property of fermionic wavefunctions, $n_1$ must be odd. A rigorous and detailed derivation on all the results we have got so far can be found in the supplementary material of this paper. 

Note that compared to the existing methods of studying the graviton modes, the characteristic tensor formalism can determine the neutral gap with just the information of the ground state, making the exact diagonalization to obtain excited energy spectra unnecessary. Since  $\Gamma^{\text{3bdy}}_{m_1 m_2 n_1 n_2}$ is universal and independent of the microscopic details, for a given FQH state with a model Hamiltonian or even realistic interaction, once the numerical properties of the expansion coefficients are determined, the behaviour of the graviton mode can be fully depicted by Eq.\ref{deltaEq}. 

\begin{figure}
\centering
\includegraphics[scale=0.25]{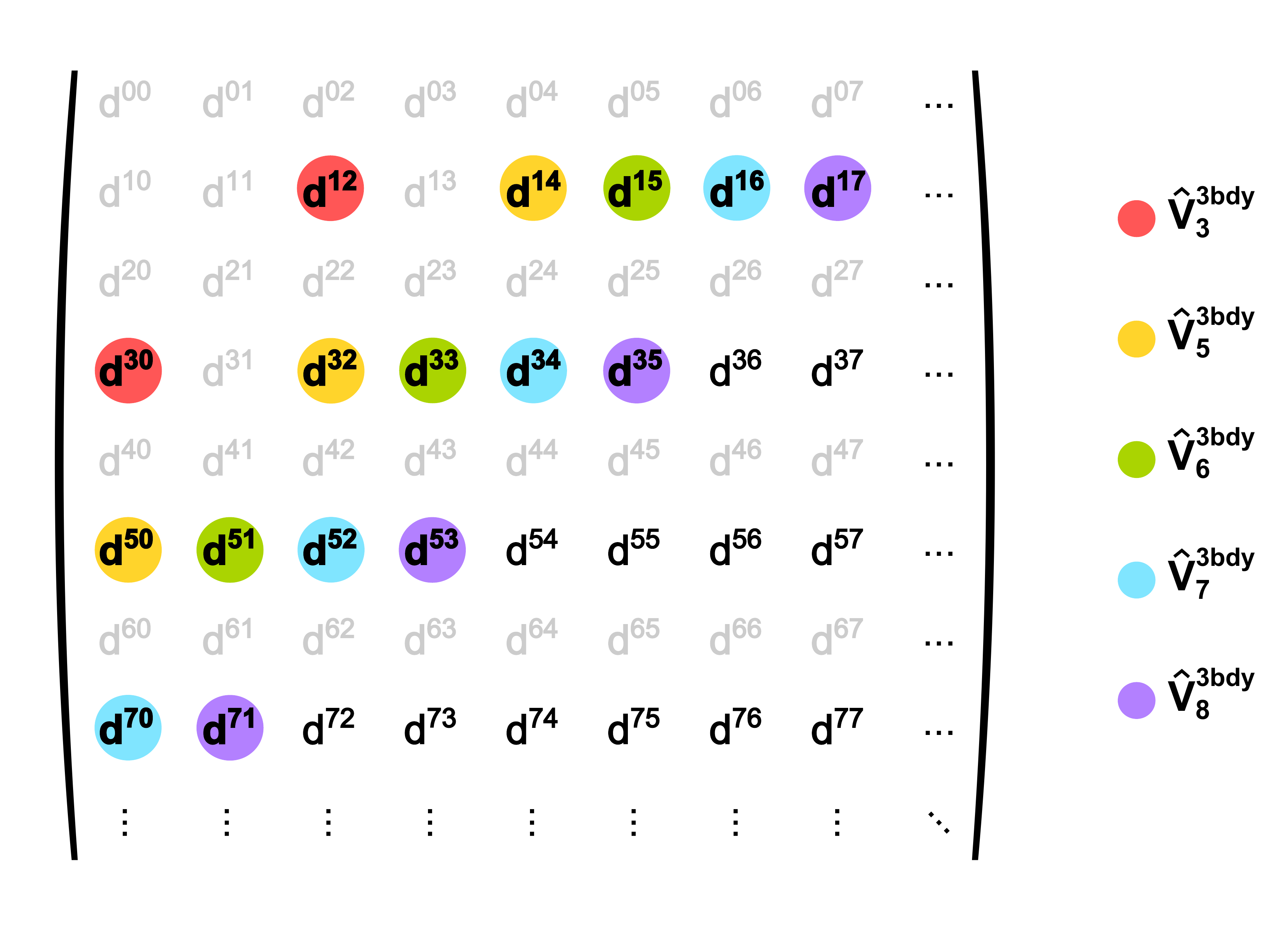}
\caption{\textbf{Structure of the $\bar{d}_{n_1 n_2}$ coefficients of the Laughlin-$1/3$ state.} Here gray coefficients are zero because of the Fermionic statistics. Bold coefficients are zero because of the vanishing ground state energy (contributions from different three-body model Hamiltonians are denoted by different colors). Black coefficients are unknown and not necessarily zero. Note that here we omit the bar over the $\bar{d}_{n_1 n_2}$ coefficients for simplicity.}
\label{3bodydn1n2}
\end{figure}

\subsection{Model Hamiltonian of Laughlin-$1/3$ graviton mode \label{sec:1/3_density}}

With the help of the three-body characteristic tensor formalism, now the behaviour of the Laughlin-$1/3$ graviton mode can be analytically discussed. From Eq.\ref{d12} and Eq.\ref{dn1n22}, we can decompose $\bar d^{n_1,n_2}$ as follows: 
\begin{equation}
\bar{d}^{n_1, n_2} = \sum^{k_{n_1 + n_2}}_{i=1} \lambda_{n_1+n_2,i}  A^{n_1, n_2}_{i} 
\label{dn1n2}
\end{equation}
where $k_{n_1+n_2}$ denotes the degeneracy of three-body pseudopotentials (or the highest-weight three-body eigenstates of the total angular momentum operator) with total relative angular momentum $n_1+n_2$; $\lambda_{n_1+n_2,i}$ depends on the ground state wavefunction,  but $A^{n_1, n_2}_{i} $ are well-defined as shown in Ref.\cite{yang2018three}. When there is no degeneracy in the highest weight state wavefunction (i.e. $k_{n_1+n_2}=1$, which is true for $n_1+n_2<9$),  $\bar{d}^{n_1, n_2} $ can be considered to be proportional to $A^{n_1, n_2}_{1}  =  |\alpha^{n_1, n_2}|^2 $. Thus from the ground state energy in Eq.\ref{3bge}, considering $d^1=0$ which gives $d^{1, n_2}=0$ for this state, we have:
\begin{equation}
\begin{aligned}
& \lambda_{1+n_2, 1} &= 0, &\quad \text{for} \quad n_2=2, 4, 5, 6, 7\\
\Rightarrow & \quad \lambda_{a, 1} &= 0, &\quad \text{for} \quad  a =3, 5, 6, 7, 8\\
\Rightarrow & \quad \bar{d}^{n_1 n_2} &= 0, &\quad \text{for} \quad  n_1 +n_2 =3, 5, 6, 7, 8
\end{aligned}
\end{equation}
We have thus proved that the graviton mode of the Laughlin-$1/3$ state lives in the null space of the following three-body Hamiltonian:
\begin{eqnarray}
\hat H^{\text{3bdy}}=\sum_{i=3}^8\tilde c_i\hat V_i^{\text{3bdy}}
\end{eqnarray}
where $\tilde c_4=0$ and $\tilde c_i>0$ otherwise. In particular, the graviton mode is in the null space of the Haffnian model Hamiltonian $\hat{H}_{\text{Haffnian}} = \hat{V}^{3bdy}_3 + \hat{V}^{3bdy}_5 +\hat{V}^{3bdy}_6$  as Fig.\ref{nullspace} shows. This Hamiltonian provides us with non-vanishing coefficients of $c_{m_1, m_2}$ and $m_1+m_2=3$, $5$ or $6$. Thus based on Eq.\ref{3bgamma}, the energy of the graviton mode of the Laughlin-$1/3$ state is given by:
\begin{equation}
\begin{aligned}
\delta \tilde{E}_{\boldsymbol{q}} =&  \tilde{\Gamma}_{m_1 m_2 n_1 n_2}^{\text{3bdy}} c^{m_1 m_2} \bar{d}^{n_1 n_2}=0 
\end{aligned}
\end{equation}
where the algebraic properties of $\tilde{\Gamma}_{m_1 m_2 n_1 n_2}^{\text{3bdy}}$ make sure that only the coefficients $\bar{d}^{n_1 n_2}$ with $n_1 +n_2 \le 8$ are involved. 

We also illustrate the behaviours of $\bar d^{n_1,n_2}$ in Fig.\ref{3bodydn1n2}, all the bold coefficients are zero given by the ground state energy in Eq.\ref{3bgs} of the corresponding model Hamiltonian, denoted by different colors. Meanwhile the fermionic statistics ensures that all the gray coefficients have to vanish. Although the black coefficients are generally not zero, none of them are involved in the result. Thus one can prove that the graviton mode energy of the Laughlin-$1/3$ state with the Haffnian Hamiltonian is zero.

\subsection{Moore-Read graviton mode}
Based on the same idea, only two coefficients $\bar{d}^{30}$ and $\bar{d}^{12}$ of the Moore-Read state vanish from the ground state energy, so the energy gap of the corresponding graviton mode with the Hamiltonian $\hat{H}_{\text{MR}}=\hat{V}^{3bdy}_{3}$ is given by:

\begin{equation}
\begin{aligned}
\delta \tilde{E}_{\boldsymbol{q}} =&  \tilde{\Gamma}_{m_1 m_2 n_1 n_2}^{\text{3bdy}} c^{m_1 m_2} \bar{d}^{n_1 n_2} \\
=& c^{30}(\tilde{\Gamma}_{3032}^{\text{3bdy}}  \bar{d}^{32} +\tilde{\Gamma}_{3050}^{\text{3bdy}}  \bar{d}^{50}) +c^{12}(\tilde{\Gamma}_{1232}^{\text{3bdy}}  \bar{d}^{32} + \tilde{\Gamma}_{1214}^{\text{3bdy}} \bar{d}^{14})\\
=& -\frac{1}{2^{8}\times3 \eta \pi^2} \times \left[ c^{30}\left( 8 \bar{d}^{32} -16 \bar{d}^{50} \right)  \right.\\
&\left. +c^{12} \left(-24 \bar{d}^{32} - 16 \bar{d}^{14}\right)\right]\\
\propto& \lambda_{5, 1} \times \left( c^{30}+ 3  c^{12}  \right)
\end{aligned}
\end{equation}
where $\lambda_{5, 1}$ is the coefficient of the structure factor of the states with the total angular momentum quantum number $n_1+n_2=5$ as in Eq.\ref{dn1n2}, which is proportional to the expectation value of $\hat{V}^{\text{3bdy}}_5$ with respect to the Moore-Read state and we have used the ratio of the coefficients given in Ref.\cite{yang2018three}:
\begin{equation}
\alpha^{50}:\alpha^{32}:\alpha^{14} = -\frac{\sqrt{5}}{4}:\frac{1}{2\sqrt{2}}:\frac{3}{4}
\end{equation}
Thus the graviton mode energy of the Moore-Read state can be determined by tuning $c^{30}$ and $c^{12}$. Furthermore, similar to the Laughlin-$1/3$ state, the graviton mode of the Moore-Read state should live in the null space of four-body interactions, which will not be discussed in detail here.

\subsection{Gaffnian graviton mode}

For the Gaffnian state, the coefficients $\bar{d}^{30}$, $\bar{d}^{12}$, $\bar{d}^{50}$, $\bar{d}^{32}$ and $\bar{d}^{14}$ vanish because of the model Hamiltonian $\hat{H}_{\text{Gaffnian}}=\hat{V}^{3bdy}_{3}+\hat{V}^{3bdy}_{5}$. Thus the energy gap of the corresponding graviton mode can be written as:
\begin{equation}
\begin{aligned}
\delta \tilde{E}_{\boldsymbol{q}} =&c_{50}(\tilde{\Gamma}_{5052}^{\text{3bdy}}  \bar{d}^{52}+\tilde{\Gamma}_{5070}^{\text{3bdy}} \bar{d}^{70})\\
&+c^{32}(\tilde{\Gamma}_{3234}^{\text{3bdy}} \bar{d}^{34} +\tilde{\Gamma}_{3252}^{\text{3bdy}} \bar{d}^{52})\\
&+c^{14} (\tilde{\Gamma}_{1416}^{\text{3bdy}} \bar{d}^{16}+\tilde{\Gamma}_{1434}^{\text{3bdy}} \bar{d}^{34})\\
=&-\frac{1}{2^{8}\times3 \eta \pi^2} \times \left[c^{50}(40 \bar{d}^{52} -40 \bar{d}^{70})\right.\\
&+c^{32}(-16 \bar{d}^{34} -80 \bar{d}^{52})\\
&\left. +c^{14} (-40 \bar{d}^{16} -24 \bar{d}^{34}) \right]\\
\propto& \lambda_{7, 1} \times \left(5 c^{50}+ 2 c^{32} + 9 c^{14}  \right)
\label{gaffnianneutral}
\end{aligned}
\end{equation}
where $\lambda_{7, 1}$ is the coefficient of the structure factor of the states with the total angular momentum quantum number $n_1+n_2=7$, which is proportional to the expectation value of $\hat{V}^{\text{3bdy}}_7$ with respect to the Gaffnian state, and the ratio of the coefficients given in Ref.\cite{yang2018three} has been used:
\begin{equation}
\alpha^{70}:\alpha^{52}:\alpha^{34}:\alpha^{16} = -\frac{\sqrt{21}}{8}:\frac{1}{8}:\frac{\sqrt{15}}{8}:\frac{3\sqrt{3}}{8}
\end{equation}

\begin{figure}
\centering
\includegraphics[scale=0.13]{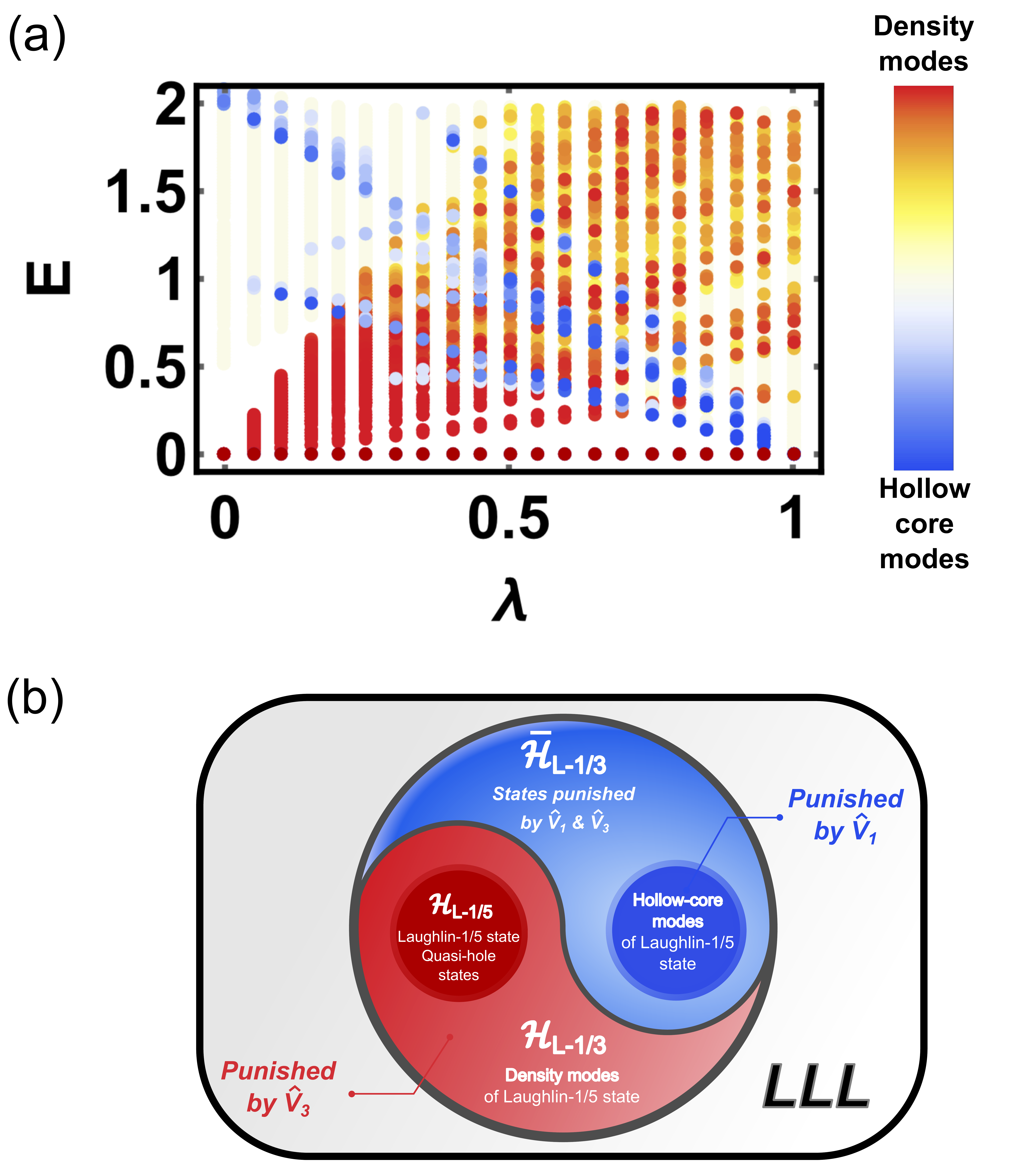}
\caption{\textbf{Spectrum of the toy Hamiltonian $\hat{H}_L$ with respect to 6 electrons and 26 orbitals.} (a) As the color bar shows, the red dots depict the spectrum of the density modes and the hollow-core modes are denoted by blue dots. The color of each dot is determined by calculating their collective overlap with all the density modes in the Laughlin-$1/3$ null space $\mathcal{H}_{\text{L-1/3}}$, or all the hollow-core modes in the complementary space $\mathcal{\bar{H}}_{\text{L-1/3}}$. As $\lambda$ increases, though the ground state (denoted by dark red) stays invariant, the low-lying states show a clear cross-over behaviour and transform from density modes to hollow-core modes. (b) illustrates the structure of the Hilbert space and the relationship between the states and the model Hamiltonian. The Laughlin-$1/5$ null space $\mathcal{H}_{\text{L-1/5}}$ (dark red circle) punished by neither $\hat V^{\text{2bdy}}_1$ nor $\hat V^{\text{2bdy}}_3$ is the sub-space of $\mathcal{H}_{\text{L-1/3}}$ (only punished by $\hat V^{\text{2bdy}}_3$). Meanwhile these exist the hollow-core modes (blue circle) only punished by $\hat V^{\text{2bdy}}_1$ in $\mathcal{\bar{H}}_{\text{L-1/3}}$. All of the other states are punished by both $\hat V^{\text{2bdy}}_1$ and $\hat V^{\text{2bdy}}_3$, living in the remaining part of $\mathcal{\bar{H}}_{\text{L-1/3}}$.}
\label{laughlin15}
\end{figure}

Thus the graviton mode energy of the Gaffnian state can be determined by tuning $c^{50}$, $c^{32}$ and $c^{14}$. Furthermore, there exist no $c^{30}$ and $c^{12}$ terms in Eq.\ref{gaffnianneutral}, which indicates the graviton mode of the Gaffnian state should live in the null space of the Moore-Read model Hamiltonian $\hat{V}^{3bdy}_{3}$. In particular, the variational energy of the Gaffnian graviton mode is independent of the strength of $\hat V_3^{\text{3bdy}}$ in the Hamiltonian.

\begin{figure}
\centering
\includegraphics[scale=0.128]{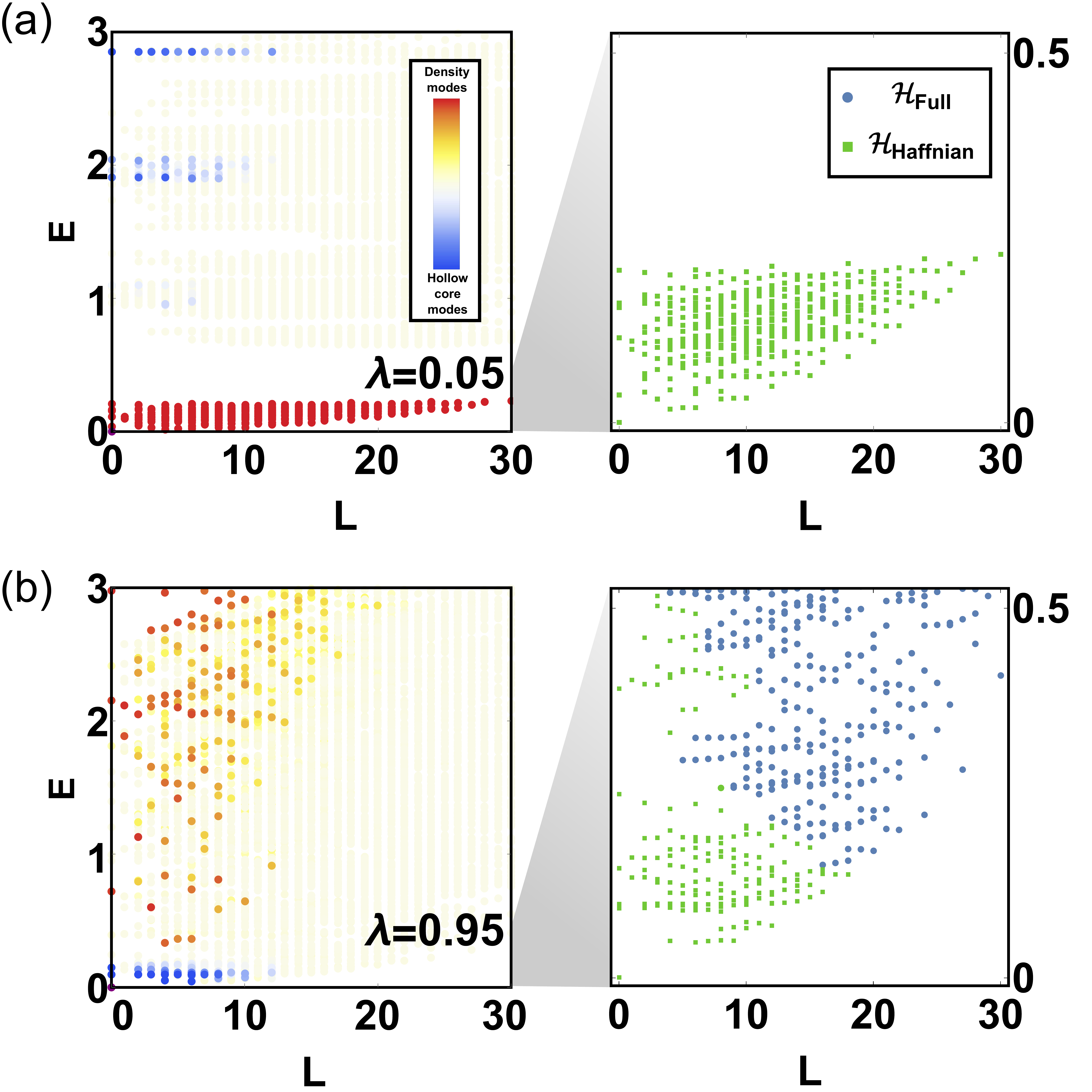}
\caption{\textbf{Spectra of the toy Hamiltonian $\hat{H}_L$ diagonalized in different Hilbert spaces.} The case with $\lambda = 0.05$ is shown in (a) and $\lambda = 0.95$ in (b). In the left panel, one can clear observe the transition of the low-lying states when $\lambda$ increases from $0.05$ to $0.95$. These states have different nature as explained in Fig.\ref{CF}. In the right panel, we zoom in on the spectra to the range $E \in [0, 0.5]$ and mark the states living in the Haffnian null space with green squares and other states with blue dots. As expected, by using the CF picture, both the density modes and the hollow-core modes live in the Haffnian null space. Note that the lowest angular momentum of the hollow-core modes is $L_{\text{min}}=4$ as can be seen in (b). Furthermore, the quantized energy of the hollow-core modes (especially when $\lambda \rightarrow 0$) can be understood by using the root configurations in Eq.\ref{hcroot}.}
\label{Laughlin_Gaffnian}
\end{figure}

\section{Numerical study}\label{sec_numerics}
As shown in Fig.\ref{nullspace}, the theoretical derivations have revealed the well-organized hierarchical structure of the null space of multiple model Hamiltonians, in which the graviton modes of different FQH states reside. For the graviton modes to be experimentally relevant, they have to be low-lying states in the excitation spectrum. For fully-gapped FQH phases, the graviton modes are also gapped. Moreover, given they are neutral excitations, for realistic interactions they may also become gapless without affecting the robustness of the Hall conductivity plateau\cite{kang2017neutral, jolicoeur2014absence, yang2021gaffnian}. We now proceed to perform numerical calculations for the dynamical properties of the graviton modes, using the theoretical tools developed in the previous sections. We focus in particular on the Laughlin-$1/5$ state and the Gaffnian state, and discuss possible theoretical and experimental consequences.

\begin{figure}[t]
\centering
\includegraphics[scale=0.2]{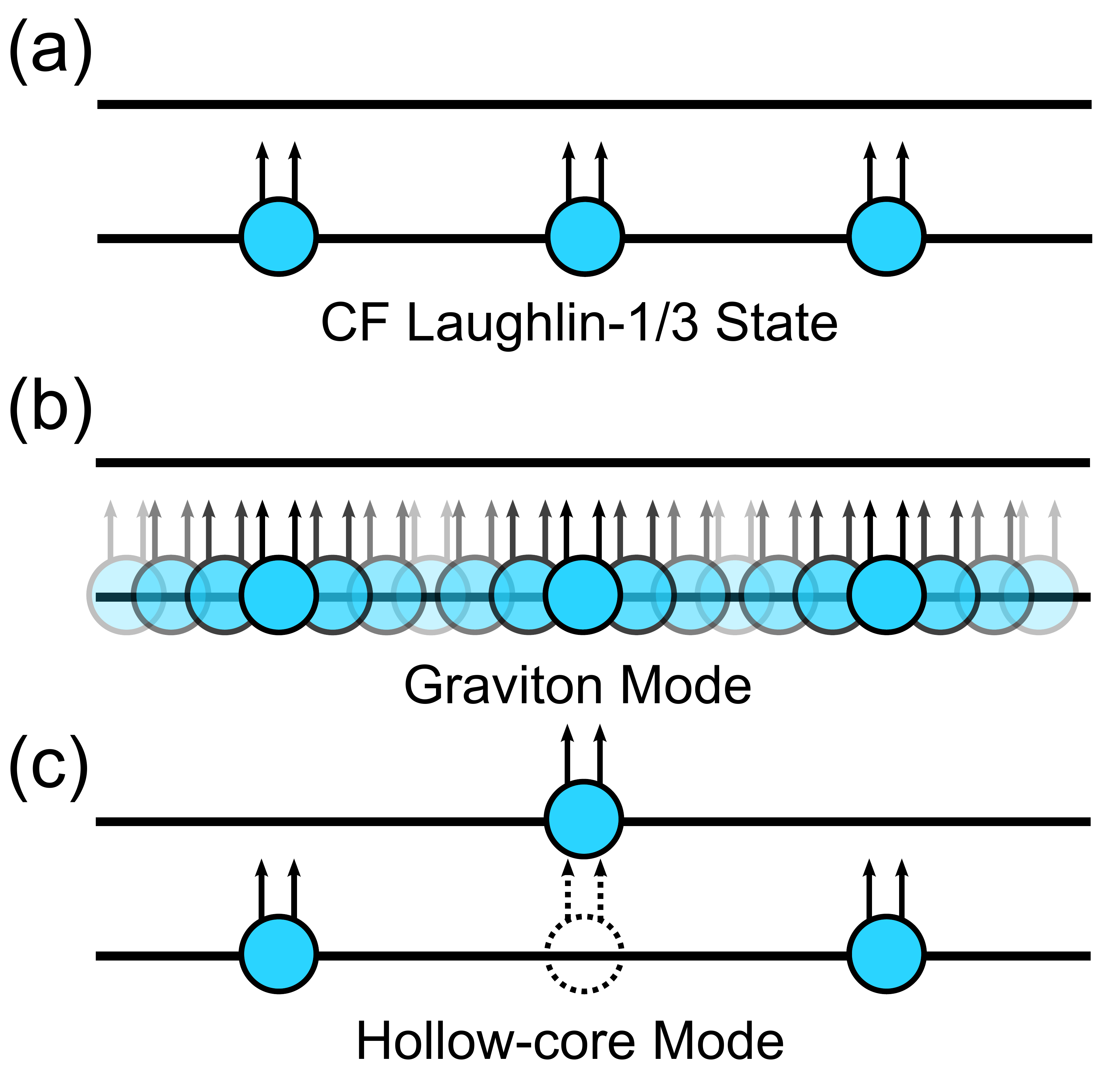}
\caption{\textbf{Nature of the low-lying states with different model Hamiltonian in the Composite Fermion picture.} (a) The Laughlin-$1/5$ state of the electrons can be reinterpreted as the Laughlin-$1/3$ state of CFs consisted of one electron and two fluxes. (b) The graviton modes can be understood as the excitations of CFs in the lowest $\Lambda$ level. (c) The hollow-core modes are created by exciting CFs to the second $\Lambda$ level, which still live in the Gaffnian null space. }
\label{CF}
\end{figure}

\subsection{Laughlin-$1/5$ graviton mode}
We have shown that the graviton mode of Laughlin-$1/5$ state lives within the null space of $\hat V_1^{\text{2bdy}}$, and it is a quantum fluid of Laughlin quasiholes. If we look at a short-range interaction with model Hamiltonians involving only $\hat V_1^{\text{2bdy}}$ and $\hat V_3^{\text{2bdy}}$, the dynamics of the graviton modes is completely controlled by $\hat V^{\text{2bdy}}_3$. It is thus useful to understand the low-lying excitations of the following toy model:
\begin{equation}
\hat{H}_L = (1-\lambda) \hat{V}^{\text{2bdy}}_1 + \lambda \hat{V}^{\text{2bdy}}_3
\label{Hamil1}
\end{equation}
The ground state is invariant when tuning the value of $\lambda$. In contrast, the low-lying excitations can be qualitatively different. In particular, when $\lambda$ is close to zero, the graviton mode and the magnetoroton modes should be the low-lying excitations. On the contrary, if there exist states that are punished by $\hat V^{\text{2bdy}}_1$ but not $\hat V^{\text{2bdy}}_3$, then they will become the low-lying states when $\lambda$ is close to unity. Thus one can expect to see the transition of the low-lying states when $\lambda$ is continuously increased from $0$ to $1$. 

\begin{figure}[t]
\centering
\includegraphics[scale=0.13]{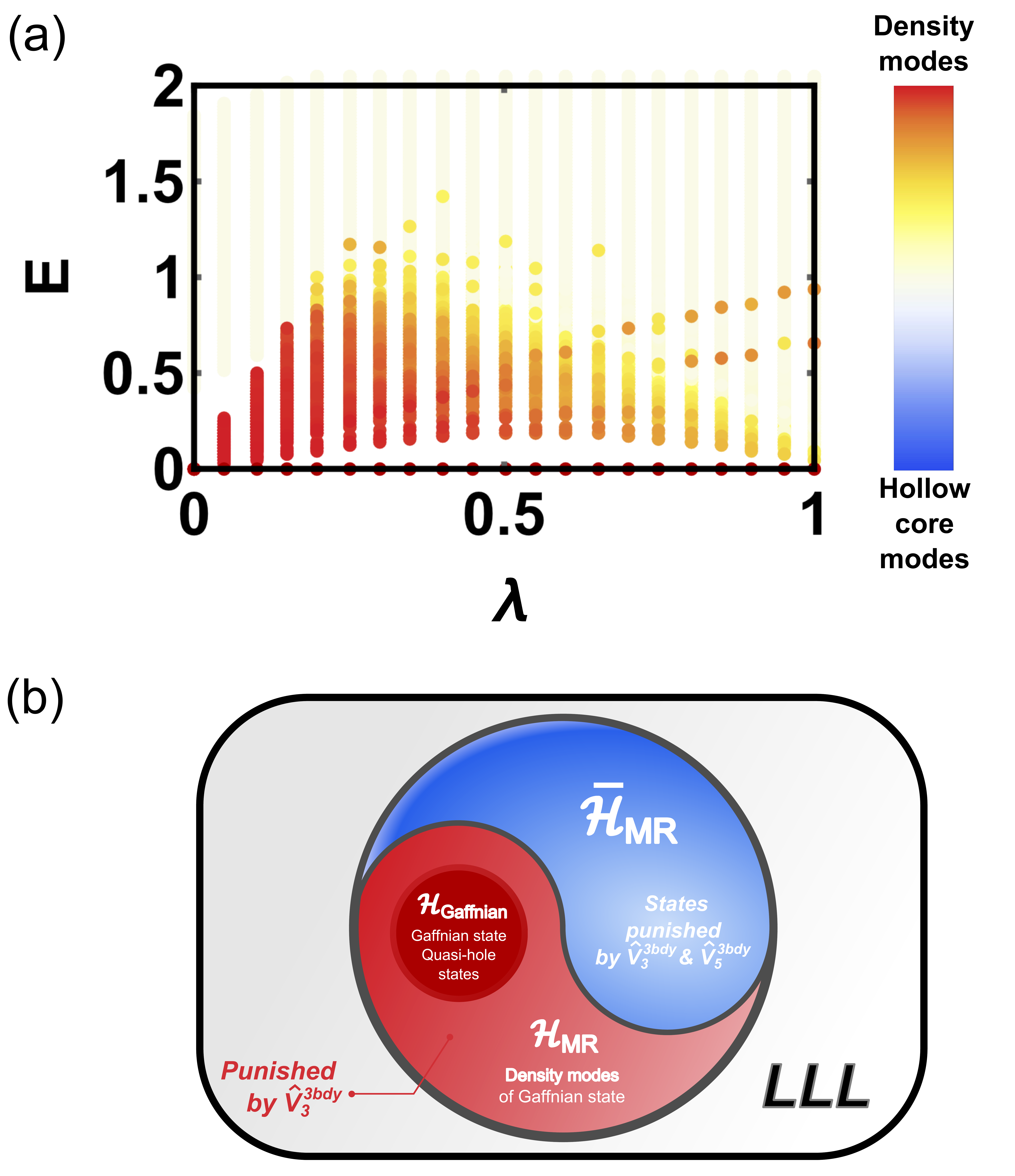}
\caption{\textbf{Spectrum of the toy Hamiltonian $\hat{H}_G$ with respect to 10 electrons and 22 orbitals.} (a) shows the spectra of the model Hamiltonian $\hat{H}_G$ in Eq.\ref{Hamil3}. The low-lying states are density modes even when $\lambda$ is quite large, and the absence of hollow-core modes is different from the Laughlin-$1/5$ state in Fig.\ref{laughlin15}. (b) illustrates the structure of the Hilbert space and the relationship between the states and the model Hamiltonian. All the states in the complementary space of the Moore-Read null space are punished by both $\hat{V}^{\text{3body}}_3$ and $\hat{V}^{\text{3body}}_5$.}
\label{gaff_10e}
\end{figure}

The results of the Laughlin states with $6$ electrons are shown in Fig.\ref{laughlin15}. While the ground state is invariant (Laughlin-$1/5$ state, denoted by the dark red color in Fig.\ref{laughlin15}), the low-lying excitations show a clear cross-over behaviour. When $\lambda \rightarrow 1$, the density modes including the graviton modes and the multi-magnetoroton modes, shown by red spectrum in Fig.\ref{laughlin15} (a), are no longer low-lying excitations. The structure of the Hilbert space in the LLL is illustrated in Fig.\ref{laughlin15} (b). The null space of the Laughlin-$1/5$ model Hamiltonian (Laughlin-$1/5$ null space for short) denoted by the red circle is a proper subspace of the Laughlin-$1/3$ null space(light-red part), the complement space of which contains the states either only punished by $\hat V^{\text{2bdy}}_1$ (blue circle), or punished by both $\hat V^{\text{2bdy}}_1$  and $\hat V^{\text{2bdy}}_3$ (light-blue part). We can refer to the blue states as the ``hollow-core" modes, since they live in the null space of  $\hat V^{\text{2bdy}}_3$ but out of the null space of $\hat V^{\text{2bdy}}_1$\cite{PhysRevLett.60.956, PhysRevB.38.3636, PhysRevB.102.245107}.

It is useful to look more closely at the spectra of $\mathcal{H}_L$ with $\lambda = 0.05$ and $\lambda = 0.95$ as shown in the left panel of Fig.\ref{Laughlin_Gaffnian}, where the density modes including the graviton modes make up the low-lying states when $\lambda$ is close to $0$. In contrast when $\lambda$ is close to $1$, the energy of these states significantly increases as expected so the low-lying excitations are replaced by the hollow-core modes. To understand better the nature of the low-lying states, one can also diagonalized $\hat{H}_L$ in different sub-Hilbert spaces, instead of the full Hilbert space of a single LL, and to check if the truncation of the Hilbert space affects the low-lying excitations. In the right panel of Fig.\ref{Laughlin_Gaffnian}, the spectra of the Hamiltonian diagonalized in the full Hilbert space are shown, where the states that live almost entirely within the Haffnian null space are denoted by green squares. For both cases, $\hat{H}_L$ with $\lambda = 0.05$ (low-lying excitations consisted of density modes) and $\lambda = 0.95$ (low-lying excitations consisted of hollow-core modes), numerical studies show strong evidence that all the low-lying states live in the Haffnian null space. On the other hand, the graviton and the magnetoroton modes live within the null space of $\hat V^{\text{2bdy}}_1$ (which itself is a subspace of Haffnian null space), while the hollow-core modes live outside of the $\hat V^{\text{2bdy}}_1$ null space.

We can also understand the differences between these two types of low-lying states, by appealing to the intuitive picture from the Composite Fermion(CF) theory\cite{jain1989composite, jain2007composite}. Fig.\ref{CF} illustrates the physical distinctions between the graviton modes (low-lying states when $\lambda = 0.05$) and the hollow-core modes (low-lying states when $\lambda = 0.95$). According to the CF theory, the Laughlin-$1/5$ state of electrons can be reinterpreted as the Laughlin-$1/3$ state of CFs as Fig.\ref{CF} (a) shows, because each CF contains one electron and two fluxes so the filling factor becomes $\nu^* = 1/(5-2) = 1/3$. Similar to the Landau levels of electrons, the discrete levels of CFs are sometimes named as ``$\Lambda$ levels"\cite{jain2007composite}. The Laughlin-$1/3$ null space only contains the states in the first CF level. The graviton and the magnetoroton modes are thus excitations within the partially filled first CF level, which are low-lying excitations for small $\lambda$. In contrast, when $\lambda$ approaches one, the hollow-core modes come from the excitations of the CFs into the second CF level, in some sense similar to the graviton modes of the Laughlin-$1/3$ state. Some of the root configurations containing one or more of the ``hollow-core" mode can be written as:
\begin{equation}
\begin{aligned}
&110000001000010000100001 \cdots \qquad L=4\\
&110000001100000000100001 \cdots \qquad L=8\\
&110000001100000000110000 \cdots \qquad L=12
\label{hcroot}
\end{aligned}
\end{equation}
with the lowest angular momentum $L=4$, agreeing well with Fig.\ref{Laughlin_Gaffnian} (b). From these root configurations one can also understand the quantized energy of the hollow-core modes. Indeed, each pair of electrons in the root configuration (corresponding to a CF in the second CF level) contribute a unit of energy. Thus for six electrons, the highest energy should be $\sim 3$ as shown in Fig.\ref{Laughlin_Gaffnian}(a). 

\begin{figure}[t]
\centering
\includegraphics[scale=0.128]{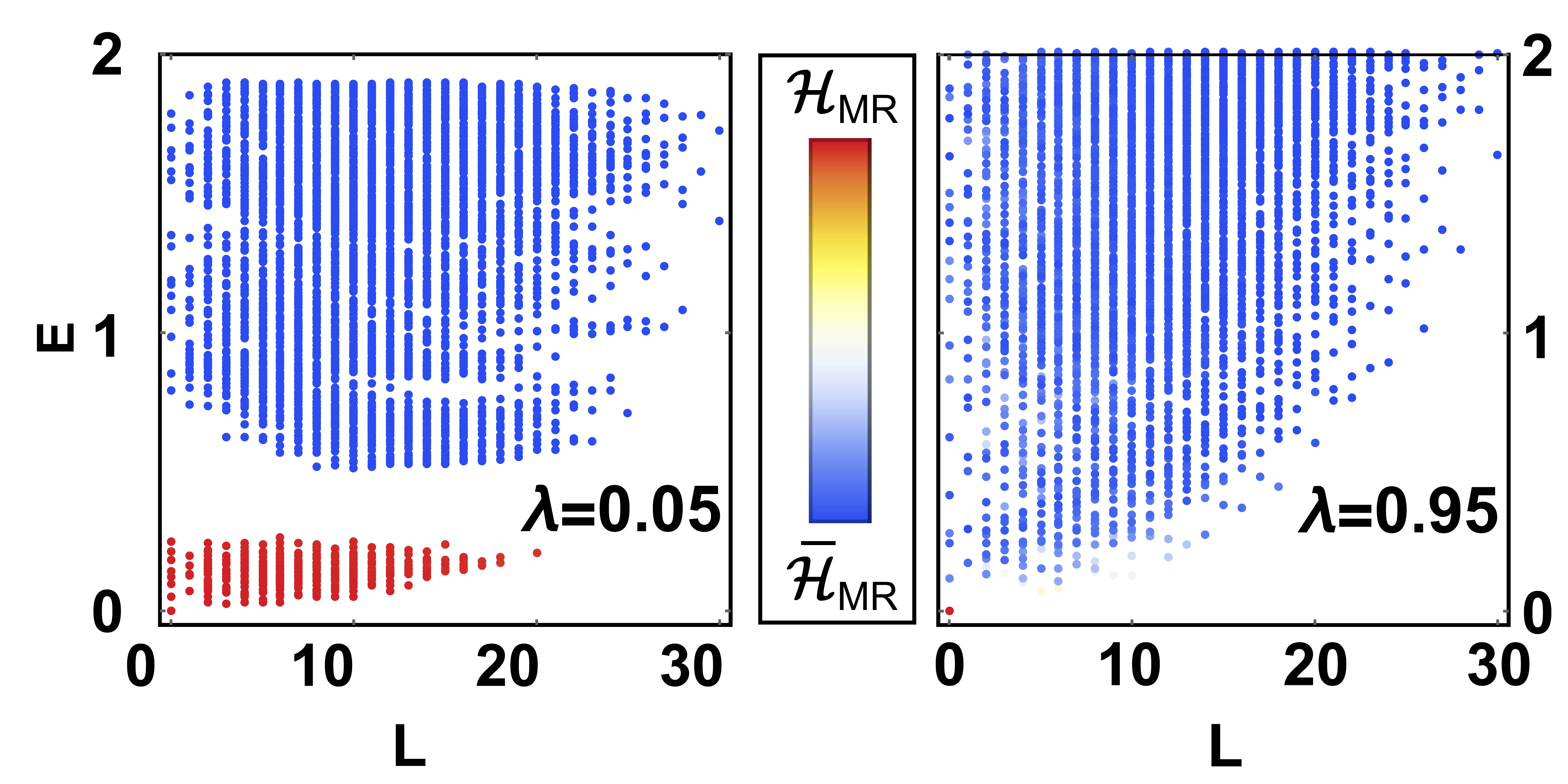}
\caption{\textbf{Spectra of the toy Hamiltonian $\hat{H}_G$ diagonalized in the full Hilbert spaces.} The spectrum with $\lambda = 0.05$ is shown in the left panel and $\lambda = 0.95$ in the right panel. As expected, when $\lambda = 0.05$ all the low-lying states (density modes) live within the Moore-Read null space $\mathcal{H}_{\text{MR}}$. Meanwhile when $\lambda = 0.95$, all the states except the ground state are in the complementary space $\bar{\mathcal{H}}_{\text{MR}}$.}
\label{Gaffnian_with_Laughlin}
\end{figure}

\subsection{Experimental significance}

While we analyse the graviton modes above with only toy models, they can give insights on the experimental measurements of low-lying neutral excitations in FQH phases, using for example Raman scattering or inelastic photon scattering\cite{pinczuk1994inelastic, pinczuk1998light, wurstbauer2013resonant, wurstbauer2015gapped}. For the Laughlin phase at $\nu=1/5$, a short-range realistic interaction (e.g. in the LLL, or with the Coulomb interaction renormalised by sample thickness or screening\cite{zhang1986excitation,park1999activation, peterson2008finite}), the graviton mode as well as the magnetoroton modes will be more prominent. However, since the realistic interaction cannot completely project out the complement of the null space of $\hat V_1^{\text{2bdy}}$, the graviton modes will always mix with the hollow-core modes, so their experimental signals will not be as clean as those from, for example, the Laughlin-$1/3$ phase. 

With longer-range interactions (e.g. in higher LLs), it is still possible for the Laughlin-$1/5$ state to be robust in the sense that the plateau of the Hall conductivity can be observed\cite{Balram_2017}. However, for such interactions, we do not expect clear experimental signals of the graviton modes due to the strong mixing with the hollow-core modes. If the realistic interaction is short-ranged, but dominated by $\hat V_3^{\text{3bdy}}$, there will be no graviton modes (or quadruple excitations) at low energy. Instead, the low-lying excitations are in the complement of the null space of $\hat V_1^{\text{2bdy}}$, and in particular the quasihole excitations can be fractionalised and carry the charge of $e/10$. This is analogous to the nematic FQH phase at $\nu=1/3$ observed in the experiments, and the fractionalisation of the Laughlin-$1/3$ quasiholes near the phase transition\cite{PhysRevLett.127.046402}. It would thus be very interesting if the hollow-core modes, characterised by fractionalised Laughlin-$1/5$ quasiholes, can be observed in experiments.

There was also recent interest in the possibility of the multiple graviton modes in FQH states. Here we show microscopically that at $\nu=1/5$, the Laughlin phase has only a single graviton mode living in the null space of $\hat V_1^{\text{2bdy}}$. In particular, all the density modes are excitations in the lowest CF level, and their coupling to higher CF levels are suppressed by the short-range interaction. It is important to note from our analytical proof that this is the direct consequence of the fact that the Laughlin model wavefunction has exact zero energy with respect to $\hat V_1^{\text{2bdy}}$ and $\hat V_3^{\text{2bdy}}$. On the other hand, the CF state at $\nu=2/7$, which can be understood as the particle-hole conjugate of the Laughlin-$1/5$ state within the lowest CF level, is no longer the exact zero-energy state with respect to $\hat V_1^{\text{2bdy}}$ and $\hat V_3^{\text{2bdy}}$. Thus the graviton mode of the $\nu=2/7$ state will have components both within the null space of $\hat V_1^{\text{2bdy}}$ and the complement of it. One can reinterpret this as multiple graviton modes\cite{PhysRevLett.126.076604,Nguyen2021, Liou2019, Haldane2021}: since the null space of $\hat V_1^{\text{2bdy}}$ corresponds to the lowest CF level, the two graviton modes indeed can be understood as geometric fluctuation within the lowest CF level, as well as the geometric fluctuation associated with the mixing between different CF levels. This will lead to two resonance peaks of opposite chirality with the Raman measurement, while the relative strength of the two peaks depends on the microscopic details of the electron-electron interaction.

\subsection{Gaffnian graviton mode}

The behaviours of the Gaffnian graviton modes at $\nu=2/5$ are not entirely the same as the Laughlin-$1/5$ case. Based on the same idea, one can study these modes by diagonalizing the following Hamiltonian:
\begin{equation}
\hat{H}_G = (1-\lambda) \hat{V}^{\text{3bdy}}_3 + \lambda \hat{V}^{\text{3bdy}}_5
\label{Hamil3}
\end{equation}
The spectrum of the droplet with 10 electrons is shown in Fig.\ref{gaff_10e}. The density modes are still behaving as predicted by the theoretical derivations, i.e. occupying the low-lying states of $\mathcal{H}_G$ with $\lambda \rightarrow 0$. However as shown in Fig.\ref{gaff_10e}, when $\lambda \rightarrow 1$ there exists no state in the Hilbert space that is only punished by $\hat V^{\text{3bdy}}_3$, so the null space of $\hat V^{\text{3bdy}}_5$ lies entirely within the Gaffnian null space (also see Fig.\ref{Gaffnian_with_Laughlin}). Thus there are no hollow-core modes here in contrast to the case for the Laughlin-$1/5$ phase. It would be interesting to see if this is related to the conjecture that the model Hamiltonian of Eq.{\ref{Hamil3}} is gapless in the thermodynamic limit at $\nu=2/5$, while the Laughlin-$\frac{1}{5}$ phase is gapped.

From Eq.\ref{gaffnianneutral}, we know the graviton mode gap of the Gaffnian state at $\nu = 2/5$ is  determined by the expectation value of just $\hat{V}^{\text{3bdy}}_7$ with respect to the ground state, denoted by $d_{\text{3bdy}}^7$ to be consistent with the two-body case in Fig.\ref{finitesize}, where the finite size scaling of the structure factor coefficients of different states is shown. Previous numerical calculations show evidence that in the thermodynamic limit, the gap of Eq.(\ref{Hamil3}) at $\lambda=0.5$ closes in the $L=2$ sector\cite{jolicoeur2014absence}. This is indeed the sector of the graviton mode, and we have shown it is in the null space of $\hat V_3^{\text{3bdy}}$ and its energy is entirely determined by $\hat V_5^{\text{3bdy}}$. Our numerical calculation is thus valid for a family of model Hamiltonian of Eq.(\ref{Hamil3}) parametrized by $\lambda$. It shows that the graviton mode of the Gaffnian phase will likely go soft in the thermodynamic limit, as its variational energy is an order of magnitude smaller than the graviton modes in the Moore-Read phase. It is however also important to note that the graviton mode energy gap of the Laughlin-$1/5$ phase is also an order of magnitude lower than that of the Laughlin-$1/3$ phase, as shown in Fig. (\ref{finitesize}a).

While the Gaffnian model Hamiltonian is a theoretical model that is conjectured to be gapless and thus describing a possibly critical point, it is also closely related to the gapped Jain $\nu=2/5$ phase from short-range two-body interactions\cite{jain1992hierarchy, Freedman2021, simon2007construction}. It would be interesting to see how the graviton modes at $\nu=2/5$ behave when we approach the critical point from the gapped Jain phase at $\nu=2/5$. The finite numerical analysis seems to suggest that both the charge gap and the neutral gap will close, but with realistic interactions, we can also entertain the possibility that the graviton modes of the Jain $\nu=2/5$ phase can close first while the charge gap remains open, in analogy to the nematic FQH phase that has been observed in experiments at $\nu=2+1/3$\cite{regnault2017evidence, maciejko2013field, feldman2016observation}.

\begin{figure}
\centering
\includegraphics[scale=0.21]{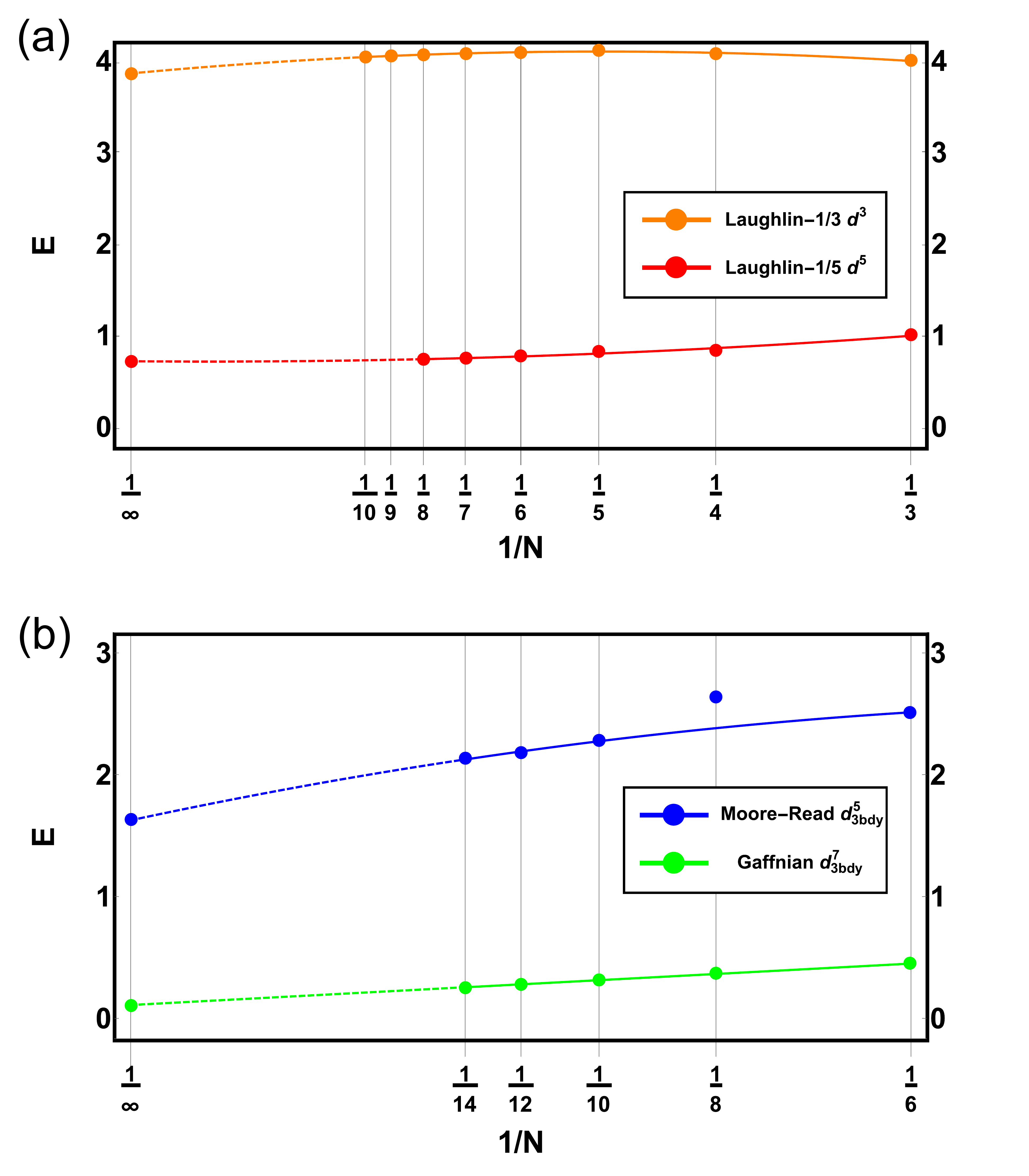}
\caption{\textbf{Finite size scaling of the structure factor coefficients of different states.}(a) The structure factor expansion coefficients of the Laughlin-$1/3$ and the Laughlin-$1/5$ state. (b) The expectation value of  $\hat{V}^{\text{3bdy}}_7$ with respect to the Gaffnian state ($\approx 0.04$ in the thermodynamic limit), denoted by $d_{\text{3bdy}}^7$, is significantly smaller than other coefficients in the plot, where the expectation value of  $\hat{V}^{\text{3bdy}}_5$ with respect to the Moore-Read state is denoted by $d_{\text{3bdy}}^5$ ($\approx 1.6$ in the thermodynamic limit).}
\label{finitesize}
\end{figure}

\section{Summary and outlooks}\label{sec_con}

In summary, we have presented a number of analytical results for the variational energies of the graviton modes in FQH phases. These results are rigorous in the thermodynamic limit, for FQH states with any arbitrary two-body or three-body interactions. In particular, we show that the variational energies of the graviton modes are fully determined by the ground state wavefunction. In addition for short-range interactions, only the leading terms of the ground state structure factor, when expanded in the proper Laguerre polynomial basis, are involved in the computation of the graviton mode energy. These analytical results allow us to construct model Hamiltonians for these graviton modes, which are exact zero-energy states of these Hamiltonians. We can thus determine analytically if the graviton mode lives entirely within a certain conformal Hilbert space, or if they have finite overlaps in different conformal Hilbert spaces. The latter gives microscopic understanding of the multiple graviton modes proposed in the effective field theory descriptions.

There are a number of proposals for the graviton modes to be detected in experiments, but in general it is a difficult task because of the high energy of such excitations. For many FQH phases with Coulomb-based interactions in simple experimental settings, the long-wavelength excitations are not the lowest energy ones. The graviton modes thus have to compete with multi-roton and other neutral excitations. The analytical results we have derived can be useful in understanding how the graviton energy can be affected by realistic interactions, and how we can tune such interactions to lower their variational energies. The generalisation to three-body interactions in this work also allows us to treat Landau level mixing in realistic systems\cite{haldane1997landau, wojs2010landau, sodemann2013landau, simon2013landau, bishara2009effect}, which can be significant in higher LLs. Detailed studies of the graviton modes in the context of real experimental parameters will be carried out elsewhere. The softening of the graviton modes can also allow us to understand potential ``phase transitions" in topological systems even when the ground state topological properties are invariant, as we explored numerically with the Laughlin-$1/5$ state and the Gaffnian state with toy Hamiltonians in this work.

Even with the methodology and the analytical tools developed in this work, the quantitative values of the graviton energy in the thermodynamic limit cannot be determined without numerical computations and finite size scaling. It is, however, a much simpler procedure requiring the computation of only the ground state and partial information about its static structure factor, in contrast to the conventional ways requiring the computation of many low-lying states. The universal characteristic tensors derived in this work show that the Hilbert space of the FQH states are highly structured, and this formalism can in principle be generalised to interactions involving more than three particles. The dispersion of the graviton mode can also be computed analytically by expanding the single mode approximation to higher orders in momentum. At this stage, both cases are algebraically very involved. It would be useful in the future to carry out a more general and systematic calculation of the graviton energy and its dispersion for any arbitrary Hamiltonians. This, combined with a numerically more efficient way to obtain information from the ground state static structure factor (or the density correlation functions), can lead to much better understandings of the collective neutral excitations in non-abelian FQH phases.

\begin{acknowledgments}
This work is supported by the Singapore National Research Foundation (NRF) under NRF fellowship award NRF-NRFF12-2020-0005.
\end{acknowledgments}


\newpage
\pagebreak
\widetext
\begin{center}
\textbf{\large Supplementary material of \\``Analytic exposition of the graviton modes in fractional quantum Hall effects\\ and its physical implications"}
\end{center}
\setcounter{equation}{0}
\setcounter{figure}{0}
\setcounter{table}{0}
\setcounter{page}{1}
\setcounter{section}{0}
\makeatletter
\renewcommand{\theequation}{S\arabic{equation}}
\renewcommand{\thefigure}{S\arabic{figure}}
\renewcommand{\bibnumfmt}[1]{[S#1]}
\renewcommand{\citenumfont}[1]{S#1}

The supplementary material provides the through and  self-contained derivations for the formulas and conclusions in the main text. To strike a balance between sufficient rigor and explicit content we chose to use the mathematical proof formalism to present the whole procedure. Important steps are specifically proved in the \textit{Propositions}. Moreover the \textit{Theorems} contain those formulas that are physically relevant but have not been proved by mathematicians. Lots of efficient \textit{Lemmas} are  used in the derivations, and for simplicity not all the proofs of the lemmas are complete but references will be offered in that case. In the first section, all the assumptions, notations and conventions will be clearly explained, with which we will show the complete derivations of the energy of the graviton mode with three-body interactions, . The two-body case is also reviewed in the following section, which was firstly proposed in Ref.\cite{yang2020microscopic}. The last part shows the table of the anti-symmetric FQH three-body wavefunction expansion in the basis of $| n_1, n_2 \rangle$, from which one can easily get the relationship between the expansion coefficients $\bar{d}^{n_1 n_2}$.

\section{Notations and conventions}
Firstly all the physical assumptions are listed below, which will be universal in the whole derivation:

\textbf{Assumption 1 (Rotational invariance)}

\quad $\cdot$\textit{The quantum numbers related to the angular momentum are thus good quantum numbers.} 

\quad$\cdot$\textit{The interactions used only have radial components.}

\textbf{Assumption 2 (Long-wavelength limit)}

\quad$\cdot$\textit{The momentum of the graviton mode should be  close to $0$, less than the roton minimum.}

\textbf{Assumption 3 (Thermodynamic limit)}

\quad$\cdot$\textit{The particle number in the system is considered as infinity.} 

\textbf{Assumption 4 (Lowest Landau level)}

\quad$\cdot$\textit{All the phsical quantities will be projected to the lowest Landau level.} 

\quad$\cdot$\textit{Only the guiding center coordinates are involved.}

\quad$\cdot$\textit{This is also the origin of many-body interactions.} 

\textbf{Assumption 5 (Single-mode approximation)}

\quad$\cdot$\textit{The graviton mode can be constructed by the regularized density operator acting on the ground state.} 

\textbf{Assumption 6 (Fermionic statistics)}

\quad$\cdot$\textit{All the particles are fermions in the system.}

\quad$\cdot$\textit{All the wavefunctions are anti-symmetric.}

The notations of most of the quantities used in the derivations are given in Table.\ref{notation_table}. Furthermore the magnetic length $l_B$ is set to be $1$. Bold symbols (e.g. $\boldsymbol{q}_i$) are used to denote two-dimensional vectors and the scalars will be plain (e.g. $q_i$). Also the Einstein summation convention are adopted in the results.

\begin{table}[]
\renewcommand\arraystretch{1.5}
\centering
\begin{tabular}{|c|c|}%
\hline
$\hat R_i^a$              & Guiding center operator  \\[3pt] \hline
$| \psi_0 \rangle$        & Ground state  \\ \hline
$\langle \hat{O} \rangle_0$  & Expectation value of the operator $\hat{O}$ acting on the ground state  \\ \hline
$\hat\rho_{\bm q}$        & Guiding center density operator  \\ \hline
$\delta \hat\rho_{\bm q}$ & Regularized guiding center density operator  \\ \hline
$V_{\bm q}$               & Effective potential  \\[3pt] \hline
$\tilde S_{\bm q}$        & Static structure factor  \\[3pt] \hline
$s_{\bm q}$               & Reduced structure factor  \\ \hline
$L_{k}\left(q^2\right)$    &  Laguerre polynomials  \\[3pt] \hline
$c^m$               &  Expansion coefficients of effective potential\\ \hline
$d^n$               &  Expansion coefficients of reduced structure factor\\ \hline
$\left| \psi_{\boldsymbol{q}}\right. \rangle$               &  Single-mode approximation(SMA) model wavefunction\\ \hline
$\Gamma^{\text{2bdy}}_{m n}$              &  Two-body characteristic matrix\\ \hline \hline
$\hat{\rho}^i_{\boldsymbol{q_1}}$ &\begin{tabular}[c]{@{}c@{}} Guiding center density operator of the $i$-th electron \\ $\hat{\rho}^i_{\boldsymbol{q_1}} = e^{i q_{1a} \hat{R}^{a}_{i} }$\end{tabular}  \\[3pt] \hline
$\bar{S}_{\boldsymbol{q}_1,\boldsymbol{q}_2}$              & Reduced three-body structure factor  \\[3pt] \hline
$\hat{b}_i^{\dagger}$ \& $\hat{b}_i$     & Ladder operators   \\ \hline
$\bar{S}_{\boldsymbol{p}_1,\boldsymbol{p}_2}$              & Fourier transform pair of  reduced three-body structure factor  \\[3pt] \hline
$V_{\bm q_1 \bm q_2}$               & Three-body effective potential  \\[3pt] \hline
$\bm \tilde q_i$               & Momentum components in Jacobi coordinates  \\ \hline
$\bm \tilde p_i$               & Fourier transform pair of $\bm \tilde q_i$\\ \hline
$\tilde Q_i$ \& $\tilde P_i$                & Square of  $\bm \tilde q_i$ \& $\bm \tilde p_i$\\ \hline
$J_n (x)$    & Bessel functions of the first kind  \\[3pt] \hline
$ _0\mathbf{F}_1(;n;x)$    & Hyper-geometric functions  \\[3pt] \hline
$L^{(\alpha)}_{k}\left( x \right)$    & Generalized Laguerre polynomials  \\[3pt] \hline
$c^{m_1 m_2}$               &  \begin{tabular}[c]{@{}c@{}}Expansion coefficients of three-body effective potential\end{tabular}\\ \hline
$ d_{i}^{n_1 n_2}$        &  Expansion coefficients of reduced three-body structure factor\\ \hline
$\bar{d}^{n_1 n_2}$        &  Redefined expansion coefficients of reduced three-body structure factor\\ \hline
$\alpha^{n_1 n_2}$        &  Expansion coefficients of three-body antisymmetric wavefunctions in the basis of $| n_1, n_2 \rangle$\\ \hline
$\tilde{\Gamma}_{m_1 m_2 n_1 n_2}^{\text{3bdy}}$              &  Three-body characteristic tensor\\ \hline
$\tilde{\Gamma}^{0/+/-}_{m_1 m_2 n_1 n_2}$              & \begin{tabular}[c]{@{}c@{}} Diagonal/Off-diagonal parts of the three-body characteristic tensor\end{tabular}\\[3pt] \hline
\end{tabular}
\caption{Notations of the physical quantities and the mathematical functions used in the text.}
\label{notation_table}
\end{table}

\section{Graviton-mode energy with three-body interactions}


\begin{definition}
The generic three-body Hamiltonian in a single Landau level (LL)
\begin{equation}
\begin{aligned}
\hat{H}_{\text{3bdy}}
=&\int \frac{d^{2} \boldsymbol{q}_1 d^{2} \boldsymbol{q}_2}{(2 \pi)^4} V_{\boldsymbol{q_1, q_2}} \hat{\rho}_{\boldsymbol{q_1}} \hat{\rho}_{\boldsymbol{q_2}} \hat{\rho}_{\boldsymbol{-q_1-q_2}} -\int \frac{d^{2} \boldsymbol{q}_1 d^{2} \boldsymbol{q}_2}{(2 \pi)^4} V_{\boldsymbol{q_1, q_2}}  (\hat{\rho}_{\boldsymbol{q_1+q_2}} \hat{\rho}_{\boldsymbol{-q_1-q_2}}+ \hat{\rho}_{\boldsymbol{q_1}} \hat{\rho}_{\boldsymbol{-q_1}}+  \hat{\rho}_{\boldsymbol{q_2}} \hat{\rho}_{\boldsymbol{-q_2}})\\
&-N_e \int \frac{d^{2} \boldsymbol{q}_1 d^{2} \boldsymbol{q}_2}{(2 \pi)^4} V_{\boldsymbol{q_1, q_2}}=\sum_{i \ne j \ne k}\int \frac{d^{2} \boldsymbol{q}_1 d^{2} \boldsymbol{q}_2}{(2 \pi)^4} V_{\boldsymbol{q_1, q_2}} \hat{\rho}^i_{\boldsymbol{q_1}} \hat{\rho}^j_{\boldsymbol{q_2}} \hat{\rho}^k_{\boldsymbol{-q_1-q_2}}
\end{aligned}
\end{equation}
where $N_e$ denotes the number of electrons.
\end{definition}

\begin{definition}
The reduced three-body structure factor for the unperturbed ground state
\begin{equation}
\begin{aligned}
\bar{S}_{\boldsymbol{q}_1,\boldsymbol{q}_2}&=\sum_{i \ne j \ne k} \langle \psi_{0} \left| \hat{\rho}^i_{\boldsymbol{q_1}} \hat{\rho}^j_{\boldsymbol{q_2}} \hat{\rho}^k_{\boldsymbol{-q_1-q_2}}\right| \psi_{0}\rangle=\sum_{i \ne j \ne k}\left\langle\psi_{0}\left| e^{i q_{1a} \hat{R}^{a}_{i} } e^{i q_{2a} \hat{R}^{a}_{j} } e^{- i (q_{1a} + q_{2a})\hat{R}^{a}_{k} } \right| \psi_{0}\right\rangle
\end{aligned}
\end{equation}
\end{definition}

Then the ground state energy should be given by:
\begin{equation}
\begin{aligned}
E_0=&\sum_{i \ne j \ne k}\int \frac{d^{2} \boldsymbol{q}_1 d^{2} \boldsymbol{q}_2}{(2 \pi)^4} V_{\boldsymbol{q_1, q_2}} \langle \psi_{0} \left| \hat{\rho}^i_{\boldsymbol{q_1}} \hat{\rho}^j_{\boldsymbol{q_2}} \hat{\rho}^k_{\boldsymbol{-q_1-q_2}}\right| \psi_{0}\rangle =\int \frac{d^{2} \boldsymbol{q}_1 d^{2} \boldsymbol{q}_2}{(2 \pi)^4} V_{\boldsymbol{q_1, q_2}} \bar{S}_{\boldsymbol{q}_1,\boldsymbol{q}_2}
\end{aligned}
\end{equation}

\begin{definition}
Single-mode approximation(SMA) model wavefunction
\begin{equation}
\left| \psi_{\boldsymbol{q}}\right. \rangle = \delta \hat{\rho}_{\boldsymbol{q}} \left| \psi_{\boldsymbol{0}}\right. \rangle
\label{3benergy}
\end{equation}
which is used to describe the graviton mode in the long-wavelength limit.
\end{definition}

\begin{corollary}\label{3bcommutator}
The energy of the graviton mode can be written as
\begin{equation}
\begin{aligned}
\delta E_{\boldsymbol{q}\rightarrow 0}&= \lim_{\bm q\rightarrow 0} \frac{\langle\psi_{\boldsymbol{q}}|\hat{H}_{\text{3bdy}} \left| \psi_{\boldsymbol{q}}\right. \rangle}{\left. \langle\psi_{\boldsymbol{q}} \right| \psi_{\boldsymbol{q}}\rangle}-E_{0}=\lim_{\bm q\rightarrow 0} \frac{\left\langle\psi_{0}\left| \left[\delta \hat{\rho}_{-\boldsymbol{q}},\left[\hat{H}_{\text{3bdy}}, \delta \hat{\rho}_{\boldsymbol{q}}\right]\right] \right| \psi_{0}\right\rangle}{2 S_{\boldsymbol{q}}}\\
&=\lim_{\bm q\rightarrow 0} \sum_{i \ne j \ne k}\int \frac{d^{2} \boldsymbol{q}_1 d^{2} \boldsymbol{q}_2}{(2 \pi)^4} V_{\boldsymbol{q_1, q_2}}\times \frac{\left\langle\psi_{0}\left| \left[ \delta \hat{\rho}_{-\boldsymbol{q}},\left[\hat{\rho}^i_{\boldsymbol{q_1}} \hat{\rho}^j_{\boldsymbol{q_2}} \hat{\rho}^k_{\boldsymbol{-q_1-q_2}},  \delta \hat{\rho}_{\boldsymbol{q}}\right]\right] \right| \psi_{0}\right\rangle}{2 S_{\boldsymbol{q}}}
\end{aligned}
\end{equation}
\end{corollary}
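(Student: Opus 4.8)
The plan is to recognize Corollary~\ref{3bcommutator} as an instance of the Girvin--MacDonald--Platzman oscillator-strength identity for the single-mode approximation, which is \emph{independent} of the detailed form of the Hamiltonian: it holds for any Hermitian $\hat H$ that annihilates none of its structure but simply has $|\psi_0\rangle$ as an eigenstate. Consequently the three-body case follows by exactly the same manipulation used for the two-body interaction in Ref.~\cite{yang2020microscopic}, and the genuine three-body complications (the GMP algebra, the Jacobi-coordinate decomposition, and the explicit characteristic tensor) are all deferred to the \emph{subsequent} evaluation of the double commutator rather than to this identity. First I would abbreviate $A \equiv \delta\hat\rho_{\boldsymbol{q}}$, noting that Hermiticity of the guiding-center coordinates gives $A^\dagger = \delta\hat\rho_{-\boldsymbol{q}}$, so the SMA state is $|\psi_{\boldsymbol{q}}\rangle = A|\psi_0\rangle$ with norm $\langle\psi_{\boldsymbol{q}}|\psi_{\boldsymbol{q}}\rangle = \langle\psi_0|A^\dagger A|\psi_0\rangle$, which is (in the paper's normalization) the quantity $S_{\boldsymbol{q}}$ in the denominator.

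Next I would expand the symmetrized double commutator directly,
\begin{equation}
\tfrac{1}{2}\langle\psi_0|[A^\dagger,[\hat H_{\text{3bdy}},A]]|\psi_0\rangle = \tfrac{1}{2}\langle\psi_0|\left(A^\dagger \hat H A - A^\dagger A\hat H - \hat H A A^\dagger + A\hat H A^\dagger\right)|\psi_0\rangle .
\end{equation}
Using $\hat H_{\text{3bdy}}|\psi_0\rangle = E_0|\psi_0\rangle$ and $\langle\psi_0|\hat H_{\text{3bdy}} = E_0\langle\psi_0|$, the two ``diagonal'' terms collapse to $-\tfrac{1}{2}E_0\left(\langle A^\dagger A\rangle_0 + \langle A A^\dagger\rangle_0\right)$, leaving $\tfrac{1}{2}\left(\langle A^\dagger \hat H A\rangle_0 + \langle A\hat H A^\dagger\rangle_0\right)$ from the remaining pair.

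The one nontrivial input is the symmetrization, and this is where I expect the real care to be needed. I would invoke rotational invariance (Assumption 1) in the form of a $\pi$-rotation $\mathcal{R}$ that fixes both $|\psi_0\rangle$ and $\hat H_{\text{3bdy}}$ while sending $\boldsymbol{q}\to-\boldsymbol{q}$, so that $\mathcal{R}\,A\,\mathcal{R}^{-1} = A^\dagger$. Inserting $\mathcal{R}^{-1}\mathcal{R}$ around the operators and using $\mathcal{R}|\psi_0\rangle = |\psi_0\rangle$ then gives $\langle A\hat H A^\dagger\rangle_0 = \langle A^\dagger \hat H A\rangle_0$ and $\langle A A^\dagger\rangle_0 = \langle A^\dagger A\rangle_0$. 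With these two equalities the symmetrized double commutator reduces to $\langle A^\dagger\hat H A\rangle_0 - E_0\langle A^\dagger A\rangle_0 = \langle\psi_{\boldsymbol{q}}|\hat H_{\text{3bdy}}|\psi_{\boldsymbol{q}}\rangle - E_0\langle\psi_{\boldsymbol{q}}|\psi_{\boldsymbol{q}}\rangle$; dividing through by $\langle\psi_{\boldsymbol{q}}|\psi_{\boldsymbol{q}}\rangle = S_{\boldsymbol{q}}$ reproduces the first equality of the corollary. It is precisely this inversion symmetry $\boldsymbol{q}\to-\boldsymbol{q}$ of the rotationally invariant ground state that allows the \emph{unsymmetrized} oscillator strength to be rewritten as a manifestly Hermitian symmetrized double commutator; without it the two crossed matrix elements would differ and the clean form would fail.

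Finally, the last equality is immediate by linearity: substituting the manifestly linear representation $\hat H_{\text{3bdy}} = \sum_{i\ne j\ne k}\int \frac{d^2\boldsymbol{q}_1 d^2\boldsymbol{q}_2}{(2\pi)^4}\, V_{\boldsymbol{q_1,q_2}}\,\hat\rho^i_{\boldsymbol{q_1}}\hat\rho^j_{\boldsymbol{q_2}}\hat\rho^k_{\boldsymbol{-q_1-q_2}}$ and pulling the sum and the integral outside the commutator (legitimate since $[A^\dagger,[\,\cdot\,,A]]$ is linear in its middle argument) yields the stated three-body expression. A secondary, purely technical point is the bookkeeping of the $N_e$ and normalization factors relating $\langle\psi_{\boldsymbol{q}}|\psi_{\boldsymbol{q}}\rangle$ to the structure factor $S_{\boldsymbol{q}}$ of Eq.~(\ref{sf0}), together with the role of the \emph{regularized} density $\delta\hat\rho$ in guaranteeing $\langle\psi_{\boldsymbol{q}}|\psi_0\rangle = 0$ and a finite $S_{\boldsymbol{q}\to 0}$; these are conventions to be tracked rather than genuine obstacles.
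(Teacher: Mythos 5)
Your proposal is correct and is essentially the paper's own argument run in reverse: the paper starts from the Rayleigh quotient, doubles numerator and denominator, and uses the $\boldsymbol{q}\to-\boldsymbol{q}$ symmetry of the rotationally invariant ground state together with $\hat{H}_{\text{3bdy}}|\psi_0\rangle = E_0|\psi_0\rangle$ to reassemble the four resulting terms into the double commutator, whereas you expand the double commutator and use the same two inputs (with the inversion symmetry made explicit via a $\pi$-rotation) to collapse it back to the variational energy. The essential ingredients — Hermiticity $\delta\hat\rho_{\boldsymbol{q}}^\dagger = \delta\hat\rho_{-\boldsymbol{q}}$, the eigenstate property of $|\psi_0\rangle$, rotational invariance, linearity for the final equality, and the normalization bookkeeping relating $\langle\psi_{\boldsymbol{q}}|\psi_{\boldsymbol{q}}\rangle$ to $S_{\boldsymbol{q}}$ — are identical to those in the paper's proof.
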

\begin{proof}
By considering the ground state energy is $E_0 = \left\langle\psi_{0} \left|\hat{H}_{\text{3bdy}} \right| \psi_{0}\right\rangle$, we have:
\begin{small}
\begin{equation}
\begin{aligned}
\delta \tilde{E}_{\boldsymbol{q} \rightarrow 0 }&=\lim_{\bm q\rightarrow 0}\frac{\left\langle\psi_{\boldsymbol{q}} \left|\hat{H}_{\text{3bdy}} \right| \psi_{\boldsymbol{q}}\right\rangle}{\left\langle\psi_{\boldsymbol{q}} | \psi_{\boldsymbol{q}}\right\rangle}-E_{0}=\lim_{\bm q\rightarrow 0}\frac{2\left\langle\psi_{\boldsymbol{q}}\left|\hat{H}_{\text{3bdy}} \right| \psi_{\boldsymbol{q}}\right\rangle-2 E_0 \left\langle\psi_{\boldsymbol{q}} | \psi_{\boldsymbol{q}}\right\rangle}{2 \left\langle\psi_{\boldsymbol{q}} | \psi_{\boldsymbol{q}}\right\rangle}\\
&=\lim_{\bm q\rightarrow 0}\frac{\left\langle\psi_{\boldsymbol{-q}} \left|\hat{H}_{\text{3bdy}} \right| \psi_{\boldsymbol{q}}\right\rangle+\left\langle\psi_{\boldsymbol{q}} \left|\hat{H}_{\text{3bdy}} \right| \psi_{\boldsymbol{-q}}\right\rangle- E_0 \left\langle\psi_{\boldsymbol{-q}} | \psi_{\boldsymbol{q}}\right\rangle- E_0 \left\langle\psi_{\boldsymbol{q}} | \psi_{\boldsymbol{-q}}\right\rangle}{2 \left\langle\psi_{\boldsymbol{q}} | \psi_{\boldsymbol{q}}\right\rangle}\\
&=\lim_{\bm q\rightarrow 0}\frac{\left\langle\psi_{0}\left|\delta \hat{\rho}_{-\boldsymbol{q}} \hat{H}_{\text{3bdy}} \delta \hat{\rho}_{\boldsymbol{q}}\right|\psi_{0}\right\rangle+\left\langle\psi_{0} \left|\delta \hat{\rho}_{\boldsymbol{q}}\hat{H}_{\text{3bdy}}\delta \hat{\rho}_{-\boldsymbol{q}} \right| \psi_{0}\right\rangle- \left\langle\psi_{0} \left|\delta \hat{\rho}_{-\boldsymbol{q}}\delta \hat{\rho}_{\boldsymbol{q}} \hat{H}_{\text{3bdy}} \right|\psi_{0}\right\rangle-  \left\langle\psi_{0} \left|\hat{H}_{\text{3bdy}}\delta \hat{\rho}_{\boldsymbol{q}}\delta \hat{\rho}_{-\boldsymbol{q}} \right| \psi_{0}\right\rangle}{2 \left\langle\psi_{\boldsymbol{q}} | \psi_{\boldsymbol{q}}\right\rangle}\\
&=\lim_{\bm q\rightarrow 0}\frac{\left\langle\psi_{0} \left| \left(\delta \hat{\rho}_{-\boldsymbol{q}} \hat{H}_{\text{3bdyy}} \delta \hat{\rho}_{\boldsymbol{q}}+\delta \hat{\rho}_{\boldsymbol{q}}\hat{H}_{\text{3bdy}}\delta \hat{\rho}_{-\boldsymbol{q}}-  \delta \hat{\rho}_{-\boldsymbol{q}}\delta \hat{\rho}_{\boldsymbol{q}}\hat{H}_{\text{3bdy}} - \hat{H}_{\text{3bdy}} \delta \hat{\rho}_{\boldsymbol{q}}\delta \hat{\rho}_{-\boldsymbol{q}} \right) \right| \psi_{0}\right\rangle}{2 \left\langle\psi_{\boldsymbol{q}} | \psi_{\boldsymbol{q}}\right\rangle}\\
&=\lim_{\bm q\rightarrow 0}\frac{\left\langle\psi_{0}\left|\left[\delta \hat{\rho}_{-\boldsymbol{q}},\left[\hat{H}_{\text{3bdy}}, \delta \hat{\rho}_{\boldsymbol{q}}\right]\right]\right| \psi_{0}\right\rangle}{2 S_{\boldsymbol{q}}} \qedhere
\end{aligned}
\end{equation}
\end{small}
\end{proof}

Note that \textbf{the particle indices have no effect on the commutation rules of the density operators} so for simplicity we will omit the index $i$, $j$ and $k$ in the following derivations. Firstly consider the commutator:
\begin{equation}
\begin{aligned}
\left[\hat E, \left[\hat A \hat B \hat C, \hat D \right]\right]&= \left[ \hat E, \hat A \hat B \left[\hat C, \hat D \right] + \hat A \left[\hat B, \hat D \right] \hat C + \left[\hat A, \hat D \right] \hat B \hat C \right]\\
&=\left[ \hat E, \hat A \hat B \left[ \hat C, \hat D \right] \right] + \left[ \hat E, \hat A \left[ \hat B, \hat D \right] \hat C \right]+\left[ \hat E, \left[ \hat A, \hat D \right] \hat B \hat C \right]
\end{aligned}
\end{equation}
by taking:
\begin{equation}
\begin{aligned}
E=\delta \hat{\rho}_{-\boldsymbol{q}},\quad A=\delta \hat{\rho}_{\boldsymbol{q}_{1}}, \quad B=\delta \hat{\rho}_{\boldsymbol{q}_{2}} \quad C=\delta \hat{\rho}_{-\boldsymbol{q}_{1}-\boldsymbol{q}_{2}}, \quad D=\delta \hat{\rho}_{\boldsymbol{q}}
\end{aligned}
\end{equation}
we can decompose the commutator in Eq.\ref{3benergy} into:
\begin{equation}
\begin{aligned}
&\left[\delta \hat{\rho}_{-\boldsymbol{q}},[\delta \hat{\rho}_{\boldsymbol{q}_1}\delta \hat{\rho}_{\boldsymbol{q}_2}\delta \hat{\rho}_{-\boldsymbol{q}_1-\boldsymbol{q}_2},\delta \hat{\rho}_{\boldsymbol{q}}]\right]\\
=&[\delta \hat{\rho}_{-\boldsymbol{q}},\delta \hat{\rho}_{\boldsymbol{q}_1}\delta \hat{\rho}_{\boldsymbol{q}_2}[\delta \hat{\rho}_{-\boldsymbol{q}_1-\boldsymbol{q}_2},\delta \hat{\rho}_{\boldsymbol{q}}]+\delta \hat{\rho}_{\boldsymbol{q}_1}[\delta \hat{\rho}_{\boldsymbol{q}_2},\delta \hat{\rho}_{\boldsymbol{q}}]\delta \hat{\rho}_{-\boldsymbol{q}_1-\boldsymbol{q}_2}+[\delta \hat{\rho}_{\boldsymbol{q}_1},\delta \hat{\rho}_{\boldsymbol{q}}]\delta \hat{\rho}_{\boldsymbol{q}_2}\delta \hat{\rho}_{-\boldsymbol{q}_1-\boldsymbol{q}_2}]\\
=&[\delta \hat{\rho}_{-\boldsymbol{q}},\delta \hat{\rho}_{\boldsymbol{q}_1}\delta \hat{\rho}_{\boldsymbol{q}_2}[\delta \hat{\rho}_{-\boldsymbol{q}_1-\boldsymbol{q}_2},\delta \hat{\rho}_{\boldsymbol{q}}]]+[\delta \hat{\rho}_{-\boldsymbol{q}},\delta \hat{\rho}_{\boldsymbol{q}_1}[\delta \hat{\rho}_{\boldsymbol{q}_2},\delta \hat{\rho}_{\boldsymbol{q}}]\delta \hat{\rho}_{-\boldsymbol{q}_1-\boldsymbol{q}_2}]+[\delta \hat{\rho}_{-\boldsymbol{q}},[\delta \hat{\rho}_{\boldsymbol{q}_1},\delta \hat{\rho}_{\boldsymbol{q}}]\delta \hat{\rho}_{\boldsymbol{q}_2}\delta \hat{\rho}_{-\boldsymbol{q}_1-\boldsymbol{q}_2}]\\
=&2 i \sin \frac{(-\boldsymbol{q}_{1}-\boldsymbol{q}_{2}) \times \boldsymbol{q}}{2}[\delta \hat{\rho}_{-\boldsymbol{q}},\delta \hat{\rho}_{\boldsymbol{q}_1}\delta \hat{\rho}_{\boldsymbol{q}_2}  \delta \hat{\rho}_{-\boldsymbol{q}_{1}-\boldsymbol{q}_{2}+\boldsymbol{q}}]+2 i \sin \frac{\boldsymbol{q}_{2} \times \boldsymbol{q}}{2}[\delta \hat{\rho}_{-\boldsymbol{q}},\delta \hat{\rho}_{\boldsymbol{q}_1}\delta \hat{\rho}_{\boldsymbol{q}_{2}+\boldsymbol{q}}\delta \hat{\rho}_{-\boldsymbol{q}_1-\boldsymbol{q}_2}]\\
&+2 i \sin \frac{\boldsymbol{q}_{1} \times \boldsymbol{q}}{2}[\delta \hat{\rho}_{-\boldsymbol{q}},\delta \hat{\rho}_{\boldsymbol{q}_{1}+\boldsymbol{q}}\delta \hat{\rho}_{\boldsymbol{q}_2}\delta \hat{\rho}_{-\boldsymbol{q}_1-\boldsymbol{q}_2}]
\label{commutator3b}
\end{aligned}
\end{equation}
where we have considered the following lemmas:
\begin{lemma}\label{gcocomm}
Commutation rule of the guiding center operators \cite{ezawa2008quantum}
\begin{equation}
\left[\hat R_i^a,\hat R_j^b \right]=-i\epsilon^{ab}\delta_{ij} 
\end{equation}
\end{lemma}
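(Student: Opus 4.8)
The plan is to obtain this relation as the standard guiding-center algebra descending from the single-particle dynamics in a perpendicular field, and then to attach the particle index. Because the guiding centers of two distinct electrons act on independent tensor factors of the many-body Hilbert space, $[\hat R_i^a, \hat R_j^b] = 0$ whenever $i \neq j$; this immediately accounts for the Kronecker delta, so it suffices to establish the single-particle statement $[\hat R^a, \hat R^b] = -i\epsilon^{ab}$ (recall $l_B$ is set to $1$) and then restore the label for $i=j$.

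First I would fix conventions. For an electron in any gauge with $\partial_x A_y - \partial_y A_x = B$, introduce the kinetic (dynamical) momentum $\hat\pi_a = \hat p_a - eA_a$, which together with the canonical relation $[\hat r^a, \hat p_b] = i\delta^a_b$ gives the key intermediate identity
\begin{equation}
[\hat\pi_x, \hat\pi_y] = -e[\hat p_x, A_y] - e[A_x, \hat p_y] = ie\left(\partial_x A_y - \partial_y A_x\right) = ieB ,
\end{equation}
which is gauge-independent and encodes the field-induced noncommutativity; in compact form $[\hat\pi_a, \hat\pi_b] = ieB\,\epsilon_{ab}$.

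Next I would decompose the position into commuting guiding-center and cyclotron pieces, $\hat r^a = \hat R^a + \hat\eta^a$, with $\hat R^a = \hat r^a + \tfrac{1}{eB}\epsilon^{ab}\hat\pi_b$ and $\hat\eta^a = -\tfrac{1}{eB}\epsilon^{ab}\hat\pi_b$, the coefficient being chosen precisely so that $[\hat R^a, \hat\eta^b] = 0$. A direct evaluation using $[\hat r^a, \hat\pi_b] = i\delta^a_b$ and the identity above then yields
\begin{equation}
[\hat R^a, \hat R^b] = \frac{i}{eB}\epsilon^{ba} - \frac{i}{eB}\epsilon^{ab} + \frac{1}{(eB)^2}\epsilon^{ac}\epsilon^{bd}[\hat\pi_c, \hat\pi_d] = -\frac{i}{eB}\epsilon^{ab} = -i\,l_B^2\,\epsilon^{ab} ,
\end{equation}
where I used $\epsilon^{bd}\epsilon_{cd} = \delta^b_c$ in two dimensions. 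Setting $l_B = 1$ and reinstating the particle labels gives the claim.

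The only genuinely delicate point is the bookkeeping of sign conventions: the sign of the electron charge, the orientation of $\epsilon^{ab}$, and the choice $\hat\pi_a = \hat p_a - eA_a$ versus $\hat p_a + eA_a$ must all be held fixed, since flipping any one of them flips the sign on the right-hand side. I would therefore state the conventions explicitly at the outset and verify the mutual commutativity $[\hat R^a, \hat\eta^b] = 0$ as a consistency check, confirming that the decomposition cleanly separates guiding-center from cyclotron motion. An equivalent and arguably cleaner route, matching the ladder operators $\hat b_i, \hat b_i^\dagger$ introduced in Table~\ref{notation_table}, is to define $\hat b \propto \hat R^x + i\hat R^y$ and verify $[\hat b, \hat b^\dagger] = 1$, which is logically equivalent to the stated commutator.
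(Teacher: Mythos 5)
Your derivation is correct, but note that the paper itself does not prove this lemma at all: the supplement states up front that some lemmas are quoted with references rather than proved, and Lemma~\ref{gcocomm} is simply cited from Ezawa's textbook \cite{ezawa2008quantum}. What you have supplied is the standard single-particle derivation that the cited reference contains, and your algebra checks out: the gauge-invariant kinetic-momentum relation $[\hat\pi_a,\hat\pi_b]=ieB\,\epsilon_{ab}$, the decomposition $\hat r^a=\hat R^a+\hat\eta^a$ with the coefficient of $\epsilon^{ab}\hat\pi_b$ tuned so that $[\hat R^a,\hat\eta^b]=0$, and the cancellation between the two cross terms and the $\tfrac{1}{(eB)^2}\epsilon^{ac}\epsilon^{bd}[\hat\pi_c,\hat\pi_d]$ term indeed leave $\tfrac{i}{eB}\epsilon^{ba}=-\tfrac{i}{eB}\epsilon^{ab}=-i\,l_B^2\,\epsilon^{ab}$; the tensor-factor argument for the $\delta_{ij}$ is also the right (and essentially trivial) way to restore particle labels. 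Compared to the paper's citation-only treatment, your route buys self-containedness and, more importantly, makes the convention-dependence explicit (sign of the charge, orientation of $\epsilon^{ab}$, $\hat\pi=\hat p-eA$ versus $\hat p+eA$), which is the only genuine trap in this statement, since flipping any one convention flips the sign of the right-hand side; given that the paper sets $l_B=\sqrt{1/eB}$, your choices are consistent with its stated result. One caution on your closing remark: within the paper's logical structure the ladder algebra $[\hat b_i,\hat b_i^\dagger]=1$ (Eq.~\ref{ladder_commut}, following Definition~\ref{qndef}) is \emph{derived from} Lemma~\ref{gcocomm}, so proving the lemma by first verifying $[\hat b,\hat b^\dagger]=1$ would invert that order — not circular if you define the ladder operators directly from the dynamical variables, but your primary argument is the one that splices cleanly into the paper's development.
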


\begin{lemma}\label{BCH}
Baker–Campbell–Hausdorff formula \cite{rossmann2006lie}
\begin{equation}
e^{\hat{X}}e^{\hat{Y}} = e^{\hat{X}+\hat{Y}+\frac{1}{2}[\hat{X}, \hat{Y}]+\frac{1}{12}[\hat{X},[\hat{X}, \hat{Y}]]-\frac{1}{12}[\hat{Y},[\hat{X}, \hat{Y}]]+\cdots}
\end{equation}
\end{lemma}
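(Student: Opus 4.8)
The plan is to establish the Baker--Campbell--Hausdorff (BCH) expansion $e^{\hat X}e^{\hat Y}=\exp\bigl(\hat X+\hat Y+\tfrac12[\hat X,\hat Y]+\tfrac1{12}[\hat X,[\hat X,\hat Y]]-\tfrac1{12}[\hat Y,[\hat X,\hat Y]]+\cdots\bigr)$ by treating $\hat X,\hat Y$ either as formal noncommuting indeterminates or as bounded operators scaled small enough that the exponential series converge, defining $\hat Z=\log(e^{\hat X}e^{\hat Y})$ through $\log(1+\hat W)$ with $\hat W=e^{\hat X}e^{\hat Y}-1$, and computing $\hat Z$ order by order in the total degree of $\hat X$ and $\hat Y$. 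Since only the leading commutator terms are displayed (the remainder being absorbed into ``$\cdots$''), it suffices to carry this expansion to third total order for the explicit terms; a separate structural argument is needed to justify that every term of the full series is an iterated commutator.

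First I would expand $e^{\hat X}e^{\hat Y}=\sum_{m,n}\hat X^m\hat Y^n/(m!\,n!)$, subtract the identity to obtain $\hat W$, and insert it into $\log(1+\hat W)=\hat W-\tfrac12\hat W^2+\tfrac13\hat W^3-\cdots$, grouping by degree. The first-order part is simply $\hat X+\hat Y$. At second order the symmetric pieces $\tfrac12\hat X^2$, $\tfrac12\hat Y^2$ and $\hat X\hat Y$ from $\hat W$ are partially cancelled by $-\tfrac12\hat W^2\big|_{\text{deg }2}=-\tfrac12(\hat X+\hat Y)^2$, leaving the antisymmetric remainder $\tfrac12(\hat X\hat Y-\hat Y\hat X)=\tfrac12[\hat X,\hat Y]$. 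Repeating the collection at third order and reorganizing the surviving monomials into nested brackets reproduces $\tfrac1{12}\bigl([\hat X,[\hat X,\hat Y]]-[\hat Y,[\hat X,\hat Y]]\bigr)$. This step is pure bookkeeping and carries no conceptual difficulty.

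The hard part will be the structural claim behind the ``$\cdots$'': that $\hat Z$ is a Lie series, expressible through iterated commutators alone to all orders. The clean route is the adjoint/differential-equation method: set $\hat Z(t)=\log(e^{\hat X}e^{t\hat Y})$, differentiate $e^{\hat Z(t)}=e^{\hat X}e^{t\hat Y}$ using the derivative-of-exponential identity $\tfrac{d}{dt}e^{\hat Z}=e^{\hat Z}\,\frac{1-e^{-\mathrm{ad}_{\hat Z}}}{\mathrm{ad}_{\hat Z}}\dot{\hat Z}$, and solve for $\dot{\hat Z}=\frac{\mathrm{ad}_{\hat Z}}{1-e^{-\mathrm{ad}_{\hat Z}}}\hat Y$ with $\hat Z(0)=\hat X$; expanding the generating function $x/(1-e^{-x})$ in Bernoulli numbers and integrating in $t$ yields the Dynkin series, whose leading terms match the expansion above. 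Rigorously justifying the exponential-derivative identity and the convergence of this Lie series is precisely the content of the cited classical theorem, so in practice I would invoke Ref.~\cite{rossmann2006lie} for the all-orders statement and verify only the displayed terms by direct expansion. I would finally remark that in the application of interest the two exponents are linear in the guiding-center operators, so by Lemma~\ref{gcocomm} their commutator $[\hat X,\hat Y]$ is a $c$-number; all higher nested brackets then vanish and the lemma collapses to the Weyl form $e^{\hat X}e^{\hat Y}=e^{\hat X+\hat Y}e^{\frac12[\hat X,\hat Y]}$, which is all that Eq.~\ref{commutator3b} actually requires.
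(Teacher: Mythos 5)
Your proposal cannot coincide with the paper's approach because the paper offers no proof of this lemma at all: consistent with the supplement's stated convention that some lemmas are only referenced, the BCH formula is imported verbatim from Ref.~\cite{rossmann2006lie}. Your outline is a correct and standard way to actually prove it: the degree-by-degree log-expansion you sketch is right (your second-order cancellation giving $\tfrac12[\hat X,\hat Y]$ is exactly the standard bookkeeping, and third order does yield $\tfrac1{12}[\hat X,[\hat X,\hat Y]]-\tfrac1{12}[\hat Y,[\hat X,\hat Y]]$), and the derivative-of-exponential/Dynkin argument you invoke for the all-orders Lie-series structure is essentially the proof given in the cited textbook itself, so deferring that step to the reference is legitimate. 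The most valuable part of your proposal is the closing observation, which the paper never makes explicit: in every place the lemma is actually used --- the GMP algebra in Corollary~\ref{GMP} via Lemma~\ref{gcocomm}, and the normal-ordering of $e^{i(\mathbf{q}^{*}\hat b^{\dagger}+\mathbf{q}\hat b)}$ in Proposition~\ref{prop2} --- the exponents are linear in guiding-center or ladder operators, so $[\hat X,\hat Y]$ is a $c$-number, all higher nested brackets vanish, and only the Weyl--Glauber special case $e^{\hat X}e^{\hat Y}=e^{\hat X+\hat Y}e^{\frac12[\hat X,\hat Y]}$ is needed. That special case admits a short self-contained proof (e.g.\ show that $F(t)=e^{t\hat X}e^{t\hat Y}e^{-t(\hat X+\hat Y)-\frac{t^{2}}{2}[\hat X,\hat Y]}$ satisfies $F'(t)=0$), so your route buys self-containedness and makes clear that the full strength of BCH, with its convergence subtleties, is never required by the paper; the paper's citation-only route merely keeps the supplement lean.
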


\begin{corollary}\label{GMP}
Girvin-MacDonal-Platzman(GMP) algebra 
\begin{equation}
\left[\delta \hat{\rho}_{\boldsymbol{q}_{1}}, \delta \hat{\rho}_{\boldsymbol{q}_{2}}\right]  = 2 i \sin \frac{\boldsymbol{q}_{1} \times \boldsymbol{q}_{2}}{2} \delta \hat{\rho}_{\boldsymbol{q}_{1}+\boldsymbol{q}_{2}}
\end{equation}
\end{corollary}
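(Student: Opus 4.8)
The plan is to reduce the many-body commutator to a single-particle computation and then apply the Baker--Campbell--Hausdorff formula (Lemma \ref{BCH}) together with the guiding-center commutation relation (Lemma \ref{gcocomm}). First I would note that the regularization only subtracts a c-number, $\delta\hat\rho_{\boldsymbol{q}} = \hat\rho_{\boldsymbol{q}} - \langle\hat\rho_{\boldsymbol{q}}\rangle_0$, so it drops out of any commutator: $[\delta\hat\rho_{\boldsymbol{q}_1},\delta\hat\rho_{\boldsymbol{q}_2}] = [\hat\rho_{\boldsymbol{q}_1},\hat\rho_{\boldsymbol{q}_2}]$. Writing $\hat\rho_{\boldsymbol{q}} = \sum_i \hat\rho^i_{\boldsymbol{q}}$ with $\hat\rho^i_{\boldsymbol{q}}=e^{iq_a\hat R^a_i}$ and using that the guiding-center coordinates of distinct particles commute ($[\hat R^a_i,\hat R^b_j]\propto\delta_{ij}$), the double sum collapses onto the diagonal, $[\hat\rho_{\boldsymbol{q}_1},\hat\rho_{\boldsymbol{q}_2}]=\sum_i[\hat\rho^i_{\boldsymbol{q}_1},\hat\rho^i_{\boldsymbol{q}_2}]$.

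The heart of the argument is the single-particle identity. Since $[iq_{1a}\hat R^a_i,\, iq_{2b}\hat R^b_i] = -q_{1a}q_{2b}[\hat R^a_i,\hat R^b_i] = i\,q_{1a}\epsilon^{ab}q_{2b} = i\,(\boldsymbol{q}_1\times\boldsymbol{q}_2)$ is a c-number, the BCH series truncates after the first commutator, giving $\hat\rho^i_{\boldsymbol{q}_1}\hat\rho^i_{\boldsymbol{q}_2} = e^{\frac{i}{2}\boldsymbol{q}_1\times\boldsymbol{q}_2}\,\hat\rho^i_{\boldsymbol{q}_1+\boldsymbol{q}_2}$. Swapping $\boldsymbol{q}_1\leftrightarrow\boldsymbol{q}_2$ flips the sign in the exponent, so subtracting the two orderings yields $[\hat\rho^i_{\boldsymbol{q}_1},\hat\rho^i_{\boldsymbol{q}_2}] = \left(e^{\frac{i}{2}\boldsymbol{q}_1\times\boldsymbol{q}_2}-e^{-\frac{i}{2}\boldsymbol{q}_1\times\boldsymbol{q}_2}\right)\hat\rho^i_{\boldsymbol{q}_1+\boldsymbol{q}_2} = 2i\sin\frac{\boldsymbol{q}_1\times\boldsymbol{q}_2}{2}\,\hat\rho^i_{\boldsymbol{q}_1+\boldsymbol{q}_2}$. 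Summing over $i$ restores the many-body density and produces $2i\sin\frac{\boldsymbol{q}_1\times\boldsymbol{q}_2}{2}\,\hat\rho_{\boldsymbol{q}_1+\boldsymbol{q}_2}$.

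The one subtlety --- and what I expect to be the only real obstacle --- is reconciling the \emph{unregularized} density $\hat\rho_{\boldsymbol{q}_1+\boldsymbol{q}_2}$ emerging from this computation with the \emph{regularized} $\delta\hat\rho_{\boldsymbol{q}_1+\boldsymbol{q}_2}$ stated in the corollary. The two differ by the c-number $\langle\hat\rho_{\boldsymbol{q}_1+\boldsymbol{q}_2}\rangle_0$ multiplied by the same sine prefactor. For a translationally invariant ground state this expectation value is supported only at $\boldsymbol{q}_1+\boldsymbol{q}_2=\boldsymbol{0}$; but precisely there the cross product $\boldsymbol{q}_1\times\boldsymbol{q}_2 = \boldsymbol{q}_1\times(-\boldsymbol{q}_1)$ vanishes, killing the sine factor. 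Hence the discrepancy term is identically zero, $\hat\rho_{\boldsymbol{q}_1+\boldsymbol{q}_2}$ may be freely replaced by $\delta\hat\rho_{\boldsymbol{q}_1+\boldsymbol{q}_2}$, and the algebra closes on the regularized operators. I would present this final replacement carefully, since it is the step where the regularization scheme and the antisymmetry of the cross product conspire to yield the clean GMP form.
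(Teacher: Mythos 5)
Your proof is correct and takes essentially the same route as the paper: the paper's proof of Corollary~\ref{GMP} is just the one-line remark that combining Lemma~\ref{gcocomm} (guiding-center commutation relation) with Lemma~\ref{BCH} (BCH formula) gives the result, which is precisely the computation you carry out in detail. Your extra care with the regularization --- noting that $\langle\hat\rho_{\boldsymbol{q}_1+\boldsymbol{q}_2}\rangle_0$ is supported only at $\boldsymbol{q}_1+\boldsymbol{q}_2=\boldsymbol{0}$, where the prefactor $\sin\frac{\boldsymbol{q}_1\times\boldsymbol{q}_2}{2}$ vanishes, so $\hat\rho$ may be replaced by $\delta\hat\rho$ on the right-hand side --- correctly fills in a step the paper leaves implicit.
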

\begin{proof}
Combining Lemma.\ref{gcocomm} and Lemma.\ref{BCH} can easily get this result, firstly proposed in Ref.\cite{girvin1986magneto}.
\end{proof}

Then by considering the next commutator:
\begin{equation}
\left[\hat A, \hat B \hat C \hat D \right]=\left[ \hat A, \hat B \right] \hat C \hat D + \hat B \left[ \hat A, \hat C \right] \hat D+ \hat B \hat C \left[ \hat A, \hat D \right]
\end{equation}
and substituting the corresponding operators we can write down:
\begin{equation}
\begin{aligned}
&\left[\delta \hat{\rho}_{-\boldsymbol{q}},\delta \hat{\rho}_{\boldsymbol{q}_1}\delta \hat{\rho}_{\boldsymbol{q}_2}  \delta \hat{\rho}_{-\boldsymbol{q}_{1}-\boldsymbol{q}_{2}+\boldsymbol{q}}\right]\\
=& \left[\delta \hat{\rho}_{-\boldsymbol{q}},\delta \hat{\rho}_{\boldsymbol{q}_1} \right]\delta \hat{\rho}_{\boldsymbol{q}_2}\delta \hat{\rho}_{-\boldsymbol{q}_{1}-\boldsymbol{q}_{2}+\boldsymbol{q}}+\delta \hat{\rho}_{\boldsymbol{q}_1}\left[\delta \hat{\rho}_{-\boldsymbol{q}},\delta \hat{\rho}_{\boldsymbol{q}_2} \right]\delta \hat{\rho}_{-\boldsymbol{q}_{1}-\boldsymbol{q}_{2}+\boldsymbol{q}}+\delta \hat{\rho}_{\boldsymbol{q}_1}\delta \hat{\rho}_{\boldsymbol{q}_2}[\delta \hat{\rho}_{-\boldsymbol{q}},\delta \hat{\rho}_{-\boldsymbol{q}_{1}-\boldsymbol{q}_{2}+\boldsymbol{q}}]\\
=&2 i \sin \frac{-\boldsymbol{q} \times \boldsymbol{q}_{1}}{2} \delta \hat{\rho}_{-\boldsymbol{q}+\boldsymbol{q}_1}\delta \hat{\rho}_{\boldsymbol{q}_2}\delta \hat{\rho}_{-\boldsymbol{q}_{1}-\boldsymbol{q}_{2}+\boldsymbol{q}}+2 i \sin \frac{-\boldsymbol{q} \times \boldsymbol{q}_{2}}{2} \delta \hat{\rho}_{\boldsymbol{q}_1}\delta \hat{\rho}_{-\boldsymbol{q}+\boldsymbol{q}_2}\delta \hat{\rho}_{-\boldsymbol{q}_{1}-\boldsymbol{q}_{2}+\boldsymbol{q}}\\
&+2 i \sin \frac{-\boldsymbol{q} \times (-\boldsymbol{q}_1-\boldsymbol{q}_2+\boldsymbol{q})}{2}\delta \hat{\rho}_{\boldsymbol{q}_1}\delta \hat{\rho}_{\boldsymbol{q}_2}  \delta \hat{\rho}_{-\boldsymbol{q}_1-\boldsymbol{q}_2}
\end{aligned}
\end{equation}
and
\begin{equation}
\begin{aligned}
&\left[\delta \hat{\rho}_{-\boldsymbol{q}},\delta \hat{\rho}_{\boldsymbol{q}_1}\delta \hat{\rho}_{\boldsymbol{q}_{2}+\boldsymbol{q}}\delta \hat{\rho}_{-\boldsymbol{q}_1-\boldsymbol{q}_2}\right]\\
=&[\delta \hat{\rho}_{-\boldsymbol{q}},\delta \hat{\rho}_{\boldsymbol{q}_1}]\delta \hat{\rho}_{\boldsymbol{q}_{2}+\boldsymbol{q}}\delta \hat{\rho}_{-\boldsymbol{q}_1-\boldsymbol{q}_2}+\delta \hat{\rho}_{\boldsymbol{q}_1}[\delta \hat{\rho}_{-\boldsymbol{q}},\delta \hat{\rho}_{\boldsymbol{q}_{2}+\boldsymbol{q}}]\delta \hat{\rho}_{-\boldsymbol{q}_1-\boldsymbol{q}_2} +\delta \hat{\rho}_{\boldsymbol{q}_1}\delta \hat{\rho}_{\boldsymbol{q}_{2}+\boldsymbol{q}}[\delta \hat{\rho}_{-\boldsymbol{q}},\delta \hat{\rho}_{-\boldsymbol{q}_1-\boldsymbol{q}_2}]\\
=&2 i \sin \frac{-\boldsymbol{q} \times \boldsymbol{q}_{1}}{2} \delta \hat{\rho}_{-\boldsymbol{q}+\boldsymbol{q}_1}\delta \hat{\rho}_{\boldsymbol{q}_{2}+\boldsymbol{q}}\delta \hat{\rho}_{-\boldsymbol{q}_1-\boldsymbol{q}_2}+2 i \sin \frac{-\boldsymbol{q} \times (\boldsymbol{q}_2+\boldsymbol{q})}{2} \delta \hat{\rho}_{\boldsymbol{q}_1}\delta \hat{\rho}_{\boldsymbol{q}_2}\delta \hat{\rho}_{-\boldsymbol{q}_1-\boldsymbol{q}_2}\\
&+2 i \sin \frac{-\boldsymbol{q} \times (-\boldsymbol{q}_1-\boldsymbol{q}_2)}{2}\delta \hat{\rho}_{\boldsymbol{q}_1}\delta \hat{\rho}_{\boldsymbol{q}_{2}+\boldsymbol{q}} \delta \hat{\rho}_{-\boldsymbol{q}-\boldsymbol{q}_1-\boldsymbol{q}_2}
\end{aligned}
\end{equation}
and
\begin{equation}
\begin{aligned}
&\left[\delta \hat{\rho}_{-\boldsymbol{q}},\delta \hat{\rho}_{\boldsymbol{q}_{1}+\boldsymbol{q}}\delta \hat{\rho}_{\boldsymbol{q}_2}\delta \hat{\rho}_{-\boldsymbol{q}_1-\boldsymbol{q}_2}\right]\\
=&[\delta \hat{\rho}_{-\boldsymbol{q}},\delta \hat{\rho}_{\boldsymbol{q}_{1}+\boldsymbol{q}}]\delta \hat{\rho}_{\boldsymbol{q}_2}\delta \hat{\rho}_{-\boldsymbol{q}_1-\boldsymbol{q}_2} +\delta \hat{\rho}_{\boldsymbol{q}_{1}+\boldsymbol{q}}[\delta \hat{\rho}_{-\boldsymbol{q}},\delta \hat{\rho}_{\boldsymbol{q}_2}]\delta \hat{\rho}_{-\boldsymbol{q}_1-\boldsymbol{q}_2}+\delta \hat{\rho}_{\boldsymbol{q}_{1}+\boldsymbol{q}}\delta \hat{\rho}_{\boldsymbol{q}_2}[\delta \hat{\rho}_{-\boldsymbol{q}},\delta \hat{\rho}_{-\boldsymbol{q}_1-\boldsymbol{q}_2}]\\
=&2 i \sin \frac{-\boldsymbol{q} \times (\boldsymbol{q}_1+\boldsymbol{q})}{2} \delta \hat{\rho}_{\boldsymbol{q}_1}\delta \hat{\rho}_{\boldsymbol{q}_2}\delta \hat{\rho}_{-\boldsymbol{q}_1-\boldsymbol{q}_2}+2 i \sin \frac{-\boldsymbol{q} \times \boldsymbol{q}_{2}}{2}\delta \hat{\rho}_{\boldsymbol{q}_{1}+\boldsymbol{q}} \delta \hat{\rho}_{-\boldsymbol{q}+\boldsymbol{q}_2}\delta \hat{\rho}_{-\boldsymbol{q}_1-\boldsymbol{q}_2}\\
&+2 i \sin \frac{-\boldsymbol{q} \times (-\boldsymbol{q}_1-\boldsymbol{q}_2)}{2}\delta \hat{\rho}_{\boldsymbol{q}_{1}+\boldsymbol{q}}\delta \hat{\rho}_{\boldsymbol{q}_2} \delta \hat{\rho}_{-\boldsymbol{q}-\boldsymbol{q}_1-\boldsymbol{q}_2}
\end{aligned}
\end{equation}
where we have used the GMP algebra in Corollary.\ref{GMP}.
Thus Eq.\ref{commutator3b} could be written as:
\begin{small}
\begin{equation}
\begin{aligned}
&\left[\delta \hat{\rho}_{-\boldsymbol{q}},[\delta \hat{\rho}_{\boldsymbol{q}_1}\delta \hat{\rho}_{\boldsymbol{q}_2}\delta \hat{\rho}_{-\boldsymbol{q}_1-\boldsymbol{q}_2},\delta \hat{\rho}_{\boldsymbol{q}}]\right]\\
=&-4 \sin \frac{(-\boldsymbol{q}_{1}-\boldsymbol{q}_{2}) \times \boldsymbol{q}}{2} \left[ \sin \frac{-\boldsymbol{q} \times \boldsymbol{q}_{1}}{2} \bar{S}_{-\boldsymbol{q}+\boldsymbol{q}_1,\boldsymbol{q}_2} + \sin \frac{-\boldsymbol{q} \times \boldsymbol{q}_{2}}{2} \bar{S}_{\boldsymbol{q}_1,-\boldsymbol{q}+\boldsymbol{q}_2} + \sin \frac{-\boldsymbol{q} \times (-\boldsymbol{q}_1-\boldsymbol{q}_2+\boldsymbol{q})}{2} \bar{S}_{\boldsymbol{q}_1,\boldsymbol{q}_2}\right]\\
&-4 \sin \frac{\boldsymbol{q}_{2} \times \boldsymbol{q}}{2} \left[ \sin \frac{-\boldsymbol{q} \times \boldsymbol{q}_{1}}{2} \bar{S}_{-\boldsymbol{q}+\boldsymbol{q}_1,\boldsymbol{q}_2+\boldsymbol{q}} + \sin \frac{-\boldsymbol{q} \times (\boldsymbol{q}_2+\boldsymbol{q})}{2} \bar{S}_{\boldsymbol{q}_1,\boldsymbol{q}_2} + \sin \frac{-\boldsymbol{q} \times (-\boldsymbol{q}_1-\boldsymbol{q}_2)}{2} \bar{S}_{\boldsymbol{q}_1,\boldsymbol{q}_2
+\boldsymbol{q}} \right]\\
&-4 \sin \frac{\boldsymbol{q}_{1} \times \boldsymbol{q}}{2} \left[ \sin \frac{-\boldsymbol{q} \times (\boldsymbol{q}_1+\boldsymbol{q})}{2} \bar{S}_{\boldsymbol{q}_1,\boldsymbol{q}_2} + \sin \frac{-\boldsymbol{q} \times \boldsymbol{q}_{2}}{2} \bar{S}_{\boldsymbol{q}_{1}+\boldsymbol{q}, -\boldsymbol{q}+\boldsymbol{q}_2} + \sin \frac{-\boldsymbol{q} \times (-\boldsymbol{q}_1-\boldsymbol{q}_2)}{2} \bar{S}_{\boldsymbol{q}_{1}+\boldsymbol{q},\boldsymbol{q}_2} \right]\\
=&-4 \left[\sin \frac{(\boldsymbol{q}_{1}+\boldsymbol{q}_{2}) \times \boldsymbol{q}}{2}\sin \frac{ (\boldsymbol{q}_1+\boldsymbol{q}_2-\boldsymbol{q})\times \boldsymbol{q}}{2} +\sin \frac{\boldsymbol{q}_{2} \times \boldsymbol{q}}{2}  \sin \frac{(\boldsymbol{q}_2+\boldsymbol{q}) \times \boldsymbol{q}}{2} +\sin \frac{\boldsymbol{q}_{1} \times \boldsymbol{q}}{2} \sin \frac{(\boldsymbol{q}_1+\boldsymbol{q}) \times \boldsymbol{q} }{2} \right] \bar{S}_{\boldsymbol{q}_1,\boldsymbol{q}_2}\\
&-4 \sin \frac{ \boldsymbol{q} \times (\boldsymbol{q}_{1}+\boldsymbol{q}_{2})}{2} \left[ \sin \frac{\boldsymbol{q}_{1} \times \boldsymbol{q}}{2} (\bar{S}_{\boldsymbol{q}_1-\boldsymbol{q},\boldsymbol{q}_2} +\bar{S}_{\boldsymbol{q}_{1}+\boldsymbol{q},\boldsymbol{q}_2})+ \sin \frac{\boldsymbol{q}_{2} \times \boldsymbol{q}}{2}( \bar{S}_{\boldsymbol{q}_1,\boldsymbol{q}_2-\boldsymbol{q}}+\bar{S}_{\boldsymbol{q}_1,\boldsymbol{q}_2
+\boldsymbol{q}}) \right]\\
&-4 \sin \frac{\boldsymbol{q}_{2} \times \boldsymbol{q}}{2} \sin \frac{\boldsymbol{q}_{1} \times \boldsymbol{q}}{2} \left(\bar{S}_{\boldsymbol{q}_1-\boldsymbol{q},\boldsymbol{q}_2+\boldsymbol{q}}+ \bar{S}_{\boldsymbol{q}_{1}+\boldsymbol{q}, \boldsymbol{q}_2-\boldsymbol{q}}\right) \equiv \mathcal S_1 + \mathcal S_2 + \mathcal S_3
\label{commutator3b2}
\end{aligned}
\end{equation}
\end{small}

Considering the well-known trigonometric identities:
\begin{equation}
\sin \alpha \sin(\alpha + \beta) = -\frac{1}{2} [\cos(2 \alpha + \beta) - \cos \beta]; \quad \cos x = 1 - 2 \sin^2 \frac{x}{2}
\end{equation}
we have:
\begin{equation}
\begin{aligned}
\mathcal S_1 = &2 \left[\cos \left((\boldsymbol{q}_{1}+\boldsymbol{q}_{2} - \frac{\boldsymbol{q}}{2} ) \times \boldsymbol{q}\right) +\cos \left( (\boldsymbol{q}_{2} + \frac{\boldsymbol{q}}{2}) \times \boldsymbol{q} \right)+\cos \left( (\boldsymbol{q}_{1} + \frac{\boldsymbol{q}}{2}) \times \boldsymbol{q} \right)- 3 \cos \left(\frac{\boldsymbol{q} \times \boldsymbol{q}}{2} \right) \right] \bar{S}_{\boldsymbol{q}_1,\boldsymbol{q}_2} \\
&+2 \left[\cos \left(\frac{\boldsymbol{q}_2 \times \boldsymbol{q}}{2} \right) - \cos \left( (\boldsymbol{q}_1+\frac{\boldsymbol{q}_2}{2}) \times \boldsymbol{q} \right) \right] (\bar{S}_{\boldsymbol{q}_1-\boldsymbol{q},\boldsymbol{q}_2} +\bar{S}_{\boldsymbol{q}_{1}+\boldsymbol{q},\boldsymbol{q}_2}) \\
=&2 \left[\cos \left((\boldsymbol{q}_{1}+\boldsymbol{q}_{2}) \times \boldsymbol{q} \right) +\cos \left( \boldsymbol{q}_{2}  \times \boldsymbol{q} \right)+\cos (\boldsymbol{q}_{1}  \times \boldsymbol{q})- 3 \right] \bar{S}_{\boldsymbol{q}_1,\boldsymbol{q}_2} \\
=&-4 \left[\sin^2 \left(\frac{1}{2}(\boldsymbol{q}_{1}+\boldsymbol{q}_{2}) \times \boldsymbol{q}\right) +\sin^2 \left(\frac{1}{2} \boldsymbol{q}_{2}  \times \boldsymbol{q} \right)+\sin^2 \left(\frac{1}{2}\boldsymbol{q}_{1}  \times \boldsymbol{q} \right) \right] \bar{S}_{\boldsymbol{q}_1,\boldsymbol{q}_2} \\
=&- [((\boldsymbol{q}_{1}+\boldsymbol{q}_{2}) \times \boldsymbol{q})^2 +( \boldsymbol{q}_{2}  \times \boldsymbol{q})^2+(\boldsymbol{q}_{1}  \times \boldsymbol{q})^2 +O(\boldsymbol{q}^3)] \bar{S}_{\boldsymbol{q}_1,\boldsymbol{q}_2} \\
=&-2 [(\boldsymbol{q}_1 \times \boldsymbol{q}) \cdot (\boldsymbol{q}_2 \times \boldsymbol{q}) +( \boldsymbol{q}_{2}  \times \boldsymbol{q})^2+(\boldsymbol{q}_{1}  \times \boldsymbol{q})^2 +O(\boldsymbol{q}^3)] \bar{S}_{\boldsymbol{q}_1,\boldsymbol{q}_2} 
\end{aligned}
\end{equation}
and
\begin{equation}
\begin{aligned}
\mathcal S_2
=&2 \left[\cos \left(\frac{\boldsymbol{q}_2 }{2}\times \boldsymbol{q} \right) - \cos \left( (\boldsymbol{q}_1+\frac{\boldsymbol{q}_2}{2} ) \times \boldsymbol{q} \right)\right] (\bar{S}_{\boldsymbol{q}_1-\boldsymbol{q},\boldsymbol{q}_2} +\bar{S}_{\boldsymbol{q}_{1}+\boldsymbol{q},\boldsymbol{q}_2}) +2 \left[ \boldsymbol{q}_1 \leftrightarrow \boldsymbol{q}_2 \right] ( \bar{S}_{\boldsymbol{q}_1,\boldsymbol{q}_2-\boldsymbol{q}}+\bar{S}_{\boldsymbol{q}_1,\boldsymbol{q}_2 +\boldsymbol{q}}) \\
=&-4 \left[\sin^2 \left(\frac{1}{2}\frac{\boldsymbol{q}_2 }{2}\times \boldsymbol{q} \right) - \sin^2 \left(\frac{1}{2} (\boldsymbol{q}_1+\frac{\boldsymbol{q}_2}{2}) \times \boldsymbol{q} \right)\right] (\bar{S}_{\boldsymbol{q}_1-\boldsymbol{q},\boldsymbol{q}_2} +\bar{S}_{\boldsymbol{q}_{1}+\boldsymbol{q},\boldsymbol{q}_2})-4 \left[ \boldsymbol{q}_1 \leftrightarrow \boldsymbol{q}_2 \right]( \bar{S}_{\boldsymbol{q}_1,\boldsymbol{q}_2-\boldsymbol{q}}+\bar{S}_{\boldsymbol{q}_1,\boldsymbol{q}_2 +\boldsymbol{q}}) \\
=& \left[ \left( (\boldsymbol{q}_1+\frac{\boldsymbol{q}_2}{2}) \times \boldsymbol{q} \right)^2  - \left(\frac{1}{2}\boldsymbol{q}_2 \times \boldsymbol{q} \right)^2 +O(\boldsymbol{q}^3)\right] (\bar{S}_{\boldsymbol{q}_1-\boldsymbol{q},\boldsymbol{q}_2} +\bar{S}_{\boldsymbol{q}_{1}+\boldsymbol{q},\boldsymbol{q}_2}) + \left[ \boldsymbol{q}_1 \leftrightarrow \boldsymbol{q}_2 \right]( \bar{S}_{\boldsymbol{q}_1,\boldsymbol{q}_2-\boldsymbol{q}}+\bar{S}_{\boldsymbol{q}_1,\boldsymbol{q}_2 +\boldsymbol{q}}) \\
=& \left[(\boldsymbol{q}_1\times \boldsymbol{q} )^2  +(\boldsymbol{q}_1 \times \boldsymbol{q}) \cdot (\boldsymbol{q}_2 \times \boldsymbol{q}) +O(\boldsymbol{q}^3) \right] (\bar{S}_{\boldsymbol{q}_1-\boldsymbol{q},\boldsymbol{q}_2} +\bar{S}_{\boldsymbol{q}_{1}+\boldsymbol{q},\boldsymbol{q}_2}) \\
&+ [(\boldsymbol{q}_2\times \boldsymbol{q} )^2  +(\boldsymbol{q}_1 \times \boldsymbol{q}) \cdot (\boldsymbol{q}_2 \times \boldsymbol{q})+O(\boldsymbol{q}^3)]( \bar{S}_{\boldsymbol{q}_1,\boldsymbol{q}_2-\boldsymbol{q}}+\bar{S}_{\boldsymbol{q}_1,\boldsymbol{q}_2+\boldsymbol{q}}) 
\end{aligned}
\end{equation}
and
\begin{equation}
\begin{aligned}
\mathcal S_3
=&2 \left[\cos \left(\frac{\boldsymbol{q}_1 +\boldsymbol{q}_2}{2}\times \boldsymbol{q} \right) - \cos \left( \frac{\boldsymbol{q}_1 -\boldsymbol{q}_2}{2}\times \boldsymbol{q} \right)\right] (\bar{S}_{\boldsymbol{q}_1-\boldsymbol{q},\boldsymbol{q}_2+\boldsymbol{q}}+ \bar{S}_{\boldsymbol{q}_{1}+\boldsymbol{q}, \boldsymbol{q}_2-\boldsymbol{q}}) \\
=&-4 \left[\sin^2 \left(\frac{1}{2}\frac{\boldsymbol{q}_1 +\boldsymbol{q}_2}{2}\times \boldsymbol{q} \right) - \sin^2 \left(\frac{1}{2}\frac{\boldsymbol{q}_1 -\boldsymbol{q}_2}{2}\times \boldsymbol{q} \right)\right] (\bar{S}_{\boldsymbol{q}_1-\boldsymbol{q},\boldsymbol{q}_2+\boldsymbol{q}}+ \bar{S}_{\boldsymbol{q}_{1}+\boldsymbol{q}, \boldsymbol{q}_2-\boldsymbol{q}}) \\
=& \left[ \left(\frac{1}{2}(\boldsymbol{q}_1 -\boldsymbol{q}_2)\times \boldsymbol{q} \right)^2 -  \left(\frac{1}{2}(\boldsymbol{q}_1 +\boldsymbol{q}_2)\times \boldsymbol{q} \right)^2+O(\boldsymbol{q}^3) \right] (\bar{S}_{\boldsymbol{q}_1-\boldsymbol{q},\boldsymbol{q}_2+\boldsymbol{q}}+ \bar{S}_{\boldsymbol{q}_{1}+\boldsymbol{q}, \boldsymbol{q}_2-\boldsymbol{q}})\\
=&- [(\boldsymbol{q}_1 \times \boldsymbol{q}) \cdot (\boldsymbol{q}_2 \times \boldsymbol{q})+O(\boldsymbol{q}^3)] (\bar{S}_{\boldsymbol{q}_1-\boldsymbol{q},\boldsymbol{q}_2+\boldsymbol{q}}+ \bar{S}S_{\boldsymbol{q}_{1}+\boldsymbol{q}, \boldsymbol{q}_2-\boldsymbol{q}}) 
\label{commutator3b3}
\end{aligned}
\end{equation}
or we can also directly throw away $\boldsymbol{q} \times \boldsymbol{q} = 0$ in the first place. Here we have also considered:
\begin{equation}
[(\boldsymbol{x} +\boldsymbol{y})\times \boldsymbol{z}]^2 = (\boldsymbol{x} \times \boldsymbol{z})^2+(\boldsymbol{y}\times \boldsymbol{z})^2+ 2  (\boldsymbol{x} \times \boldsymbol{z})\cdot (\boldsymbol{y}\times \boldsymbol{z})
\end{equation}
and the limit:
\begin{equation}
\lim_{\bm q \rightarrow 0} \left[2 \sin \left(\frac{1}{2} \boldsymbol{q}^{\prime} \times \boldsymbol{q}\right)\right]^{2}=(\boldsymbol{q}^{\prime} \times \boldsymbol{q})^2+O(\boldsymbol{q}^3)
\end{equation}

Then we expand the structure factors to the second order, which gives:
\begin{equation}
\begin{aligned}
&\bar{S}_{\boldsymbol{q}_1-\boldsymbol{q},\boldsymbol{q}_2} +\bar{S}_{\boldsymbol{q}_{1}+\boldsymbol{q},\boldsymbol{q}_2}\\
=&\bar{S}_{\boldsymbol{q}_1,\boldsymbol{q}_2} -\boldsymbol{q}_x \frac{\partial \bar{S}_{\boldsymbol{q}_1,\boldsymbol{q}_2}}{\partial \boldsymbol{q}_{1x}}-\boldsymbol{q}_y \frac{\partial \bar{S}_{\boldsymbol{q}_1,\boldsymbol{q}_2}}{\partial \boldsymbol{q}_{1y}}+\frac{1}{2!} \left(\boldsymbol{q}^2_x \frac{\partial^2 \bar{S}_{\boldsymbol{q}_1,\boldsymbol{q}_2}}{\partial \boldsymbol{q}^{2}_{1x}}+2\boldsymbol{q}_x \boldsymbol{q}_y \frac{\partial^2 \bar{S}_{\boldsymbol{q}_1,\boldsymbol{q}_2}}{\partial \boldsymbol{q}_{1x} \partial \boldsymbol{q}_{1y}}+\boldsymbol{q}^2_y \frac{\partial^2 \bar{S}_{\boldsymbol{q}_1,\boldsymbol{q}_2}}{\partial \boldsymbol{q}^{2}_{1y}} \right)+O(\boldsymbol{q}^3)\\
&+\bar{S}_{\boldsymbol{q}_1,\boldsymbol{q}_2}+\boldsymbol{q}_x \frac{\partial \bar{S}_{\boldsymbol{q}_1,\boldsymbol{q}_2}}{\partial \boldsymbol{q}_{1x}}+\boldsymbol{q}_y \frac{\partial \bar{S}_{\boldsymbol{q}_1,\boldsymbol{q}_2}}{\partial \boldsymbol{q}_{1y}}+\frac{1}{2!}\left(\boldsymbol{q}^2_x \frac{\partial^2 \bar{S}_{\boldsymbol{q}_1,\boldsymbol{q}_2}}{\partial \boldsymbol{q}^{2}_{1x}}+2\boldsymbol{q}_x \boldsymbol{q}_y \frac{\partial^2 \bar{S}_{\boldsymbol{q}_1,\boldsymbol{q}_2}}{\partial \boldsymbol{q}_{1x} \partial \boldsymbol{q}_{1y}}+\boldsymbol{q}^2_y \frac{\partial^2 \bar{S}_{\boldsymbol{q}_1,\boldsymbol{q}_2}}{\partial \boldsymbol{q}^{2}_{1y}} \right)+O(\boldsymbol{q}^3)\\
=&2 \bar{S}_{\boldsymbol{q}_1,\boldsymbol{q}_2}+ \left(\boldsymbol{q}_x \frac{\partial}{\partial \boldsymbol{q}_{1x}}+\boldsymbol{q}_y \frac{\partial}{\partial \boldsymbol{q}_{1y}}\right)^2 \bar{S}_{\boldsymbol{q}_1,\boldsymbol{q}_2}+O(\boldsymbol{q}^3) \sim [2 + (\boldsymbol{q} \cdot \boldsymbol{\nabla}_1)^2] \bar{S}_{\boldsymbol{q}_1,\boldsymbol{q}_2}
\end{aligned}
\end{equation}
and
\begin{equation}
\begin{aligned}
&\bar{S}_{\boldsymbol{q}_1,\boldsymbol{q}_2-\boldsymbol{q}} +\bar{S}_{\boldsymbol{q}_{1},\boldsymbol{q}_2+\boldsymbol{q}}\\
=& \bar{S}_{\boldsymbol{q}_1,\boldsymbol{q}_2}-\boldsymbol{q}_x \frac{\partial \bar{S}_{\boldsymbol{q}_1,\boldsymbol{q}_2}}{\partial \boldsymbol{q}_{2x}}-\boldsymbol{q}_y \frac{\partial \bar{S}_{\boldsymbol{q}_1,\boldsymbol{q}_2}}{\partial \boldsymbol{q}_{2y}}+\frac{1}{2!} \left(\boldsymbol{q}^2_x \frac{\partial^2 \bar{S}_{\boldsymbol{q}_1,\boldsymbol{q}_2}}{\partial \boldsymbol{q}^{2}_{2x}}+2\boldsymbol{q}_x \boldsymbol{q}_y \frac{\partial^2 \bar{S}_{\boldsymbol{q}_1,\boldsymbol{q}_2}}{\partial \boldsymbol{q}_{2x} \partial \boldsymbol{q}_{2y}}+\boldsymbol{q}^2_y \frac{\partial^2 \bar{S}_{\boldsymbol{q}_1,\boldsymbol{q}_2}}{\partial \boldsymbol{q}^{2}_{2y}} \right)+O(\boldsymbol{q}^3)\\
&+\bar{S}_{\boldsymbol{q}_1,\boldsymbol{q}_2}+\boldsymbol{q}_x \frac{\partial \bar{S}_{\boldsymbol{q}_1,\boldsymbol{q}_2}}{\partial \boldsymbol{q}_{2x}}+\boldsymbol{q}_y \frac{\partial \bar{S}_{\boldsymbol{q}_1,\boldsymbol{q}_2}}{\partial \boldsymbol{q}_{2y}}+\frac{1}{2!} \left(\boldsymbol{q}^2_x \frac{\partial^2 \bar{S}_{\boldsymbol{q}_1,\boldsymbol{q}_2}}{\partial \boldsymbol{q}^{2}_{2x}}+2\boldsymbol{q}_x \boldsymbol{q}_y \frac{\partial^2 \bar{S}_{\boldsymbol{q}_1,\boldsymbol{q}_2}}{\partial \boldsymbol{q}_{2x} \partial \boldsymbol{q}_{2y}}+\boldsymbol{q}^2_y \frac{\partial^2 \bar{S}_{\boldsymbol{q}_1,\boldsymbol{q}_2}}{\partial \boldsymbol{q}^{2}_{2y}} \right)+O(\boldsymbol{q}^3)\\
=&2 \bar{S}_{\boldsymbol{q}_1,\boldsymbol{q}_2}+ \left(\boldsymbol{q}_x \frac{\partial}{\partial \boldsymbol{q}_{2x}}+\boldsymbol{q}_y \frac{\partial}{\partial \boldsymbol{q}_{2y}} \right)^2 \bar{S}_{\boldsymbol{q}_1,\boldsymbol{q}_2}+O(\boldsymbol{q}^3) \sim [2 + (\boldsymbol{q} \cdot \boldsymbol{\nabla}_2)^2] \bar{S}_{\boldsymbol{q}_1,\boldsymbol{q}_2}
\end{aligned}
\end{equation}
and
\begin{equation}
\begin{aligned}
&\bar{S}_{\boldsymbol{q}_1-\boldsymbol{q},\boldsymbol{q}_2+\boldsymbol{q}}+ \bar{S}_{\boldsymbol{q}_{1}+\boldsymbol{q}, \boldsymbol{q}_2-\boldsymbol{q}}\\
=& \bar{S}_{\boldsymbol{q}_1,\boldsymbol{q}_2}-\boldsymbol{q}_x \left(\frac{\partial \bar{S}_{\boldsymbol{q}_1,\boldsymbol{q}_2}}{\partial \boldsymbol{q}_{1x}}-\frac{\partial \bar{S}_{\boldsymbol{q}_1,\boldsymbol{q}_2}}{\partial \boldsymbol{q}_{2x}} \right)-\boldsymbol{q}_y \left(\frac{\partial \bar{S}_{\boldsymbol{q}_1,\boldsymbol{q}_2}}{\partial \boldsymbol{q}_{1y}}-\frac{\partial \bar{S}_{\boldsymbol{q}_1,\boldsymbol{q}_2}}{\partial \boldsymbol{q}_{2y}} \right)\\
&+\frac{1}{2!} \left[\boldsymbol{q}^2_x \frac{\partial^2 \bar{S}_{\boldsymbol{q}_1,\boldsymbol{q}_2}}{\partial \boldsymbol{q}^{2}_{1x}}+2\boldsymbol{q}_x \boldsymbol{q}_y \frac{\partial^2 \bar{S}_{\boldsymbol{q}_1,\boldsymbol{q}_2}}{\partial \boldsymbol{q}_{1x} \partial \boldsymbol{q}_{1y}}+\boldsymbol{q}^2_y \frac{\partial^2 \bar{S}_{\boldsymbol{q}_1,\boldsymbol{q}_2}}{\partial \boldsymbol{q}^{2}_{1y}}+\boldsymbol{q}^2_x \frac{\partial^2 \bar{S}_{\boldsymbol{q}_1,\boldsymbol{q}_2}}{\partial \boldsymbol{q}^{2}_{2x}}+2\boldsymbol{q}_x \boldsymbol{q}_y \frac{\partial^2 \bar{S}_{\boldsymbol{q}_1,\boldsymbol{q}_2}}{\partial \boldsymbol{q}_{2x} \partial \boldsymbol{q}_{2y}}+\boldsymbol{q}^2_y \frac{\partial^2 \bar{S}_{\boldsymbol{q}_1,\boldsymbol{q}_2}}{\partial \boldsymbol{q}^{2}_{2y}} \right.\\
&\left. - \left(2\boldsymbol{q}_x \boldsymbol{q}_y \frac{\partial^2 \bar{S}_{\boldsymbol{q}_1,\boldsymbol{q}_2}}{\partial \boldsymbol{q}_{1x} \partial \boldsymbol{q}_{2y}}+2\boldsymbol{q}_x \boldsymbol{q}_y \frac{\partial^2 \bar{S}_{\boldsymbol{q}_1,\boldsymbol{q}_2}}{\partial \boldsymbol{q}_{2x} \partial \boldsymbol{q}_{1y}}+2\boldsymbol{q}^2_x \frac{\partial^2 \bar{S}_{\boldsymbol{q}_1,\boldsymbol{q}_2}}{\partial \boldsymbol{q}_{1x} \partial \boldsymbol{q}_{2x}}+2 \boldsymbol{q}^2_y \frac{\partial^2 \bar{S}_{\boldsymbol{q}_1,\boldsymbol{q}_2}}{\partial \boldsymbol{q}_{1y} \partial \boldsymbol{q}_{2y}} \right) \right]+O(\boldsymbol{q}^3)\\
&+\bar{S}_{\boldsymbol{q}_1,\boldsymbol{q}_2}+\boldsymbol{q}_x \left(\frac{\partial \bar{S}_{\boldsymbol{q}_1,\boldsymbol{q}_2}}{\partial \boldsymbol{q}_{1x}}-\frac{\partial \bar{S}_{\boldsymbol{q}_1,\boldsymbol{q}_2}}{\partial \boldsymbol{q}_{2x}} \right)+\boldsymbol{q}_y \left(\frac{\partial \bar{S}_{\boldsymbol{q}_1,\boldsymbol{q}_2}}{\partial \boldsymbol{q}_{1y}}-\frac{\partial \bar{S}_{\boldsymbol{q}_1,\boldsymbol{q}_2}}{\partial \boldsymbol{q}_{2y}} \right)\\
&+\frac{1}{2!} \left[\boldsymbol{q}^2_x \frac{\partial^2 \bar{S}_{\boldsymbol{q}_1,\boldsymbol{q}_2}}{\partial \boldsymbol{q}^{2}_{1x}}+2\boldsymbol{q}_x \boldsymbol{q}_y \frac{\partial^2 \bar{S}_{\boldsymbol{q}_1,\boldsymbol{q}_2}}{\partial \boldsymbol{q}_{1x} \partial \boldsymbol{q}_{1y}}+\boldsymbol{q}^2_y \frac{\partial^2 \bar{S}_{\boldsymbol{q}_1,\boldsymbol{q}_2}}{\partial \boldsymbol{q}^{2}_{1y}}+\boldsymbol{q}^2_x \frac{\partial^2 \bar{S}_{\boldsymbol{q}_1,\boldsymbol{q}_2}}{\partial \boldsymbol{q}^{2}_{2x}}+2\boldsymbol{q}_x \boldsymbol{q}_y \frac{\partial^2 \bar{S}_{\boldsymbol{q}_1,\boldsymbol{q}_2}}{\partial \boldsymbol{q}_{2x} \partial \boldsymbol{q}_{2y}}+\boldsymbol{q}^2_y \frac{\partial^2 \bar{S}_{\boldsymbol{q}_1,\boldsymbol{q}_2}}{\partial \boldsymbol{q}^{2}_{2y}} \right.\\
& \left. -\left(2\boldsymbol{q}_x \boldsymbol{q}_y \frac{\partial^2 \bar{S}_{\boldsymbol{q}_1,\boldsymbol{q}_2}}{\partial \boldsymbol{q}_{1x} \partial \boldsymbol{q}_{2y}}+2\boldsymbol{q}_x \boldsymbol{q}_y \frac{\partial^2 \bar{S}_{\boldsymbol{q}_1,\boldsymbol{q}_2}}{\partial \boldsymbol{q}_{2x} \partial \boldsymbol{q}_{1y}}+2\boldsymbol{q}^2_x \frac{\partial^2 \bar{S}_{\boldsymbol{q}_1,\boldsymbol{q}_2}}{\partial \boldsymbol{q}_{1x} \partial \boldsymbol{q}_{2x}}+2 \boldsymbol{q}^2_y \frac{\partial^2 \bar{S}_{\boldsymbol{q}_1,\boldsymbol{q}_2}}{\partial \boldsymbol{q}_{1y} \partial \boldsymbol{q}_{2y}}\right) \right] +O(\boldsymbol{q}^3)\\
=&2\bar{S}_{\boldsymbol{q}_1,\boldsymbol{q}_2}+ \left[\boldsymbol{q}^2_x \frac{\partial^2}{\partial \boldsymbol{q}^{2}_{1x}}+2\boldsymbol{q}_x \boldsymbol{q}_y \frac{\partial^2 }{\partial \boldsymbol{q}_{1x} \partial \boldsymbol{q}_{1y}}+\boldsymbol{q}^2_y \frac{\partial^2 }{\partial \boldsymbol{q}^{2}_{1y}}+\boldsymbol{q}^2_x \frac{\partial^2 }{\partial \boldsymbol{q}^{2}_{2x}}+2\boldsymbol{q}_x \boldsymbol{q}_y \frac{\partial^2 }{\partial \boldsymbol{q}_{2x} \partial \boldsymbol{q}_{2y}}+\boldsymbol{q}^2_y \frac{\partial^2 }{\partial \boldsymbol{q}^{2}_{2y}} \right.\\
& \left. -\left(2\boldsymbol{q}_x \boldsymbol{q}_y \frac{\partial^2 }{\partial \boldsymbol{q}_{1x} \partial \boldsymbol{q}_{2y}}+2\boldsymbol{q}_x \boldsymbol{q}_y \frac{\partial^2 }{\partial \boldsymbol{q}_{2x} \partial \boldsymbol{q}_{1y}}+2\boldsymbol{q}^2_x \frac{\partial^2 }{\partial \boldsymbol{q}_{1x} \partial \boldsymbol{q}_{2x}}+2 \boldsymbol{q}^2_y \frac{\partial^2 }{\partial \boldsymbol{q}_{1y} \partial \boldsymbol{q}_{2y}}\right) \right]\bar{S}_{\boldsymbol{q}_1,\boldsymbol{q}_2}+O(\boldsymbol{q}^3)\\
=&2\bar{S}_{\boldsymbol{q}_1,\boldsymbol{q}_2}+ \left[ \left(\boldsymbol{q}_x \frac{\partial}{\partial \boldsymbol{q}_{1x}}+\boldsymbol{q}_y \frac{\partial}{\partial \boldsymbol{q}_{1y}}\right)^2+ \left(\boldsymbol{q}_x \frac{\partial}{\partial \boldsymbol{q}_{2x}}+\boldsymbol{q}_y \frac{\partial}{\partial \boldsymbol{q}_{2y}} \right)^2 \right.\\
& \left. -2 \left(\boldsymbol{q}_x \frac{\partial}{\partial \boldsymbol{q}_{1x}}+\boldsymbol{q}_y \frac{\partial}{\partial \boldsymbol{q}_{1y}} \right) \left(\boldsymbol{q}_x \frac{\partial}{\partial \boldsymbol{q}_{2x}}+\boldsymbol{q}_y \frac{\partial}{\partial \boldsymbol{q}_{2y}} \right) \right]\bar{S}_{\boldsymbol{q}_1,\boldsymbol{q}_2}+O(\boldsymbol{q}^3)\\
=&2\bar{S}_{\boldsymbol{q}_1,\boldsymbol{q}_2}+ \left(\boldsymbol{q}_x \frac{\partial}{\partial \boldsymbol{q}_{1x}}+\boldsymbol{q}_y \frac{\partial}{\partial \boldsymbol{q}_{1y}}-\boldsymbol{q}_x \frac{\partial}{\partial \boldsymbol{q}_{2x}}-\boldsymbol{q}_y \frac{\partial}{\partial \boldsymbol{q}_{2y}} \right)^2 \bar{S}_{\boldsymbol{q}_1,\boldsymbol{q}_2} +O(\boldsymbol{q}^3) \sim \left[2 + (\boldsymbol{q} \cdot (\boldsymbol{\nabla}_1 - \boldsymbol{\nabla}_2))^2 \right] \bar{S}_{\boldsymbol{q}_1,\boldsymbol{q}_2}
\end{aligned}
\end{equation}

Note that the $\boldsymbol{\nabla}_i$ here is not a rigorously-defined operator, which is only used for convenience. By taking the three equations above back to Eq.\ref{commutator3b2}, we have:
\begin{equation}
\begin{aligned}
&\left[\delta \hat{\rho}_{-\boldsymbol{q}},[\delta \hat{\rho}_{\boldsymbol{q}_1}\delta \hat{\rho}_{\boldsymbol{q}_2}\delta \hat{\rho}_{-\boldsymbol{q}_1-\boldsymbol{q}_2},\delta \hat{\rho}_{\boldsymbol{q}}]\right]\\
=&\left\{-2 [(\boldsymbol{q}_1 \times \boldsymbol{q}) \cdot (\boldsymbol{q}_2 \times \boldsymbol{q}) +( \boldsymbol{q}_{2}  \times \boldsymbol{q})^2+(\boldsymbol{q}_{1}  \times \boldsymbol{q})^2)] + [(\boldsymbol{q}_1\times \boldsymbol{q} )^2  +(\boldsymbol{q}_1 \times \boldsymbol{q}) \cdot (\boldsymbol{q}_2 \times \boldsymbol{q}) ] [2 + (\boldsymbol{q} \cdot \boldsymbol{\nabla}_1)^2] \right.\\
& \left. + [(\boldsymbol{q}_2\times \boldsymbol{q} )^2  +(\boldsymbol{q}_1 \times \boldsymbol{q}) \cdot (\boldsymbol{q}_2 \times \boldsymbol{q})][2 + (\boldsymbol{q} \cdot \boldsymbol{\nabla}_2)^2]  - [(\boldsymbol{q}_1 \times \boldsymbol{q}) \cdot (\boldsymbol{q}_2 \times \boldsymbol{q})] \left[2 + (\boldsymbol{q} \cdot (\boldsymbol{\nabla}_1 - \boldsymbol{\nabla}_2))^2 \right] \right\} \bar{S}_{\boldsymbol{q}_1,\boldsymbol{q}_2} + O(\boldsymbol{q}^5)\\
=&[(\boldsymbol{q}_1\times \boldsymbol{q} )^2  +(\boldsymbol{q}_1 \times \boldsymbol{q}) \cdot (\boldsymbol{q}_2 \times \boldsymbol{q})] (\boldsymbol{q} \cdot \boldsymbol{\nabla}_1)^2 \bar{S}_{\boldsymbol{q}_1,\boldsymbol{q}_2}+ [(\boldsymbol{q}_2\times \boldsymbol{q} )^2  +(\boldsymbol{q}_1 \times \boldsymbol{q}) \cdot (\boldsymbol{q}_2 \times \boldsymbol{q}))](\boldsymbol{q} \cdot \boldsymbol{\nabla}_2)^2 \bar{S}_{\boldsymbol{q}_1,\boldsymbol{q}_2}\\
&- [(\boldsymbol{q}_1 \times \boldsymbol{q}) \cdot (\boldsymbol{q}_2 \times \boldsymbol{q})] (\boldsymbol{q} \cdot (\boldsymbol{\nabla}_1 - \boldsymbol{\nabla}_2))^2 \bar{S}_{\boldsymbol{q}_1,\boldsymbol{q}_2} +O(\boldsymbol{q}^5)\\
=&[(\boldsymbol{q}_1\times \boldsymbol{q} )^2  (\boldsymbol{q} \cdot \boldsymbol{\nabla}_1)^2 + (\boldsymbol{q}_2\times \boldsymbol{q} )^2 (\boldsymbol{q} \cdot \boldsymbol{\nabla}_2)^2 +2 (\boldsymbol{q}_1 \times \boldsymbol{q}) \cdot (\boldsymbol{q}_2 \times \boldsymbol{q}) (\boldsymbol{q} \cdot \boldsymbol{\nabla}_1)(\boldsymbol{q} \cdot \boldsymbol{\nabla}_2)] \bar{S}_{\boldsymbol{q}_1,\boldsymbol{q}_2}+O(\boldsymbol{q}^5)\\
=&[(\boldsymbol{q}_1\times \boldsymbol{q} ) (\boldsymbol{q} \cdot \boldsymbol{\nabla}_1) + (\boldsymbol{q}_2 \times \boldsymbol{q}) (\boldsymbol{q} \cdot \boldsymbol{\nabla}_2))]^2 \bar{S}_{\boldsymbol{q}_1,\boldsymbol{q}_2} +O(\boldsymbol{q}^5)
\label{commutator3b4}
\end{aligned}
\end{equation}

\begin{definition}The non-isometric basis transformation of the momentum
\begin{equation}
\begin{aligned}
\tilde{\boldsymbol{q}}_1&=\frac{1}{\sqrt{2}}\left(\boldsymbol{q}_{1}-\boldsymbol{q}_{2}\right), \quad
\tilde{\boldsymbol{q}}_2=\sqrt{\frac{3}{2}}\left(\boldsymbol{q}_{1}+\boldsymbol{q}_{2}\right)
\end{aligned}
\end{equation}
thus:
\begin{equation}
\begin{aligned}
\boldsymbol{q}_{1}&=\frac{1}{\sqrt{2}}\tilde{\boldsymbol{q}}_1+\frac{1}{\sqrt{6}}\tilde{\boldsymbol{q}}_2, \quad \boldsymbol{q}_{2}=-\frac{1}{\sqrt{2}}\tilde{\boldsymbol{q}}_1+\frac{1}{\sqrt{6}}\tilde{\boldsymbol{q}}_2
\end{aligned}
\end{equation}
\end{definition}

Note that for constructing the commuting ladder operators, we need to transform the integral over $ \boldsymbol{q_1}$ and $\boldsymbol{q_2}$ to the one over $\boldsymbol{\tilde{q}_1}$ and $ \boldsymbol{\tilde{q}_2}$, and the Jacobian is given by:
\begin{equation}
J=\left |\begin{array}{cccc}
\frac{\partial q_{1x}}{\partial \tilde{q}_{1x}} & \frac{\partial q_{1x}}{\partial \tilde{q}_{1y}} & \frac{\partial q_{1x}}{\partial \tilde{q}_{2x}} & \frac{\partial q_{1x}}{\partial \tilde{q}_{2y}}  \\
\frac{\partial q_{1y}}{\partial \tilde{q}_{1x}} & \frac{\partial q_{1y}}{\partial \tilde{q}_{1y}}  & \frac{\partial q_{1y}}{\partial \tilde{q}_{2x}} & \frac{\partial q_{1y}}{\partial \tilde{q}_{2y}}  \\
\frac{\partial q_{2x}}{\partial \tilde{q}_{1x}} & \frac{\partial q_{2x}}{\partial \tilde{q}_{1y}} & \frac{\partial q_{2x}}{\partial \tilde{q}_{2x}} & \frac{\partial q_{2x}}{\partial \tilde{q}_{2y}}  \\
\frac{\partial q_{2y}}{\partial \tilde{q}_{1x}} & \frac{\partial q_{2y}}{\partial \tilde{q}_{1y}}  & \frac{\partial q_{2y}}{\partial \tilde{q}_{2x}} & \frac{\partial q_{2y}}{\partial \tilde{q}_{2y}}  \\
\end{array}\right|
=\left |\begin{array}{cccc}
\frac{1}{\sqrt{2}} & 0 & \frac{1}{\sqrt{6}} & 0  \\
0 & \frac{1}{\sqrt{2}} & 0 & \frac{1}{\sqrt{6}}  \\
-\frac{1}{\sqrt{2}} & 0 & \frac{1}{\sqrt{6}} & 0  \\
0 & -\frac{1}{\sqrt{2}} & 0 & \frac{1}{\sqrt{6}}  \\
\end{array}\right|
=\frac{1}{3} 
\end{equation}

\begin{definition}
The regularized momentum  
\begin{equation}
\mathbf{q}=\frac{1}{\sqrt{2}} | \tilde{\mathbf{q}}_{1} | \cdot e^{i \tilde{\theta}_{1}}, \quad
\mathbf{q}^{\prime}=\frac{1}{\sqrt{2}}\left|\tilde{\mathbf{q}}_{2}\right| \cdot e^{i \tilde{\theta}_{2}}
\end{equation}
\end{definition}

\begin{definition}\label{qndef}
The ladder operators
\begin{equation}
\left\{\begin{array}{l}
\hat{b}_{1}^{\dagger}=\frac{1}{\sqrt{2}}\left(\hat{R}_{12}^{x}+i \hat{R}_{12}^{y}\right) \\
\hat{b}_{1}=\frac{1}{\sqrt{2}}\left(\hat{R}_{12}^{x}-i \hat{R}_{12}^{y}\right)
\end{array}\right. , \quad
\left\{\begin{array}{l}
\hat{b}_{2}^{\dagger}=\frac{1}{\sqrt{2}}\left(\hat{R}_{12, 3}^{x}+i \hat{R}_{12, 3}^{y}\right) \\
\hat{b}_{2}=\frac{1}{\sqrt{2}}\left(\hat{R}_{12, 3}^{x}-i \hat{R}_{12, 3}^{y}\right)
\end{array}\right.
\end{equation}
The corresponding quantum numbers are denoted by $n_1$ and $n_2$, with the differences
\begin{equation}
\Delta n_1 = n_1^\prime - n_1, \quad \Delta n = n^\prime - n_2 
\end{equation}
where the quantum number $n_1$ denotes the relative momentum between the first and the second electron, and $n_2$ represents the relative momentum between the center-of-mass of two electrons and the third electron. Also $n_1$ can only be odd due to the fermionic statistics.
\end{definition}

\begin{corollary}
Due to the rotational invariance in the system, we have
\begin{equation}
\Delta n_1 = - \Delta n_2 
\end{equation}
\end{corollary}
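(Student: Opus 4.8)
The plan is to recognize $\Delta n_1 = -\Delta n_2$ as simply the conservation of total relative angular momentum, which Assumption 1 guarantees. I would introduce the relative angular-momentum operator $\hat{L}_{\mathrm{rel}} = \hat{b}_1^\dagger \hat{b}_1 + \hat{b}_2^\dagger \hat{b}_2$, whose eigenvalue on the basis state $|n_1, n_2\rangle$ of Definition \ref{qndef} is by construction $n_1 + n_2$. The corollary then reduces to the statement that any nonvanishing object connecting $|n_1, n_2\rangle$ to $|n_1', n_2'\rangle$ — be it a matrix element of a rotationally invariant interaction or a structure-factor coefficient $d_i^{n_1 n_2}$ — must preserve the eigenvalue of $\hat{L}_{\mathrm{rel}}$, i.e. $n_1' + n_2' = n_1 + n_2$, which is precisely $\Delta n_1 + \Delta n_2 = 0$.

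The essential input is that the two Jacobi channels carry the \emph{same} unit of angular momentum. I would establish this directly from the Jacobi transformation Eq.\ref{eq12}: a spatial rotation acts only on the Cartesian index $a$ through an $SO(2)$ matrix, and does so identically for every particle, so it commutes with the purely particle-label linear map that defines $\hat{R}_{12}^a$ and $\hat{R}_{12,3}^a$. Each Jacobi vector therefore rotates covariantly, $\hat{R}_{12}^a \to R^a{}_b \hat{R}_{12}^b$ and likewise for $\hat{R}_{12,3}^a$, so the ladder operators of Definition \ref{qndef} each pick up the same phase, $\hat{b}_j \to e^{i\theta}\hat{b}_j$ for $j=1,2$. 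Consequently $|n_1, n_2\rangle$ transforms with phase $e^{-i(n_1+n_2)\theta}$, which identifies $n_1 + n_2$ as the total relative angular-momentum charge; because $|\psi_3\rangle$ is taken with zero center-of-mass angular momentum, this relative charge is the full angular-momentum quantum number and nothing is hidden in the center-of-mass sector.

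With this in hand I would close the argument by invoking Assumption 1 explicitly: the interactions have only radial components and the ground state is rotationally invariant, so the relevant generating operator commutes with $\hat{L}_{\mathrm{rel}}$. Equivalently, the decomposition $d_i^{n_1 n_2} \propto \alpha^{* n_1 n_2}\alpha^{n_1+i, n_2-i}$ of Eq.\ref{dn1n22} pairs only components of a fixed-angular-momentum highest-weight state, for which $(n_1+i)+(n_2-i)=n_1+n_2$ automatically. The resulting angular-momentum selection rule forbids any transition with $n_1' + n_2' \neq n_1 + n_2$, and setting $\Delta n_1 = n_1'-n_1$, $\Delta n_2 = n_2'-n_2$ yields $\Delta n_1 = -\Delta n_2$.

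I expect the only genuine subtlety to be the step certifying that the two channels carry equal rotation weight. Once the rotation is seen to act trivially on the particle-label structure of the Jacobi map and nontrivially only on the shared Cartesian index, the equal-weight property follows at once, and the corollary becomes pure bookkeeping of a conserved $U(1)$ charge.
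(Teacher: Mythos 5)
Your proof is correct, and it formalizes precisely the argument the paper leaves implicit: the supplement states this corollary with no justification beyond the phrase ``due to the rotational invariance,'' and your identification of $n_1+n_2$ as the conserved rotation charge --- both Jacobi ladder operators acquiring the \emph{same} phase under a common rotation, because the rotation acts on the shared Cartesian index and commutes with the particle-label map defining the Jacobi coordinates --- is exactly that reasoning made rigorous. It also matches the mechanism visible in the paper's own machinery: the matrix elements in Proposition 2 of the supplement carry winding factors $(i\mathbf{q})^{\Delta n_1}\left(i\mathbf{q}^{\prime}\right)^{\Delta n_2}$, so invariance of $\bar{S}_{\tilde{\boldsymbol{q}}_1,\tilde{\boldsymbol{q}}_2}$ under simultaneous rotation of both momenta (inherited from the rotationally invariant ground state) kills every term with $\Delta n_1+\Delta n_2\neq 0$, which is why the expansion retains only the relative-angle factor $\left(\tilde{\boldsymbol{q}}_1/\tilde{\boldsymbol{q}}_2\right)^i$.
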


\begin{definition}\label{def_3bwf}
For a three-electron rotationally invariant state $|\psi_3 \rangle$ with zero center of mass angular momentum in the magnetic field can be expanded with the complete basis $| n_1 ,n_2\rangle$:
\begin{equation}
|\psi_3 \rangle = \sum_{n_1, n_2} \alpha^{n_1, n_2} | n_1 ,n_2\rangle
\end{equation}
And the expansion coefficients are defined as $\alpha^{n_1, n_2}$.
\end{definition}

\begin{proposition}
Expectation value of the ladder operators' product 
\begin{equation}
\begin{aligned}
\left\langle n_1, n_2\left| (\hat{b}_{1}^{\dagger})^{k_1} (\hat{b}_{1})^{k_2} (\hat{b}_{2}^{\dagger})^{k_3} (\hat{b}_{2})^{k_4} \right| n_1^{\prime}, n_2^{\prime}\right\rangle= \frac{\sqrt{n_1! \cdot n_2! \cdot n_1^\prime ! \cdot n_2^\prime!} }{(n_1 - k_1)! \cdot (n_2 - k_3)!} \delta_{n_1 -k_1,  n_1^{\prime} - k_2 }, \delta_{n_2-k_3, n_2^{\prime} - k_4 }
\end{aligned}
\end{equation}
\end{proposition}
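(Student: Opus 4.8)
The plan is to reduce the statement to the standard Fock-space action of two \emph{independent} bosonic modes. First I would establish that the operators introduced in Definition~\ref{qndef} obey the canonical algebra $[\hat{b}_i,\hat{b}_j^{\dagger}]=\delta_{ij}$ with all remaining commutators vanishing. This follows directly from the guiding-center commutator of Lemma~\ref{gcocomm} together with the Jacobi coordinates of Eq.~\ref{eq12}: writing $\hat{b}_1^{\dagger},\hat{b}_1$ through $\hat{R}_{12}^{a}$ and $\hat{b}_2^{\dagger},\hat{b}_2$ through $\hat{R}_{12,3}^{a}$, the same-particle terms reproduce $[\hat{b}_i,\hat{b}_i^{\dagger}]=1$ (with $l_B=1$), while the orthogonality of the Jacobi transformation makes every cross-commutator between a mode-$1$ and a mode-$2$ operator vanish: in $[\hat{R}_{12}^{a},\hat{R}_{12,3}^{b}]$ the same-particle contributions from electrons $1$ and $2$ enter with opposite relative signs and cancel, while electron $3$ does not appear in $\hat{R}_{12}^{a}$.

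Once the two modes are shown to be independent, the states $|n_1,n_2\rangle$ are by construction the two-mode number eigenstates, so I may invoke the elementary ladder relations $\hat{b}_i|n_i\rangle=\sqrt{n_i}\,|n_i-1\rangle$ and $\hat{b}_i^{\dagger}|n_i\rangle=\sqrt{n_i+1}\,|n_i+1\rangle$. Because operators acting on different modes commute, I would reorder the product into the factorized form $[(\hat{b}_1^{\dagger})^{k_1}(\hat{b}_1)^{k_2}]\,[(\hat{b}_2^{\dagger})^{k_3}(\hat{b}_2)^{k_4}]$ and evaluate each mode separately.

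For a single mode, applying $(\hat{b})^{k}$ to $|m\rangle$ gives $\sqrt{m!/(m-k)!}\,|m-k\rangle$, and then raising by $(\hat{b}^{\dagger})^{k'}$ restores a further normalization factor; acting on $|n_1',n_2'\rangle$ from the right and projecting onto $\langle n_1,n_2|$ therefore enforces $n_1-k_1=n_1'-k_2$ and $n_2-k_3=n_2'-k_4$, which are precisely the two Kronecker deltas. Evaluating the accumulated square-root factors at these matching indices yields $\sqrt{n_1!\,n_1'!}/(n_1-k_1)!$ from the first mode and $\sqrt{n_2!\,n_2'!}/(n_2-k_3)!$ from the second, and multiplying the two reproduces the claimed expression.

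The main obstacle is not any single hard calculation but the careful verification of the decoupling in the first step: one must check that the specific $1/\sqrt{2}$ and $1/\sqrt{6}$ normalizations in Eq.~\ref{eq12} are exactly what make the cross-commutators vanish and the diagonal ones unit-normalized, so that $|n_1,n_2\rangle$ really is a product Fock state. A secondary point requiring care is the bookkeeping convention for out-of-range indices: when $n_i-k$ would become negative the annihilation string truncates to zero, consistent with interpreting $1/(n_i-k_i)!$ as vanishing through the poles of the Gamma function, so the identity remains valid including these boundary cases.
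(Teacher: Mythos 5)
Your proposal is correct and follows essentially the same route as the paper's own proof: both rely on the canonical ladder algebra $[\hat{b}_i,\hat{b}_j^{\dagger}]=\delta_{ij}$ (which the paper derives from Lemma~\ref{gcocomm} and Definition~\ref{qndef} inside the proof of Proposition~\ref{prop2}) to commute the two modes apart, lower the ket with the annihilation strings, raise it back with the creation strings, and extract the Kronecker deltas from orthonormality before simplifying the factorial prefactors. Your explicit check of the Jacobi-coordinate cross-commutators and your remark on out-of-range indices are careful additions, but they do not change the substance of the argument.
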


\begin{proof}
By considering Eq.\ref{ladder_commut}, we have:
\begin{equation}
\begin{aligned}
&\left\langle n_1, n_2\left| (\hat{b}_{1}^{\dagger})^{k_1} (\hat{b}_{1})^{k_2} (\hat{b}_{2}^{\dagger})^{k_3} (\hat{b}_{2})^{k_4} \right| n_1^{\prime}, n_2^{\prime}\right\rangle= \left\langle n_1, n_2\left| (\hat{b}_{1}^{\dagger})^{k_1} (\hat{b}_{2}^{\dagger})^{k_3}  \right| n_1^{\prime} - k_2, n_2^{\prime} - k_4\right\rangle \sqrt{\frac{n_1^\prime !}{(n_1^\prime - k_2) !}} \sqrt{\frac{n_2^\prime !}{(n_2^\prime - k_4) !}}\\
=& \left\langle n_1, n_2 | n_1^{\prime} - k_2 +k_1, n_2^{\prime} - k_4 +k_3\right\rangle \sqrt{\frac{n_1^\prime !}{(n_1^\prime - k_2) !}} \sqrt{\frac{n_2^\prime !}{(n_2^\prime - k_4) !}} \sqrt{\frac{(n_1^\prime -k_2 +k_1 ) !}{(n_1^\prime - k_2) !}} \sqrt{\frac{(n_2^\prime - k_4 +k_3) !}{(n_2^\prime - k_4) !}}\\
=& \delta_{n_1 -k_1,  n_1^{\prime} - k_2 }, \delta_{n_2-k_3, n_2^{\prime} - k_4 } \sqrt{\frac{n_1^\prime !}{(n_1^\prime - k_2) !}} \sqrt{\frac{n_2^\prime !}{(n_2^\prime - k_4) !}} \sqrt{\frac{(n_1^\prime -k_2 +k_1 ) !}{(n_1^\prime - k_2) !}} \sqrt{\frac{(n_2^\prime - k_4 +k_3) !}{(n_2^\prime - k_4) !}}\\
=&\frac{\sqrt{n_1! \cdot n_2! \cdot n_1^\prime ! \cdot n_2^\prime!} }{(n_1 - k_1)! \cdot (n_2 - k_3)!} \delta_{n_1 -k_1,  n_1^{\prime} - k_2 }, \delta_{n_2 - k_3, n^{\prime} - k_4 }
\end{aligned}
\end{equation}
\end{proof}

\begin{proposition}\label{prop2}
Matrix elements of three-body interaction Hamiltonians
\begin{equation}
\begin{aligned}
\left\langle n_1, n_2\left|e^{i \tilde{q}_{1a} \hat{R}_{12}^{a}} e^{i \tilde{q}_{2a} \hat{R}_{12, 3}^{a}}\right| n_1^{\prime}, n_2^{\prime}\right\rangle =\sqrt{\frac{n_1 ! \cdot n_2 !}{n_1^{\prime} ! \cdot n_2^{\prime} !}}  (i \mathbf{q})^{\Delta n_1}\left(i \mathbf{q}^{\prime}\right)^{\Delta n_2} L_{n_1}^{(\Delta n_1)}\left(|\mathbf{q}|^{2}\right) L_{n_2}^{(\Delta n_2)}\left(\left|\mathbf{q}^{\prime}\right|^{2}\right)e^{-\frac{1}{2}\left(|\mathbf{q}|^{2}+\left|\mathbf{q}^{\prime}\right|^{2}\right)}
\end{aligned}
\end{equation}
\end{proposition}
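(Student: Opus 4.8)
The plan is to exploit the decoupling of the two relative motions in the Jacobi coordinates, so that the matrix element factorizes into a product of two independent single-oscillator matrix elements, each being the matrix element of a displacement operator in the harmonic-oscillator number basis. First I would verify that $[\hat R_{12}^a,\hat R_{12,3}^b]=0$ using Lemma \ref{gcocomm}: the cross terms between the $\{1,2\}$ sector and the $\{3\}$ sector cancel, and the residual $[\hat R_1^a,\hat R_1^b]-[\hat R_2^a,\hat R_2^b]$ vanishes by antisymmetry. Hence $\hat b_1,\hat b_1^\dagger$ commute with $\hat b_2,\hat b_2^\dagger$, the state $|n_1,n_2\rangle$ is a genuine tensor product, and $e^{i\tilde q_{1a}\hat R_{12}^a}$ acts only on the first factor while $e^{i\tilde q_{2a}\hat R_{12,3}^a}$ acts only on the second. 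The problem thus reduces to the single-oscillator identity $\langle n|e^{i\tilde q_a\hat R^a}|n'\rangle=\sqrt{n!/n'!}\,(i\mathbf q)^{\Delta n}L_n^{(\Delta n)}(|\mathbf q|^2)e^{-|\mathbf q|^2/2}$ with $\Delta n=n'-n$, after which one simply multiplies the two factors.

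Next I would cast the single exponential as a standard displacement operator. Inverting Definition \ref{qndef} gives $\hat R^x=\tfrac{1}{\sqrt2}(\hat b^\dagger+\hat b)$ and $\hat R^y=\tfrac{-i}{\sqrt2}(\hat b^\dagger-\hat b)$, and combining these with the regularized complex momentum $\mathbf q=\tfrac{1}{\sqrt2}|\tilde{\mathbf q}|e^{i\tilde\theta}$ a short computation yields $i\tilde q_a\hat R^a=\alpha\hat b^\dagger-\alpha^*\hat b$ with $\alpha=i\mathbf q^*$, so that $-\alpha^*=i\mathbf q$ and $|\alpha|^2=|\mathbf q|^2$. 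Since $[\hat b,\hat b^\dagger]=1$ is a c-number (directly checked from Lemma \ref{gcocomm}), the BCH formula of Lemma \ref{BCH} truncates after the single commutator term and normal-orders the displacement operator exactly: $e^{\alpha\hat b^\dagger-\alpha^*\hat b}=e^{-|\alpha|^2/2}e^{\alpha\hat b^\dagger}e^{-\alpha^*\hat b}$.

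I would then Taylor-expand both normal-ordered exponentials and evaluate each resulting monomial with the preceding Proposition on products of ladder operators, applied one oscillator at a time. The Kronecker deltas force the power of $\hat b^\dagger$ minus the power of $\hat b$ to equal $\Delta n$, collapsing the double sum over the two expansion indices to a single index; after extracting the prefactor $(-\alpha^*)^{\Delta n}=(i\mathbf q)^{\Delta n}$ and the ratio $\sqrt{n!/n'!}$, the remaining finite sum over powers of $|\mathbf q|^2$ is precisely the series representation $L_n^{(k)}(x)=\sum_j(-1)^j\binom{n+k}{n-j}x^j/j!$ of the associated Laguerre polynomial with $k=\Delta n$. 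This reproduces the single-oscillator factor; repeating the computation with $1\to2$ and $\mathbf q\to\mathbf q'$ gives the second factor, and their product is the claimed expression.

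The main obstacle is the resummation bookkeeping in the final step: one must track the two distinct phases $\alpha=i\mathbf q^*$ and $-\alpha^*=i\mathbf q$ carefully so that the surviving prefactor is $(i\mathbf q)^{\Delta n}$ rather than its complex conjugate, and then match the residual finite sum to the Laguerre series with the correct binomial normalization $\binom{n+\Delta n}{n-j}$. A subsidiary point is the sign of $\Delta n$: the derivation as organized presumes $n'\ge n$, and the case $n'<n$ follows either by Hermitian conjugation (sending $\mathbf q\to-\mathbf q^*$) or by the standard continuation of $L_n^{(\Delta n)}$ to negative upper index, which is consistent with the constraint $\Delta n_1=-\Delta n_2$ imposed by rotational invariance.
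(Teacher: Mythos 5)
Your proposal is correct and follows essentially the same route as the paper's own proof: rewriting the guiding-center exponentials in terms of the ladder operators, normal-ordering via the truncated Baker--Campbell--Hausdorff formula, Taylor-expanding and evaluating term by term with the preceding proposition on ladder-operator products, and resumming the surviving single sum into generalized Laguerre polynomials with upper index $\Delta n$. The only differences are organizational (you factorize the two commuting oscillator sectors explicitly and use displacement-operator language), and your normal-ordered Gaussian factor $e^{-|\alpha|^{2}/2}$ carries the correct sign, whereas the paper's intermediate line shows $e^{+\left(|\mathbf{q}|^{2}+|\mathbf{q}'|^{2}\right)/2}$ (an evident typo) before recovering the correct $e^{-\frac{1}{2}\left(|\mathbf{q}|^{2}+|\mathbf{q}'|^{2}\right)}$ in the final expression.
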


\begin{proof}
From Def.\ref{qndef} and Lemma.\ref{gcocomm}, we have:
\begin{equation}
\left[\hat{b}_{1}, \hat{b}_{1}^{\dagger}\right]=\left[\hat{b}_{2}, \hat{b}_{2}^{\dagger}\right]=1, \quad \left[\hat{b}_{1}, \hat{b}_{2}\right]=\left[\hat{b}_{1}, \hat{b}_{2}^{\dagger}\right]=0
\label{ladder_commut}
\end{equation}
and
\begin{equation}
\left\{\begin{array}{l}
\hat{R}_{12}^{x}=\frac{1}{\sqrt{2}}\left(\hat{b}_{1}^{\dagger}+\hat{b}_{1}\right) \\
\hat{R}_{12}^{y}=-\frac{i}{\sqrt{2}}(\hat{b}_{1}^{\dagger} - \hat{b}_{1})
\end{array}\right., \quad
\left\{\begin{array}{l}
\hat{R}_{12, 3}^{x}=\frac{1}{\sqrt{2}}\left(\hat{b}_{2}^{\dagger}+\hat{b}_{2}\right) \\
\hat{R}_{12, 3}^{y}=-\frac{i}{\sqrt{2}}(\hat{b}_{2}^{\dagger} - \hat{b}_{2})
\end{array}\right.
\end{equation}

\begin{equation}
\begin{aligned}
&\left\langle n_1, n_2\left|e^{i \tilde{q}_{1a} \hat{R}_{12}^{a}} e^{i \tilde{q}_{2a} \hat{R}_{12, 3}^{a}}\right| n_1^{\prime}, n_2^{\prime}\right\rangle\\
=& \left\langle n_1, n_2\left|e^{i \left[ \frac{\tilde{q}_{1x}}{\sqrt{2}}\left(\hat{b}_{1}^{\dagger}+\hat{b}_{1}\right) - i \frac{\tilde{q}_{1y}}{\sqrt{2}}\left(\hat{b}_{1}^{\dagger}-\hat{b}_{1}\right) \right]} e^{i \left[ \frac{\tilde{q}_{2x}}{\sqrt{2}}\left(\hat{b}_{2}^{\dagger}+\hat{b}_{2}\right) - i \frac{\tilde{q}_{2y}}{\sqrt{2}}\left(\hat{b}_{2}^{\dagger}-\hat{b}_{2}\right) \right]} \right| n_1^{\prime}, n_2^{\prime}\right\rangle\\
=& \left\langle n_1, n_2\left|e^{i \left[\left( \frac{\tilde{q}_{1x}}{\sqrt{2}} - i \cdot \frac{\tilde{q}_{1y}}{\sqrt{2}} \right) \hat{b}_{1}^{\dagger} + \left( \frac{\tilde{q}_{1x}}{\sqrt{2}} + i \cdot \frac{\tilde{q}_{1y}}{\sqrt{2}} \right)\hat{b}_{1} \right]} e^{i \left[\left( \frac{\tilde{q}_{2x}}{\sqrt{2}} - i \cdot \frac{\tilde{q}_{2y}}{\sqrt{2}} \right) \hat{b}_{2}^{\dagger} + \left( \frac{\tilde{q}_{2x}}{\sqrt{2}} + i \cdot \frac{\tilde{q}_{2y}}{\sqrt{2}} \right)\hat{b}_{2} \right]} \right| n_1^{\prime}, n_2^{\prime}\right\rangle\\
=& \left\langle n_1, n_2\left|e^{i \left(\mathbf{q}^* \hat{b}_{1}^{\dagger} + \mathbf{q} \hat{b}_{1} \right)} e^{i \left(\mathbf{q}^{\prime *} \hat{b}_{2}^{\dagger} + \mathbf{q}^{\prime}\hat{b}_{2} \right)} \right| n_1^{\prime}, n_2^{\prime}\right\rangle = \left\langle n_1, n_2\left|e^{i \mathbf{q}^* \hat{b}_{1}^{\dagger}} e^{i \mathbf{q} \hat{b}_{1} } e^{i \mathbf{q}^{\prime *} \hat{b}_{2}^{\dagger} } e^{i \mathbf{q}^{\prime}\hat{b}_{2} } \right| n_1^{\prime}, n_2^{\prime}\right\rangle \cdot e^{\frac{|\mathbf{q}|^2 + |\mathbf{q}^\prime|^2}{2}}\\
=& \sum_{k_1, k_2, k_3, k_4} \frac{(i \mathbf{q}^*)^{k_1} (i \mathbf{q})^{k_2} (i \mathbf{q}^{\prime *})^{k_3} (i \mathbf{q}^{\prime})^{k_4}}{k_1! \cdot k_2! \cdot k_3! \cdot k_4!} \left\langle n_1, n_2\left| (\hat{b}_{1}^{\dagger})^{k_1} (\hat{b}_{1})^{k_2} (\hat{b}_{2}^{\dagger})^{k_3} (\hat{b}_{2})^{k_4} \right| n_1^{\prime}, n_2^{\prime}\right\rangle \cdot e^{\frac{|\mathbf{q}|^2 + |\mathbf{q}^\prime|^2}{2}}\\
=& \sum_{k_1, k_2, k_3, k_4} \frac{(i \mathbf{q}^*)^{k_1} (i \mathbf{q})^{k_2} (i \mathbf{q}^{\prime *})^{k_3} (i \mathbf{q}^{\prime})^{k_4}}{k_1! \cdot k_2! \cdot k_3! \cdot k_4!} \cdot e^{\frac{|\mathbf{q}|^2 + |\mathbf{q}^\prime|^2}{2}} \cdot \delta_{n_1 -k_1,  n_1^{\prime} - k_2 }, \delta_{n_2 -k_3, n_2^{\prime} - k_4 } \cdot \frac{\sqrt{n_1! \cdot n_2! \cdot n_1^\prime ! \cdot n_2^\prime!} }{(n_1 - k_1)! \cdot (n_2 - k_3)!} \quad(*)
\end{aligned}
\end{equation}

Consider the quantum number differences defined in Def.\ref{qndef}, we have $\Delta n_1 = n_1^\prime - n_1 = k_2 - k_1$, $\Delta n_2 = n_2^\prime - n_2 = k_4 - k_3$:
\begin{equation}
\begin{aligned}
(*)=&\sum_{k_1, k_3} \frac{(i \mathbf{q}^*)^{k_1} (i \mathbf{q})^{k_1+\Delta n_1} (i \mathbf{q}^{\prime *})^{k_3} (i \mathbf{q}^{\prime})^{k_3 +\Delta n_2}}{k_1! \cdot (k_1+\Delta n_1)! \cdot k_3! \cdot (k_2+\Delta n_2)!} \cdot e^{\frac{|\mathbf{q}|^2 + |\mathbf{q}^\prime|^2}{2}} \cdot \frac{n_1^\prime ! \cdot n^\prime! }{(n_1 - k_1)! \cdot (n_2 - k_3)!} \cdot \sqrt{\frac{n_1! \cdot n_2!  }{n_1^\prime ! \cdot n_2^\prime!}}\\
=&\sqrt{\frac{n_1! \cdot n_2!}{n_1^\prime ! \cdot n_2^\prime!}} (i  \mathbf{q})^{\Delta n_1}\left(i \mathbf{q}^{\prime}\right)^{\Delta n_2} \left[\sum_{k_1}  \frac{(-1)^{k_1} \cdot n_1^\prime ! \cdot \left(|\mathbf{q}|^2 \right)^{k_1}}{(k_1 + \Delta n_1)! \cdot (n_1 - k_1)! \cdot k_1!} \right] \left[\sum_{k_3}  \frac{(-1)^{k_3} \cdot n_2^\prime ! \cdot \left(|\mathbf{q^\prime}|^2 \right)^{k_3}}{(k_3 + \Delta n_2)! \cdot (n_2 - k_3)! \cdot k_3!} \right]  \cdot e^{\frac{|\mathbf{q}|^2 + |\mathbf{q}^\prime|^2}{2}} \\
=&\sqrt{\frac{n_1 ! \cdot n_2 !}{n_1^{\prime} ! \cdot n_2^{\prime} !}}  (i  \mathbf{q})^{\Delta n_1}\left(i \mathbf{q}^{\prime}\right)^{\Delta n_2} L_{n_1}^{(\Delta n_1)}\left(|\mathbf{q}|^{2}\right) L_{n_2}^{(\Delta n_2)}\left(\left|\mathbf{q}^{\prime}\right|^{2}\right) e^{-\frac{1}{2}\left(|\mathbf{q}|^{2}+\left|\mathbf{q}^{\prime}\right|^{2}\right)}
\end{aligned}
\end{equation}
\end{proof}

\begin{lemma}\label{Fourier}
Let the operators $\hat{F}^1$ and $\hat{F}^{-1}$ denote the two-dimensional Fourier transform and its inverse transform, then the eigenvalue equation for operator $\hat F^1$ is given by \cite{yu1998laguerre}:
\begin{equation}
\hat{F}^{1}\left[\Psi_{s m}\left(\boldsymbol{q}\right) \right]=(-1)^{s} (i)^{m} \Psi_{s m}\left(\boldsymbol{p}\right)
\end{equation}
with the two-dimensional Laguerre–Gaussian (LG) function of vector $\boldsymbol{q}$ defined by:
\begin{equation}
\Psi_{s m}(\boldsymbol{q})=\left[\frac{2(s !)}{(s+m) !}\right]^{1 / 2} \boldsymbol{q}^{m} L_{s}^{(m)}\left(|\boldsymbol{q}|^{2}\right) \exp \left(-\frac{1}{2} |\boldsymbol{q}|^{2}\right)
\end{equation}
\end{lemma}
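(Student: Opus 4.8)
The plan is to exploit rotational symmetry to collapse the two-dimensional transform onto a one-dimensional radial (Hankel) transform, and then to show that the radial profile $q^m L_s^{(m)}(q^2) e^{-q^2/2}$ is self-reciprocal under the order-$m$ Hankel transform up to the sign $(-1)^s$. Writing $\boldsymbol{q} = q\, e^{i\theta}$ and $\boldsymbol{p} = p\, e^{i\phi}$ in polar form, the angular factor of $\Psi_{sm}$ is simply $\boldsymbol{q}^m = q^m e^{im\theta}$, so the entire angular dependence is $e^{im\theta}$ and factors out of the integral. Performing the angular integration with the Jacobi--Anger expansion $e^{i x\cos\alpha} = \sum_n i^n J_n(x) e^{in\alpha}$ collapses the sum onto the single Bessel function $J_m$, produces the phase $i^m$ (with the Fourier sign convention used here), and reduces the statement to
\begin{equation}
\hat{F}^{1}[\Psi_{sm}](\boldsymbol{p}) = N_{sm}\, i^m e^{im\phi} \int_0^\infty q^{m+1} L_s^{(m)}(q^2) e^{-q^2/2} J_m(pq)\, dq,
\end{equation}
where $N_{sm} = [2\, s!/(s+m)!]^{1/2}$ is the normalization constant carried unchanged on both sides.

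The heart of the proof is the radial Hankel identity
\begin{equation}
\int_0^\infty q^{m+1} L_s^{(m)}(q^2) e^{-q^2/2} J_m(pq)\, dq = (-1)^s\, p^m L_s^{(m)}(p^2) e^{-p^2/2}.
\end{equation}
I would establish this by the generating-function method rather than by brute force. Multiply the integrand by $z^s$, sum over $s$ using $\sum_s L_s^{(m)}(x) z^s = (1-z)^{-m-1}\exp[-xz/(1-z)]$, and interchange sum and integral. The two Gaussians combine into a single $\exp(-a q^2)$ with $a = (1+z)/[2(1-z)]$, so the remaining integral is the elementary Gaussian--Bessel integral $\int_0^\infty q^{m+1} e^{-aq^2} J_m(pq)\, dq = (2a)^{-m-1} p^m \exp[-p^2/(4a)]$. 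Substituting $a$ and simplifying yields $p^m (1+z)^{-m-1}\exp[-p^2(1-z)/(2(1+z))]$, which is precisely $p^m e^{-p^2/2}\sum_s (-z)^s L_s^{(m)}(p^2)$ once the generating function is applied again under $z \mapsto -z$. Matching powers of $z$ gives the identity, and the alternating sign $(-1)^s$ is seen to originate cleanly from this $z \mapsto -z$ reflection of the Laguerre generating function.

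Combining the two steps gives $\hat{F}^{1}[\Psi_{sm}](\boldsymbol{p}) = (-1)^s i^m N_{sm}\, p^m e^{im\phi} L_s^{(m)}(p^2) e^{-p^2/2} = (-1)^s i^m \Psi_{sm}(\boldsymbol{p})$, the claimed eigenvalue equation. I expect the main obstacle to be the radial identity; everything upstream (the angular reduction) and downstream (reassembling the eigenfunction) is routine bookkeeping. Since this is the classical statement that Laguerre--Gaussian modes diagonalize the Fourier transform, one may alternatively simply cite Ref.\cite{yu1998laguerre}; for the purposes of this work only the discrete eigenvalue set $\{(-1)^s i^m\}$ is needed, as it is what guarantees that the structure-factor self-duality relation and the Laguerre expansions of $V_{\boldsymbol{q}}$ and $s_{\boldsymbol{q}}$ can be manipulated term by term.
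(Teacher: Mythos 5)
Your proposal is correct, but it follows a genuinely different route from the paper, because the paper does not prove Lemma~\ref{Fourier} at all: the supplementary material states up front that not every lemma is proved (``references will be offered in that case''), and this particular statement is imported as a black box from Ref.~\cite{yu1998laguerre}, then invoked verbatim to justify Eq.~(\ref{sn1n2fourier}). Your derivation supplies the missing self-contained argument, and it is sound: the Jacobi--Anger reduction correctly isolates the single Bessel function $J_m$ and the phase $i^m$ on the angular sector $e^{im\theta}$, and the radial self-reciprocity
\begin{equation}
\int_0^\infty q^{m+1} L_s^{(m)}(q^2)\, e^{-q^2/2}\, J_m(pq)\, dq \;=\; (-1)^s\, p^m L_s^{(m)}(p^2)\, e^{-p^2/2}
\end{equation}
indeed follows from the generating function $\sum_s L_s^{(m)}(x) z^s = (1-z)^{-m-1} e^{-xz/(1-z)}$ combined with $\int_0^\infty q^{m+1} e^{-aq^2} J_m(pq)\, dq = (2a)^{-m-1} p^m e^{-p^2/(4a)}$; I checked the bookkeeping, namely $a = \frac{1+z}{2(1-z)}$, $(1-z)^{-m-1}(2a)^{-m-1} = (1+z)^{-m-1}$, and the exponent $-p^2(1-z)/[2(1+z)]$, which reproduces exactly the $z\mapsto -z$ reflected generating function, so the eigenvalue $(-1)^s$ emerges precisely where you say it does. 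As for what each approach buys: the paper's citation is shorter and delegates convention issues to the literature, while your proof makes the supplement self-contained using the very toolkit the paper already deploys elsewhere (Bessel functions from angular integrals as in Theorem~\ref{besseltheorem}, Laguerre generating functions as in Lemma~\ref{Hardy-Hille}), and it exposes cleanly the separate origins of the two factors $(-1)^s$ and $i^m$. One caveat, which you flag only in passing: the phase $i^m$ versus $(-i)^m$ depends on the sign of the Fourier kernel, and the paper actually applies the lemma with a rescaled symplectic kernel $e^{-\frac{i}{2}(\tilde{\boldsymbol{q}}\times\tilde{\boldsymbol{p}})}$ rather than $e^{i\boldsymbol{p}\cdot\boldsymbol{q}}$; a rotation and rescaling of the argument relates the two, and since the paper's application only uses even angular indices and, effectively, the factor $(-1)^{n_1+n_2}$, none of its conclusions are sensitive to this, but splicing your proof into the supplement verbatim would require fixing that convention and the factor-of-two rescaling of the Laguerre arguments once and for all.
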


\begin{proposition}
The reduced structure factor $\bar{S}_{\boldsymbol{\tilde{q}}_1,\boldsymbol{\tilde{q}}_2}$ can be expanded with generalized Laguerre-Gaussian functions as:
\begin{equation}
\begin{aligned}
\bar{S}_{\boldsymbol{\tilde{q}}_1,\boldsymbol{\tilde{q}}_2} &= \sum_{i; n_1 n_2} d_{i}^{n_1 n_2} \sqrt{\frac{n_1 ! \cdot n_2 !}{(n_1+i) ! \cdot (n_2-i) !}} \left(\frac{\boldsymbol{\tilde{q}}_1}{\boldsymbol{\tilde{q}}_2}\right)^i \times L^{(i)}_{n_1}\left(\frac{\tilde{Q}_1}{2}\right) L^{(-i)}_{n_2}\left(\frac{\tilde{Q}_2}{2} \right)  e^{-\frac{1}{4}\left(\tilde{Q}_1+\tilde{Q}_2\right) }\\
& = \int  \frac{d^2 \tilde{\boldsymbol{p}}_1 d^2 \tilde{\boldsymbol{p}}_2}{4 \times 4 \pi^2} e^{-\frac{i}{2}(\tilde{\boldsymbol{q}}_1 \times \tilde{\boldsymbol{p}}_1 + \tilde{\boldsymbol{q}}_2 \times \tilde{\boldsymbol{p}}_2)} \bar{S}_{\boldsymbol{\tilde{p}}_1,\boldsymbol{\tilde{p}}_2}
\end{aligned}
\end{equation}
and the Fourier transform can be written as:
\begin{equation}\label{sn1n2fourier}
\begin{aligned}
\bar{S}_{\boldsymbol{\tilde{p}}_1,\boldsymbol{\tilde{p}}_2}
&=\sum_{i; n_1 n_2} (-1)^{n_1+n_2} d_{i}^{n_1 n_2} \sqrt{\frac{n_1 ! \cdot n_2 !}{(n_1+i) ! \cdot (n_2-i) !}} \left(\frac{\boldsymbol{\tilde{p}}_1}{\boldsymbol{\tilde{p}}_2}\right)^i \times L^{(i)}_{n_1}\left(\frac{\tilde{P}_1}{2}\right) L^{(-i)}_{n_2}\left(\frac{\tilde{P}_2}{2} \right)  e^{-\frac{1}{4}\left(\tilde{P}_1+\tilde{P}_2\right) }
\end{aligned}
\end{equation}
where we have defined:
\begin{equation}
\tilde{Q}_1=\tilde{\boldsymbol{q}}_1^2=|\tilde{\boldsymbol{q}}_1|^2, \quad \tilde{Q}_2=\tilde{\boldsymbol{q}}_2^2=|\tilde{\boldsymbol{q}}_2|^2, \quad\tilde{P}_1=\tilde{\boldsymbol{p}}_1^2=|\tilde{\boldsymbol{p}}_1|^2, \quad \tilde{P}_2=\tilde{\boldsymbol{p}}_2^2=|\tilde{\boldsymbol{p}}_2|^2
\end{equation}
\end{proposition}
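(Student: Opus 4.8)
The plan is to prove the two assertions separately: first that $\bar{S}_{\tilde{\boldsymbol{q}}_1,\tilde{\boldsymbol{q}}_2}$ admits the stated generalized Laguerre–Gaussian expansion, and second that its (symplectic) Fourier transform reproduces the same expansion with the extra sign $(-1)^{n_1+n_2}$ on each component. For the first assertion I would insert the complete three-particle relative basis $\sum_{n_1'n_2'}|n_1',n_2'\rangle\langle n_1',n_2'|$ into the definition of $\bar{S}$ written in Jacobi coordinates and apply Proposition~\ref{prop2}. Its matrix element $\sqrt{n_1!n_2!/(n_1'!n_2'!)}\,(i\mathbf{q})^{\Delta n_1}(i\mathbf{q}')^{\Delta n_2}L^{(\Delta n_1)}_{n_1}(|\mathbf{q}|^2)L^{(\Delta n_2)}_{n_2}(|\mathbf{q}'|^2)e^{-\frac12(|\mathbf{q}|^2+|\mathbf{q}'|^2)}$, evaluated at the regularized momenta with $|\mathbf{q}|^2=\tilde{Q}_1/2$ and $|\mathbf{q}'|^2=\tilde{Q}_2/2$, is precisely the radial structure appearing in the proposition. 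Rotational invariance (the corollary $\Delta n_1=-\Delta n_2$) lets me set $i\equiv\Delta n_1$, so that $(i\mathbf{q})^{i}(i\mathbf{q}')^{-i}=(\mathbf{q}/\mathbf{q}')^{i}=(\tilde{\boldsymbol{q}}_1/\tilde{\boldsymbol{q}}_2)^{i}$, the $\sqrt2$ factors and $i^{\,i-i}=1$ dropping out, while the square-root prefactor becomes $\sqrt{n_1!n_2!/((n_1+i)!(n_2-i)!)}$. Collecting all ground-state-dependent weights into the single coefficient $d_i^{n_1n_2}$ yields the first displayed line, and completeness of the generalized Laguerre–Gaussian basis guarantees that no other terms appear.

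For the Fourier relation the strategy is to recognize each summand as a product of two one-particle Laguerre–Gaussian functions of the regularized momenta and to transform it factor by factor using Lemma~\ref{Fourier}. Setting $\boldsymbol{q}=\tilde{\boldsymbol{q}}_1/\sqrt2$ and $\boldsymbol{q}'=\tilde{\boldsymbol{q}}_2/\sqrt2$, the block $(\tilde{\boldsymbol{q}}_1/\tilde{\boldsymbol{q}}_2)^iL^{(i)}_{n_1}(\tilde{Q}_1/2)L^{(-i)}_{n_2}(\tilde{Q}_2/2)e^{-(\tilde{Q}_1+\tilde{Q}_2)/4}$ is proportional to $\Psi_{n_1,i}(\boldsymbol{q})\,\Psi_{n_2,-i}(\boldsymbol{q}')$, the normalization constants of $\Psi$ being absorbed into $d_i^{n_1n_2}$. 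The kernel $e^{-\frac{i}{2}(\tilde{\boldsymbol{q}}_1\times\tilde{\boldsymbol{p}}_1+\tilde{\boldsymbol{q}}_2\times\tilde{\boldsymbol{p}}_2)}$ factorizes over the two momenta, and after rescaling $\tilde{\boldsymbol{p}}_j$ the same way the measure $d^2\tilde{\boldsymbol{p}}_1\,d^2\tilde{\boldsymbol{p}}_2/(4\cdot4\pi^2)$ reduces to the canonical $(d^2\boldsymbol{p}/2\pi)(d^2\boldsymbol{p}'/2\pi)$, so each factor undergoes exactly the transform of Lemma~\ref{Fourier}. Applying the eigenvalue $(-1)^{s}(i)^{m}$ to the two factors gives $(-1)^{n_1}(i)^{i}$ and $(-1)^{n_2}(i)^{-i}$; the angular phases cancel, $(i)^{i}(i)^{-i}=1$, leaving precisely $(-1)^{n_1+n_2}$ and establishing Eq.~\eqref{sn1n2fourier}. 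The first integral identity then follows by Fourier inversion, using that the symplectic (cross-product) transform with this normalization squares to the identity, so that $\bar{S}_{\tilde{\boldsymbol{q}}}$ is recovered from $\bar{S}_{\tilde{\boldsymbol{p}}}$.

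The main obstacle I anticipate is not the conceptual core—the cancellation of the $(i)^{\pm i}$ phases, which is what makes the result so clean—but the bookkeeping needed to reconcile the cross-product kernel with the dot-product convention underlying Lemma~\ref{Fourier}, and to confirm that the $\sqrt2$ rescalings of both momenta together with the prefactor $1/(4\cdot4\pi^2)$ collapse to the standard Fourier normalization. The cross product is a $90^\circ$ rotation of one argument relative to the dot product, which multiplies $\Psi_{s,m}$ by an extra phase $e^{im\pi/2}=(i)^m$; I would verify that in the two-channel product these rotation phases ($(i)^{i}$ and $(i)^{-i}$) again cancel, so the symplectic kernel gives the same eigenvalue $(-1)^{n_1+n_2}$ as the ordinary one. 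Establishing this cancellation, along with the Jacobian and normalization factors, is the step most prone to sign and factor errors, and is where I would concentrate the detailed computation.
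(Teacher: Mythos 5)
Your proposal is correct and follows essentially the same route as the paper: expand $\bar{S}$ using the three-body matrix elements of Proposition~2 in the $|n_1,n_2\rangle$ basis, use rotational invariance ($\Delta n_1=-\Delta n_2$) to label terms by a single index $i$, absorb all ground-state-dependent weights into $d_i^{n_1 n_2}$, and then obtain the Fourier relation from the Laguerre--Gaussian eigenfunction property (Lemma on the 2D Fourier transform). The one place you go beyond the paper is the Fourier step, where the paper simply asserts the result "obviously holds": your explicit check that the cross-product (symplectic) kernel differs from the dot-product convention only by a $90^\circ$ rotation, whose phases $(i)^{i}$ and $(i)^{-i}$ cancel between the two channels so that only $(-1)^{n_1+n_2}$ survives, is a correct and worthwhile refinement rather than a different method.
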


\begin{proof}
For any three particles indexed by $s$, $j$ and $k$ in the system, the conclusions in Proposition.\ref{prop2} are always true. So by considering about the definition of the reduced structure factor, we can rewrite it as: 
\begin{equation}
\bar{S}_{\boldsymbol{q}_1,\boldsymbol{q}_2} = \sum_{s \ne j \ne k} \mathcal{N}^{sjk} \cdot \sum_{n_1, n_2} \sum_{n_1^\prime, n_2^\prime } \alpha^{* n_1 n_2} \alpha^{n^\prime_1 n^\prime_2}  \left(\left\langle n_1, n_2 \left|e^{i \tilde{q}_{s a} \hat{R}_{sj}^{a}} e^{i \tilde{q}_{j a} \hat{R}_{sj, k}^{a}}\right| n_1^{\prime}, n_2^{\prime}  \right\rangle \right)_{sjk}
\end{equation}
where $\alpha^{n_1 n_2}$ is defined in Definition.\ref{def_3bwf} the tensor $\mathcal{N}^{sjk}$ describes the overall factor related to all the other particles:
\begin{equation}
\mathcal{N}^{sjk} = \mathcal{N}\left(\boldsymbol{R}_{1}, \cdots, \boldsymbol{R}_{s-1}, \boldsymbol{R}_{s+1}, \cdots,  \boldsymbol{R}_{j-1}, \boldsymbol{R}_{j+1}, \cdots, \boldsymbol{R}_{k-1}, \boldsymbol{R}_{k+1}, \cdots, \boldsymbol{R}_{N}\right) 
\end{equation}
Thus by defining:
\begin{equation}
d_{i}^{n_1 n_2} = \sum_{s \ne j \ne k} \mathcal{N}^{sjk} \alpha^{* n_1 n_2} \alpha^{n^\prime_1 n^\prime_2} = \sum_{s \ne j \ne k} \mathcal{N}^{sjk} \alpha^{* n_1, n_2} \alpha^{n_1+i, n_2-i}
\end{equation}
we can write the reduced structure factor as:
\begin{equation}
\begin{aligned}
\bar{S}_{\boldsymbol{\tilde{q}}_1,\boldsymbol{\tilde{q}}_2} &= \sum_{i; n_1 n_2} d_{i}^{n_1 n_2} \sqrt{\frac{n_1 ! \cdot n_2 !}{(n_1+i) ! \cdot (n_2-i) !}} \left(\frac{\boldsymbol{\tilde{q}}_1}{\boldsymbol{\tilde{q}}_2}\right)^i \times L^{(i)}_{n_1}\left(\frac{\tilde{Q}_1}{2}\right) L^{(-i)}_{n_2}\left(\frac{\tilde{Q}_2}{2} \right)  e^{-\frac{1}{4}\left(\tilde{Q}_1+\tilde{Q}_2\right) }
\end{aligned}
\end{equation}

Note that $n_1^\prime$ and $n_2^\prime$ have been represented by $i \equiv \Delta n_1$ here. Then according to Lemma.\ref{Fourier}, generalized Laguerre polynomials are the eigenfunctions of two-dimensional Fourier transform, so Eq.\ref{sn1n2fourier} obviously holds.
\end{proof}

As we can see, the problem with the definition above is that we need to do lots of numerical calculations to get the values of $d_{i}^{n_1 n_2}$. However it is worth noticing that the ratio between the coefficients $d_{i}^{n_1 n_2}$ with different $i$ only depends on $\alpha^{* n_1, n_2}$ and $\alpha^{n_1+i, n_2-i}$, which can be calculated easily and the results are shown in Table.\ref{tabexpansion}. So one can take a specific $d_{i}^{n_1 n_2}$ as a reference to efficiently get the other expansion coefficients with the same $n_1$ and $n_2$ but different $i$.
\begin{definition}
The reduced-structure-factor expansion coefficient of reference
\begin{equation}
\begin{aligned}
\bar{d}^{n_1 n_2} \equiv& \int \frac{d^{2} \boldsymbol{\tilde{q}}_1 d^{2} \boldsymbol{\tilde{q}}_2}{(2 \pi)^4} \bar{S}_{\boldsymbol{\tilde{q}}_1,\boldsymbol{\tilde{q}}_2} L^{(0)}_{n_1}\left(\frac{\tilde{Q}_1}{2}\right) L^{(0)}_{n_2}\left(\frac{\tilde{Q}_2}{2} \right)  e^{-\frac{1}{4}\left(\tilde{Q}_1+\tilde{Q}_2\right)} = d_0^{n_1 n_2}
\propto  \alpha^{n_1, n_2} = \sum_{n_1, n_2} \langle n_1, n_2| \psi_3\rangle
\end{aligned}
\end{equation}
\end{definition}
\begin{corollary}\label{regularized_d}
Any expansion coefficient can also be expressed as 
\begin{equation}
d_{i}^{n_1 n_2} = \frac{\alpha^{n_1+i, n_2-i}}{\alpha^{n_1, n_2}} \cdot \bar{d}^{n_1 n_2} 
\end{equation}
\end{corollary}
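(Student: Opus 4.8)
The plan is to read the result off directly from the factorized form of the expansion coefficients already exposed in the proof of the preceding Proposition, so that nothing beyond an elementary cancellation is required. Recall that there the reduced three-body structure factor was written as a sum over triples of distinct particle indices,
\begin{equation}
\bar{S}_{\boldsymbol{\tilde q}_1,\boldsymbol{\tilde q}_2}=\sum_{s\ne j\ne k}\mathcal{N}^{sjk}\sum_{n_1,n_2}\sum_{n_1',n_2'}\alpha^{*n_1 n_2}\alpha^{n_1' n_2'}\left(\left\langle n_1,n_2\left|e^{i \tilde{q}_{s a} \hat{R}_{sj}^{a}} e^{i \tilde{q}_{j a} \hat{R}_{sj, k}^{a}}\right|n_1',n_2'\right\rangle\right)_{sjk},
\end{equation}
and matched against the Laguerre--Gaussian expansion to obtain $d_i^{n_1 n_2}=\sum_{s\ne j\ne k}\mathcal{N}^{sjk}\,\alpha^{*n_1 n_2}\alpha^{n_1+i,n_2-i}$, where $i\equiv\Delta n_1$ and rotational invariance forces $\Delta n_2=-i$.

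First I would isolate the $i$-dependence. Because the three-body expansion coefficients $\alpha^{n_1 n_2}$ of Definition~\ref{def_3bwf} are universal quantities fixed by fermionic antisymmetry and rotational invariance — identical for every triple $(s,j,k)$ and carrying no dependence on the environment factor $\mathcal{N}^{sjk}$ — both $\alpha^{*n_1 n_2}$ and $\alpha^{n_1+i,n_2-i}$ may be pulled outside the sum over triples, giving the clean factorization
\begin{equation}
d_i^{n_1 n_2}=\left(\sum_{s\ne j\ne k}\mathcal{N}^{sjk}\right)\alpha^{*n_1 n_2}\,\alpha^{n_1+i,n_2-i},
\end{equation}
in which, at fixed $n_1,n_2$, the only factor carrying $i$ is $\alpha^{n_1+i,n_2-i}$.

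Second, I would invoke the Definition immediately preceding the statement, which by orthogonality of the generalized Laguerre polynomials identifies the reference coefficient with the diagonal term, $\bar d^{n_1 n_2}=d_0^{n_1 n_2}$. Setting $i=0$ in the factorization yields $\bar d^{n_1 n_2}=\bigl(\sum_{s\ne j\ne k}\mathcal{N}^{sjk}\bigr)\alpha^{*n_1 n_2}\alpha^{n_1 n_2}$, so dividing the two expressions cancels the common factor $\sum_{s\ne j\ne k}\mathcal{N}^{sjk}$ together with the common $\alpha^{*n_1 n_2}$, leaving $d_i^{n_1 n_2}/\bar d^{n_1 n_2}=\alpha^{n_1+i,n_2-i}/\alpha^{n_1 n_2}$, which rearranges into the claimed identity.

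The only genuinely non-trivial point — and hence the main obstacle — is the justification that the prefactor $\sum_{s\ne j\ne k}\mathcal{N}^{sjk}$ is independent of $i$ at fixed $n_1,n_2$, i.e.\ that pulling $\alpha$ outside the triple sum is legitimate. This is precisely the universality assumption on the three-body internal wavefunction: projecting any three electrons onto a definite relative state $|n_1,n_2\rangle$ produces the same angular coefficients $\alpha$ regardless of which particles are chosen and regardless of the transition index $i$, while $\mathcal{N}^{sjk}$ encodes only the remaining $N-3$ particles and the overall correlations and therefore knows nothing about $i$. Once this separation is granted — it follows from the rotational invariance of Assumption~1 and the fermionic statistics of Assumption~6, together with the tabulated coefficients of Ref.~\cite{yang2018three} — the remaining manipulation is the elementary cancellation above, so I expect the write-up to be short.
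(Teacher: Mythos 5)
Your proposal is correct and follows essentially the same route as the paper: the corollary is an immediate consequence of the factorization $d_i^{n_1 n_2}=\bigl(\sum_{s\ne j\ne k}\mathcal{N}^{sjk}\bigr)\alpha^{*n_1 n_2}\alpha^{n_1+i,n_2-i}$ established in the preceding Proposition together with the Definition $\bar d^{n_1 n_2}=d_0^{n_1 n_2}$, and the paper indeed treats it as exactly this cancellation (stating it without further proof). Your added remark that the $\alpha$'s are universal and independent of the particle triple, so they factor out of the sum over $\mathcal{N}^{sjk}$, is precisely the implicit step the paper relies on.
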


Similarly the rotationally-invariant interaction $V_{\tilde{Q}_1,\tilde{Q}_2}$ can also be expanded with Laguerre-Gaussian polynomials (also called Haldane pseudo-potentials in physics):
\begin{equation}
\begin{aligned}
V_{\boldsymbol{\tilde{q}}_1,\boldsymbol{\tilde{q}}_2}&=\sum_{m_1 m_2} c^{m_1 m_2}  L^{(0)}_{m_1}\left(\frac{\tilde{Q}_1}{2}\right) L^{(0)}_{m_2}\left(\frac{\tilde{Q}_2}{2} \right)  e^{-\frac{1}{4}\left(\tilde{Q}_1+\tilde{Q}_2\right) }
\end{aligned}
\end{equation}
Note that the form of the interactions can be freely chosen as long as it obeys the basic assumptions in the last section. In this work we will not take the more complicated forms involving generalized Laguerre polynomials into account. 

Then by differentiating the exponential functions, we have:
\begin{equation}
\begin{aligned}
(\boldsymbol{q} \cdot \boldsymbol{\nabla}_1)^2 e^{-\frac{i}{2}(\tilde{\boldsymbol{q}}_1  \times \tilde{\boldsymbol{p}}_1 + \tilde{\boldsymbol{q}}_2  \times \tilde{\boldsymbol{p}}_2)}=& \left[q_x (\frac{\partial \tilde{q}_{1x} }{\partial q_{1x}} \frac{\partial}{\partial \tilde{q}_{1x}}+\frac{\partial \tilde{q}_{2x} }{\partial q_{1x}} \frac{\partial}{\partial \tilde{q}_{2x}})+q_y (\frac{\partial \tilde{q}_{1y}}{\partial q_{1y}}\frac{\partial}{\partial \tilde{q}_{1y}}+\frac{\partial \tilde{q}_{2y}}{\partial q_{1y}}\frac{\partial}{\partial \tilde{q}_{2y}})\right]^2 e^{-\frac{i}{2}(\tilde{\boldsymbol{q}}_1  \times\tilde{\boldsymbol{p}}_1 + \tilde{\boldsymbol{q}}_2  \times \tilde{\boldsymbol{p}}_2)}\\
=& - \left[ \frac{\boldsymbol{q}}{2} \times (\frac{1}{\sqrt{2}} \tilde{\boldsymbol{p}}_1 + \frac{\sqrt{3}}{\sqrt{2}} \tilde{\boldsymbol{p}}_2)  \right]^2 e^{-\frac{i}{2}(\tilde{\boldsymbol{q}}_1  \times \tilde{\boldsymbol{p}}_1 + \tilde{\boldsymbol{q}}_2  \times \tilde{\boldsymbol{p}}_2)} 
\end{aligned}
\end{equation}
and:
\begin{equation}
\begin{aligned}
(\boldsymbol{q} \cdot \boldsymbol{\nabla}_2)^2 e^{-\frac{i}{2}(\tilde{\boldsymbol{q}}_1 \times \tilde{\boldsymbol{p}}_1 + \tilde{\boldsymbol{q}}_2 \times \tilde{\boldsymbol{p}}_2)}=& \left[q_x (\frac{\partial \tilde{q}_{1x} }{\partial q_{2x}} \frac{\partial}{\partial \tilde{q}_{1x}}+\frac{\partial \tilde{q}_{2x} }{\partial q_{2x}} \frac{\partial}{\partial \tilde{q}_{2x}})+q_y (\frac{\partial \tilde{q}_{1y}}{\partial q_{2y}}\frac{\partial}{\partial \tilde{q}_{1y}}+\frac{\partial \tilde{q}_{2y}}{\partial q_{2y}}\frac{\partial}{\partial \tilde{q}_{2y}})\right]^2 e^{-\frac{i}{2}(\tilde{\boldsymbol{q}}_1  \times\tilde{\boldsymbol{p}}_1 + \tilde{\boldsymbol{q}}_2  \times \tilde{\boldsymbol{p}}_2)}\\
=&- \left[ \frac{\boldsymbol{q}}{2} \times (-\frac{1}{\sqrt{2}} \tilde{\boldsymbol{p}}_1 + \frac{\sqrt{3}}{\sqrt{2}} \tilde{\boldsymbol{p}}_2)  \right]^2 e^{-\frac{i}{2}(\tilde{\boldsymbol{q}}_1 \times \tilde{\boldsymbol{p}}_1 + \tilde{\boldsymbol{q}}_2 \times \tilde{\boldsymbol{p}}_2)}
\end{aligned}
\end{equation}
and also:
\begin{equation}
\begin{aligned}
(\boldsymbol{q} \cdot \boldsymbol{\nabla}_1)(\boldsymbol{q} \cdot \boldsymbol{\nabla}_2) e^{-\frac{i}{2}(\tilde{\boldsymbol{q}}_1 \times \tilde{\boldsymbol{p}}_1 + \tilde{\boldsymbol{q}}_2 \times \tilde{\boldsymbol{p}}_2)} =& \left[q_x (\frac{\partial \tilde{q}_{1x} }{\partial q_{1x}} \frac{\partial}{\partial \tilde{q}_{1x}}+\frac{\partial \tilde{q}_{2x} }{\partial q_{1x}} \frac{\partial}{\partial \tilde{q}_{2x}})+q_y (\frac{\partial \tilde{q}_{1y}}{\partial q_{1y}}\frac{\partial}{\partial \tilde{q}_{1y}}+\frac{\partial \tilde{q}_{2y}}{\partial q_{1y}}\frac{\partial}{\partial \tilde{q}_{2y}})\right]\\
&\quad \left[q_x (\frac{\partial \tilde{q}_{1x} }{\partial q_{2x}} \frac{\partial}{\partial \tilde{q}_{1x}}+\frac{\partial \tilde{q}_{2x} }{\partial q_{2x}} \frac{\partial}{\partial \tilde{q}_{2x}})+q_y (\frac{\partial \tilde{q}_{1y}}{\partial q_{2y}}\frac{\partial}{\partial \tilde{q}_{1y}}+\frac{\partial \tilde{q}_{2y}}{\partial q_{2y}}\frac{\partial}{\partial \tilde{q}_{2y}})\right] e^{-\frac{i}{2}(\tilde{\boldsymbol{q}}_1 \times \tilde{\boldsymbol{p}}_1 + \tilde{\boldsymbol{q}}_2 \times \tilde{\boldsymbol{p}}_2)}\\
=&- \left[ \frac{\boldsymbol{q}}{2} \times (\frac{1}{\sqrt{2}} \tilde{\boldsymbol{p}}_1 + \frac{\sqrt{3}}{\sqrt{2}} \tilde{\boldsymbol{p}}_2)  \right] \left[ \frac{\boldsymbol{q}}{2} \times (-\frac{1}{\sqrt{2}} \tilde{\boldsymbol{p}}_1 + \frac{\sqrt{3}}{\sqrt{2}} \tilde{\boldsymbol{p}}_2)  \right] e^{-\frac{i}{2}(\tilde{\boldsymbol{q}}_1 \times \tilde{\boldsymbol{p}}_1 + \tilde{\boldsymbol{q}}_2 \times \tilde{\boldsymbol{p}}_2)} 
\end{aligned}
\end{equation}

\begin{theorem}\label{long-wavelength_limit}
In the long-wavelength limit, the wave number $\boldsymbol{q} \rightarrow 0$, and we have the limit of the structure factor as:
\begin{equation}
\lim_{|\bm q|\rightarrow 0} S_{\boldsymbol{q}}= \eta \cdot \boldsymbol{q}^4 = \frac{N_e \kappa}{2} \cdot \boldsymbol{q}^4
\end{equation}
where $\kappa$ is no more than the Hall viscosity of the ground state $| \psi_0\rangle$\cite{yang2020microscopic}.
\end{theorem}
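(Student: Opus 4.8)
The plan is to reduce the claim to two vanishing conditions on the small-$\boldsymbol q$ expansion of $S_{\boldsymbol q}$ together with an identification of the surviving coefficient. By Assumption~1 (rotational invariance) $S_{\boldsymbol q}$ depends only on $q^2=|\boldsymbol q|^2$, so I may write $S_{\boldsymbol q}=a_0+a_1 q^2+a_2 q^4+O(q^6)$, and the theorem is equivalent to $a_0=a_1=0$ together with $a_2=\eta=N_e\kappa/2$. The coefficient $a_0$ is disposed of immediately: since $\delta\hat\rho_{\boldsymbol q}=\hat\rho_{\boldsymbol q}-\langle\hat\rho_{\boldsymbol q}\rangle_0$ subtracts the mean, $\lim_{\boldsymbol q\to 0}\delta\hat\rho_{\boldsymbol q}=\hat\rho_{\boldsymbol 0}-N_e=0$, hence $S_{\boldsymbol 0}=0$ and $a_0=0$.

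The core of the argument is $a_1=0$, which is exactly where incompressibility enters, and I would obtain it from the single-mode/oscillator-strength estimate rather than from a direct expansion of $S_{\boldsymbol q}$ (the latter would require controlling the total guiding-center fluctuation $\langle(\delta\hat R^a)^2\rangle$, which is extensive and does not visibly cancel). Write the numerator of the variational energy in Corollary~\ref{3bcommutator} as $\bar f_{\boldsymbol q}=\tfrac12\langle\psi_0|[\delta\hat\rho_{-\boldsymbol q},[\hat H,\delta\hat\rho_{\boldsymbol q}]]|\psi_0\rangle$, so that $\delta E_{\boldsymbol q}=\bar f_{\boldsymbol q}/S_{\boldsymbol q}$. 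Expanding the double commutator with the GMP algebra of Corollary~\ref{GMP}, each density commutator supplies one factor $2i\sin(\boldsymbol q_i\times\boldsymbol q/2)\sim\boldsymbol q$; carrying this out exactly (as already done in the three-body case, where the $O(q^2)$ pieces cancel \emph{algebraically} among the three contributions to leave the manifestly quartic form of Eq.~\ref{commutator3b4}, and identically for the two-body case) shows that $\bar f_{\boldsymbol q}=O(q^4)$ for \emph{any} rotationally invariant ground state---this is a purely algebraic consequence of lowest-Landau-level projection and does not itself use the gap. Incompressibility of $\hat H$ (a finite gap above $|\psi_0\rangle$) now enters only through the requirement that $\delta E_{\boldsymbol q}=\bar f_{\boldsymbol q}/S_{\boldsymbol q}$ stays finite and positive as $\boldsymbol q\to 0$: were $a_1\neq 0$ we would have $\delta E_{\boldsymbol q}\sim q^4/q^2=q^2\to 0$, a gapless mode, contradicting incompressibility. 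Hence $a_1=0$ and $S_{\boldsymbol q}=\eta\,q^4+O(q^6)$ with $\eta=a_2>0$.

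It remains to identify $\eta$. I would Taylor-expand $\delta\hat\rho_{\boldsymbol q}=iq_a\hat R^a-\tfrac12 q_aq_b\sum_i\hat R_i^a\hat R_i^b+\cdots$ in guiding-center moments (with $\hat R^a=\sum_i\hat R_i^a$) and evaluate $a_2=\lim_{\boldsymbol q\to0}q^{-4}N_e^{-1}\langle\delta\hat\rho_{\boldsymbol q}\delta\hat\rho_{-\boldsymbol q}\rangle_0$. Because the $O(q^2)$ cross term has already been shown to vanish, $a_2$ is controlled by the connected second moment of the guiding-center quadrupole, which rotational invariance collapses to a single scalar; following the geometric-response analysis of Refs.~\cite{yang2020microscopic,Haldane2011} this scalar is bounded above by the adiabatic Hall viscosity of $|\psi_0\rangle$, giving $\eta=N_e\kappa/2$ with $\kappa$ the intensive Hall-viscosity density. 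The main obstacle is making $a_1=0$ rigorous: its vanishing is genuinely false for compressible states, so the finite gap must be used in an essential way, and one must also justify interchanging the $\boldsymbol q\to0$ limit with the thermodynamic limit (Assumption~3)---in particular controlling the $N_e^{-1}|\langle\hat\rho_{\boldsymbol q}\rangle_0|^2$ piece of $S_{\boldsymbol q}$, which is negligible for $\boldsymbol q\neq 0$ only as $N_e\to\infty$.
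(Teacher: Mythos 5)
Your proposal has to be measured against what the paper actually provides for this statement, which is: nothing. By the supplement's own convention, items labelled \emph{Theorem} are ``formulas that are physically relevant but have not been proved''; Theorem~\ref{long-wavelength_limit} is imported by citation (to the reference by Yang, Phys.\ Rev.\ Research 2, 033362 (2020), and ultimately to the Girvin--MacDonald--Platzman and Haldane analyses), and no proof of it appears anywhere in the paper. So you are not reproducing or diverging from a paper proof; you are attempting to supply one where the paper supplies a citation. Within that framing, the part of your argument that establishes the bare $q^4$ scaling is sound and is the standard GMP-type reasoning: $a_0=0$ is immediate from the regularization; the projected oscillator strength $\bar f_{\bm q}$ is $O(q^4)$ by pure algebra, independent of any gap (your observation that the $O(q^2)$ pieces cancel identically is exactly what the paper's own expansions show, Eq.~\ref{commutator3b4} and its two-body analogue); and a spectral gap $\Delta>0$ then forces $S_{\bm q}\le \bar f_{\bm q}/\Delta=O(q^4)$, killing $a_1$. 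The caveats you flag (interchanging $\bm q\to 0$ with $N_e\to\infty$, smoothness of $s_{\bm q'}$) are the right ones.

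Two genuine gaps remain. First, the actual content of the theorem beyond $q^4$ scaling --- the identification $\eta=N_e\kappa/2$ with $\kappa$ controlled by the Hall viscosity of $|\psi_0\rangle$ --- is not proved in your proposal: your final paragraph replaces a derivation with an appeal to ``the geometric-response analysis'' of the very references the paper itself cites. Relating the connected guiding-center quadrupole fluctuation to the adiabatic response of the guiding-center metric (Haldane's generalized-screening argument) is the hard part of the theorem, and without it the coefficient identification, hence the theorem, remains unproven; on this point you are in the same position as the paper, not ahead of it. Second, your route to $a_1=0$ makes incompressibility of the Hamiltonian essential, but the paper applies the theorem to ground states whose parent Hamiltonians are conjectured gapless --- notably the Gaffnian, whose graviton energy Eq.~\ref{gaffnianneutral} is normalized by precisely this $\eta\,q^4$. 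For such states the $q^4$ law must be derived from properties of the wavefunction itself (generalized screening, conformal-block structure), not from a spectral gap; a gap-based proof therefore fails to cover the paper's actual use cases, and it sits uncomfortably close to circularity when the resulting formula $\delta E_{\bm q\to 0}=\Gamma\, c\, d$ is then invoked to diagnose whether the neutral gap closes.
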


Thus the graviton mode gap turns out to be:
\begin{small}
\begin{equation}
\begin{aligned}
\delta \tilde{E}_{\boldsymbol{q} \rightarrow 0 }
=&-\frac{1}{6 S_{\boldsymbol{q}}} \int \frac{d^{2} \tilde{\boldsymbol{q}}_1 d^{2} \tilde{\boldsymbol{q}}_2}{(2 \pi)^4} V_{\tilde{\boldsymbol{q}_1},\tilde{\boldsymbol{q}_2}} \int  \frac{d^2 \tilde{\boldsymbol{p}}_1 d^2 \tilde{\boldsymbol{p}}_2}{2^2 \times 4 \pi^2} e^{-\frac{i}{2}(\tilde{\boldsymbol{q}}_1 \times \tilde{\boldsymbol{p}}_1 + \tilde{\boldsymbol{q}}_2 \times \tilde{\boldsymbol{p}}_2)} \bar{S}_{\boldsymbol{\tilde{p}}_1,\boldsymbol{\tilde{p}}_2} \\
& \quad \times \left\{(\boldsymbol{q}_1\times \boldsymbol{q} )^2 \left[  (\frac{1}{\sqrt{2}} \tilde{\boldsymbol{p}}_1 + \frac{\sqrt{3}}{\sqrt{2}} \tilde{\boldsymbol{p}}_2) \times \frac{\boldsymbol{q}}{2} \right]^2  + (\boldsymbol{q}_2\times \boldsymbol{q} )^2  \left[  (-\frac{1}{\sqrt{2}} \tilde{\boldsymbol{p}}_1 + \frac{\sqrt{3}}{\sqrt{2}} \tilde{\boldsymbol{p}}_2) \times \frac{\boldsymbol{q}}{2} \right]^2 \right. \\
& + \left. 2 \times (\boldsymbol{q}_1 \times \boldsymbol{q}) \cdot (\boldsymbol{q}_2 \times \boldsymbol{q}) \cdot \left[ (\frac{1}{\sqrt{2}} \tilde{\boldsymbol{p}}_1 + \frac{\sqrt{3}}{\sqrt{2}} \tilde{\boldsymbol{p}}_2) \times \frac{\boldsymbol{q}}{2} \right] \left[ (-\frac{1}{\sqrt{2}} \tilde{\boldsymbol{p}}_1 + \frac{\sqrt{3}}{\sqrt{2}} \tilde{\boldsymbol{p}}_2) \times \frac{\boldsymbol{q}}{2} \right] \right\} \\
=&-\frac{1}{6 S_{\boldsymbol{q}}} \int \frac{d^{2} \tilde{\boldsymbol{q}}_1 d^{2} \tilde{\boldsymbol{q}}_2}{(2 \pi)^4} V_{\tilde{\boldsymbol{q}_1},\tilde{\boldsymbol{q}_2}} \int  \frac{d^2 \tilde{\boldsymbol{p}}_1 d^2 \tilde{\boldsymbol{p}}_2}{2^4 \times 4 \pi^2} e^{-\frac{i}{2}(\tilde{\boldsymbol{q}}_1 \times \tilde{\boldsymbol{p}}_1 + \tilde{\boldsymbol{q}}_2 \times \tilde{\boldsymbol{p}}_2)} \bar{S}_{\boldsymbol{\tilde{p}}_1,\boldsymbol{\tilde{p}}_2}\\
&\times \left\{ \left[\left(\frac{1}{\sqrt{2}}\tilde{\boldsymbol{q}}_1+\frac{1}{\sqrt{6}}\tilde{\boldsymbol{q}}_2\right) \times \boldsymbol{q} \right] \left[ (\frac{1}{\sqrt{2}} \tilde{\boldsymbol{p}}_1 + \frac{\sqrt{3}}{\sqrt{2}} \tilde{\boldsymbol{p}}_2) \times \boldsymbol{q} \right] +  \left[\left(-\frac{1}{\sqrt{2}}\tilde{\boldsymbol{q}}_1+\frac{1}{\sqrt{6}}\tilde{\boldsymbol{q}}_2\right) \times \boldsymbol{q} \right]  \left[ (-\frac{1}{\sqrt{2}} \tilde{\boldsymbol{p}}_1 + \frac{\sqrt{3}}{\sqrt{2}} \tilde{\boldsymbol{p}}_2) \times \boldsymbol{q} \right] \right\}^2 
\end{aligned}
\end{equation}
\end{small}

We can set $\theta_q = \frac{\pi}{2}$ without losing any generality and get:
\begin{small}
\begin{equation}
\begin{aligned}
\delta \tilde{E}_{\boldsymbol{q}}=&-\frac{1}{6 \eta} \int \frac{d^{2} \tilde{\boldsymbol{q}}_1 d^{2} \tilde{\boldsymbol{q}}_2}{(2 \pi)^4} V_{\boldsymbol{\tilde{q}_1},\boldsymbol{\tilde{q}_2}} \int  \frac{d^2 \tilde{\boldsymbol{p}}_1 d^2 \tilde{\boldsymbol{p}}_2}{2^4 \times 4 \pi^2} e^{- \frac{i}{2} (\tilde{\boldsymbol{q}}_1 \times \tilde{\boldsymbol{p}}_1 + \tilde{\boldsymbol{q}}_2 \times \tilde{\boldsymbol{p}}_2)} \bar{S}_{\boldsymbol{\tilde{p}}_1,\boldsymbol{\tilde{p}}_2} \left[|\tilde{\boldsymbol{q}}_1| |\tilde{\boldsymbol{p}}_1 | \cos(\theta_{\tilde{q}_1})\cos(\theta_{\tilde{p}_1}) + |\tilde{\boldsymbol{q}}_2||\tilde{\boldsymbol{p}}_2 | \cos(\theta_{\tilde{q}_2})\cos(\theta_{\tilde{p}_2}) \right]^2\\
=&-\frac{1}{6 \eta}\int \frac{2|\tilde{q}_1|d |\tilde{q}_1| \times 2|\tilde{q}_2|d |\tilde{q}_2|}{4 \times (2 \pi)^4} V_{\boldsymbol{\tilde{q}_1},\boldsymbol{\tilde{q}_2}} \int\frac{2|\tilde{p}_1|d |\tilde{p}_1| \times 2|\tilde{p}_2|d |\tilde{p}_2|}{ 4 \times 2^4 \times 4 \pi^2}  \sum_{\Delta n_1}\left|\bar{S}^{(\Delta n_1)}_{\boldsymbol{\tilde{p}}_1,\boldsymbol{\tilde{p}}_2} \right| \Theta^{(\Delta n_1)}\left(\tilde{q}_1, \tilde{q}_2, \tilde{p}_1, \tilde{p}_2\right)
\label{deltaEq1}
\end{aligned}
\end{equation}
\end{small}
where $\left|\bar{S}^{(\Delta n_1)}_{\boldsymbol{\tilde{p}}_1,\boldsymbol{\tilde{p}}_2} \right|$ denotes the radial part of each term in the structure factor expansion. Moreover the angular integral is defined by
\begin{equation}
\begin{aligned}
\Theta^{(\Delta n_1)}\left(\tilde{q}_1, \tilde{q}_2, \tilde{p}_1, \tilde{p}_2\right) = & \iiiint_{-\pi}^{\pi}   d \theta_{\tilde{q}_1} d \theta_{\tilde{q}_2} d \theta_{\tilde{p}_1} d \theta_{\tilde{p}_2} e^{-\frac{i}{2}[|\tilde{q}_1| |\tilde{p}_1| \sin(\theta_{\tilde{q}_1} - \theta_{\tilde{p}_1})+ |\tilde{q}_2| | \tilde{p}_2|  \sin(\theta_{\tilde{q}_2} - \theta_{\tilde{p}_2})]} \\
& \quad \times e^{i \Delta n_1 (\theta_{\tilde{p}_1} - \theta_{\tilde{p}_2})} \times  [|\tilde{\boldsymbol{q}}_1| |\tilde{\boldsymbol{p}}_1 | \cos(\theta_{\tilde{q}_1})\cos(\theta_{\tilde{p}_1})+ |\tilde{\boldsymbol{q}}_2||\tilde{\boldsymbol{p}}_2 | \cos(\theta_{\tilde{q}_2})\cos(\theta_{\tilde{p}_2})]^2 
\end{aligned}
\end{equation}

\begin{theorem}\label{besseltheorem}
Bessel function from angular integrals
\begin{equation}
\begin{aligned}
\int_{-\pi}^{\pi}\int_{-\pi}^{\pi} d \theta_1  d \theta_2 e^{i[x \cos (\theta_1-\theta_2)-\alpha (\theta_1-\theta_2)+\beta \theta_2]} = 4 \pi^2 J_\alpha(x) \delta(\beta) 
\end{aligned}
\end{equation}
\end{theorem}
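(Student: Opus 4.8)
The plan is to exploit the fact that, after the linear change of variables $s=\theta_1-\theta_2$ and $t=\theta_2$ (with unit Jacobian), the phase in the exponent splits into a part depending only on $s$ and a part depending only on $t$, so that the double integral factorizes. First I would rewrite the integrand as $e^{i(x\cos s-\alpha s)}\,e^{i\beta t}$. The $s$-integrand is $2\pi$-periodic whenever $\alpha$ is an integer (which it is, since $\alpha=\Delta n_1$ is the change of an integer angular-momentum quantum number), so for each fixed $t$ the inner integral $\int_{-\pi-t}^{\pi-t}e^{i(x\cos s-\alpha s)}\,ds$ equals $\int_{-\pi}^{\pi}e^{i(x\cos s-\alpha s)}\,ds$ and is therefore independent of $t$. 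This reduces the claim to a product of two one-dimensional integrals, and in particular dispenses with the mild annoyance that the square domain becomes a parallelogram under the substitution.

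The next step is to identify the $s$-integral with the standard integral representation of the Bessel function. Either directly, or by inserting the Jacobi--Anger expansion $e^{ix\cos s}=\sum_{n}i^{n}J_{n}(x)\,e^{ins}$ and using the orthogonality $\int_{-\pi}^{\pi}e^{i(n-\alpha)s}\,ds=2\pi\delta_{n,\alpha}$ to collapse the sum to its $n=\alpha$ term, one finds $\frac{1}{2\pi}\int_{-\pi}^{\pi}e^{i(x\cos s-\alpha s)}\,ds=i^{\alpha}J_{\alpha}(x)$. Meanwhile the $t$-integral $\int_{-\pi}^{\pi}e^{i\beta t}\,dt$ produces $2\pi$ times the delta enforcing $\beta=0$. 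Multiplying the two factors yields $4\pi^{2}J_{\alpha}(x)\,\delta(\beta)$, which is the claimed identity.

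I expect the main obstacle, or at least the point requiring the most care, to be the bookkeeping of two conventions rather than any hard analysis. First, the integral representation naturally carries the phase $i^{\alpha}$; to match the statement literally one must absorb this factor into the Bessel-function convention, which is harmless in the applications, where such angular pieces enter the final real-valued energy through modulus-squared combinations of the structure-factor coefficients. Second, one must justify reading $\int_{-\pi}^{\pi}e^{i\beta t}\,dt$ as a delta function: for genuinely continuous $\beta$ the finite-range integral equals $2\sin(\pi\beta)/\beta$, a nascent delta, and the clean $\delta(\beta)$ is recovered only when $\beta$ is restricted to the integer spectrum of angular-momentum differences (so that it is really a Kronecker delta written in continuum notation) or in the thermodynamic limit. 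Once the change of variables and the periodicity argument are in place, the remaining steps are routine, so the genuine content of the proof is the factorization together with these two identifications.
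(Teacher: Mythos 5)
Your argument is correct and follows essentially the same route as the paper's own proof: split off the $e^{i\beta\theta_2}$ factor, substitute the difference angle, use $2\pi$-periodicity (valid since $\alpha$ is an integer) to restore the symmetric limits $[-\pi,\pi]$, identify the inner integral with the standard Bessel integral representation, and read $\int_{-\pi}^{\pi}e^{i\beta\theta_2}\,d\theta_2$ as $2\pi\delta(\beta)$. The only difference is bookkeeping, and there you are in fact the more careful party: you track the $i^{\alpha}$ phase that the $\cos$-form of the representation genuinely produces (noting it must be absorbed into a convention, or cancels in the physically relevant modulus-squared combinations), whereas the paper's proof makes that phase vanish through a $\cos\to\sin$ rewriting that shifts the argument of the cosine but not the accompanying $-\alpha\,\theta_{12}$ term, i.e.\ it silently drops exactly the factor you flag.
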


\begin{proof}
The formula can be proved by:
\begin{equation}
\begin{aligned}
&\int_{-\pi}^{\pi}\int_{-\pi}^{\pi} d \theta_1  d \theta_2 e^{i[x \cos (\theta_1-\theta_2)-\alpha (\theta_1-\theta_2)+\beta \theta_2]} = \int_{-\pi}^{\pi}e^{i(\beta \theta_2)} d \theta_2 \int_{-\pi}^{\pi} d \theta_1 e^{i [x \cos (\theta_1-\theta_2)-\alpha (\theta_1-\theta_2)]}\\
=&\int_{-\pi}^{\pi}e^{i(\beta \theta_2)} d \theta_2 \int_{-\pi-\theta_2}^{\pi-\theta_2} d (\theta_1-\theta_2) e^{i[x \sin (\frac{\pi}{2}-\theta_1+\theta_2)-\alpha (\theta_1-\theta_2)]}= \int_{-\pi}^{\pi}e^{i(\beta \theta_2)} d \theta_2 \int_{-\pi-\theta_2}^{\pi-\theta_2} d \theta_{12} e^{i[ x \sin (\theta_{12})-\alpha \theta_{12}]}\\
=& \int_{-\pi}^{\pi}e^{i(\beta \theta_2)} d \theta_2 \int_{-\pi}^{\pi} d \theta_{12} e^{i[x \sin (\theta_{12})-\alpha \theta_{12}]}= 2 \pi J_\alpha(x) \int_{-\pi}^{\pi}e^{i(\beta \theta_2)} d \theta_2 = 4 \pi^2 J_\alpha(x) \delta(\beta)  \qedhere
\end{aligned}
\end{equation}
\end{proof}

\begin{proposition}
Domain of the quantum number differences is given by
\begin{equation}
\Delta n_1 = - \Delta n_2 \in \{\pm 2, 0\}
\end{equation}
\end{proposition}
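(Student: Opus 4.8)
The plan is to read off the allowed values of $\Delta n_1$ directly from the angular integral $\Theta^{(\Delta n_1)}$ that controls the long-wavelength energy in Eq.~\ref{deltaEq1}, exploiting the fact that, once $\theta_{\boldsymbol q}=\pi/2$ is fixed, both the quadratic bracket $[\,|\tilde{\boldsymbol q}_1||\tilde{\boldsymbol p}_1|\cos\theta_{\tilde q_1}\cos\theta_{\tilde p_1}+|\tilde{\boldsymbol q}_2||\tilde{\boldsymbol p}_2|\cos\theta_{\tilde q_2}\cos\theta_{\tilde p_2}\,]^2$ and the phase $e^{i\Delta n_1(\theta_{\tilde p_1}-\theta_{\tilde p_2})}$ factor across the two Jacobi sectors $(\tilde q_1,\tilde p_1)$ and $(\tilde q_2,\tilde p_2)$. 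First I would expand the square as $X_1^2+X_2^2+2X_1X_2$ with $X_j=|\tilde{\boldsymbol q}_j||\tilde{\boldsymbol p}_j|\cos\theta_{\tilde q_j}\cos\theta_{\tilde p_j}$, so that $\Theta^{(\Delta n_1)}$ splits into a product of a sector-$1$ integral carrying $e^{i\Delta n_1\theta_{\tilde p_1}}$ and a sector-$2$ integral carrying $e^{-i\Delta n_1\theta_{\tilde p_2}}$.

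Next I would evaluate each sector integral with the Jacobi--Anger identity underlying Theorem~\ref{besseltheorem}: write each $\cos\theta=\tfrac12(e^{i\theta}+e^{-i\theta})$ and expand the kernel $e^{-\frac{i}{2}|\tilde q_j||\tilde p_j|\sin(\theta_{\tilde q_j}-\theta_{\tilde p_j})}=\sum_m J_m(\cdot)\,e^{im(\theta_{\tilde q_j}-\theta_{\tilde p_j})}$. The two elementary angular integrations then collapse to Kronecker deltas: the $\theta_{\tilde q_j}$ integral fixes the Bessel order $m$ to minus the $\theta_{\tilde q_j}$-exponent, and the $\theta_{\tilde p_j}$ integral then pins $\Delta n_1$ to the sum of the signs carried by the two cosines in that sector.

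Carrying this out term by term: for the diagonal pieces $X_1^2$ and $X_2^2$ one Jacobi sector has no $\cos\theta_{\tilde p}$ factor at all, so its $\theta_{\tilde p}$ integral enforces $\delta(\Delta n_1)$ and only $\Delta n_1=0$ survives (in particular the $\cos^2$ in the opposite sector cannot push $|\Delta n_1|$ up to $4$, since the accompanying sector kills it). For the cross term $2X_1X_2$ each sector carries exactly one $\cos\theta_{\tilde p_j}$, so the deltas give $\Delta n_1=s_1+s_2$ with $s_1,s_2\in\{\pm1\}$ the signs selected from the two cosines, i.e. $\Delta n_1\in\{-2,0,+2\}$. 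Taking the union over all three terms gives $\Delta n_1\in\{\pm2,0\}$, and combining with the rotational-invariance corollary $\Delta n_1=-\Delta n_2$ yields the statement.

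The main obstacle I expect is purely bookkeeping: converting the $\sin$ in the exponent into the $\cos$ form of Theorem~\ref{besseltheorem} (the $\pi/2$ shift), and tracking the signs of the four exponential branches of the two cosines through both elementary integrations so that the diagonal-versus-cross-term distinction stays clean. The conceptual content is small --- it just says that a single momentum $\boldsymbol q\to0$ inserted twice acts as a spin-$2$ (quadrupolar) operator, so the relative angular momentum $n_1$ can change by at most $2$ --- but making the degree counting airtight, especially ruling out $|\Delta n_1|=4$ arising from the $\cos^2$ factors, is exactly what the explicit sector factorization above is designed to secure.
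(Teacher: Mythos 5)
Your proposal is correct and follows essentially the same route as the paper's proof: expand the quadratic bracket into diagonal ($X_1^2$, $X_2^2$) and cross ($2X_1X_2$) terms, write the cosines as exponentials, and apply the angular-integral selection rule of Theorem~\ref{besseltheorem} (equivalently, the Jacobi--Anger expansion) so that the cosine-free sector forces $\delta(\Delta n_1)$ for diagonal terms while the cross term yields $1\pm\Delta n_1=\pm1$, i.e.\ $\Delta n_1\in\{0,\pm2\}$. Your explicit remark ruling out $|\Delta n_1|=4$ from the $\cos^2$ factors is a useful clarification of a point the paper leaves implicit, but it is not a different argument.
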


\begin{proof}
The integrand in $\Theta^{(\Delta n_1)}\left(\tilde{q}_1, \tilde{q}_2, \tilde{p}_1, \tilde{p}_2\right)$ can be written as the linear combination of:
\begin{equation}
\begin{aligned}
& \left[e^{-\frac{i}{2}[|\tilde{q}_1| |\tilde{p}_1| \sin(\theta_{\tilde{q}_1} - \theta_{\tilde{p}_1})+ |\tilde{q}_2| | \tilde{p}_2|  \sin(\theta_{\tilde{q}_2} - \theta_{\tilde{p}_2})]}  |\tilde{\boldsymbol{q}}_i| |\tilde{\boldsymbol{p}}_i | |\tilde{\boldsymbol{q}}_j| |\tilde{\boldsymbol{p}}_j |\right] e^{i \Delta n_1 (\theta_{\tilde{p}_1} - \theta_{\tilde{p}_2})}  \cos (\theta_{\tilde{q}_i})\cos(\theta_{\tilde{p}_i})\cos (\theta_{\tilde{q}_j})\cos(\theta_{\tilde{p}_j}) \\
=&\frac{\left[ \cdots \right]}{2^4} \left(e^{i \theta_{\tilde{q}_i}}+e^{-i  \theta_{\tilde{q}_i}} \right) \left( e^{i  \theta_{\tilde{q}_j}}+e^{-i  \theta_{\tilde{q}_j}} \right) \left[e^{i \Delta n_1 \theta_{\tilde{p}_1}} e^{-i \Delta n_1   \theta_{\tilde{p}_2}}   \left( e^{i  \theta_{\tilde{p}_i}}+e^{-i \theta_{\tilde{p}_i}} \right)  \left( e^{i \theta_{\tilde{p}_j}}+e^{-i  \theta_{\tilde{p}_j}} \right) \right]
\end{aligned}
\end{equation}
where $i, j \in \{ 1, 2\}$. Thus according to Theorem.\ref{besseltheorem}, when $i = j$, $ \Delta n$ must vanish for getting a non-zero term in $\Theta^{(\Delta n_1)}\left(\tilde{q}_1, \tilde{q}_2, \tilde{p}_1, \tilde{p}_2\right)$ otherwise integrating either $\theta_{\tilde{q}_1}$ or $\theta_{\tilde{q}_2}$ will give a zero. As for $i \ne j$, only when
\begin{equation}
\begin{aligned}
1 \pm \Delta n_1 = \pm1 
\end{aligned}
\end{equation}
can we have a non-zero term in $\Theta^{(\Delta n_1)}\left(\tilde{q}_1, \tilde{q}_2, \tilde{p}_1, \tilde{p}_2\right)$. Hence $\Delta n_1$ can only be $\pm 2$ or $0$.
\end{proof}

\subsection{When $i = j$ and $\Delta n_1 = 0$}
Based on Theorem.\ref{besseltheorem}, when $i = j$, $\Theta^{(0)}\left(\tilde{q}_1, \tilde{q}_2, \tilde{p}_1, \tilde{p}_2\right)$ can be written as:
\begin{equation}
\begin{aligned}
& \frac{1}{2^4} \iiiint_{-\pi}^{\pi}   d \theta_{\tilde{q}_1} d \theta_{\tilde{q}_2} d \theta_{\tilde{p}_1} d \theta_{\tilde{p}_2} e^{-\frac{i}{2}[|\tilde{q}_1| |\tilde{p}_1| \sin(\theta_{\tilde{q}_1} - \theta_{\tilde{p}_1})+ |\tilde{q}_2| | \tilde{p}_2|  \sin(\theta_{\tilde{q}_2} - \theta_{\tilde{p}_2})]} \tilde{Q}_i \tilde{P}_i    \left[ 4+ e^{i 2 (\theta_{\tilde{q}_i}-\theta_{\tilde{p}_i})}+e^{-i 2( \theta_{\tilde{q}_i}-\theta_{\tilde{p}_i})}  \right]  \\
=&\frac{\tilde{Q}_i \tilde{P}_i}{2^4} \times \left[  4 \times 4 \pi^2 J_0 \left(\frac{|\tilde{q}_1| |\tilde{p}_1|}{2} \right) \times 4 \pi^2 J_0 \left(\frac{|\tilde{q}_2| |\tilde{p}_2|}{2} \right) + 2 \times4 \pi^2 J_2 \left( \frac{|\tilde{q}_i| |\tilde{p}_i|}{2} \right) \times 4 \pi^2 J_0 \left(\frac{|\tilde{q}_{3-i}| |\tilde{p}_{3-i}|}{2} \right) \right] \\
=&4 \pi^4 \tilde{Q}_i \tilde{P}_i \times \left[   J_0 \left( \frac{|\tilde{q}_1| |\tilde{p}_1|}{2} \right)  J_0 \left(\frac{|\tilde{q}_2| |\tilde{p}_2|}{2} \right) + \frac{1}{2} J_2 \left(\frac{|\tilde{q}_i| |\tilde{p}_i|}{2} \right)  J_0 \left( \frac{|\tilde{q}_{3-i}| |\tilde{p}_{3-i}|}{2} \right) \right]
\end{aligned}
\end{equation}

\begin{lemma}\label{Bessel_Hypergeometric}
Bessel functions and Hypergeometric functions \cite{abramowitz1948handbook}
\begin{equation}
J_\alpha(x)=\frac{(\frac{x}{2})^\alpha}{\Gamma(\alpha + 1)} \  _0\mathbf{F}_1(;\alpha+1;-\frac{x^2}{4})
\end{equation}
\end{lemma}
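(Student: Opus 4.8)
The plan is to prove this standard identity by directly comparing the Maclaurin series of both sides, since both special functions are defined through their power series. First I would recall the canonical series representation of the Bessel function of the first kind (the $J_n(x)$ of Table~\ref{notation_table}),
\begin{equation}
J_\alpha(x)=\sum_{m=0}^{\infty}\frac{(-1)^m}{m!\,\Gamma(m+\alpha+1)}\left(\frac{x}{2}\right)^{2m+\alpha},
\end{equation}
which converges absolutely for all $x$ and for $\alpha\notin\{-1,-2,\dots\}$ has all Gamma factors well defined.

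Next I would write down the series for the confluent hypergeometric limit function in the (non-regularized) convention adopted in Table~\ref{notation_table},
\begin{equation}
{}_0\mathbf{F}_1(;b;z)=\sum_{k=0}^{\infty}\frac{1}{(b)_k}\frac{z^k}{k!},
\end{equation}
where $(b)_k=b(b+1)\cdots(b+k-1)$ is the Pochhammer symbol, and then substitute $b=\alpha+1$ and $z=-x^2/4$. The key algebraic step is to express the Pochhammer symbol through Gamma functions via $(\alpha+1)_k=\Gamma(\alpha+1+k)/\Gamma(\alpha+1)$, so that the explicit $\Gamma(\alpha+1)$ in the prefactor of the statement cancels the one this relation generates. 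Using $(-x^2/4)^k=(-1)^k(x/2)^{2k}$, the right-hand side then collapses to
\begin{equation}
\frac{(x/2)^\alpha}{\Gamma(\alpha+1)}\sum_{k=0}^{\infty}\frac{\Gamma(\alpha+1)}{\Gamma(\alpha+1+k)}\frac{(-1)^k(x/2)^{2k}}{k!}
=\sum_{k=0}^{\infty}\frac{(-1)^k}{k!\,\Gamma(\alpha+1+k)}\left(\frac{x}{2}\right)^{2k+\alpha},
\end{equation}
which is term-by-term identical to the Bessel series above under the relabeling $m\leftrightarrow k$, completing the proof.

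The main—and essentially the only—subtlety is fixing the normalization convention for ${}_0\mathbf{F}_1$: one must use the non-regularized form with summand $z^k/((b)_k\,k!)$ rather than the regularized variant $z^k/(\Gamma(b+k)\,k!)$, since otherwise the explicit $1/\Gamma(\alpha+1)$ prefactor in the statement would be double-counted. Once this is settled there is no genuine analytic obstacle: both series are absolutely convergent, so the term-by-term manipulation and the cancellation of the Gamma factors are rigorously justified, and the identity reduces to a routine matching of coefficients. (Alternatively, one could invoke the result directly from Ref.~\cite{abramowitz1948handbook} as the Lemma statement does, but the series comparison above makes it self-contained.)
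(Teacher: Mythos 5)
Your proof is correct. Note that the paper itself never proves this lemma: in the supplementary material it is stated as a known result and dispatched with a citation to Abramowitz and Stegun, consistent with the paper's announced policy that lemmas without complete proofs are covered by references. Your series-comparison argument therefore does something the paper does not attempt, namely make the identity self-contained: you expand $J_\alpha(x)$ in its canonical power series, expand ${}_0\mathbf{F}_1(;\alpha+1;-x^2/4)$ in the non-regularized convention, convert the Pochhammer symbol via $(\alpha+1)_k=\Gamma(\alpha+1+k)/\Gamma(\alpha+1)$ so that the prefactor $1/\Gamma(\alpha+1)$ cancels, and match coefficients term by term, with absolute convergence justifying the manipulation. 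The one subtlety you flag — that the identity as written requires the non-regularized ${}_0\mathbf{F}_1$ rather than the regularized variant, lest the $\Gamma(\alpha+1)$ be double-counted — is exactly the convention the paper implicitly uses (e.g.\ in the Hardy--Hille formula and the corollaries specializing to $\alpha=0,1,2$), so your proof is compatible with every downstream use of the lemma. Your restriction $\alpha\notin\{-1,-2,\dots\}$ is harmless here, since the paper only invokes the lemma for non-negative integer orders. In short: the paper buys brevity by outsourcing the identity to a handbook; your proof buys self-containment at the cost of a few lines of routine series algebra, and both are sound.
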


\begin{lemma}
Recurrence relations of Bessel functions \cite{abramowitz1948handbook}
\begin{equation}
\frac{2 \alpha}{x} J_{\alpha}(x) = J_{\alpha-1}(x) + J_{\alpha+1}(x)
\end{equation}
\end{lemma}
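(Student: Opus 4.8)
The plan is to derive the recurrence directly from the power-series representation of $J_\alpha$, which is equivalent to the hypergeometric form established in the preceding Lemma~\ref{Bessel_Hypergeometric}. Writing $J_\alpha(x)=\sum_{m=0}^{\infty}\frac{(-1)^m}{m!\,\Gamma(m+\alpha+1)}\left(\frac{x}{2}\right)^{2m+\alpha}$, I would form the sum $J_{\alpha-1}(x)+J_{\alpha+1}(x)$ term by term and show that it collapses to $\frac{2\alpha}{x}J_\alpha(x)$ after a single reindexing. Because each series is absolutely convergent for all $x$, term-by-term rearrangement is justified, and the only algebraic input required is the functional equation $\Gamma(z+1)=z\,\Gamma(z)$.

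First I would shift the summation index in $J_{\alpha+1}$ by $m\mapsto m-1$ so that both series carry the common power $\left(\frac{x}{2}\right)^{2m+\alpha-1}$; this rewrites $J_{\alpha+1}(x)$ as $-\sum_{m\ge1}\frac{(-1)^m}{(m-1)!\,\Gamma(m+\alpha+1)}\left(\frac{x}{2}\right)^{2m+\alpha-1}$. Next, for each $m\ge1$ I would combine the coefficient of $J_{\alpha-1}$ with this shifted coefficient, using $\Gamma(m+\alpha+1)=(m+\alpha)\Gamma(m+\alpha)$ to factor out $\frac{(-1)^m}{m!\,\Gamma(m+\alpha)}$ and reduce the bracket $1-\frac{m}{m+\alpha}$ to $\frac{\alpha}{m+\alpha}$. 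The resulting coefficient is exactly $\frac{(-1)^m\alpha}{m!\,\Gamma(m+\alpha+1)}$.

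The one point that needs care is the boundary term: the $m=0$ term of $J_{\alpha-1}$, namely $\frac{1}{\Gamma(\alpha)}\left(\frac{x}{2}\right)^{\alpha-1}$, has no partner from the shifted $J_{\alpha+1}$ sum. I would check that it coincides with the combined coefficient $\frac{(-1)^m\alpha}{m!\,\Gamma(m+\alpha+1)}$ evaluated at $m=0$, since $\frac{\alpha}{\Gamma(\alpha+1)}=\frac{1}{\Gamma(\alpha)}$; this lets me absorb the stray term into the general sum and write $J_{\alpha-1}+J_{\alpha+1}=\sum_{m\ge0}\frac{(-1)^m\alpha}{m!\,\Gamma(m+\alpha+1)}\left(\frac{x}{2}\right)^{2m+\alpha-1}$. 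Pulling the factor $\frac{2\alpha}{x}$ out of the exponent $2m+\alpha-1=(2m+\alpha)-1$ then reconstitutes precisely the series for $\frac{2\alpha}{x}J_\alpha(x)$, completing the argument.

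As an alternative route that avoids the boundary bookkeeping, I would instead establish the two derivative identities $\frac{d}{dx}\bigl[x^{\alpha}J_\alpha(x)\bigr]=x^{\alpha}J_{\alpha-1}(x)$ and $\frac{d}{dx}\bigl[x^{-\alpha}J_\alpha(x)\bigr]=-x^{-\alpha}J_{\alpha+1}(x)$ by differentiating the series (each again following from the $\Gamma$-recurrence), expand the products to obtain $J_\alpha'+\frac{\alpha}{x}J_\alpha=J_{\alpha-1}$ and $J_\alpha'-\frac{\alpha}{x}J_\alpha=-J_{\alpha+1}$, and subtract. In neither route is the difficulty conceptual, since the identity is classical (Abramowitz--Stegun); the real task is simply to confirm that the $\Gamma$-function manipulations go through for arbitrary non-integer $\alpha$, which is exactly what the hypergeometric form of Lemma~\ref{Bessel_Hypergeometric} secures.
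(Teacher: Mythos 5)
Your proof is correct, but note that the paper itself does not prove this lemma at all: it is stated with a citation to Abramowitz--Stegun, consistent with the supplement's declared convention that lemmas quoted from the literature receive references rather than proofs. What you supply is therefore a self-contained derivation where the paper has none. Your series argument is the standard one and is carried out correctly: the reindexing $m\mapsto m-1$ in $J_{\alpha+1}$, the use of $\Gamma(m+\alpha+1)=(m+\alpha)\Gamma(m+\alpha)$ to collapse the bracket to $\alpha/(m+\alpha)$, and the check that the unpaired $m=0$ term of $J_{\alpha-1}$ coincides with the $m=0$ instance of the combined coefficient (via $\alpha/\Gamma(\alpha+1)=1/\Gamma(\alpha)$) are exactly the required steps; the alternative route via $\frac{d}{dx}\bigl[x^{\alpha}J_\alpha(x)\bigr]=x^{\alpha}J_{\alpha-1}(x)$ and $\frac{d}{dx}\bigl[x^{-\alpha}J_\alpha(x)\bigr]=-x^{-\alpha}J_{\alpha+1}(x)$ is equally valid and is the proof most references give. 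One small inaccuracy in your closing remark: the $\Gamma$-function manipulations do not need Lemma~\ref{Bessel_Hypergeometric} to ``secure'' anything, and the identity holds for all $\alpha$ (with the convention that $1/\Gamma$ vanishes at non-positive integers), not merely non-integer $\alpha$; in fact the paper only ever invokes this recurrence at $\alpha=1$, to write $J_2(x)=\frac{2}{x}J_1(x)-J_0(x)$ in the corollary that follows, where your argument applies without any subtlety.
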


\begin{corollary}
For $\alpha = 0, 1, 2$, we have
\begin{equation}
J_0(x)= \ _0\mathbf{F}_1(;1;-\frac{x^2}{4}); \quad
\frac{2}{x} J_1(x) =\  _0\mathbf{F}_1(;2;-\frac{x^2}{4}); \quad
J_2(x)= \frac{2}{x} J_{1}(x)- J_0(x)
\end{equation}
\end{corollary}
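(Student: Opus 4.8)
The plan is to obtain all three identities by direct specialization of the two lemmas immediately preceding the statement, so no new machinery is required. First I would apply Lemma~\ref{Bessel_Hypergeometric} at $\alpha = 0$. The prefactor $(x/2)^{\alpha}/\Gamma(\alpha+1)$ collapses to $(x/2)^{0}/\Gamma(1) = 1$, using $\Gamma(1) = 1$, which yields $J_0(x) = {}_0\mathbf{F}_1(;1;-x^2/4)$ at once. This disposes of the first identity with essentially no computation.

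Next I would set $\alpha = 1$ in the same lemma. The prefactor now becomes $(x/2)^{1}/\Gamma(2) = (x/2)/1 = x/2$, using $\Gamma(2) = 1! = 1$, so that $J_1(x) = (x/2)\,{}_0\mathbf{F}_1(;2;-x^2/4)$. Multiplying both sides by $2/x$ gives the second identity $\tfrac{2}{x}J_1(x) = {}_0\mathbf{F}_1(;2;-x^2/4)$. For the third identity I would not touch the hypergeometric representation at all, but instead invoke the Bessel recurrence relation of the preceding lemma at the index $\alpha = 1$, namely $\tfrac{2}{x}J_1(x) = J_0(x) + J_2(x)$, and simply solve for $J_2(x)$ to recover $J_2(x) = \tfrac{2}{x}J_1(x) - J_0(x)$.

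The argument is routine bookkeeping; the only things that need verifying are the two Gamma values $\Gamma(1) = \Gamma(2) = 1$ and the correct index substitution into the recurrence. If there is any ``hard part'' here it is conceptual rather than technical: one must recognize which identities are to be read off analytically from the series representation (the first two) and which one follows purely from the three-term recursion (the third). This separation is what matters downstream, because it is precisely Bessel functions of orders $0$, $1$, and $2$ that appear after the angular integrals $\Theta^{(\Delta n_1)}$ are evaluated via Theorem~\ref{besseltheorem}, and re-expressing them through ${}_0\mathbf{F}_1$ is what allows them to be collapsed cleanly onto the generalized-Laguerre basis used in the expansion of the structure factor.
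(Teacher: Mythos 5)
Your proposal is correct and matches the paper's (implicit) reasoning exactly: the paper states this corollary without a written proof precisely because it follows by substituting $\alpha=0$ and $\alpha=1$ into Lemma~\ref{Bessel_Hypergeometric} and applying the recurrence lemma at $\alpha=1$, which is what you do. The Gamma-function evaluations and the index substitution are all handled correctly, so nothing is missing.
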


\begin{lemma}\label{Hardy-Hille}
Hardy-Hille Formula \cite{abramowitz1948handbook}
\begin{equation}
\sum_{n=0}^{\infty} \frac{n ! \Gamma(\alpha+1)}{\Gamma(n+\alpha+1)} L_{n}^{(\alpha)}(x) L_{n}^{(\alpha)}(y) t^{n}=\frac{1}{(1-t)^{\alpha+1}} e^{-(x+y) t /(1-t)} \ _{0}\mathbf{F}_{1}\left(; \alpha+1 ; \frac{x y t}{(1-t)^{2}}\right)
\label{HHF}
\end{equation}
\end{lemma}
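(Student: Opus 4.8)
The plan is to recognize Eq.~\ref{HHF} as the Poisson (Mehler-type) kernel of the orthonormal system $\{\sqrt{n!/\Gamma(n+\alpha+1)}\,L_n^{(\alpha)}\}$ attached to the weight $e^{-x}x^{\alpha}$, and to reduce it to elementary generating functions. The two ingredients I would lean on are the ordinary Laguerre generating function $\sum_{n}L_n^{(\alpha)}(y)t^n=(1-t)^{-\alpha-1}e^{-yt/(1-t)}$, which follows in essentially one line from the explicit monomial expansion of $L_n^{(\alpha)}$ (the double sum factors into two exponentials), and the identification of the confluent series in Lemma~\ref{Bessel_Hypergeometric}, which is what ultimately manufactures the ${}_0\mathbf{F}_1$ on the right-hand side.

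First I would expand one of the two Laguerre factors in its monomial series, $L_n^{(\alpha)}(x)=\sum_{k=0}^{n}\frac{(-1)^k}{k!}\binom{n+\alpha}{n-k}x^k$, and use $\binom{n+\alpha}{n-k}=\Gamma(n+\alpha+1)/[(n-k)!\,\Gamma(k+\alpha+1)]$ to cancel the awkward $\Gamma(n+\alpha+1)$ against the weight $n!\,\Gamma(\alpha+1)/\Gamma(n+\alpha+1)$. After interchanging the $n$- and $k$-sums and using $\sum_{n\ge k}\frac{n!}{(n-k)!}L_n^{(\alpha)}(y)t^n=t^k\frac{d^k}{dt^k}\sum_n L_n^{(\alpha)}(y)t^n$ together with the ordinary generating function, the left-hand side collapses to
\begin{equation}
\Gamma(\alpha+1)\sum_{k\ge 0}\frac{(-x)^k}{k!\,\Gamma(k+\alpha+1)}\;t^k\frac{d^k}{dt^k}\Big[(1-t)^{-\alpha-1}e^{-yt/(1-t)}\Big].
\end{equation}
The structure is already promising: the factor $1/[k!\,\Gamma(k+\alpha+1)]$ is precisely the one that assembles into a ${}_0\mathbf{F}_1(;\alpha+1;\,\cdot\,)$ once the $k$-sum is performed, while the $(1-t)$-dependence carried by the successive derivatives supplies the argument $xyt/(1-t)^2$.

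The hard part is the resummation over $k$: one must evaluate all $t$-derivatives of $(1-t)^{-\alpha-1}e^{-yt/(1-t)}$ and recombine them into a single ${}_0\mathbf{F}_1$. I would tame this in one of two ways. Combinatorially, I would extract the coefficient of $x^{j}y^{k}$ on both sides, so that each side becomes an explicit ${}_2F_1$-type sum over $n$, and verify the identity by a Chu--Vandermonde (or Pfaff--Saalsch\"utz) summation; the radius of convergence $|t|<1$ and the domain $\alpha>-1$ then justify the term-by-term manipulations. Alternatively, and more economically given the machinery already in the excerpt, I would argue spectrally: by Lemma~\ref{Fourier} the Laguerre--Gaussian functions are eigenfunctions of the two-dimensional Fourier (hence Hankel) transform, and Weber's second exponential integral evaluates the Hankel transform of a Gaussian as a Gaussian times a modified Bessel $I_\alpha$, whose power series is exactly the ${}_0\mathbf{F}_1$ of Lemma~\ref{Bessel_Hypergeometric}. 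Expanding a single Gaussian kernel in the complete orthonormal Laguerre--Gaussian basis, transforming term by term with the eigenvalue $(-1)^n$, and matching against Weber's closed form reproduces the bilinear sum directly, with $t$ playing the role of the geometric weight set by the Gaussian width. Either route closes the argument: the combinatorial one is the most self-contained, whereas the spectral one reuses \emph{Lemma~\ref{Fourier}} and \emph{Lemma~\ref{Bessel_Hypergeometric}} and is therefore the shortest in this paper's context.
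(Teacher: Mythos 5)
Your proposal is correct in outline, but it is worth noting that the paper itself does not prove this lemma at all: the Hardy--Hille formula is quoted as known background with a citation to \cite{abramowitz1948handbook}, and the supplementary material states explicitly that such lemmas are referenced rather than re-derived. So you are supplying a proof where the paper supplies none. Both of your routes are classically sound. The combinatorial one is set up correctly: the monomial expansion of $L_n^{(\alpha)}$, the cancellation of $\Gamma(n+\alpha+1)$ against the weight, and the identity $\sum_{n\ge k}\frac{n!}{(n-k)!}L_n^{(\alpha)}(y)t^n=t^k\frac{d^k}{dt^k}\big[(1-t)^{-\alpha-1}e^{-yt/(1-t)}\big]$ are all right, and the factor $1/[k!\,\Gamma(k+\alpha+1)]$ does assemble the ${}_0\mathbf{F}_1$. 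Be aware, though, that the summation identity you ultimately need when matching coefficients of $x^jy^k$ is Euler's (Pfaff's) transformation of a ${}_2F_1$ in the variable $t$ --- equivalently a Vandermonde-type convolution once you also extract the coefficient of $t^n$ --- rather than Pfaff--Saalsch\"utz proper; this is a naming quibble, not a gap, but it is where the actual work sits and you have deferred it. Your spectral route is essentially Watson's classical proof: representing each $\frac{n!}{\Gamma(n+\alpha+1)}L_n^{(\alpha)}$ through the Hankel/Bessel integral representation (equivalently the eigenfunction property of the Laguerre--Gaussian modes, the paper's Lemma on the two-dimensional Fourier transform), resumming the $n$-series inside the integral into an $I_\alpha$, and evaluating the resulting double Bessel integral by Weber's second exponential integral. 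That version has the advantage of reusing the paper's own Lemmas (the Fourier eigenfunction property and the Bessel--${}_0\mathbf{F}_1$ identification), so it fits the paper's toolkit most economically; the combinatorial version is more self-contained but requires carrying out the hypergeometric resummation you have only sketched. Either way, filling in the deferred step would make this a complete proof of a statement the paper merely cites.
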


\begin{corollary}
Substituting $\alpha=0, t=-1$ to Eq.\ref{HHF}, we have 
\begin{equation}
\sum_{n=0}^{\infty}  L_{n}^{(0)}(x) L_{n}^{(0)}(y) (-1)^{n}=\frac{1}{2} e^{(x+y)/2} \ _{0} \mathbf{F}_{1}\left(; 1 ; -\frac{x y }{4}\right)
\end{equation}

Substituting $\alpha=1, t=-1$ to Eq.\ref{HHF}, we have
\begin{equation}
\sum_{n=0}^{\infty} \frac{1}{n+1} L_{n}^{(1)}(x) L_{n}^{(1)}(y) (-1)^{n}=\frac{1}{4} e^{(x+y)/2} \ _{0} \mathbf{F}_{1}\left(; 2 ; -\frac{x y }{4}\right)
\end{equation}
\end{corollary}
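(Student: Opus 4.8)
The plan is to derive both identities as boundary specialisations of the Hardy--Hille formula of Lemma~\ref{Hardy-Hille}, setting $t=-1$ and $\alpha\in\{0,1\}$ and simplifying each side. First I would reduce the weight $\frac{n!\,\Gamma(\alpha+1)}{\Gamma(n+\alpha+1)}$ that multiplies $L_n^{(\alpha)}(x)L_n^{(\alpha)}(y)t^n$ on the left of Eq.~\eqref{HHF}: for $\alpha=0$ it is $\frac{n!}{n!}=1$, and for $\alpha=1$ it is $\frac{n!}{(n+1)!}=\frac{1}{n+1}$, which are exactly the prefactors appearing in the two claimed sums. Putting $t=-1$ converts $t^n$ into $(-1)^n$, supplying the alternating sign in both series. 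Hence the left-hand sides of the corollary are verbatim the left-hand side of Eq.~\eqref{HHF} after these substitutions.

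Next I would evaluate the three factors on the right of Eq.~\eqref{HHF} at $t=-1$. The rational prefactor becomes $\frac{1}{(1-t)^{\alpha+1}}=\frac{1}{2^{\alpha+1}}$, i.e. $\tfrac12$ when $\alpha=0$ and $\tfrac14$ when $\alpha=1$. The exponent simplifies, at $t=-1$, as $-\frac{(x+y)t}{1-t}=\frac{x+y}{2}$, giving the common factor $e^{(x+y)/2}$. The hypergeometric argument collapses, again at $t=-1$, to $\frac{xy\,t}{(1-t)^2}=-\frac{xy}{4}$, so that ${}_0\mathbf{F}_1(;\alpha+1;-\frac{xy}{4})$ has lower parameter $1$ for $\alpha=0$ and $2$ for $\alpha=1$. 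Assembling the prefactor, the exponential, and the hypergeometric factor reproduces the right-hand sides of the two stated equations term by term, so the algebra of the specialisation is immediate.

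The genuinely delicate step is not the algebra but the legitimacy of evaluating at $t=-1$: the Hardy--Hille expansion is guaranteed only for $|t|<1$, while $t=-1$ lies on the boundary of the disc of convergence, where the generating series is at best conditionally convergent (the Laguerre asymptotics of $L_n^{(\alpha)}(x)L_n^{(\alpha)}(y)$ against the $n^{-\alpha}$ weight make the terms decay only like $n^{-1/2}$ up to oscillation). The hard part is therefore to justify passing to $t=-1$ rather than to compute the limit. I would take $t\to-1^{+}$ along the real axis and argue in two parts. On the right of Eq.~\eqref{HHF} continuity is manifest, since $1-t\to2$ keeps the prefactor finite and the hypergeometric argument bounded while the exponential is entire, so the right-hand side converges to the claimed closed form. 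On the left I would invoke Abel's theorem: provided the boundary series $\sum_n \frac{n!\,\Gamma(\alpha+1)}{\Gamma(n+\alpha+1)}L_n^{(\alpha)}(x)L_n^{(\alpha)}(y)(-1)^n$ converges for the range of $x,y$ at hand, its value equals the Abel limit of the power series as $t\to-1^{+}$. Matching the two limits then yields the corollary. In the physical applications of the main text one simply reads off the $t=-1$ substitution formally, but this continuity-plus-Abel argument is what makes the two identities rigorous.
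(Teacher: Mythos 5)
Your proposal is correct and takes essentially the same route as the paper: the corollary is obtained by direct substitution of $(\alpha,t)=(0,-1)$ and $(\alpha,t)=(1,-1)$ into the Hardy--Hille formula of Lemma~\ref{Hardy-Hille}, with the weight $\frac{n!\,\Gamma(\alpha+1)}{\Gamma(n+\alpha+1)}$, the prefactor $\frac{1}{(1-t)^{\alpha+1}}$, the exponent $-\frac{(x+y)t}{1-t}$, and the hypergeometric argument $\frac{xyt}{(1-t)^{2}}$ simplifying exactly as you compute. Your additional Abel-limit discussion justifying evaluation at the boundary point $t=-1$ is a sound rigor refinement, but it goes beyond the paper, which performs the substitution formally.
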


Thus we can express $\Theta^{(\Delta n_1)}\left(\tilde{q}_1, \tilde{q}_2, \tilde{p}_1, \tilde{p}_2\right)$ with the Hypergeometric functions:
\begin{small}
\begin{equation}
\begin{aligned}
&\Theta^{(0)}(\tilde{Q}_1,\tilde{Q}_2,\tilde{P}_1,\tilde{P}_2)=4 \pi^4 \left[  (\tilde{Q}_1 \tilde{P}_1  +  \tilde{Q}_2 \tilde{P}_2  )  \times \ _0\mathbf{F}_1 (;1;-\frac{\tilde{Q}_1 \tilde{P}_1}{16} ) \ _0\mathbf{F}_1 (;1;-\frac{\tilde{Q}_2 \tilde{P}_2}{16} ) \right.\\
&\qquad \qquad \qquad \qquad \qquad \qquad+  \frac{1}{2}\tilde{Q}_1 \tilde{P}_1  \left( \ _0\mathbf{F}_1(;2;-\frac{\tilde{Q}_1 \tilde{P}_1}{16})-  \ _0\mathbf{F}_1(;1;-\frac{\tilde{Q}_1 \tilde{P}_1}{16}) \right)  \ _0\mathbf{F}_1(;1;-\frac{\tilde{Q}_2 \tilde{P}_2}{16})  \\
& \left. \qquad \qquad \qquad \qquad \qquad \qquad+ \frac{1}{2} \tilde{Q}_2 \tilde{P}_2  \ _0\mathbf{F}_1(;1;-\frac{\tilde{Q}_1 \tilde{P}_1}{16}) \left( \ _0\mathbf{F}_1(;2;-\frac{\tilde{Q}_2 \tilde{P}_2}{16})-  \ _0\mathbf{F}_1(;1;-\frac{\tilde{Q}_2 \tilde{P}_2}{16}) \right)  \right]\\
=&4 \times 4\pi^4 e^{-\frac{1}{4} \left(\tilde{Q}_1 +\tilde{Q}_2 + \tilde{P}_1+ \tilde{P}_2 \right)}\sum_{a,b=0}^{\infty}(-1)^{a+b} \left[  \frac{\tilde{Q}_1 \tilde{P}_1    + \tilde{Q}_2 \tilde{P}_2}{2}   \times 
L_{a}^{(0)}\left(\frac{\tilde{Q}_1}{2}\right) 
L_{a}^{(0)}\left(\frac{\tilde{P}_1}{2}\right) 
L_{b}^{(0)}\left(\frac{\tilde{Q}_2}{2}\right)
L_{b}^{(0)}\left(\frac{\tilde{P}_2}{2}\right) \right.\\
&\quad \left. +  \frac{\tilde{Q}_1 \tilde{P}_1}{a+1}   
L_{a}^{(1)}\left(\frac{\tilde{Q}_1}{2}\right) 
L_{a}^{(1)}\left(\frac{\tilde{P}_1}{2}\right) 
L_{b}^{(0)}\left(\frac{\tilde{Q}_2}{2}\right) 
L_{b}^{(0)}\left(\frac{\tilde{P}_2}{2}\right)  
+ \frac{\tilde{Q}_2 \tilde{P}_2}{b+1}  
L_{a}^{(0)}\left(\frac{\tilde{Q}_1}{2}\right) 
L_{a}^{(0)}\left(\frac{\tilde{P}_1}{2}\right) 
L_{b}^{(1)}\left(\frac{\tilde{Q}_2}{2}\right) 
L_{b}^{(1)}\left(\frac{\tilde{P}_2}{2}\right) \right] 
\end{aligned}
\end{equation}
\end{small}

\begin{lemma}\label{Orthogonality_la}
Orthogonality of the generalized Laguerre polynomials \cite{abramowitz1948handbook}
\begin{equation}
\int_{0}^{\infty} x^{\alpha} e^{-x} L_{n}^{(\alpha)}(x) L_{m}^{(\alpha)}(x) d x=\frac{(n+\alpha) !}{n !} \delta_{n, m}
\end{equation}
\end{lemma}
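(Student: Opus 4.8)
The plan is to establish both the vanishing for $n\ne m$ and the value of the normalization in a single calculation, using the Rodrigues representation of the generalized Laguerre polynomials,
\[
L_n^{(\alpha)}(x)=\frac{x^{-\alpha}e^{x}}{n!}\,\frac{d^{n}}{dx^{n}}\!\left(e^{-x}x^{n+\alpha}\right),
\]
which turns the weighted integral into one tailor-made for repeated integration by parts. First I would substitute this identity for the single factor $L_n^{(\alpha)}$ while leaving $L_m^{(\alpha)}$ untouched; the weight $x^{\alpha}e^{-x}$ then cancels against the $x^{-\alpha}e^{x}$ prefactor, leaving
\[
\int_0^\infty x^{\alpha}e^{-x}L_n^{(\alpha)}(x)L_m^{(\alpha)}(x)\,dx=\frac{1}{n!}\int_0^\infty L_m^{(\alpha)}(x)\,\frac{d^{n}}{dx^{n}}\!\left(e^{-x}x^{n+\alpha}\right)dx.
\]

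Next, assuming without loss of generality that $m\le n$, I would integrate by parts $n$ times, each step transferring one derivative from $e^{-x}x^{n+\alpha}$ onto $L_m^{(\alpha)}$. For $\alpha>-1$ each boundary term involves a derivative of order $k<n$ of $e^{-x}x^{n+\alpha}$, which behaves like $x^{\alpha+n-k}$ near the origin (a strictly positive power) and carries an overall $e^{-x}$ at infinity, so all boundary contributions vanish and one is left with
\[
\frac{(-1)^{n}}{n!}\int_0^\infty e^{-x}x^{n+\alpha}\,\frac{d^{n}}{dx^{n}}L_m^{(\alpha)}(x)\,dx.
\]
Since $L_m^{(\alpha)}$ is a polynomial of degree $m$, its $n$-th derivative is identically zero whenever $m<n$; this produces the factor $\delta_{n,m}$ and proves orthogonality at once.

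For the diagonal case $m=n$ it remains only to identify $\frac{d^{n}}{dx^{n}}L_n^{(\alpha)}$. The leading coefficient of $L_n^{(\alpha)}$ is $(-1)^{n}/n!$, so this derivative collapses to the constant $(-1)^{n}$, and the integral reduces to a single Gamma integral,
\[
\frac{(-1)^{n}}{n!}\,(-1)^{n}\int_0^\infty e^{-x}x^{n+\alpha}\,dx=\frac{\Gamma(n+\alpha+1)}{n!}=\frac{(n+\alpha)!}{n!},
\]
which is precisely the asserted normalization. The whole argument is essentially bookkeeping; the only points demanding care are the justification that every boundary term in the successive integrations by parts genuinely vanishes (where the integrability condition $\alpha>-1$ is used) and the correct extraction of the leading coefficient of $L_n^{(\alpha)}$, since any error there would rescale the entire normalization constant. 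An alternative route would split the two halves — deriving the $n\ne m$ orthogonality from the self-adjoint Sturm--Liouville operator that has the $L_n^{(\alpha)}$ as eigenfunctions, and fixing the norm from the generating function or the Hardy--Hille formula of Lemma~\ref{Hardy-Hille} — but the Rodrigues approach is more direct because it settles both cases simultaneously.
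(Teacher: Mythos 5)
Your proof is correct, but note that the paper itself does not prove this lemma at all: it is stated as a known classical result with a citation to Abramowitz and Stegun, consistent with the supplementary material's declared policy that not every lemma is proved when a standard reference suffices. Your Rodrigues-formula argument is therefore a genuine addition rather than a parallel derivation, and it is the standard textbook proof: substituting
\[
L_n^{(\alpha)}(x)=\frac{x^{-\alpha}e^{x}}{n!}\,\frac{d^{n}}{dx^{n}}\!\left(e^{-x}x^{n+\alpha}\right)
\]
for one factor, integrating by parts $n$ times, and observing that the $n$-th derivative kills any polynomial of degree $m<n$ settles orthogonality and the normalization $\Gamma(n+\alpha+1)/n!=(n+\alpha)!/n!$ in a single computation. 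Your two points of care are exactly the right ones: the boundary terms vanish precisely because each surviving power of $x$ at the origin is at least $\alpha+n-k>0$ for $k<n$ (this is where $\alpha>-1$ enters, and it is worth remarking that the lemma is only valid in that range --- the paper also uses polynomials $L^{(-i)}_{n}$ with negative upper index, for which this orthogonality relation does not directly apply and the paper's manipulations are instead organized through relations like its Definition of $\Delta_{ij}$ and the recurrence identities), and the leading coefficient $(-1)^{n}/n!$ of $L_n^{(\alpha)}$ fixes the constant. The alternative you sketch (Sturm--Liouville self-adjointness for $n\neq m$ plus the Hardy--Hille formula for the norm) would also work and would connect more directly to the machinery the paper actually deploys, but your chosen route is shorter and self-contained.
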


\begin{lemma}\label{Recurrence_la}
Recurrence relations of the generalized Laguerre polynomials \cite{abramowitz1948handbook}
\begin{equation}
L_{n}^{(\alpha)}(x)=L_{n}^{(\alpha+1)}(x)-L_{n-1}^{(\alpha+1)}(x)=\sum_{j=0}^{k}\left(\begin{array}{l}
{k} \\
{j}
\end{array}\right) L_{n-j}^{(\alpha+k)}(x)
\end{equation}
\end{lemma}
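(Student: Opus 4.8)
The plan is to derive both equalities from a single source: the generating function
\begin{equation}
\sum_{n=0}^{\infty} L_{n}^{(\alpha)}(x)\, t^{n} = \frac{1}{(1-t)^{\alpha+1}}\, e^{-xt/(1-t)},
\label{eq:genfun}
\end{equation}
which is standard and can be obtained by resumming the explicit series $L_n^{(\alpha)}(x)=\sum_{i=0}^{n}(-1)^i\binom{n+\alpha}{n-i}x^i/i!$, or quoted directly from Ref.~\cite{abramowitz1948handbook}. The only structural feature I would exploit is that the entire $\alpha$-dependence of \eqref{eq:genfun} sits in the prefactor $(1-t)^{-(\alpha+1)}$, so raising or lowering $\alpha$ by an integer amounts to multiplying by an integer power of $(1-t)$. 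Throughout I would adopt the convention $L_{m}^{(\beta)}\equiv 0$ for $m<0$ so that every finite sum truncates correctly.

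For the first equality I would replace $\alpha\to\alpha+1$ in \eqref{eq:genfun} and multiply both sides by $(1-t)$; the prefactor then collapses back to $(1-t)^{-(\alpha+1)}$, so the right-hand side becomes precisely the $\alpha$-generating function. Expanding the factor $(1-t)$ on the left and matching the coefficient of $t^{n}$ delivers $L_{n}^{(\alpha)}=L_{n}^{(\alpha+1)}-L_{n-1}^{(\alpha+1)}$ immediately.

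For the iterated (summed) form I would apply the same idea $k$ times at once: multiplying the shifted generating function at index $\alpha+k$ by $(1-t)^{k}$ again reproduces the $\alpha$-generating function, so that expanding $(1-t)^{k}=\sum_{j=0}^{k}(-1)^{j}\binom{k}{j}t^{j}$ by the binomial theorem and reading off the coefficient of $t^{n}$ yields the closed form. As an independent check I would also run a short induction on $k$, substituting the first identity into each term of the $k$-th relation and collapsing the coefficients by Pascal's rule $\binom{k}{j}+\binom{k}{j-1}=\binom{k+1}{j}$; the $k=1$ instance must then reproduce the first equality.

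I do not anticipate a genuine obstacle, as this is a classical handbook identity; the one point demanding care is the \emph{sign}. The binomial expansion of $(1-t)^{k}$ inevitably carries the alternating factor $(-1)^{j}$, so the summed relation should be written $L_{n}^{(\alpha)}(x)=\sum_{j=0}^{k}(-1)^{j}\binom{k}{j}L_{n-j}^{(\alpha+k)}(x)$; without it the $k=1$ case would read $L_{n}^{(\alpha+1)}+L_{n-1}^{(\alpha+1)}$ and clash with the minus sign of the first equality. I would likewise keep the boundary convention $L_{m}^{(\beta)}\equiv0$ for $m<0$ explicit, so that the effective upper summation limit is $\min(k,n)$ whenever $n<k$.
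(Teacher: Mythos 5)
Your proof is correct, and it is in fact more than the paper provides: the paper does not prove this lemma at all, but simply quotes it from Abramowitz and Stegun (the supplement states explicitly that such standard lemmas are cited rather than derived). Your generating-function argument --- using that the entire $\alpha$-dependence of $\sum_{n\ge 0} L_n^{(\alpha)}(x)\,t^n = (1-t)^{-(\alpha+1)}e^{-xt/(1-t)}$ sits in the prefactor, so that shifting $\alpha$ by an integer is multiplication by a power of $(1-t)$ --- is a clean, self-contained derivation of both equalities, and the auxiliary induction via Pascal's rule is a sound consistency check. More importantly, your sign observation is a genuine correction to the statement as printed: the summed form must read $L_{n}^{(\alpha)}(x)=\sum_{j=0}^{k}(-1)^{j}\binom{k}{j}L_{n-j}^{(\alpha+k)}(x)$, since the coefficient of $t^{j}$ in $(1-t)^{k}$ alternates; without the $(-1)^{j}$ the $k=1$ case gives $L_{n}^{(\alpha+1)}+L_{n-1}^{(\alpha+1)}$, contradicting the first equality (e.g.\ $L_{1}^{(1)}+L_{0}^{(1)}=3-x\neq 1-x=L_{1}^{(0)}$). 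The lemma as stated in the paper is therefore internally inconsistent, though harmlessly so: in all subsequent uses (Definition of $\Delta_{ij}$, $\Delta^{*}_{ij}$, and the two-body characteristic matrix computation) only the two-term case $L_{n}^{(0)}=L_{n}^{(1)}-L_{n-1}^{(1)}$ is invoked, so the typo does not propagate into any result. Your explicit convention $L_{m}^{(\beta)}\equiv 0$ for $m<0$ is also the right way to make the truncation at $n<k$ rigorous.
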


\begin{definition}\label{kronecker}
Linear combination of Kronecker-delta symbols
\begin{equation}
\begin{aligned}
\Delta_{ij}\equiv&\int_{0}^{\infty} x e^{-x} L_{i}^{(0)}(x) L_{j}^{(1)}(x) d x=\int_{0}^{\infty} x e^{-x} \left[ L_{i}^{(1)}(x) - L_{i-1}^{(1)}(x) \right] L_{j}^{(1)}(x) d x
= (i+1) \delta_{i, j} - i \delta_{i-1, j}
\end{aligned}
\end{equation}
and: 
\begin{equation}
\begin{aligned}
\Delta^{*}_{ij}\equiv&\int_{0}^{\infty} x e^{-x} L_{i}^{(0)}(x) L_{j}^{(0)}(x) d x =\int_{0}^{\infty} x e^{-x} \left[L_{i}^{(1)}(x) - L_{i-1}^{(1)}(x)\right] \left[L_{j}^{(1)}(x) - L_{j-1}^{(1)}(x)\right] d x\\
=&(i+1) \delta_{i,j} - (i+1) \delta_{i,j-1} -i \delta_{i-1,j} + i \delta_{i-1,j-1} 
\end{aligned}
\end{equation}
\end{definition}

\begin{definition}
The constant factor in $\bar{S}_{\boldsymbol{\tilde{p}}_1,\boldsymbol{\tilde{p}}_2}$
\begin{equation}
\omega^{n_1 n_2}_{i}  = \sqrt{\frac{n_1 ! \cdot n_2 !}{(n_1+i) ! \cdot (n_2-i) !}}
\end{equation}
\end{definition}

Then by using the orthogonality of the Laguerre polynomials, we can see that the contribution to the energy gap turns out to be the combination of delta functions. So when we use this formula, the only input will be $c^{m_1 m_2}$ and $d_0^{n_1 n_2}$: 
\begin{small}
\begin{equation}
\begin{aligned}
\delta \tilde{E}^{(0)}_{\boldsymbol{q}}=&-\frac{4 \times 2^2}{6 \eta \times 2^{8} \pi^2}\int \frac{d \tilde{Q}_1  d \tilde{Q}_2 d \tilde{P}_1  d \tilde{P}_2}{2^4} \sum_{m_1 m_2}\sum_{n_1 n_2} \sum_{a,b}(-1)^{m_1+m_2+a+b} c^{m_1 m_2} d_{0}^{n_1 n_2} \omega^{n_1 n_2}_{0}\times e^{-\frac{1}{2}\left(\tilde{Q}_1+\tilde{Q}_2+\tilde{P}_1+\tilde{P}_2\right)}\\
L^{(0)}_{n_1}\left(\frac{\tilde{P}_1}{2}\right)& L^{(0)}_{m_1}\left(\frac{\tilde{Q}_1}{2}\right) L^{(0)}_{n_2}\left(\frac{\tilde{P}_2}{2} \right)     L^{(0)}_{m_2}\left(\frac{\tilde{Q}_2}{2} \right) \times \left[ \frac{\tilde{Q}_1 \tilde{P}_1  + \tilde{Q}_2 \tilde{P}_2}{2 \times 4}    \times L^{(0)}_{a}\left(\frac{\tilde{P}_1}{2}\right) L^{(0)}_{a}\left(\frac{\tilde{Q}_1}{2}\right) L^{(0)}_{b}\left(\frac{\tilde{P}_2}{2} \right)     L^{(0)}_{b}\left(\frac{\tilde{Q}_2}{2} \right) \right.\\
+  \frac{\tilde{Q}_1 \tilde{P}_1}{4(a+1)} & \left. L^{(1)}_{a}\left(\frac{\tilde{P}_1}{2}\right) L^{(1)}_{a}\left(\frac{\tilde{Q}_1}{2}\right) L^{(0)}_{b}\left(\frac{\tilde{P}_2}{2} \right)  L^{(0)}_{b}\left(\frac{\tilde{Q}_2}{2} \right) + \frac{\tilde{Q}_2 \tilde{P}_2}{4(b+1)}   L^{(0)}_{a}\left(\frac{\tilde{P}_1}{2}\right) L^{(0)}_{a}\left(\frac{\tilde{Q}_1}{2}\right) L^{(1)}_{b}\left(\frac{\tilde{P}_2}{2} \right)   L^{(1)}_{b}\left(\frac{\tilde{Q}_2}{2} \right)\right] \\
=& -\frac{1}{2^{6}\times3 \eta \pi^2} \sum_{m_1 m_2}\sum_{n_1 n_2} c^{m_1 m_2} d_{0}^{n_1 n_2} \sum_{a,b}(-1)^{m_1+m_2+a+b}  \left(\frac{2}{a+1} \Delta_{n_1 a} \Delta_{m_1 a} \delta_{n_2,b} \delta_{m_2,b}   + \Delta^{*}_{n_1 a} \Delta^{*}_{m_1 a} \delta_{n_2,b} \delta_{m_2,b}\right.\\
&  \left. + \frac{2}{b+1}  \Delta_{n_2 b} \Delta_{m_2 b} \delta_{n_1 a} \delta_{m_1 a} + \Delta^{*}_{n_2,b} \Delta^{*}_{m_2,b} \delta_{n_1 a} \delta_{m_1 a} \right) \equiv -\frac{1}{2^{6}\times3 \eta \pi^2} \Gamma^{0}_{m_1 m_2 n_1 n_2} c^{m_1 m_2} d_{0}^{n_1 n_2}
\end{aligned}
\end{equation}
\end{small}

\begin{proposition}
The tensor describing the contribution to $\delta E_{\boldsymbol{q}}$ from the diagonal terms is given by
\begin{small}
\begin{equation}
\begin{aligned}
\Gamma^{0}_{m_1 m_2 n_1 n_2}=& 2(n_1^2 + n_2^2 + n_1 +n_2 +2) \delta_{n_1, m_1} \delta_{n_2, m_2}-n_1 (n_1 -1) \delta_{n_1, m_1+2} \delta_{n_2, m_2} 
(n_1 +1) (n_1 + 2) \delta_{n_1, m_1-2} \delta_{n_2, m_2}  \\
&-n_2 (n_2 -1)  \delta_{n_1, m_1} \delta_{n_2, m_2+2}  - (n_2 +1) (n_2 + 2)  \delta_{n_1, m_1} \delta_{n_2, m_2-2}
\end{aligned}
\end{equation}
\end{small}
\end{proposition}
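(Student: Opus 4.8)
The plan is to take the penultimate display, namely the representation of $\Gamma^{0}_{m_1 m_2 n_1 n_2}$ as the quadruple sum over $a$ and $b$ assembled from the four products $\tfrac{2}{a+1}\Delta_{n_1 a}\Delta_{m_1 a}$, $\Delta^{*}_{n_1 a}\Delta^{*}_{m_1 a}$, and their $1\!\leftrightarrow\!2$ counterparts, and to evaluate it by directly substituting the closed forms of $\Delta_{ij}$ and $\Delta^{*}_{ij}$ from Definition~\ref{kronecker}. The first thing I would record is the block structure: the two terms carrying $\delta_{n_2,b}\delta_{m_2,b}$ collapse the $b$-sum and force $n_2=m_2$, producing a function of $(n_1,m_1)$ multiplied by $\delta_{n_2,m_2}$, while the two terms carrying $\delta_{n_1,a}\delta_{m_1,a}$ collapse the $a$-sum and force $n_1=m_1$. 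Hence $\Gamma^{0}$ splits into a ``$1$-channel'' piece and a ``$2$-channel'' piece related by the interchange $1\!\leftrightarrow\!2$, and it suffices to evaluate one channel in full.

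For the $1$-channel I would carry out $\sum_a (-1)^{m_1+a}(\cdots)$ for each admissible offset $n_1-m_1$. Since $\Delta_{n_1 a}$ is supported on $a\in\{n_1-1,n_1\}$ and $\Delta^{*}_{n_1 a}$ on $a\in\{n_1-1,n_1,n_1+1\}$, the product $\Delta_{n_1 a}\Delta_{m_1 a}$ is nonzero only for $|n_1-m_1|\le 1$, whereas $\Delta^{*}_{n_1 a}\Delta^{*}_{m_1 a}$ is nonzero for $|n_1-m_1|\le 2$. Tracking the signs $(-1)^{m_1+a}$ carefully, the diagonal offset $n_1=m_1$ yields $2$ from the $\Delta\Delta$ product and $(2n_1+1)^2-(n_1+1)^2-n_1^2=2n_1^2+2n_1$ from the $\Delta^{*}\Delta^{*}$ product; the offset $n_1=m_1+2$ yields $-n_1(n_1-1)$ purely from $\Delta^{*}\Delta^{*}$ (only the overlap $a=n_1-1$ survives); and the offset $n_1=m_1-2$ yields $-(n_1+1)(n_1+2)$ (only $a=n_1+1$ survives).

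The step I expect to be the crux is the offset $|n_1-m_1|=1$. There both the $\Delta\Delta$ and the $\Delta^{*}\Delta^{*}$ products are separately nonzero — for $n_1=m_1+1$ they contribute $-2(m_1+1)$ and $+2(m_1+1)$ respectively — so one must verify that they cancel exactly. This cancellation is what guarantees that $\Gamma^{0}$ carries only the $\delta_{n_1,m_1}$ and $\delta_{n_1,m_1\pm2}$ selection rules with no spurious nearest-neighbour ($\delta_{n_1,m_1\pm1}$) coupling. Checking it requires playing the prefactor $\tfrac{2}{a+1}$ of the $\Delta\Delta$ term off against the wider support of $\Delta^{*}$, and is the one place where the two kinds of Kronecker combinations genuinely interact rather than simply adding; this is the main obstacle, while the remaining offsets are routine bookkeeping.

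Finally I would reassemble the pieces. Adding the $2$-channel by the $1\!\leftrightarrow\!2$ symmetry and collecting everything on the full diagonal $n_1=m_1,\,n_2=m_2$, the two constant contributions $+2$ together with the quadratic contributions $2n_1^2+2n_1$ and $2n_2^2+2n_2$ combine into $2(n_1^2+n_2^2+n_1+n_2+2)\,\delta_{n_1,m_1}\delta_{n_2,m_2}$, while the off-diagonal offsets supply $-n_1(n_1-1)\,\delta_{n_1,m_1+2}\delta_{n_2,m_2}$, $-(n_1+1)(n_1+2)\,\delta_{n_1,m_1-2}\delta_{n_2,m_2}$ and the two corresponding $n_2$ terms, reproducing exactly the claimed closed form.
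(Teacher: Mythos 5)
Your proposal is correct and follows essentially the same route as the paper's proof: both collapse the spectator sum via $\delta_{n_2,b}\delta_{m_2,b}$ (resp.\ $\delta_{n_1,a}\delta_{m_1,a}$), substitute the closed forms of $\Delta_{ij}$ and $\Delta^{*}_{ij}$, and track the signs $(-1)^{m_1+m_2+a+b}$ through the Kronecker-delta algebra; the cancellation you single out at offset $|n_1-m_1|=1$ (the $\mp 2(m_1+1)$ pair) is exactly what happens in the paper when its two partial sums, $[\delta_{n_1,m_1}+(n_1+1)\delta_{n_1,m_1-1}-n_1\delta_{n_1,m_1+1}]\delta_{n_2,m_2}$ (weighted by $2$) and the $\Delta^{*}\Delta^{*}$ result, are added. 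Your organization by offsets versus the paper's computation of the full symbolic sums before combining is only a presentational difference, and all your stated coefficients check out.
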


\begin{proof}
By considering Definition.\ref{kronecker}, we have:
\begin{small}
\begin{equation}
\begin{aligned}
&  \sum_{a,b}(-1)^{m_1+m_2+a+b}  \frac{1}{a+1} \Delta_{n_1 a} \Delta_{m_1 a}  \delta_{n_2,b} \delta_{m_2,b}\\
=&   \sum_{a,b}(-1)^{m_1+m_2+a+b}  \frac{1}{a+1}[ (n_1+1) \delta_{n_1, a} - n_1 \delta_{n_1-1, a}][(m_1+1) \delta_{m_1, a} - m_1 \delta_{m_1-1, a}] \delta_{n_2,b} \delta_{m_2,b} \\
=&   \sum_{a,b}(-1)^{m_1+m_2}  \frac{(-1)^a}{a+1}[ (n_1+1) \delta_{n_1, a} - n_1 \delta_{n_1-1, a}][(m_1+1) \delta_{m_1, a} - m_1 \delta_{m_1-1, a}] (-1)^b\delta_{n_2,b} \delta_{m_2,b} \\
=&   \sum_{a,b}(-1)^{m_1+m_2}  (-1)^{n_1}(\delta_{n_1, a} +  \delta_{n_1-1, a})[(m_1+1) \delta_{m_1, a}- m_1 \delta_{m_1-1, a}] (-1)^b\delta_{n_2,b} \delta_{m_2,b} \\
=&   \sum_{a}(-1)^{m_1+m_2}  (-1)^{n_1}[(m_1+1)\delta_{n_1, a} \delta_{m_1, a} + (m_1+1) \delta_{n_1-1, a} \delta_{m_1, a} - m_1 \delta_{m_1-1, a}\delta_{n_1, a} - m_1 \delta_{m_1-1, a} \delta_{n_1-1, a} ] (-1)^{m_2}\delta_{n_2,m_2} \\
=&    [(-1)^{2 n_1}(n_1+1) \delta_{n_1,m_1} + (-1)^{2 n_1-1} n_1 \delta_{n_1-1, m_1} - (-1)^{2 n_1+1} (n_1+1) \delta_{m_1-1, n_1} - (-1)^{2 n_1} n_1 \delta_{m_1, n_1} ] \delta_{n_2,m_2} \\
=&   [\delta_{n_1, m_1} + (n_1+1) \delta_{n_1,m_1-1} - n_1 \delta_{n_1, m_1+1} ]\delta_{n_2,m_2} 
\end{aligned}
\end{equation}
\end{small}
Similarly by considering the symmetric operation of $m_1, m_2 \leftrightarrow n_1, n_2$ and  $a \leftrightarrow b$, one can easily get:
\begin{equation}
\begin{aligned}
&  \sum_{a,b}(-1)^{m_1+m_2+a+b}  \frac{1}{b+1} \Delta_{n_2 b} \Delta_{m_2 b}  \delta_{n_1,a} \delta_{m_1,a}=  [ \delta_{n_2, m_2} + (n_2+1) \delta_{n_2,m_2-1} - n_2 \delta_{n_2, m_2+1} ]\delta_{n_1,m_1}
\end{aligned}
\end{equation}
and
\begin{small}
\begin{equation}
\begin{aligned}
& \sum_{a,b}(-1)^{m_1+m_2+a+b}   \Delta^{*}_{n_1 a} \Delta^{*}_{m_1 a} \delta_{n_2,b} \delta_{m_2,b}\\
=&\sum_{a,b}(-1)^{m_1+m_2+a+b}   [ (n_1+1) \delta_{n_1,a} - (n_1+1) \delta_{n_1,a-1} -n_1 \delta_{n_1-1,a} + n_1 \delta_{n_1-1,a-1} ]\\
& \quad \times [ (m_1+1) \delta_{m_1,a} - (m_1+1) \delta_{m_1,a-1} -m_1 \delta_{m_1-1,a} + m_1 \delta_{m_1-1,a-1} ] \delta_{n_2,b} \delta_{m_2,b}\\
=&[(2 n_1 +1)^2 \delta_{n_1 ,m_1} +  n_1 (2 n_1+1) \delta_{n_1, m_1+1} + (2n_1+1)(n_1+1) \delta_{n_1,m_1-1}-(n_1+1)(2 n_1+3) \delta_{n_1,m_1-1}-(n_1 +1)^2 \delta_{n_1,m_1}\\
&  \quad - (n_1 +1)(n_1+2)\delta_{n_1,m_1-2} -n_1 (2 n_1-1) \delta_{n_1,m_1+1}-n_1 (n_1 -1) \delta_{n_1,m_1+2} - n_1^2 \delta_{n_1,m_1}]\delta_{n_2,m_2}\\
=&(2 n_1^2 + 2 n_1) \delta_{n_1 ,m_1} +2 n_1 \delta_{n_1,m_1+1}  - 2 (n_1+1)\delta_{n_1,m_1-1} -n_1 (n_1 -1) \delta_{n_1,m_1+2}- (n_1 +1)(n_1+2)\delta_{n_1,m_1-2}]\delta_{n_2,m_2}
\end{aligned}
\end{equation}
\end{small}

Similarly we have:
\begin{small}
\begin{equation}
\begin{aligned}
& \sum_{a,b}(-1)^{m_1+m_2+a+b}   \Delta^{*}_{n_2,b} \Delta^{*}_{m_2,b} \delta_{n_1 a} \delta_{m_1 a}\\
=&[(2 n_2^2 + 2 n_2 )\delta_{n_2 ,m_2} + 2 n_2 \delta_{n_2,m_2+1}  - 2 (n_2+1)\delta_{n_2,m_2-1} -n_2 (n_2 -1) \delta_{n_2,m_2+2}- (n_2 +1)(n_2+2)\delta_{n_2,m_2-2}]\delta_{n_1,m_1}
\end{aligned}
\end{equation}
\end{small}
Then by combining these terms one can easily get $\Gamma^{0}_{m_1 m_2 n_1 n_2}$. 
\end{proof}

\subsection{When $i \ne j$ and $\Delta n_1 = 0$}
\begin{lemma}
Index Parity of Bessel functions \cite{abramowitz1948handbook}
\begin{equation}
J_{-n}(x) = (-1)^n J_n(x)
\end{equation}
\end{lemma}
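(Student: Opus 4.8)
The plan is to prove this standard relation directly from the power-series definition of the Bessel function of the first kind, valid for integer order $n$ (the identity relies crucially on $n\in\mathbb{Z}$, since for non-integer order $J_{-\nu}$ and $J_{\nu}$ are linearly independent solutions of Bessel's equation). Recall that
\begin{equation}
J_n(x) = \sum_{k=0}^{\infty} \frac{(-1)^k}{k!\,\Gamma(n+k+1)} \left(\frac{x}{2}\right)^{n+2k}.
\end{equation}
First I would write down the analogous series for $J_{-n}(x)$ by replacing $n$ with $-n$, obtaining a sum whose coefficients involve $1/\Gamma(-n+k+1)$ and powers $\left(x/2\right)^{-n+2k}$.

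The key observation is that the reciprocal Gamma function $1/\Gamma(z)$ is entire and vanishes at the non-positive integers $z=0,-1,-2,\ldots$. Since $-n+k+1 \le 0$ precisely when $k \le n-1$, the first $n$ terms of the series for $J_{-n}(x)$ vanish identically, so the summation effectively starts at $k=n$. This truncation removes all negative powers of $x$ and is exactly what makes the negative-index Bessel function well-behaved for integer $n$.

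Next I would re-index the surviving sum by setting $m = k-n$, which turns $\Gamma(-n+k+1)$ into $\Gamma(m+1)=m!$ and $k!$ into $(m+n)!$, while the power becomes $\left(x/2\right)^{n+2m}$. Factoring the sign $(-1)^{m+n}=(-1)^n(-1)^m$ out of the summand then gives
\begin{equation}
J_{-n}(x) = (-1)^n \sum_{m=0}^{\infty} \frac{(-1)^m}{m!\,(m+n)!}\left(\frac{x}{2}\right)^{n+2m} = (-1)^n J_n(x),
\end{equation}
which is the claim.

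There is no genuine obstacle here; the only point requiring care is the vanishing of the leading $n$ terms through the poles of $\Gamma$, which I would state explicitly rather than leave implicit. As a one-line alternative, one could instead invoke the generating function $e^{(x/2)(t-1/t)}=\sum_{n}J_n(x)\,t^n$ and send $t\mapsto -1/t$, then match coefficients of $t^{-n}$ on both sides to read off $J_{-n}(x)=(-1)^n J_n(x)$; this route avoids the series bookkeeping at the cost of taking the generating-function identity as input.
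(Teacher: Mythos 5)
Your proof is correct. Note, however, that the paper itself never proves this lemma: it is one of the standard facts the supplement explicitly declines to re-derive ("not all the proofs of the lemmas are complete but references will be offered in that case"), citing Abramowitz and Stegun instead. So there is no paper proof to compare against; your series argument fills that gap. The derivation you give is the standard and fully rigorous one: the essential point is exactly the one you isolate, namely that $1/\Gamma(-n+k+1)$ vanishes for $k=0,\dots,n-1$ because the Gamma function has poles at the non-positive integers, so the series for $J_{-n}$ truncates to start at $k=n$; the re-indexing $m=k-n$ then cleanly produces the factor $(-1)^n$ since $\Gamma(m+1)=m!$ and $(x/2)^{-n+2k}=(x/2)^{n+2m}$. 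Your remark that the identity genuinely requires $n\in\mathbb{Z}$ (for non-integer $\nu$ the functions $J_{\pm\nu}$ are independent solutions) is a worthwhile caveat that the bare statement of the lemma omits. The generating-function route you sketch, substituting $t\mapsto -1/t$ in $e^{(x/2)(t-1/t)}=\sum_n J_n(x)\,t^n$ and matching coefficients, is equally valid and shorter, at the cost of assuming the generating-function identity, which itself is usually established from the same power series; either version would serve as a self-contained proof here.
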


\begin{proposition}\label{prop_inej}
There is no contribution to $\delta E_{\boldsymbol{q}}$ from the terms with $i \ne j$ and $\Delta n_1 = 0$.
\end{proposition}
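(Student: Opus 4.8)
The plan is to exploit the complete factorization of the angular integral $\Theta^{(0)}\left(\tilde{q}_1,\tilde{q}_2,\tilde{p}_1,\tilde{p}_2\right)$ into two decoupled single-sector integrals, and then to show that each factor vanishes by a Bessel-function parity argument. Taking $i=1$, $j=2$ without loss of generality, the $i\neq j$ piece of the integrand is the cross term of the squared bracket, carrying the four-cosine factor $\cos(\theta_{\tilde{q}_1})\cos(\theta_{\tilde{p}_1})\cos(\theta_{\tilde{q}_2})\cos(\theta_{\tilde{p}_2})$. Since $\Delta n_1=0$, the inter-sector phase $e^{i\Delta n_1(\theta_{\tilde{p}_1}-\theta_{\tilde{p}_2})}$ is unity, so both the kernel $e^{-\frac{i}{2}[\,|\tilde{q}_1||\tilde{p}_1|\sin(\theta_{\tilde{q}_1}-\theta_{\tilde{p}_1})+|\tilde{q}_2||\tilde{p}_2|\sin(\theta_{\tilde{q}_2}-\theta_{\tilde{p}_2})\,]}$ and the cosine prefactor split into a product over the pairs $(\theta_{\tilde{q}_1},\theta_{\tilde{p}_1})$ and $(\theta_{\tilde{q}_2},\theta_{\tilde{p}_2})$. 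The radial magnitudes $|\tilde{\boldsymbol{q}}_a||\tilde{\boldsymbol{p}}_a|$ carry no angular dependence and factor out, so that $\Theta^{(0)}$ reduces to a product $I_1 I_2$ of single-sector integrals
\begin{equation}
I_a=\int_{-\pi}^{\pi}\!\int_{-\pi}^{\pi} d\theta_{\tilde{q}_a}\,d\theta_{\tilde{p}_a}\; e^{-\frac{i}{2}|\tilde{q}_a||\tilde{p}_a|\sin(\theta_{\tilde{q}_a}-\theta_{\tilde{p}_a})}\cos(\theta_{\tilde{q}_a})\cos(\theta_{\tilde{p}_a}).
\end{equation}

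Next I would evaluate one sector integral, say $I_1$. Writing $\cos(\theta_{\tilde{q}_1})\cos(\theta_{\tilde{p}_1})=\tfrac14\sum_{s,t\in\{\pm1\}} e^{is\theta_{\tilde{q}_1}}e^{it\theta_{\tilde{p}_1}}$ and changing to $\phi=\theta_{\tilde{q}_1}-\theta_{\tilde{p}_1}$, the $\theta_{\tilde{p}_1}$-integration enforces $t=-s$ through a factor $2\pi\,\delta_{s+t,0}$, while the $\phi$-integration is precisely the Bessel-integral machinery underlying Theorem \ref{besseltheorem}, returning $2\pi J_s\!\left(\tfrac12|\tilde{q}_1||\tilde{p}_1|\right)$. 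Only the two terms $(s,t)=(+1,-1)$ and $(-1,+1)$ survive, and they contribute in proportion to $J_1\!\left(\tfrac12|\tilde{q}_1||\tilde{p}_1|\right)+J_{-1}\!\left(\tfrac12|\tilde{q}_1||\tilde{p}_1|\right)$.

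The crux is the Bessel index-parity relation $J_{-n}(x)=(-1)^n J_n(x)$ stated just above: at the odd order $n=1$ it gives $J_{-1}=-J_1$, so the two surviving contributions cancel exactly and $I_1=0$, and identically $I_2=0$. Because the $i\neq j$, $\Delta n_1=0$ part of $\Theta^{(0)}$ is proportional to $I_1 I_2$, it vanishes and can make no contribution to $\delta\tilde{E}_{\boldsymbol{q}}$. I expect the only delicate points to be bookkeeping rather than analysis: verifying that $\Delta n_1=0$ genuinely removes every inter-sector phase so that the factorization is exact, and contrasting this cancellation with the $i=j$ case of the previous subsection, where each sector carries a $\cos^2$ factor and hence selects the \emph{even} orders $J_0,J_{\pm2}$ with $J_{-2}=+J_2$, which add instead of cancelling and yield the nonzero $\Gamma^{0}$. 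The underlying mechanism is thus simply that a single power of cosine in each sector projects onto odd-order Bessel functions, which are odd under reflection of the index.
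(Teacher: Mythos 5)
Your proof is correct and is essentially the paper's own argument: the paper likewise expands the four cosines into exponentials, uses the angular Bessel integral (Theorem~\ref{besseltheorem}) to reduce the surviving phase combinations to products of $J_{\pm 1}\bigl(\tfrac{1}{2}|\tilde{\boldsymbol{q}}_a||\tilde{\boldsymbol{p}}_a|\bigr)$, and concludes from $J_{-1}(x)=-J_{1}(x)$ that the sum vanishes. Your presentation merely factorizes the four-dimensional angular integral into the two single-sector integrals $I_1 I_2$ before invoking the cancellation, whereas the paper keeps the product together and observes that the resulting four-term sum is $\bigl[J_1+J_{-1}\bigr]\bigl[J_1+J_{-1}\bigr]=0$; the two formulations are identical in content.
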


\begin{proof}
The angular integral can be written as:
\begin{equation}
\begin{aligned}
&  \iiiint_{-\pi}^{\pi}   d \theta_{\tilde{q}_1} d \theta_{\tilde{q}_2} d \theta_{\tilde{p}_1} d \theta_{\tilde{p}_2} e^{-\frac{i}{2}[|\tilde{q}_1| |\tilde{p}_1| \sin(\theta_{\tilde{q}_1} - \theta_{\tilde{p}_1})+ |\tilde{q}_2| | \tilde{p}_2|  \sin(\theta_{\tilde{q}_2} - \theta_{\tilde{p}_2})]} \\
& \quad \times \frac{1}{2^4}  |\tilde{\boldsymbol{q}}_1| |\tilde{\boldsymbol{p}}_1 | |\tilde{\boldsymbol{q}}_2| |\tilde{\boldsymbol{p}}_2 | \left(e^{i \theta_{\tilde{q}_1}}+e^{-i  \theta_{\tilde{q}_1}} \right) \left( e^{i  \theta_{\tilde{p}_1}}+e^{-i \theta_{\tilde{p}_1}} \right) \left( e^{i  \theta_{\tilde{q}_2}}+e^{-i  \theta_{\tilde{q}_2}} \right) \left( e^{i \theta_{\tilde{p}_2}}+e^{-i  \theta_{\tilde{p}_2}} \right)  \\
=&  \iiiint_{-\pi}^{\pi}   d \theta_{\tilde{q}_1} d \theta_{\tilde{q}_2} d \theta_{\tilde{p}_1} d \theta_{\tilde{p}_2} e^{-\frac{i}{2} \left[|\tilde{q}_1| |\tilde{p}_1| \sin(\theta_{\tilde{q}_1} - \theta_{\tilde{p}_1})+ |\tilde{q}_2| | \tilde{p}_2|  \sin(\theta_{\tilde{q}_2} - \theta_{\tilde{p}_2}) \right]} \\
& \quad  \times \frac{1}{2^4} |\tilde{\boldsymbol{q}}_1| |\tilde{\boldsymbol{p}}_1 | |\tilde{\boldsymbol{q}}_2| |\tilde{\boldsymbol{p}}_2 | \left[e^{i  (\theta_{\tilde{q}_1}-\theta_{\tilde{p}_1})}+e^{-i ( \theta_{\tilde{q}_1}-\theta_{\tilde{p}_1})} \right] \left[ e^{i  (\theta_{\tilde{q}_2}-\theta_{\tilde{p}_2})}+e^{-i ( \theta_{\tilde{q}_2}-\theta_{\tilde{p}_2})}  \right]\\
=&  \pi^4 |\tilde{\boldsymbol{q}}_1| |\tilde{\boldsymbol{p}}_1 | |\tilde{\boldsymbol{q}}_2| |\tilde{\boldsymbol{p}}_2 | \times \left[ J_1 \left( \frac{|\tilde{\boldsymbol{q}}_1| |\tilde{\boldsymbol{p}}_1 |}{2} \right) J_1 \left( \frac{|\tilde{\boldsymbol{q}}_2| |\tilde{\boldsymbol{p}}_2 |}{2} \right)+J_1 \left( \frac{|\tilde{\boldsymbol{q}}_1| |\tilde{\boldsymbol{p}}_1 |}{2} \right) J_{-1} \left( \frac{|\tilde{\boldsymbol{q}}_2| |\tilde{\boldsymbol{p}}_2 |}{2} \right) \right.\\
& \left. \quad +J_{-1} \left(\frac{|\tilde{\boldsymbol{q}}_1| |\tilde{\boldsymbol{p}}_1 |}{2} \right) J_1 \left( \frac{|\tilde{\boldsymbol{q}}_2| |\tilde{\boldsymbol{p}}_2 |}{2} \right)+J_{-1} \left( \frac{|\tilde{\boldsymbol{q}}_1| |\tilde{\boldsymbol{p}}_1 |}{2} \right) J_{-1} \left( \frac{|\tilde{\boldsymbol{q}}_2| |\tilde{\boldsymbol{p}}_2|}{2}  \right) \right] =0  \qedhere
\end{aligned}
\end{equation}
\end{proof}

\subsection{When $i \ne j$ and $\Delta n_1 = \pm 2$}

By observing the integral one can easily find that the case with $\Delta n_1 = - 2$ and the one with $\Delta n_1 = 2$ are completely symmetric. So after one of them is solved, the other one can be derived from exchanging the indices. Without losing generality, we can calculate the angular integral $\Theta^{(-2)}(\tilde{Q}_1,\tilde{Q}_2,\tilde{P}_1,\tilde{P}_2)$ with $i \ne j$ and $\Delta n_1 = -2$ first, given by:
\begin{equation}
\begin{aligned}
&  \iiiint_{-\pi}^{\pi}   d \theta_{\tilde{q}_1} d \theta_{\tilde{q}_2} d \theta_{\tilde{p}_1} d \theta_{\tilde{p}_2} e^{-\frac{i}{2}[|\tilde{q}_1| |\tilde{p}_1| \sin(\theta_{\tilde{q}_1} - \theta_{\tilde{p}_1})+ |\tilde{q}_2| | \tilde{p}_2|  \sin(\theta_{\tilde{q}_2} - \theta_{\tilde{p}_2})]} \times \frac{1}{2^4} \times e^{-2 i (\theta_{\tilde{p}_1} - \theta_{\tilde{p}_2})}\\
& \quad  \times 2 |\tilde{\boldsymbol{q}}_1| |\tilde{\boldsymbol{p}}_1 | |\tilde{\boldsymbol{q}}_2| |\tilde{\boldsymbol{p}}_2 | \left(e^{i \theta_{\tilde{q}_1}}+e^{-i  \theta_{\tilde{q}_1}} \right) \left( e^{i  \theta_{\tilde{p}_1}}+e^{-i \theta_{\tilde{p}_1}} \right) \left( e^{i  \theta_{\tilde{q}_2}}+e^{-i  \theta_{\tilde{q}_2}} \right) \left( e^{i \theta_{\tilde{p}_2}}+e^{-i  \theta_{\tilde{p}_2}} \right)  \\
=& \frac{1}{2^3} \iiiint_{-\pi}^{\pi}   d \theta_{\tilde{q}_1} d \theta_{\tilde{q}_2} d \theta_{\tilde{p}_1} d \theta_{\tilde{p}_2} e^{-\frac{i}{2} \left[|\tilde{q}_1| |\tilde{p}_1| \sin(\theta_{\tilde{q}_1} - \theta_{\tilde{p}_1})+ |\tilde{q}_2| | \tilde{p}_2|  \sin(\theta_{\tilde{q}_2} - \theta_{\tilde{p}_2}) \right]}  |\tilde{\boldsymbol{q}}_1| |\tilde{\boldsymbol{p}}_1 | |\tilde{\boldsymbol{q}}_2| |\tilde{\boldsymbol{p}}_2 | \left[e^{i  (\theta_{\tilde{q}_1}-\theta_{\tilde{p}_1})} e^{-i ( \theta_{\tilde{q}_2}-\theta_{\tilde{p}_2})}  \right]\\
=& 2 \pi^4 |\tilde{\boldsymbol{q}}_1| |\tilde{\boldsymbol{p}}_1 | |\tilde{\boldsymbol{q}}_2| |\tilde{\boldsymbol{p}}_2 | \times  
J_1 \left( \frac{|\tilde{\boldsymbol{q}}_1| |\tilde{\boldsymbol{p}}_1 |}{2} \right)
J_1 \left( \frac{|\tilde{\boldsymbol{q}}_2| |\tilde{\boldsymbol{p}}_2 |}{2} \right) = 2 \pi^4 \frac{\tilde{Q}_1 \tilde{P}_1 \tilde{Q}_2 \tilde{P}_2 }{2^4}  \times \ _0\mathbf{F}_1 (;2;-\frac{\tilde{Q}_1 \tilde{P}_1}{16} ) \ _0\mathbf{F}_1 (;2;-\frac{\tilde{Q}_2 \tilde{P}_2}{16} ) 
\end{aligned}
\end{equation}

\begin{lemma}\label{lag_int}
Generalized Laguerre polynomial integral
\begin{equation}
\begin{aligned}
\int_0^{\infty} \frac{d \tilde{P}_1}{2} \left[ L_{a}^{(1)}\left(\frac{\tilde{P}_1}{2}\right)
L_{n_1}^{(-2)}\left(\frac{\tilde{P}_1}{2}\right)
e^{-\frac{1}{2} \tilde{P}_1 }\right] = \delta_{n_1,0} -\delta_{n_1,1} + \delta_{n_1 -2, a} - \delta_{n_1-1, a} 
\end{aligned}
\end{equation}
\end{lemma}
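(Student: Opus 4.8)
The plan is to reduce the integral to a textbook orthogonality statement for the ordinary Laguerre family $L^{(0)}$. First I would substitute $x=\tilde P_1/2$, which turns the left-hand side into
\[
I(a,n_1)=\int_0^\infty e^{-x}\,L_a^{(1)}(x)\,L_{n_1}^{(-2)}(x)\,dx .
\]
The obstacle is immediate: the orthogonality relation available (Lemma~\ref{Orthogonality_la}) carries the weight $x^\alpha e^{-x}$, whereas here the weight is $x^0 e^{-x}$ while the two factors sit in the $\alpha=1$ and $\alpha=-2$ families. So the whole game is to re-expand both polynomials in the $\alpha=0$ basis, for which $\int_0^\infty e^{-x}L_k^{(0)}L_m^{(0)}\,dx=\delta_{k,m}$.

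Second, I would use only the single-step relation $L_n^{(\beta-1)}=L_n^{(\beta)}-L_{n-1}^{(\beta)}$ contained in Lemma~\ref{Recurrence_la}. Iterating it upward gives $L_a^{(1)}(x)=\sum_{k=0}^{a}L_k^{(0)}(x)$, while applying it twice downward gives
\[
L_{n_1}^{(-2)}(x)=L_{n_1}^{(0)}(x)-2L_{n_1-1}^{(0)}(x)+L_{n_1-2}^{(0)}(x),
\]
with the convention $L_m^{(0)}\equiv 0$ for $m<0$. I want to flag that one must retain the minus sign of the single-step relation here: a naive ``all-plus'' summed form would produce the wrong polynomial for a negative upper index, since one checks directly that $L_1^{(-2)}(x)=-x-1$, which forces the coefficients $1,-2,1$ rather than $1,2,1$.

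Third, inserting both expansions and applying the $\alpha=0$ orthogonality term by term collapses the double sum to
\[
I(a,n_1)=\sum_{k=0}^{a}\bigl(\delta_{k,n_1}-2\delta_{k,n_1-1}+\delta_{k,n_1-2}\bigr).
\]
Writing $f(m)\equiv\sum_{k=0}^a\delta_{k,m}=\mathbf{1}[\,0\le m\le a\,]$, the right-hand side is the second finite difference $f(n_1)-2f(n_1-1)+f(n_1-2)$. The cleanest finish is to telescope: since $f(m)-f(m-1)=\delta_{m,0}-\delta_{m,a+1}$ for $a\ge 0$, one obtains $I=[\delta_{n_1,0}-\delta_{n_1,a+1}]-[\delta_{n_1,1}-\delta_{n_1,a+2}]$, and relabelling $\delta_{n_1,a+1}=\delta_{n_1-1,a}$ and $\delta_{n_1,a+2}=\delta_{n_1-2,a}$ reproduces exactly the claimed $\delta_{n_1,0}-\delta_{n_1,1}+\delta_{n_1-2,a}-\delta_{n_1-1,a}$ of Lemma~\ref{lag_int}.

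I expect the main obstacle to be bookkeeping rather than anything analytic: getting the negative-upper-index expansion right (respecting both the sign and the $L_m^{(0)}=0$ convention at the lower boundary), and then handling the edge cases $n_1\in\{0,1,2\}$ and $n_1\in\{a,a+1,a+2\}$ where several Kronecker deltas can collide. The finite-difference identity for $f$ is precisely what disposes of all these boundary cases simultaneously, so I would lean on it rather than carry out a term-by-term case verification.
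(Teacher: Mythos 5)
Your proof is correct; I checked it against direct evaluation in several cases (e.g.\ $a=0$, $n_1=1$ gives $-2$ on both sides, and $a=1$, $n_1=2$ gives $-1$). Note that the paper never actually proves Lemma~\ref{lag_int}: it is one of the supplementary lemmas stated without an argument, so there is no official proof to compare against. What you have written is exactly the proof that the paper's own toolkit implies, namely the single-step relation from Lemma~\ref{Recurrence_la} combined with the $\alpha=0$ orthogonality of Lemma~\ref{Orthogonality_la}, and it mirrors the style the authors use for $\Delta_{ij}$ and $\Delta^{*}_{ij}$ in Definition~\ref{kronecker}, where polynomials are likewise shuffled between adjacent families so that the integral collapses to Kronecker deltas. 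Two points in your write-up deserve emphasis. First, your warning about the sign is not hypothetical: the iterated form quoted in Lemma~\ref{Recurrence_la}, $L_{n}^{(\alpha)}=\sum_{j=0}^{k}\binom{k}{j}L_{n-j}^{(\alpha+k)}$, is missing the alternating factor $(-1)^{j}$ (it contradicts its own first equality applied twice), so using it literally would give the coefficients $(1,2,1)$ instead of the correct $(1,-2,1)$ and hence a wrong lemma; your sanity check against $L_{1}^{(-2)}(x)=-x-1$ is precisely the right safeguard. Second, your telescoping of the indicator $f(m)=\mathbf{1}[\,0\le m\le a\,]$ via $f(m)-f(m-1)=\delta_{m,0}-\delta_{m,a+1}$ handles all delta collisions ($n_1\in\{0,1,2\}$ or $n_1\in\{a,a+1,a+2\}$) uniformly, which is cleaner than the case-by-case bookkeeping the statement otherwise invites, and it lands exactly on the claimed right-hand side after the relabelings $\delta_{n_1,a+1}=\delta_{n_1-1,a}$ and $\delta_{n_1,a+2}=\delta_{n_1-2,a}$.
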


\begin{proposition}
The tensors describing the contribution to $\delta E_{\boldsymbol{q}}$ from the non-diagonal terms is given by
\begin{small}
\begin{equation}
\begin{aligned}
\Gamma^{-}_{m_1 m_2 n_1 n_2}=&   \sqrt{\frac{n_1 ! n_2! }{(n_1-2)! (n_2+2)!}} \times(n_2 +1)(n_2 + 2) \left( \delta_{m_2, n_2}  - \delta_{m_2, n_2+2} \right)\times (\delta_{m_1, n_1} - \delta_{m_1, n_1-2} )\\
\Gamma^{+}_{m_1 m_2 n_1 n_2}=& \sqrt{\frac{n_1 ! n_2! }{(n_1+2)! (n_2-2)!}} \times (n_1 +1)(n_1 + 2)\left( \delta_{m_1, n_1}  -\delta_{m_1, n_1+2} \right)\times (\delta_{m_2, n_2} - \delta_{m_2, n_2-2}  )
\end{aligned}
\end{equation}
\end{small}
\end{proposition}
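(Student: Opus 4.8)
The plan is to evaluate the $\Delta n_1=-2$ sector directly, exactly paralleling the diagonal ($\Delta n_1=0$) computation carried out above, and then to read off $\Gamma^+$ from the $1\leftrightarrow2$ symmetry between the $\Delta n_1=\mp2$ integrands already noted. First I would substitute the angular integral $\Theta^{(-2)}=2\pi^4\,2^{-4}\,\tilde Q_1\tilde P_1\tilde Q_2\tilde P_2\,{}_{0}\mathbf{F}_1(;2;-\tilde Q_1\tilde P_1/16)\,{}_{0}\mathbf{F}_1(;2;-\tilde Q_2\tilde P_2/16)$ just obtained into the radial master formula (Eq.~\ref{deltaEq1}), together with the Laguerre--Gaussian expansion of the interaction (coefficients $c^{m_1m_2}$, polynomials $L^{(0)}_{m_1},L^{(0)}_{m_2}$) and the $i=-2$ piece of the Fourier-transformed reduced structure factor, whose radial part carries the normalization $\omega^{n_1n_2}_{-2}=\sqrt{n_1!\,n_2!/[(n_1-2)!\,(n_2+2)!]}$, the polynomials $L^{(-2)}_{n_1}(\tilde P_1/2)$ and $L^{(+2)}_{n_2}(\tilde P_2/2)$, and the sign $(-1)^{n_1+n_2}$. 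The factor $\omega^{n_1n_2}_{-2}$ is already precisely the square-root prefactor appearing in the target $\Gamma^-$.

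Next I would linearize both hypergeometric factors with the Hardy--Hille corollary at $\alpha=1,\ t=-1$ (Lemma~\ref{Hardy-Hille}), writing each ${}_{0}\mathbf{F}_1(;2;-\tilde Q_j\tilde P_j/16)$ as $4\,e^{-(\tilde Q_j+\tilde P_j)/4}\sum_{c}\tfrac{(-1)^c}{c+1}L^{(1)}_c(\tilde Q_j/2)L^{(1)}_c(\tilde P_j/2)$. After collecting the Gaussians, the integrand factorizes into four one-dimensional radial integrals. The organizing observation is power counting: the angular factor $(\tilde{\boldsymbol q}_1/\tilde{\boldsymbol q}_2)^{-2}$ of the structure factor contributes a radial weight $\tilde P_2/\tilde P_1$, so that after multiplying by the $\tilde P_1\tilde P_2$ in $\Theta^{(-2)}$ the net weights are $\tilde P_1^{0}$ in the first relative channel but $\tilde P_2^{2}$ in the second. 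This asymmetry is exactly what makes the two index pairs of $\Gamma^-$ enter so differently.

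I would then evaluate the four integrals. The $\tilde Q_1$ and $\tilde Q_2$ integrals collapse to the symbols $\Delta_{m_1a}$ and $\Delta_{m_2b}$ of Definition~\ref{kronecker}; the $\tilde P_1$ integral (with $L^{(1)}_a$ and $L^{(-2)}_{n_1}$ and no net weight) is exactly Lemma~\ref{lag_int}, giving $\delta_{n_1,0}-\delta_{n_1,1}+\delta_{n_1-2,a}-\delta_{n_1-1,a}$; and the $\tilde P_2$ integral (weight $\tilde P_2^2$, with $L^{(1)}_b$ and $L^{(+2)}_{n_2}$) is reduced by rewriting $L^{(1)}_b=L^{(2)}_b-L^{(2)}_{b-1}$ via Lemma~\ref{Recurrence_la} and then applying the $\alpha=2$ orthogonality of Lemma~\ref{Orthogonality_la}, which supplies the factor $(n_2+1)(n_2+2)$ together with $\delta_{b,n_2}-\delta_{b-1,n_2}$. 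Collapsing the $a,b$ sums against these deltas with the Hardy--Hille weights $\tfrac{(-1)^a}{a+1}$ and $\tfrac{(-1)^b}{b+1}$, and discarding $\delta_{n_1,0},\delta_{n_1,1}$ because $n_1$ is odd and no smaller than $3$ whenever $\omega^{n_1n_2}_{-2}\neq0$, telescopes the first channel to $\delta_{m_1,n_1}-\delta_{m_1,n_1-2}$ and the second to $(n_2+1)(n_2+2)(\delta_{m_2,n_2}-\delta_{m_2,n_2+2})$. The overall constant $-1/(2^6\times3\eta\pi^2)$ then reproduces itself as in the diagonal case, so the tensor part is exactly the claimed $\Gamma^-$; interchanging the two relative channels yields $\Gamma^+$.

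The step I expect to be the main obstacle is the $\tilde P_2$ integral with the raised index $L^{(+2)}_{n_2}$ and the extra weight $\tilde P_2^2$: unlike the $\tilde P_1$ integral it is not covered by the quoted lemma, and it alone must generate both the $(n_2+1)(n_2+2)$ prefactor and the sign pattern $\delta_{m_2,n_2}-\delta_{m_2,n_2+2}$ of $\Gamma^-$. Making the recurrence reduction of $L^{(+2)}_{n_2}$ mesh cleanly with orthogonality, while simultaneously tracking the accumulated signs $(-1)^{m_1+m_2+n_1+n_2+a+b}$ across all four channels and through the telescoping sums, is where the bookkeeping is most delicate and where sign or index-shift errors are most likely to creep in.
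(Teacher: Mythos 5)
Your proposal reproduces the paper's own proof essentially step by step: the same substitution of $\Theta^{(-2)}$ into the radial master formula, the same Hardy--Hille linearization at $\alpha=1$, $t=-1$, the same four one-dimensional integrals (with the $\tilde Q$ integrals collapsing to $\Delta_{m_1 a}$, $\Delta_{m_2 b}$, the $\tilde P_1$ integral given by the quoted lemma, and the $\tilde P_2$ integral handled exactly as you describe via $L_b^{(1)}=L_b^{(2)}-L_{b-1}^{(2)}$ and $\alpha=2$ orthogonality, yielding $(n_2+1)(n_2+2)(\delta_{b,n_2}-\delta_{b-1,n_2})$), the same telescoping of the $a,b$ sums with the small-$n_1$ deltas discarded, and the same $n_1\leftrightarrow n_2$, $m_1\leftrightarrow m_2$ exchange to obtain $\Gamma^{+}$. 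The approach is correct and matches the paper's; even the step you flag as the main obstacle is resolved in the paper by precisely the recurrence-plus-orthogonality reduction you propose.
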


\begin{proof}
The contribution to the graviton mode gap from these terms can be calculated by:
\begin{equation}
\begin{aligned}
\delta \tilde{E}^{(-2)}_{\boldsymbol{q}}=&-\frac{1}{6 \eta}\int \frac{d \tilde{Q}_1  d \tilde{Q}_2}{4 \times (2 \pi)^4}  \int\frac{d \tilde{P}_1  d \tilde{P}_2}{4 \times 2^4 \times 4 \pi^2}  \underbrace{ \Theta^{(-2)}(\tilde{Q}_1,\tilde{Q}_2,\tilde{P}_1,\tilde{P}_2)\cdot V_{\tilde{Q}_1,\tilde{Q}_2} \cdot \left|\bar{S}_{\boldsymbol{\tilde{p}}_1,\boldsymbol{\tilde{p}}_2} \right|}_{\text{Expanded by generalized L-G polynomials}}
\end{aligned}
\end{equation}
where these three functions can all be expanded by generalized Laguerre polynomials, which gives:
\begin{small}
\begin{equation}
\begin{aligned}
&\frac{2 \pi^4}{2^4} \left[\sum_{a=0}^{\infty}(-1)^{a}   \frac{4 \tilde{Q}_1 \tilde{P}_1}{a+1} 
L_{a}^{(1)}\left(\frac{\tilde{Q}_1}{2}\right) 
L_{a}^{(1)}\left(\frac{\tilde{P}_1}{2}\right) e^{-\frac{1}{4} \left(\tilde{Q}_1 + \tilde{P}_1\right)} \right] \left[\sum_{b=0}^{\infty}(-1)^{b}   \frac{4 \tilde{Q}_2 \tilde{P}_2}{b+1}   
L_{b}^{(1)}\left(\frac{\tilde{Q}_2}{2}\right) 
L_{b}^{(1)}\left(\frac{\tilde{P}_2}{2}\right) e^{-\frac{1}{4} \left(\tilde{Q}_2 + \tilde{P}_2 \right)}\right]\\
& \left[ \sum_{m_1, m_2=0}^{\infty} c^{m_1 m_2}
L_{m_1}^{(0)}\left(\frac{\tilde{Q}_1}{2}\right) 
L_{m_2}^{(0)}\left(\frac{\tilde{Q}_2}{2}\right) 
 e^{-\frac{1}{4} \left(\tilde{Q}_1 +\tilde{Q}_2 \right)}\right]
\left[ \sum_{n_1, n_2=0}^{\infty}(-1)^{n_1 +n_2}  d_{(-2)}^{n_1 n_2} \omega_{-2}^{n_1 n_2} \frac{ \tilde{P}_2}{\tilde{P}_1}  
L_{n_1}^{(-2)}\left(\frac{\tilde{P}_1}{2}\right) 
L_{n_2}^{(2)}\left(\frac{\tilde{P}_2}{2}\right) e^{-\frac{1}{4} \left( \tilde{P}_1+ \tilde{P}_2 \right)}\right]
\end{aligned}
\end{equation}
\end{small}
Then by considering Lemma.\ref{ortho_condition} and \ref{lag_int}, we have
\begin{small}
\begin{equation}
\begin{aligned}
&-\frac{1}{6 \eta}
\frac{2\pi^4 \times 4^2 }{4 \times (2 \pi)^4 \times4 \times 4 \pi^2} 
\sum_{a=0}^{\infty} 
\sum_{m_1, m_2=0}^{\infty} 
\sum_{b=0}^{\infty} 
\sum_{n_1, n_2=0}^{\infty} 
c^{m_1 m_2}     
d_{-2}^{n_1 n_2} (-1)^{a+b+n_1+n_2} \omega_{-2}^{n_1 n_2} 
\frac{1}{a+1} \frac{1}{b+1}\\
& \times\int_0^{\infty} \frac{d \tilde{Q}_1}{2} \left[  \frac{\tilde{Q}_1}{2} 
L_{a}^{(1)}\left(\frac{\tilde{Q}_1}{2}\right) 
L_{m_1}^{(0)}\left(\frac{\tilde{Q}_1}{2}\right)
e^{-\frac{1}{2} \tilde{Q}_1} \right] \times \int_0^{\infty} \frac{d \tilde{Q}_2}{2} \left[ \frac{\tilde{Q}_2 }{2} 
L_{b}^{(1)}\left(\frac{\tilde{Q}_2}{2}\right)
L_{m_2}^{(0)} \left(\frac{\tilde{Q}_2}{2}\right) 
e^{-\frac{1}{2} \tilde{Q}_2 }\right]\\
& \times \int_0^{\infty} \frac{d \tilde{P}_1}{2} \left[  \frac{2}{\tilde{P}_1} \frac{\tilde{P}_1}{2} L_{a}^{(1)}\left(\frac{\tilde{P}_1}{2}\right)
L_{n_1}^{(-2)}\left(\frac{\tilde{P}_1}{2}\right)
e^{-\frac{1}{2} \tilde{P}_1 }\right] \times \int_0^{\infty} \frac{d \tilde{P}_2}{2} \left[  \left(\frac{\tilde{P}_2}{2} \right)^2 
L_{b}^{(1)}\left(\frac{\tilde{P}_2}{2}\right)
L_{n_2}^{(2)}\left(\frac{\tilde{P}_2}{2}\right) 
e^{-\frac{1}{2} \tilde{P}_2 }\right]\\
=&-\frac{1}{2^{6}\times3 \eta \pi^2}
\sum_{a=0}^{\infty} 
\sum_{m_1, m_2=0}^{\infty} 
\sum_{b=0}^{\infty} 
\sum_{n_1, n_2=0}^{\infty} 
c^{m_1 m_2}     
d_{-2}^{n_1 n_2} (-1)^{a+b+n_1+n_2} \omega_{-2}^{n_1 n_2} 
\frac{1}{a+1} \frac{1}{b+1}\\
& \quad \times \left[ (m_1 + 1) \delta_{m_1,a} - m_1 \delta_{m-1, a} \right]
\times \left[ (m_2 + 1) \delta_{m_2, b} - m_2 \delta_{m_2-1, b} \right]\\
& \quad \times \left[\delta_{n_1,0} -\delta_{n_1,1} + \delta_{n_1 -2, a} - \delta_{n_1-1, a} \right]
\times \left[ (n_2 + 1)(n_2 + 2) (\delta_{n_2,b} - \delta_{n_2+1, b}) \right]\\
=&-\frac{1}{2^{6}\times3 \eta \pi^2}
\sum_{m_1, m_2=0}^{\infty} 
\sum_{n_1, n_2=0}^{\infty} 
c^{m_1 m_2}     
d_{-2}^{n_1 n_2} \left [ \omega_{-2}^{n_1 n_2} (n_2+1)(n_2+1) \right.\\
&\times \left. (\delta_{n_1,0} +\delta_{n_1,1} + \delta_{m_1,n_1-2}  - \delta_{m_1,n_1-1} + \delta_{m_1,n_1-1} - \delta_{m_1,n_1})(\delta_{m_2,n_2-2}  - \delta_{m_2,n_2-1} + \delta_{m_2,n_2-1} - \delta_{m_2,n_2}) \right]\\
=& -\frac{1}{2^{6}\times3 \eta \pi^2} \Gamma^{-}_{m_1 m_2 n_1 n_2} c^{m_1 m_2} d_{-2}^{n_1 n_2}
\end{aligned}
\end{equation}
\end{small}
Note that $n^\prime_1 = n_1 - 2 \ge 0$ in this case so $n_1  \ge 2$. Thus both $\delta_{n_1,0} $ and $\delta_{n_1,1}$ vanish. And the final result will be:
\begin{equation}
\begin{aligned}
\Gamma^{-}_{m_1 m_2 n_1 n_2} =  \omega_{-2}^{n_1 n_2} \times(n_2 +1)(n_2 + 2) \left( \delta_{m_2, n_2}  - \delta_{m_2, n_2+2} \right)\times (\delta_{m_1, n_1} - \delta_{m_1, n_1-2} )
\end{aligned}
\end{equation}

Then for $\Delta n_1 = +2$ everything is totally symmetric by substituting $n_1 \leftrightarrow n_2$ and $m_1 \leftrightarrow m_2$. So we can directly write down the result:
\begin{equation}
\begin{aligned}
\Gamma^{+}_{m_1 m_2 n_1 n_2}=& \omega_{+2}^{n_1 n_2} \times (n_1 +1)(n_1 + 2)\left( \delta_{m_1, n_1}  -\delta_{m_1, n_1+2} \right)\times (\delta_{m_2, n_2} - \delta_{m_2, n_2-2}  ) \qedhere
\end{aligned}
\end{equation}
\end{proof}

\subsection{Conclusion}
\begin{definition}
Characteristic tensor
\begin{equation}
\begin{aligned}
 \tilde{\Gamma}_{m_1 m_2 n_1 n_2}^{\text{3bdy}}= \mathcal{C} \times \left(\Gamma_{m_1 m_2 n_1 n_2}^{0} + \frac{\alpha_{n_1+2, n_2-2}}{\alpha_{n_1, n_2}} \times \Gamma_{m_1 m_2 n_1 n_2}^{+}+\frac{\alpha_{n_1-2, n_2+2}}{\alpha_{n_1, n_2}} \times \Gamma_{m_1 m_2 n_1 n_2}^{-} \right)  
\end{aligned}
\end{equation}
where the constant coefficient $\mathcal{C} = -\frac{1}{2^{6}\times3 \eta \pi^2}$.
\end{definition}

\begin{proposition}
The graviton mode gap with respect to the three-body interaction is given by
\begin{equation}
\begin{aligned}
\delta \tilde{E}_{\boldsymbol{q} \rightarrow 0} = \tilde{\Gamma}_{m_1 m_2 n_1 n_2}^{\text{3bdy}} c^{m_1 m_2} \bar{d}^{n_1 n_2}
\end{aligned}
\end{equation}
\end{proposition}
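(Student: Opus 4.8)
The plan is to assemble the proposition from the three partial contributions $\Gamma^0$, $\Gamma^+$ and $\Gamma^-$ already computed, starting from the double-commutator expression for the gap in Corollary~\ref{3bcommutator}. First I would insert the long-wavelength expansion of the nested commutator, Eq.~\ref{commutator3b4}, so that the numerator of $\delta\tilde E_{\boldsymbol q}$ becomes $[(\boldsymbol q_1\times\boldsymbol q)(\boldsymbol q\cdot\boldsymbol\nabla_1)+(\boldsymbol q_2\times\boldsymbol q)(\boldsymbol q\cdot\boldsymbol\nabla_2)]^2\bar S_{\boldsymbol q_1,\boldsymbol q_2}$ integrated against $V_{\boldsymbol q_1,\boldsymbol q_2}$, which is manifestly $O(\boldsymbol q^4)$. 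Dividing by $2S_{\boldsymbol q}$ and invoking Theorem~\ref{long-wavelength_limit}, $S_{\boldsymbol q}\to\eta\,\boldsymbol q^4$, cancels the overall $\boldsymbol q$-dependence and leaves a finite limit, so that fixing $\theta_q=\pi/2$ is legitimate and yields the compact squared-bracket form.

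Next I would pass to the Jacobi momenta $\tilde{\boldsymbol q}_1,\tilde{\boldsymbol q}_2$ (the basis change with Jacobian $1/3$) and replace $\bar S_{\boldsymbol q_1,\boldsymbol q_2}$ by its Fourier representation, exploiting that the generalized Laguerre--Gaussian functions are eigenfunctions of the two-dimensional Fourier transform (Lemma~\ref{Fourier}). The derivative operators $(\boldsymbol q\cdot\boldsymbol\nabla_i)$ then act on the Fourier kernel and convert into algebraic factors in $\tilde{\boldsymbol p}_1,\tilde{\boldsymbol p}_2$, reducing the numerator to a product of plane-wave phases times trigonometric factors. The four azimuthal integrals are then evaluated with Theorem~\ref{besseltheorem}, which simultaneously generates the Bessel functions $J_{\Delta n_1}$ and, through the accompanying $\delta(\beta)$, enforces the selection rule $\Delta n_1=-\Delta n_2\in\{0,\pm2\}$ established in the domain proposition.

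I would then treat the three surviving sectors separately. For $\Delta n_1=0$ with $i=j$ the Bessel functions are re-summed via the Bessel--hypergeometric identity (Lemma~\ref{Bessel_Hypergeometric}) and the Hardy--Hille formula (Lemma~\ref{Hardy-Hille}) back into bilinears of ordinary and once-generalized Laguerre polynomials; orthogonality (Lemma~\ref{Orthogonality_la}) then collapses the radial integrals to the Kronecker-delta combinations of Definition~\ref{kronecker}, which reassemble into $\Gamma^0_{m_1m_2n_1n_2}$. The case $i\neq j$, $\Delta n_1=0$ vanishes by the parity $J_{-1}=-J_1$ (Proposition~\ref{prop_inej}), and the $\Delta n_1=\pm2$ sectors produce $\Gamma^{\pm}_{m_1m_2n_1n_2}$ by the same orthogonality together with Lemma~\ref{lag_int}. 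Crucially, Corollary~\ref{regularized_d} lets me write every $d_i^{n_1n_2}$ as $(\alpha^{n_1+i,n_2-i}/\alpha^{n_1,n_2})\,\bar d^{n_1n_2}$, so the $i$-dependence of the off-diagonal pieces is absorbed into the amplitude ratios multiplying $\Gamma^{\pm}$; summing the three sectors with the common constant $\mathcal C=-1/(2^6\cdot3\,\eta\pi^2)$ reproduces exactly the characteristic tensor of the preceding Definition and hence $\delta\tilde E_{\boldsymbol q\to0}=\tilde\Gamma^{\text{3bdy}}_{m_1m_2n_1n_2}c^{m_1m_2}\bar d^{n_1n_2}$.

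The main obstacle is the bookkeeping in the combined angular-plus-radial integration: correctly expanding the squared bracket into its $i=j$ and $i\neq j$ channels, matching each Bessel order to the right hypergeometric re-summation, and checking that the Laguerre index shifts generated by the derivative factors align with the $L^{(\pm2)}$ appearing in the structure-factor expansion so that orthogonality applies cleanly. Getting the combinatorial prefactors and signs in Definition~\ref{kronecker} and Lemma~\ref{lag_int} exactly right is what forces $\Gamma^0,\Gamma^\pm$ into the stated closed form; once this is secured, the final assembly into the characteristic tensor is purely algebraic.
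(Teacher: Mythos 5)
Your proposal is correct and follows essentially the same route as the paper's own derivation: the same long-wavelength double-commutator expansion divided by $2S_{\boldsymbol q}\to 2\eta\,\boldsymbol q^4$, the Jacobi-coordinate Fourier representation, the Bessel-function angular integrals enforcing $\Delta n_1=-\Delta n_2\in\{0,\pm 2\}$, the Hardy--Hille and orthogonality collapse into $\Gamma^{0}$ and $\Gamma^{\pm}$ (with the $i\neq j$, $\Delta n_1=0$ channel vanishing by Bessel parity), and the final absorption of the $i$-dependence via $d_i^{n_1 n_2}=(\alpha^{n_1+i,n_2-i}/\alpha^{n_1,n_2})\,\bar d^{n_1 n_2}$ into the characteristic tensor with the common constant $\mathcal{C}$. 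No gaps to report.
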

\begin{proof}
From the last three sections we know that the graviton mode gap can be written as:
\begin{equation}
\begin{aligned}
\delta \tilde{E}_{\boldsymbol{q} \rightarrow 0} = \mathcal{C} \times \left(\Gamma_{m_1 m_2 n_1 n_2}^{0} c^{m_1 m_2} d_{0}^{n_1 n_2} + \Gamma_{m_1 m_2 n_1 n_2}^{-} c^{m_1 m_2} d_{-2}^{n_1 n_2} + \Gamma_{m_1 m_2 n_1 n_2}^{+} c^{m_1 m_2} d_{2}^{n_1 n_2}\right)
\end{aligned}
\end{equation}
Then according to Corollary.\ref{regularized_d}, we can use $\bar{d}^{n_1 n_2}$ to express all the expansion coefficients $d_{\Delta n_1}^{n_1 n_2}$:
\begin{equation}
\begin{aligned}
\delta \tilde{E}_{\boldsymbol{q} \rightarrow 0} =& \left[\mathcal{C} \times \left(\Gamma_{m_1 m_2 n_1 n_2}^{0} + \frac{\alpha_{n_1+2, n_2-2}}{\alpha_{n_1, n_2}} \times \Gamma_{m_1 m_2 n_1 n_2}^{+}+\frac{\alpha_{n_1-2, n_2+2}}{\alpha_{n_1, n_2}} \times \Gamma_{m_1 m_2 n_1 n_2}^{-} \right) \right] c^{m_1 m_2} \bar{d}^{n_1 n_2}\\
=&  \left(\tilde{\Gamma}_{m_1 m_2 n_1 n_2}^{0} + \tilde{\Gamma}_{m_1 m_2 n_1 n_2}^{+} + \tilde{\Gamma}_{m_1 m_2 n_1 n_2}^{-} \right)  c^{m_1 m_2} \bar{d}^{n_1 n_2}\\
\equiv & \tilde{\Gamma}_{m_1 m_2 n_1 n_2}^{\text{3bdy}} c^{m_1 m_2} \bar{d}^{n_1 n_2} \qedhere
\end{aligned}
\end{equation}
\end{proof}

The slight differences in the definition of the tensors between the main text and here (up to a constant coefficient) won't influence the conclusion. If the Hamiltonian is defined with different coefficient in the first place, then the constant coefficient $\mathcal{C}$ here might also change. But this will not change the gaplessness of a state and the characteristic tensor $\tilde{\Gamma}_{m_1 m_2 n_1 n_2}^{\text{3bdy}}$ is always well defined so all that one needs to calculate $\delta \tilde{E}_{\boldsymbol{q} \rightarrow 0}$ is to input the corresponding $c^{m_1 m_2}$ and $\bar{d}^{n_1 n_2}$. And the most powerful advantage of this formalism is the structure of the characteristic tensor $\tilde{\Gamma}_{m_1 m_2 n_1 n_2}^{\text{3bdy}}$ allows us to predict the graviton mode energy of some states with respect to a Hamiltonian without knowing the exact values of $c^{m_1 m_2}$ and $\bar{d}^{n_1 n_2}$. Actually we only need to know whether these coefficients vanish or not, as explained in the main text.

\section{Graviton-mode energy with two-body interactions}
In this section, we will show the characteristic matrix formalism of the graviton mode gap with respect to two-body interactions, which was firstly proposed in Ref.\cite{yang2020microscopic}. Based on the same idea, this can be regarded as a special case of the three-body result and most of the lemmas will also be used here.
\begin{definition}
The generic two-body Hamiltonian in a single Landau level (LL)
\begin{equation}
\hat H_{\text{2bdy}}=\int \frac{d^{2} \boldsymbol{q}}{(2 \pi)^2} V_{\bm q}\sum_{i\neq j}e^{i q_a\left(\hat R_i^a-\hat R_j^a\right)}\nonumber
=\int \frac{d^{2} \boldsymbol{q}}{(2 \pi)^2} V_{\bm q}\hat\rho_{\bm q}\hat\rho_{-\bm q} - N_e\int \frac{d^{2} \boldsymbol{q}}{(2 \pi)^2} V_{\bm q}
\end{equation}
where $N_e$ is the number of electrons.
\end{definition}

\begin{definition}
The regularised two-body ground state structure factor:
\begin{equation}
\begin{aligned}
S_{\bm q}&=\frac{1}{N_e}\langle\delta\hat\rho_{\bm q}\delta\hat\rho_{\bm q}\rangle_0=\tilde S_{\bm q}-\frac{1}{N_e}\langle\hat\rho_{\bm q}\rangle_0\langle\hat\rho_{-\bm q}\rangle_0 =1+\sum_{k=0}^\infty d^{\left(0\right)}_{2k+1} L_{2k+1}\left(q^2\right)e^{-\frac{1}{2}q^2}-\frac{1}{N_e}\langle\hat\rho_{\bm q}\rangle_0\langle\hat\rho_{-\bm q}\rangle_0
\end{aligned}
\end{equation}
\end{definition}

\begin{proposition}
The energy of the graviton mode with respect to $\hat H_{\text{2bdy}}$ can be written as
\begin{equation}
\begin{aligned}
\delta \tilde{E}_{\boldsymbol{q} \rightarrow 0 }=\lim_{\bm q\rightarrow 0} \sum_{i \ne j}\int \frac{d^{2} \boldsymbol{q}^\prime}{(2 \pi)^2} V_{\boldsymbol{q}^\prime}\times \frac{\left\langle\psi_{0}\left| \left[\delta \hat{\rho}_{-\boldsymbol{q}},\left[\hat{\rho}^i_{\boldsymbol{q}^\prime} \hat{\rho}^j_{\boldsymbol{-q}^\prime}, \delta \hat{\rho}_{\boldsymbol{q}}\right]\right] \right| \psi_{0}\right\rangle}{2 S_{\boldsymbol{q}}}
\end{aligned}
\end{equation}
\end{proposition}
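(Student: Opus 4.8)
The plan is to recognize this statement as the two-body specialization of Corollary~\ref{3bcommutator}, so that essentially the same sequence of manipulations applies with one fewer density operator in the interaction. First I would write the variational energy as the Rayleigh quotient $\delta\tilde E_{\bm q} = \langle\psi_{\bm q}|\hat H_{\text{2bdy}}|\psi_{\bm q}\rangle/\langle\psi_{\bm q}|\psi_{\bm q}\rangle - E_0$ with $|\psi_{\bm q}\rangle = \delta\hat\rho_{\bm q}|\psi_0\rangle$, and insert $E_0 = \langle\psi_0|\hat H_{\text{2bdy}}|\psi_0\rangle$. Symmetrizing the numerator over $\bm q \leftrightarrow -\bm q$ then recasts it as the nested commutator $\langle\psi_0|[\delta\hat\rho_{-\bm q},[\hat H_{\text{2bdy}},\delta\hat\rho_{\bm q}]]|\psi_0\rangle$, exactly as in the three-body derivation.

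The central algebraic step is to expand this double commutator and collapse it. Writing it out produces four terms; applying $\hat H_{\text{2bdy}}|\psi_0\rangle = E_0|\psi_0\rangle$ together with $\langle\psi_0|\hat H_{\text{2bdy}} = E_0\langle\psi_0|$ reduces two of them to $E_0$ times structure-factor expectation values. Hermiticity $\delta\hat\rho_{\bm q}^\dagger = \delta\hat\rho_{-\bm q}$ identifies $\langle\psi_0|\delta\hat\rho_{-\bm q}\hat H_{\text{2bdy}}\delta\hat\rho_{\bm q}|\psi_0\rangle$ with $\langle\psi_{\bm q}|\hat H_{\text{2bdy}}|\psi_{\bm q}\rangle$ and its partner with $\langle\psi_{-\bm q}|\hat H_{\text{2bdy}}|\psi_{-\bm q}\rangle$; rotational invariance of $|\psi_0\rangle$ then makes these two equal, and likewise makes $\langle\delta\hat\rho_{-\bm q}\delta\hat\rho_{\bm q}\rangle_0 = \langle\delta\hat\rho_{\bm q}\delta\hat\rho_{-\bm q}\rangle_0$, which in any case follows immediately from the GMP algebra (Corollary~\ref{GMP}) since $\bm q\times\bm q = 0$. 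The double commutator therefore equals $2\left(\langle\psi_{\bm q}|\hat H_{\text{2bdy}}|\psi_{\bm q}\rangle - E_0\,\langle\psi_{\bm q}|\psi_{\bm q}\rangle\right)$, and dividing by $\langle\psi_{\bm q}|\psi_{\bm q}\rangle$ reproduces, up to the normalization convention for $S_{\bm q}$, the denominator $2S_{\bm q}$.

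Finally I would substitute the particle-resolved form $\hat H_{\text{2bdy}} = \sum_{i\ne j}\int \tfrac{d^2\bm q'}{(2\pi)^2} V_{\bm q'}\,\hat\rho^i_{\bm q'}\hat\rho^j_{-\bm q'}$. The additive constant $-N_e\int \tfrac{d^2\bm q'}{(2\pi)^2}V_{\bm q'}$ is a c-number and drops out of every commutator; for the same reason the regularizing shift $\delta\hat\rho_{\bm q'} = \hat\rho_{\bm q'} - \langle\hat\rho_{\bm q'}\rangle_0$ may be replaced by $\hat\rho_{\bm q'}$ inside the bracket without changing anything, since it differs by a c-number that commutes with $\delta\hat\rho_{\pm\bm q}$. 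Pulling the $\bm q'$-integral outside the expectation value then yields precisely the claimed expression.

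The step I expect to be the main obstacle is the $\bm q \leftrightarrow -\bm q$ pairing: one must argue carefully that rotational and reflection invariance of the ground state guarantees $\langle\psi_{\bm q}|\hat H_{\text{2bdy}}|\psi_{\bm q}\rangle = \langle\psi_{-\bm q}|\hat H_{\text{2bdy}}|\psi_{-\bm q}\rangle$, so that the symmetric combination generated by the double commutator coincides with twice the quantity of interest. Everything else is the routine operator algebra already assembled for the three-body case, of which this is the direct restriction.
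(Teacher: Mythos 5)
Your proposal is correct and follows essentially the same route as the paper's own proof: the Rayleigh quotient with $E_0$ inserted, symmetrization over $\boldsymbol{q}\leftrightarrow-\boldsymbol{q}$ to recognize the numerator as the double commutator $\left\langle\psi_0\left|\left[\delta\hat\rho_{-\boldsymbol{q}},\left[\hat H_{\text{2bdy}},\delta\hat\rho_{\boldsymbol{q}}\right]\right]\right|\psi_0\right\rangle$, followed by substitution of the particle-resolved Hamiltonian with the c-number terms dropping out of the commutators. Your added justifications (hermiticity of $\delta\hat\rho_{\boldsymbol{q}}$, the GMP algebra giving $[\delta\hat\rho_{\boldsymbol{q}},\delta\hat\rho_{-\boldsymbol{q}}]=0$, and the explicit inversion-symmetry argument the paper leaves implicit) only make the same argument more careful.
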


\begin{proof}
The proof is quite similar to Corollary.\ref{3bcommutator}, so here we adopt a simplified notation and skip some of the same steps:
\begin{small}
\begin{equation}
\begin{aligned}
\delta \tilde{E}_{\boldsymbol{q} \rightarrow 0 }&=\lim_{\bm q\rightarrow 0}\frac{\left\langle\psi_{\boldsymbol{q}}|\hat{H}_{\text{2bdy}}| \psi_{\boldsymbol{q}}\right\rangle}{\left\langle\psi_{\boldsymbol{q}} | \psi_{\boldsymbol{q}}\right\rangle}-E_{0}=\lim_{\bm q\rightarrow 0}\frac{2\left\langle\psi_{\boldsymbol{q}}|\hat{H}_{\text{2bdy}}| \psi_{\boldsymbol{q}}\right\rangle-2 E_0 \left\langle\psi_{\boldsymbol{q}} | \psi_{\boldsymbol{q}}\right\rangle}{2 \left\langle\psi_{\boldsymbol{q}} | \psi_{\boldsymbol{q}}\right\rangle}\\
&=\lim_{\bm q\rightarrow 0}\frac{\left\langle\delta \hat{\rho}_{-\boldsymbol{q}} \hat{H}_{\text{2bdy}} \delta \hat{\rho}_{\boldsymbol{q}}\right\rangle_0+\left\langle\delta \hat{\rho}_{\boldsymbol{q}}\hat{H}_{\text{2bdy}}\delta \hat{\rho}_{-\boldsymbol{q}}\right\rangle_0- \left\langle \delta \hat{\rho}_{-\boldsymbol{q}}\delta \hat{\rho}_{\boldsymbol{q}} \hat{H}_{\text{2bdy}}\right\rangle_0-  \left\langle\hat{H}_{\text{2bdy}}\delta \hat{\rho}_{\boldsymbol{q}}\delta \hat{\rho}_{-\boldsymbol{q}} =\right\rangle_0}{2 \left\langle\psi_{\boldsymbol{q}} | \psi_{\boldsymbol{q}}\right\rangle}\\
&=\lim_{\bm q\rightarrow 0}\frac{\left\langle\psi_{0}\left|\left[\delta \hat{\rho}_{-\boldsymbol{q}},\left[\hat{H}_{\text{2bdy}}, \delta \hat{\rho}_{\boldsymbol{q}}\right]\right]\right| \psi_{0}\right\rangle}{2 S_{\boldsymbol{q}}}=\lim_{\bm q\rightarrow 0} \sum_{i \ne j}\int \frac{d^{2} \boldsymbol{q}^\prime}{(2 \pi)^2} V_{\boldsymbol{q}^\prime}\times \frac{\left\langle\psi_{0}\left| \left[\delta \hat{\rho}_{-\boldsymbol{q}},\left[\hat{\rho}^i_{\boldsymbol{q}^\prime} \hat{\rho}^j_{\boldsymbol{-q}^\prime}, \delta \hat{\rho}_{\boldsymbol{q}}\right]\right] \right| \psi_{0}\right\rangle}{2 S_{\boldsymbol{q}}}
\label{2bdyenergy1}
\end{aligned}
\end{equation}
\end{small}
where we have considered
\begin{equation}
\begin{aligned}
\left[\hat{A},\left[\hat{B},\hat{C}\right]\right]&=\left[\hat{A},\hat{B} \hat{C}- \hat{C} \hat{B}\right]=\hat{A}\hat{B}\hat{C}-\hat{A}\hat{C}\hat{B}-\hat{B}\hat{C}\hat{A}+\hat{C}\hat{B}\hat{A}
\label{3commutator}
\end{aligned}
\end{equation}
and substituted $\hat A=\delta \hat{\rho}_{-\boldsymbol{q}}, \hat B=\hat{H}_{\text{2bdy}}, \hat C=\delta \hat{\rho}_{-\boldsymbol{q}}$ in Eq.\ref{3commutator}
\end{proof}

Note that the constant terms and the particle indices have no contributions to the commutator, thus:
\begin{equation}
\begin{aligned}
\left[\delta \hat{\rho}_{-\boldsymbol{q}},\left[\hat{\rho}^i_{\boldsymbol{q}^\prime} \hat{\rho}^j_{\boldsymbol{-q}^\prime}, \delta \hat{\rho}_{\boldsymbol{q}}\right]\right]
\cong \left[\delta \hat{\rho}_{-\boldsymbol{q}},\left[\delta \hat{\rho}_{\boldsymbol{q}^\prime} \delta \hat{\rho}_{\boldsymbol{-q}^\prime}, \delta \hat{\rho}_{\boldsymbol{q}}\right]\right] (*)
\end{aligned}
\end{equation}
Then we have
\begin{small}
\begin{equation}
\begin{aligned}
(*)=&[\delta \hat{\rho}_{-\boldsymbol{q}},\delta \hat{\rho}_{\boldsymbol{q}’}[\delta \hat{\rho}_{-\boldsymbol{q}’},\delta \hat{\rho}_{\boldsymbol{q}}]+[\delta \hat{\rho}_{\boldsymbol{q}’},\delta \hat{\rho}_{\boldsymbol{q}}]\delta \hat{\rho}_{-\boldsymbol{q}’}]=[\delta \hat{\rho}_{-\boldsymbol{q}},\delta \hat{\rho}_{\boldsymbol{q}’}[\delta \hat{\rho}_{-\boldsymbol{q}’},\delta \hat{\rho}_{\boldsymbol{q}}]]+[\delta \hat{\rho}_{-\boldsymbol{q}},[\delta \hat{\rho}_{\boldsymbol{q}’},\delta \hat{\rho}_{\boldsymbol{q}}]\delta \hat{\rho}_{-\boldsymbol{q}’}]\\
=&[\delta \hat{\rho}_{-\boldsymbol{q}},\delta \hat{\rho}_{\boldsymbol{q}’}][\delta \hat{\rho}_{-\boldsymbol{q}’},\delta \hat{\rho}_{\boldsymbol{q}}]+\delta \hat{\rho}_{\boldsymbol{q}’}[\delta \hat{\rho}_{-\boldsymbol{q}},[\delta \hat{\rho}_{-\boldsymbol{q}’},\delta \hat{\rho}_{\boldsymbol{q}}]]+[\delta \hat{\rho}_{-\boldsymbol{q}},[\delta \hat{\rho}_{\boldsymbol{q}’},\delta \hat{\rho}_{\boldsymbol{q}}]]\delta \hat{\rho}_{-\boldsymbol{q}’}+[\delta \hat{\rho}_{\boldsymbol{q}’},\delta \hat{\rho}_{\boldsymbol{q}}][\delta \hat{\rho}_{-\boldsymbol{q}},\delta \hat{\rho}_{-\boldsymbol{q}’}]\\
=& \left( 2 i \sin \frac{-\boldsymbol{q} \times \boldsymbol{q}'}{2} \delta \hat{\rho}_{-\boldsymbol{q}+\boldsymbol{q}'} \right) 
\left(2 i \sin \frac{-\boldsymbol{q}' \times \boldsymbol{q}}{2} \delta \hat{\rho}_{-\boldsymbol{q}'+\boldsymbol{q}} \right)
+\delta \hat{\rho}_{\boldsymbol{q}’}
\left[\delta \hat{\rho}_{-\boldsymbol{q}}, \left(2 i \sin \frac{-\boldsymbol{q}' \times \boldsymbol{q}}{2} \delta \hat{\rho}_{-\boldsymbol{q}'+\boldsymbol{q}} \right) \right]\\
&+ \left[\delta \hat{\rho}_{-\boldsymbol{q}},
\left(2 i \sin \frac{\boldsymbol{q}' \times \boldsymbol{q}}{2} \delta \hat{\rho}_{\boldsymbol{q}'+\boldsymbol{q}}\right) \right]
\delta \hat{\rho}_{-\boldsymbol{q}’}+
\left( 2 i \sin \frac{\boldsymbol{q}' \times \boldsymbol{q}}{2} \delta \hat{\rho}_{\boldsymbol{q}'+\boldsymbol{q}} \right)
\left(2 i \sin \frac{ -\boldsymbol{q} \times -\boldsymbol{q}'}{2} \delta \hat{\rho}_{-\boldsymbol{q}-\boldsymbol{q}'} \right)\\
=& \left(-2 \sin \frac{\boldsymbol{q}' \times \boldsymbol{q}}{2} \right)^2 \delta \hat{\rho}_{-\boldsymbol{q}+\boldsymbol{q}'} \delta \hat{\rho}_{-\boldsymbol{q}+\boldsymbol{q}'} 
+ \left(2 \sin \frac{\boldsymbol{q}' \times \boldsymbol{q}}{2} \right)^2 \delta \hat{\rho}_{\boldsymbol{q}+\boldsymbol{q}'} \delta \hat{\rho}_{-\boldsymbol{q}-\boldsymbol{q}'}\\
&-\delta \hat{\rho}_{\boldsymbol{q}’}
\left[\delta \hat{\rho}_{-\boldsymbol{q}}, \left(2 i \sin \frac{-\boldsymbol{q}' \times \boldsymbol{q}}{2} \delta \hat{\rho}_{-\boldsymbol{q}'+\boldsymbol{q}} \right)\right] 
- \left[\delta \hat{\rho}_{-\boldsymbol{q}}, \left(2 i \sin \frac{\boldsymbol{q}' \times \boldsymbol{q}}{2} \delta \hat{\rho}_{\boldsymbol{q}'+\boldsymbol{q}} \right) \right]
\delta \hat{\rho}_{-\boldsymbol{q}’}\\
=& 4 \sin^2 \frac{\boldsymbol{q}' \times \boldsymbol{q}}{2} 
\delta \hat{\rho}_{-\boldsymbol{q}+\boldsymbol{q}'} \delta \hat{\rho}_{-\boldsymbol{q}+\boldsymbol{q}'} 
+ 4 \sin^2 \frac{\boldsymbol{q}' \times \boldsymbol{q}}{2} \delta \hat{\rho}_{\boldsymbol{q}+\boldsymbol{q}'} \delta \hat{\rho}_{-\boldsymbol{q}-\boldsymbol{q}'}\\
&- 2 i \sin \frac{-\boldsymbol{q}' \times \boldsymbol{q}}{2} \delta \hat{\rho}_{\boldsymbol{q}’}
\left[\delta \hat{\rho}_{-\boldsymbol{q}}, \delta \hat{\rho}_{-\boldsymbol{q}'+\boldsymbol{q}} \right] 
- 2 i \sin \frac{\boldsymbol{q}' \times \boldsymbol{q}}{2} 
\left[\delta \hat{\rho}_{-\boldsymbol{q}},\delta \hat{\rho}_{\boldsymbol{q}'+\boldsymbol{q}} \right]
\delta \hat{\rho}_{-\boldsymbol{q}’}\\
=&  4 \sin^2 \frac{\boldsymbol{q}' \times \boldsymbol{q}}{2}
\delta \hat{\rho}_{-\boldsymbol{q}+\boldsymbol{q}'} \delta \hat{\rho}_{-\boldsymbol{q}+\boldsymbol{q}'}+ 4 \sin^2 \frac{\boldsymbol{q}' \times \boldsymbol{q}}{2} \delta \hat{\rho}_{\boldsymbol{q}+\boldsymbol{q}'} \delta \hat{\rho}_{-\boldsymbol{q}-\boldsymbol{q}'}\\
&-\left(2 i \sin \frac{-\boldsymbol{q}' \times \boldsymbol{q}}{2} \right)
(2 i \sin \frac{-\boldsymbol{q} \times (-\boldsymbol{q}'+\boldsymbol{q})}{2}) \delta \hat{\rho}_{\boldsymbol{q}’} \delta \hat{\rho}_{-\boldsymbol{q}’}-(2 i \sin \frac{\boldsymbol{q}' \times \boldsymbol{q}}{2})(2 i \sin \frac{-\boldsymbol{q} \times (\boldsymbol{q}'+\boldsymbol{q})}{2})\delta \hat{\rho}_{\boldsymbol{q}’} \delta \hat{\rho}_{-\boldsymbol{q}’}
\end{aligned}
\end{equation}
\end{small}
where we have considered:
\begin{equation}
\begin{aligned}
\left[\hat D,\left[\hat A \hat B, \hat C \right ]\right]&=\left[\hat D,\hat  A\left[\hat B, \hat C \right]+\left[\hat A,\hat C\right]\hat B\right]=\left[\hat D, \hat A \left[\hat B,\hat C \right] \right]+\left[\hat D,\left[\hat A, \hat C\right]\hat B \right]\\
&=\left[\hat D, \hat A \right] \left[\hat B, \hat C \right] + \hat A \left[ \hat D, \left[\hat B,\hat C \right]\right] + \left[\hat D, \left[\hat A, \hat C \right]\right]\hat B + \left[\hat A,\hat C \right] \left[\hat D, \hat B\right]
\end{aligned}
\end{equation}
with $
\hat A=\delta \hat{\rho}_{\boldsymbol{q'}}, 
\hat B=\delta \hat{\rho}_{-\boldsymbol{q'}}, 
\hat C=\delta \hat{\rho}_{\boldsymbol{q}}, 
\hat D=\delta \hat{\rho}_{-\boldsymbol{q}}
$ and the GMP algebra in Eq.\ref{GMP} which shows us the commutator of the regularised guiding center density operators with momentum $\boldsymbol{q}_1$ and $\boldsymbol{q}_2$ gives a new density operator with the sum of momentum $\boldsymbol{q}_1$ and $\boldsymbol{q}_2$.

By omitting the second order terms of $q$ we have:
\begin{small}
\begin{equation}
\begin{aligned}
\left[\delta \rho_{-\boldsymbol{q}},\left[\delta \rho_{\boldsymbol{q'}}\delta \rho_{-\boldsymbol{q'}},\delta \rho_{\boldsymbol{q}}\right]\right]&=  
\left(2 \sin \frac{\boldsymbol{q}' \times \boldsymbol{q}}{2} \right)^2 \delta \bar{\rho}_{-\boldsymbol{q}+\boldsymbol{q}'} \delta \bar{\rho}_{-\boldsymbol{q}+\boldsymbol{q}'}
+ \left(2 \sin \frac{\boldsymbol{q}' \times \boldsymbol{q}}{2} \right)^2 \delta \bar{\rho}_{\boldsymbol{q}+\boldsymbol{q}'} \delta \bar{\rho}_{-\boldsymbol{q}-\boldsymbol{q}'}
-2 \left(2 \sin \frac{\boldsymbol{q}' \times \boldsymbol{q}}{2} \right)^2 \delta \rho_{\boldsymbol{q'}} \delta \rho_{\boldsymbol{-q'}}\\
&= \left(2 \sin \frac{\boldsymbol{q}' \times \boldsymbol{q}}{2} \right)^2(\delta \bar{\rho}_{-\boldsymbol{q}+\boldsymbol{q}'} \delta \bar{\rho}_{-\boldsymbol{q}+\boldsymbol{q}'}+ \delta \bar{\rho}_{\boldsymbol{q}+\boldsymbol{q}'} \delta \bar{\rho}_{-\boldsymbol{q}-\boldsymbol{q}'}-2 \delta \rho_{\boldsymbol{q'}} \delta \rho_{\boldsymbol{-q'}})
\end{aligned}
\end{equation}
\end{small}
Thus the expectation value of the commutator with respect to the ground state is given by:
\begin{equation}
\begin{aligned}
\left\langle\psi_{0}|\left[\delta \rho_{-\boldsymbol{q}},\left[\delta \rho_{\boldsymbol{q'}}\delta \rho_{-\boldsymbol{q'}},\delta \rho_{\boldsymbol{q}}\right]\right]|\psi_{0}\right\rangle
&= \left(2 \sin \frac{\boldsymbol{q}' \times \boldsymbol{q}}{2} \right)^2
\left\langle\psi_{\boldsymbol{q}}|
\left(\delta \bar{\rho}_{-\boldsymbol{q}+\boldsymbol{q}'} \delta \bar{\rho}_{-\boldsymbol{q}+\boldsymbol{q}'}+ \delta \bar{\rho}_{\boldsymbol{q}+\boldsymbol{q}'} \delta \bar{\rho}_{-\boldsymbol{q}-\boldsymbol{q}'}-2 \delta \rho_{\boldsymbol{q'}} \delta \rho_{-\boldsymbol{q'}} \right)
|\psi_{\boldsymbol{q}}\right\rangle\\
&= \left(2 \sin \frac{\boldsymbol{q}' \times \boldsymbol{q}}{2} \right)^2 (S_{\boldsymbol{q}'+\boldsymbol{q}}+S_{\boldsymbol{q}'-\boldsymbol{q}}-2S_{\boldsymbol{q}'})
\end{aligned}
\end{equation}
So the commutator in Eq.\ref{2bdyenergy1} turns out to be:

\begin{equation}
\begin{aligned}
&\left\langle\psi_{0}|\left[\delta \rho_{-\boldsymbol{q}},\left[\hat H_{\text{2bdy}}, \delta \rho_{\boldsymbol{q}}\right]\right]|\psi_{0}\right\rangle\\
=&\int \frac{d^{2} \boldsymbol{q}'}{4 \pi} V_{\boldsymbol{q'}}\left\langle\psi_{\boldsymbol{q}}|\left[\delta \rho_{-\boldsymbol{q}},\left[\delta \rho_{\boldsymbol{q'}}\delta \rho_{-\boldsymbol{q'}},\delta \rho_{\boldsymbol{q}}\right]\right]|\psi_{\boldsymbol{q}}\right\rangle=\int \frac{d^{2} \boldsymbol{q}'}{4 \pi} V_{\boldsymbol{q'}}(2 \sin \frac{\boldsymbol{q}' \times \boldsymbol{q}}{2})^2(S_{\boldsymbol{q}'+\boldsymbol{q}}+S_{\boldsymbol{q}'-\boldsymbol{q}}-2S_{\boldsymbol{q}'})
\\
=&\int \frac{d^{2} \boldsymbol{q}'}{4 \pi} V_{\boldsymbol{q'}}
\left(2 \sin \frac{\boldsymbol{q}' \times \boldsymbol{q}}{2} \right)^2 [(S_{\boldsymbol{q}'+\boldsymbol{q}}-S_{\infty})+(S_{\boldsymbol{q}'-\boldsymbol{q}}-S_{\infty})-2(S_{\boldsymbol{q}'}-S_{\infty})]\\
=&\int \frac{d^{2} \boldsymbol{q}'}{4 \pi} V_{\boldsymbol{q'}}(2 \sin \frac{\boldsymbol{q}' \times \boldsymbol{q}}{2})^2(s_{\boldsymbol{q}'+\boldsymbol{q}}+s_{\boldsymbol{q}'-\boldsymbol{q}}-2s_{\boldsymbol{q}'})
\end{aligned}
\end{equation}
which gives the graviton mode gap as:
\begin{small}
\begin{equation}
\begin{aligned}
\delta \tilde{E}_{\boldsymbol{q} \rightarrow 0 }
&=\lim_{\bm q\rightarrow 0} \frac{\left\langle\psi_{0}\left|\left[\delta \rho_{-\boldsymbol{q}},\left[\hat H_{\text{2bdy}}, \delta \rho_{\boldsymbol{q}}\right]\right]\right| \psi_{0}\right\rangle}{2 S_{\boldsymbol{q}}} =\lim_{\bm q\rightarrow 0} \frac{1}{2S_q}\int \frac{d^{2} q^{\prime}}{(2 \pi)^2} V_{\boldsymbol{q}^{\prime}}\left[2 \sin \left(\frac{1}{2} \boldsymbol{q}^{\prime} \times \boldsymbol{q}\right)\right]^{2} \times \left(\boldsymbol{s}_{\boldsymbol{q}^{\prime}+\boldsymbol{q}}+s_{\boldsymbol{q}^{\prime}-\boldsymbol{q}}-2 s_{\boldsymbol{q}^{\prime}}\right)
\label{energygap}
\end{aligned}
\end{equation}
\end{small}

Meanwhile the linear combination of the structure factor terms can be expanded around $|\boldsymbol{q}|=0$:
\begin{small}
\begin{equation}
\begin{aligned}
\boldsymbol{s}_{\boldsymbol{q}^{\prime}+\boldsymbol{q}}+s_{\boldsymbol{q}^{\prime}-\boldsymbol{q}}-2 s_{\boldsymbol{q}^{\prime}} =& \boldsymbol{s}_{\boldsymbol{q}^{\prime}}+q_x \frac{\partial s_{\boldsymbol{q}^{\prime}}}{\partial q^{\prime}_{x}}+q_y \frac{\partial s_{\boldsymbol{q}^{\prime}}}{\partial q^{\prime}_{y}}+\frac{1}{2!}
\left(q^2_x \frac{\partial^2 s_{\boldsymbol{q}^{\prime}}}{\partial q^{\prime 2}_{x}}+2q_x q_y \frac{\partial^2 s_{\boldsymbol{q}^{\prime}}}{\partial q^{\prime}_{x} \partial q^{\prime}_{y}}+q^2_y \frac{\partial^2 s_{\boldsymbol{q}^{\prime}}}{\partial q^{\prime 2}_{y}} \right)
+O(\boldsymbol{q}^3)\\
&+\boldsymbol{s}_{\boldsymbol{q}^{\prime}}-q_x \frac{\partial s_{\boldsymbol{q}^{\prime}}}{\partial q^{\prime}_{x}}-q_y \frac{\partial s_{\boldsymbol{q}^{\prime}}}{\partial q^{\prime}_{y}}+\frac{1}{2!}
\left(q^2_x \frac{\partial^2 s_{\boldsymbol{q}^{\prime}}}{\partial q^{\prime 2}_{x}}+2q_x q_y \frac{\partial^2 s_{\boldsymbol{q}^{\prime}}}{\partial q^{\prime}_{x} \partial q^{\prime}_{y}}+q^2_y \frac{\partial^2 s_{\boldsymbol{q}^{\prime}}}{\partial q^{\prime 2}_{y}} \right)
+O(\boldsymbol{q}^3)-2\boldsymbol{s}_{\boldsymbol{q}^{\prime}}\\
=&q^2_x \frac{\partial^2 s_{\boldsymbol{q}^{\prime}}}{\partial q^{\prime 2}_{x}}+2q_x q_y \frac{\partial^2 s_{\boldsymbol{q}^{\prime}}}{\partial q^{\prime}_{x} \partial q^{\prime}_{y}}+q^2_y \frac{\partial^2 s_{\boldsymbol{q}^{\prime}}}{\partial q^{\prime 2}_{y}}+O(\boldsymbol{q}^3)= \left(q_x \frac{\partial}{\partial q^{\prime}_{x}}+q_y \frac{\partial}{\partial q^{\prime}_{y}} \right)^2 s_{\boldsymbol{q}^{\prime}}+O(\boldsymbol{q}^3) \sim (q \cdot \boldsymbol{\nabla}^\prime)^2  s_{\boldsymbol{q}^{\prime}}
\label{sqqp}
\end{aligned}
\end{equation}
\end{small}
where the third-order and higher-order terms in Eq.\ref{sqqp} have been dropped and by taking the Fourier transformation of $s_{\boldsymbol{q}^{\prime}}$:
\begin{equation}
\begin{aligned}
&-(\boldsymbol{q} \cdot \boldsymbol{\nabla}^\prime)^2 \left[ \int\frac{d^2 q^{\prime \prime}}{2\pi} e^{i \boldsymbol{q}^{\prime} \times \boldsymbol{q}^{\prime \prime}}s_{\boldsymbol{q}^{\prime \prime}}\right]=-\int\frac{d^2 q^{\prime \prime}}{2\pi}(\boldsymbol{q} \cdot \boldsymbol{\nabla}^\prime) \cdot i(q_x q^{\prime \prime}_y -q_y q^{\prime \prime}_x)\cdot e^{i \boldsymbol{q}^{\prime} \times \boldsymbol{q}^{\prime \prime}}s_{\boldsymbol{q}^{\prime \prime}}\\
=&-i\int\frac{d^2 q^{\prime \prime}}{2\pi} (\boldsymbol{q} \times \boldsymbol{q}^{\prime \prime})\cdot (\boldsymbol{q} \cdot \boldsymbol{\nabla}^\prime) e^{i \boldsymbol{q}^{\prime} \times \boldsymbol{q}^{\prime \prime}}s_{\boldsymbol{q}^{\prime \prime}}=\int\frac{d^2 \boldsymbol{q}^{\prime \prime}}{2\pi} (\boldsymbol{q} \times \boldsymbol{q}^{\prime \prime})^2\cdot e^{i \boldsymbol{q}^{\prime} \times \boldsymbol{q}^{\prime \prime}}s_{\boldsymbol{q}^{\prime \prime}}\\
\label{sqqp1}
\end{aligned}
\end{equation}

Similarly we can expand the $\left[2 \sin \left(\frac{1}{2} \boldsymbol{q}^{\prime} \times \boldsymbol{q}\right)\right]^{2}$ term around $|\boldsymbol{q}|=0$ as:
\begin{equation}
\begin{aligned}
&\left[2 \sin \left(\frac{1}{2} \boldsymbol{q}^{\prime} \times \boldsymbol{q}\right)\right]^{2}=4 \sin^2 \left(\frac{1}{2} \boldsymbol{q}^{\prime} \times \boldsymbol{q}\right) =4 \sin^2 \left(\frac{1}{2} (q^{\prime}_x q_y -q^{\prime}_y q_x)\right)\\
=& 4 \left[\sin^2(0) - q_x \cdot 2\sin \left(\frac{1}{2} (q^{\prime}_x q_y -q^{\prime}_y q_x)\right)\cdot \cos \left(\frac{1}{2} (q^{\prime}_x q_y -q^{\prime}_y q_x)\right)\cdot \frac{q^{\prime}_y}{2} \right.\\
& \left. + q_y \cdot 2\sin \left(\frac{1}{2} (q^{\prime}_x q_y - q^{\prime}_y q_x)\right)\cdot \cos \left(\frac{1}{2} (q^{\prime}_x q_y - q^{\prime}_y q_x)\right) \cdot \frac{q^{\prime}_x}{2} + O(q^2) \right]\\
=& 4 \left[ \left(\frac{ q^{\prime}_x q_y}{2} - \frac{q^{\prime}_y q_x}{2} \right) \sin \left( 2 \left(\frac{q^{\prime}_x q_y}{2} - \frac{q^{\prime}_y q_x}{2} \right) \right)+O(\boldsymbol{q}^2 )\right]= 4\left[ \left(\frac{\boldsymbol{q}^{\prime} \times \boldsymbol{q}}{2} \right)\sin(\boldsymbol{q}^{\prime} \times \boldsymbol{q})|_{\boldsymbol{q}=0}+O(\boldsymbol{q}^2)\right]\\
=& 4 \left[ \left(\frac{\boldsymbol{q}^{\prime} \times \boldsymbol{q}}{2} \right)\sin(\boldsymbol{q}^{\prime} \times \boldsymbol{q})|_{\boldsymbol{q}=0} + q^2_x \cdot \cos(\boldsymbol{q}^{\prime} \times \boldsymbol{q}) \left(-\frac{q^{\prime}_y}{2} \right)^2 \right.\\
& + \left. q^2_y \cdot \cos(\boldsymbol{q}^{\prime} \times \boldsymbol{q}) \left(\frac{q^{\prime}_x}{2} \right)^2 + 2 q_x q_y \cdot \cos(\boldsymbol{q}^{\prime} \times \boldsymbol{q}) \left(\frac{-q^{\prime}_y}{2} \right) \left(\frac{q^{\prime}_x}{2} \right)+O(\boldsymbol{q}^3) \right]\\
=& 4\left\{ \left(\frac{\boldsymbol{q}^{\prime} \times \boldsymbol{q}}{2} \right)\sin(\boldsymbol{q}^{\prime} \times \boldsymbol{q})|_{\boldsymbol{q}=0}+\cos(\boldsymbol{q}^{\prime} \times \boldsymbol{q}) \cdot \left[q^2_x \cdot \left(-\frac{q^{\prime}_y}{2} \right)^2 + q^2_y \cdot \left(\frac{q^{\prime}_x}{2} \right)^2+ 2 q_x q_y \cdot \left(-\frac{q^{\prime}_y}{2}\right) \left(\frac{q^{\prime}_x}{2}\right) \right]+O(\boldsymbol{q}^3)\right\}\\
=& 4\left[\cos(\boldsymbol{q}^{\prime} \times \boldsymbol{q})|_{\boldsymbol{q}=0} \cdot \left(\frac{q^{\prime}_x q_y -q^{\prime}_y q_x}{2} \right)^2+O(\boldsymbol{q}^3)\right] =(\boldsymbol{q}^{\prime} \times \boldsymbol{q})^2+O(\boldsymbol{q}^3)
\label{sinqqp}
\end{aligned}
\end{equation}

By substituting Eq.\ref{sqqp1} and Eq.\ref{sinqqp} in the integral Eq.\ref{energygap} abd considering Theorem.\ref{long-wavelength_limit}, we get:
\begin{equation}
\begin{aligned}
\delta \tilde{E}_{\boldsymbol{q} \rightarrow 0}&=\frac{1}{2S_q}\int \frac{d^{2} q^{\prime}}{(2 \pi)^2} V_{\boldsymbol{q}^{\prime}}\left[2 \sin \left(\frac{1}{2} \boldsymbol{q}^{\prime} \times \boldsymbol{q}\right)\right]^{2} \times \left(\boldsymbol{s}_{\boldsymbol{q}^{\prime}+\boldsymbol{q}}+s_{\boldsymbol{q}^{\prime}-\boldsymbol{q}}-2 s_{\boldsymbol{q}^{\prime}}\right)\\
&=\frac{1}{2 \eta q^{4}} \iint \frac{d^{2} q^{\prime} d^{2} q^{\prime \prime}}{8 \pi^{3}} V_{\boldsymbol{q}^{\prime}}(\boldsymbol{q} \times \boldsymbol{q}^{\prime})^2 (\boldsymbol{q} \times \boldsymbol{q}^{\prime \prime})^2 e^{i \boldsymbol{q}^{\prime} \times \boldsymbol{q}^{\prime \prime}} s_{\boldsymbol{q}^{\prime \prime}} \\
&=\frac{q^{4}}{2 \eta q^{4}} \iint \frac{d^{2} q^{\prime} d^{2} q^{\prime \prime}}{8 \pi^{3}} V_{\boldsymbol{q}^{\prime}}|\boldsymbol{q}^{\prime}|^2 |\boldsymbol{q}^{\prime \prime}|^2 \sin^2(\theta_{\boldsymbol{q}}-\theta_{\boldsymbol{q}^{\prime}}) \sin^2(\theta_{\boldsymbol{q}}-\theta_{\boldsymbol{q}^{\prime \prime}}) e^{i \boldsymbol{q}^{\prime} \times \boldsymbol{q}^{\prime \prime}} s_{\boldsymbol{q}^{\prime \prime}}
\label{eg1}
\end{aligned}
\end{equation}

It is worth noticing that Eq.\ref{eg1} only depends on the relative angles between $\theta_{\boldsymbol{q}^{\prime}}$ $\theta_{\boldsymbol{q}^{\prime \prime}}$ and $\theta_{\boldsymbol{q}}$. Thus for simplicity we can set $\theta_{\boldsymbol{q}}=\frac{\pi}{2}$ without losing generality:

\begin{equation}
\begin{aligned}
\delta \tilde{E}_{\boldsymbol{q} \rightarrow 0}&=\frac{q^{4}}{2 \eta q^{4}} \iint \frac{d^{2} q^{\prime} d^{2} q^{\prime \prime}}{8 \pi^{3}} V_{\boldsymbol{q}^{\prime}}|\boldsymbol{q}^{\prime}|^2 |\boldsymbol{q}^{\prime \prime}|^2 \sin^2(\frac{\pi}{2}-\theta_{\boldsymbol{q}^{\prime}}) \sin^2(\frac{\pi}{2}-\theta_{\boldsymbol{q}^{\prime \prime}}) e^{i \boldsymbol{q}^{\prime} \times \boldsymbol{q}^{\prime \prime}} s_{\boldsymbol{q}^{\prime \prime}}\\
&=\frac{1}{2 \eta} \iint \frac{d^{2} q^{\prime} d^{2} q^{\prime \prime}}{8 \pi^{3}} V_{\boldsymbol{q}^{\prime}}\left(q_{x}^{\prime} \cdot q_{x}^{\prime \prime}\right)^{2} e^{i \boldsymbol{q}^{\prime} \times \boldsymbol{q}^{\prime \prime}} s_{\boldsymbol{q}^{\prime \prime}}
\label{eg2}
\end{aligned}
\end{equation}
where $\left(q_{x}^{\prime} \cdot q_{x}^{\prime \prime}\right)^{2}$ can also be replaced by $\left(q_{y}^{\prime} \cdot q_{y}^{\prime \prime}\right)^{2}$. Then we can integrate the angular part of this integral:
\begin{equation}
\begin{aligned}
\delta \tilde{E}_{\boldsymbol{q} \rightarrow 0}&=\frac{1}{2 \eta} \iint \frac{d^{2} q^{\prime} d^{2} q^{\prime \prime}}{8 \pi^{3}} V_{\boldsymbol{q}^{\prime}}\left(q_{x}^{\prime} q_{x}^{\prime \prime}\right)^{2} e^{i \boldsymbol{q}^{\prime} \times \boldsymbol{q}^{\prime \prime}} s_{\boldsymbol{q}^{\prime \prime}}\\
&=\frac{1}{2 \eta} \iint_{0}^{\infty} \frac{2|\boldsymbol{q}^{\prime}|d |\boldsymbol{q}^{\prime}| \cdot 2|\boldsymbol{q}^{\prime \prime}|d |\boldsymbol{q}^{\prime \prime}|}{4\times 8 \pi^{3}} V_{\boldsymbol{q}^\prime} s_{\boldsymbol{q}^{\prime \prime}} |\boldsymbol{q}^{\prime}|^2 |\boldsymbol{q}^{\prime \prime}|^2 \iint_{-\pi}^{\pi} d \theta_1 d \theta_2 \cos^2(\theta_1) \cos^2(\theta_2) e^{i |\boldsymbol{q}^{\prime}| |\boldsymbol{q}^{\prime \prime}|\sin(\theta_1 - \theta_2)}\\
&=\frac{1}{2 \eta} \left(\iint_{0}^{\infty} \frac{d q_1 d q_2}{32 \pi^{3}} q_1 q_2 V_{q_1} s_{q_2} \right) \cdot \left[ \iint_{-\pi}^{\pi} d \theta_1 d \theta_2 \cos^2(\theta_1) \cos^2(\theta_2) e^{i |\boldsymbol{q}^{\prime}| |\boldsymbol{q}^{\prime \prime}|\sin(\theta_1 - \theta_2)} \right]
\label{eg3}
\end{aligned}
\end{equation}
where we have defined $q_1 = |\boldsymbol{q}^{\prime}|^2$ and $q_2 = |\boldsymbol{q}^{\prime \prime}|^2$. The direct transformation of $ V_{\boldsymbol{q}^\prime} $ and $s_{\boldsymbol{q}^{\prime \prime}}$ is because both of them are the function of $q_1$ and $q_2$ as Eq.\ref{2bexpansion} shows. After the variable separation we can only focus on the angular integral:
\begin{small}
\begin{equation}
\begin{aligned}
&\iint_{-\pi}^{\pi} d \theta_1 d \theta_2 \cos^2(\theta_1) \cos^2(\theta_2) e^{i |\boldsymbol{q}^{\prime}| |\boldsymbol{q}^{\prime \prime}|\sin(\theta_1 - \theta_2)}=\frac{1}{4}\iint_{-\pi}^{\pi} d \theta_1 d \theta_2 \frac{(2+ e^{i 2 \theta_1} + e^{-i 2 \theta_1}) (2+ e^{i 2 \theta_2} + e^{-i 2 \theta_2})}{4}   e^{i |\boldsymbol{q}^{\prime}| |\boldsymbol{q}^{\prime \prime}|\sin(\theta_1 - \theta_2)}\\
=&\frac{1}{4}\left[\frac{4\pi^2 J_0(|\boldsymbol{q}^{\prime}| |\boldsymbol{q}^{\prime \prime}|)+ 2 \pi^2 J_2 (|\boldsymbol{q}^{\prime}| |\boldsymbol{q}^{\prime \prime}|)}{4} \right]=\frac{\pi^2}{4} \left[ J_0(|\boldsymbol{q}^{\prime}| |\boldsymbol{q}^{\prime \prime}|)+ \frac{ J_2 (|\boldsymbol{q}^{\prime}| |\boldsymbol{q}^{\prime \prime}|)}{2}\right]=\frac{\pi^2}{8} \left[ _0\mathbf{F}_1(;1;-\frac{q_1 q_2}{4})+_0\mathbf{F}_1(;2;-\frac{q_1 q_2}{4}) \right]\\
\label{angleint}
\end{aligned}
\end{equation}
\end{small}
where $J_\alpha(x)$ are the Bessel functions of the first kind and we have considered Lemma.\ref{Bessel_Hypergeometric}. 

\begin{definition}
Characteristic matrix
\begin{equation}
\begin{aligned}
\Gamma^{\text{2bdy}}_{m n} = (-1)^{m} \cdot \left[2(m^2+m+1)\delta_{m,n}-(m+1)(m+2)\delta_{m,n-2}-m(m-1) \delta_{m,n+2} \right]
\end{aligned}
\end{equation}
\end{definition}

\begin{proposition}
The graviton mode gap with respect to the two-body interaction is given by
\begin{equation}
\delta \tilde{E}_{\boldsymbol{q} \rightarrow 0} =\frac{1}{2^8 \eta \pi} \Gamma^{\text{2bdy}}_{m n} c^{m} d^{n}
\end{equation}
\end{proposition}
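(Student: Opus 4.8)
The derivation in the excerpt has already carried the two-body gap down to Eq.\ref{eg3}, where the energy is expressed as a double radial integral over $q_1=|\boldsymbol q'|^2$ and $q_2=|\boldsymbol q''|^2$ against the angular kernel evaluated in Eq.\ref{angleint}. The plan is to pick up precisely there and reduce the remaining double integral to a finite sum of Kronecker deltas, mirroring the $\Gamma^0$ computation of the three-body section. First I would insert the Laguerre--Gaussian expansions $V_{\boldsymbol q'}=c^m L_m(q_1)e^{-q_1/2}$ and $s_{\boldsymbol q''}=d^n L_n(q_2)e^{-q_2/2}$ (the latter from Eq.\ref{expand}), so that the integrand becomes the product of the weights $q_1q_2\,e^{-q_1/2}e^{-q_2/2}$, the Laguerre polynomials $L_m(q_1)L_n(q_2)$, and the hypergeometric kernel ${}_0\mathbf F_1(;1;-q_1q_2/4)+{}_0\mathbf F_1(;2;-q_1q_2/4)$ produced by the angular integration.

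The crucial move is to trade the hypergeometric kernel for bilinear Laguerre sums using the Hardy--Hille corollaries stated after Lemma.\ref{Hardy-Hille}: with $x=q_1,\,y=q_2$ one has ${}_0\mathbf F_1(;1;-q_1q_2/4)=2e^{-(q_1+q_2)/2}\sum_a(-1)^aL_a(q_1)L_a(q_2)$ and ${}_0\mathbf F_1(;2;-q_1q_2/4)=4e^{-(q_1+q_2)/2}\sum_a\tfrac{(-1)^a}{a+1}L_a^{(1)}(q_1)L_a^{(1)}(q_2)$. These extra exponentials combine with the Gaussian weights to leave exactly the measure $q_1e^{-q_1}$ and $q_2e^{-q_2}$ needed for orthogonality. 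The $q_1$ and $q_2$ integrals then factorize and each collapses onto one of the elementary overlaps of Definition.\ref{kronecker}: $\int_0^\infty q\,e^{-q}L_m(q)L_a(q)\,dq=\Delta^*_{ma}$ for the $\alpha=0$ term and $\int_0^\infty q\,e^{-q}L_m(q)L_a^{(1)}(q)\,dq=\Delta_{ma}$ for the $\alpha=1$ term, both resting on Lemma.\ref{Orthogonality_la}.

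What is left is the purely algebraic sum $\sum_a(-1)^a\big[2\Delta^*_{ma}\Delta^*_{na}+\tfrac{4}{a+1}\Delta_{ma}\Delta_{na}\big]$, which I would evaluate by inserting the explicit delta forms of Definition.\ref{kronecker} and resumming over $a$ --- the single-index analogue of the manipulations already done for $\Gamma^0_{m_1m_2n_1n_2}$. The decisive feature is that the two pieces contribute to $\delta_{n,m\pm1}$ with opposite signs and equal magnitude, so those terms cancel identically; this is the microscopic reason only the couplings $n=m,\,m\pm2$ survive, consistent with $m,n$ being odd by fermion statistics. Collecting the surviving diagonal coefficient $4(m^2+m+1)$ and the off-diagonal coefficients $-2(m+1)(m+2)$ and $-2m(m-1)$ reproduces $2\,\Gamma^{\text{2bdy}}_{mn}$, and folding in the prefactor $\tfrac{1}{2\eta}\cdot\tfrac{1}{32\pi^3}\cdot\tfrac{\pi^2}{8}=\tfrac{1}{512\,\eta\pi}$ accumulated from Eq.\ref{eg3} and Eq.\ref{angleint} yields $\tfrac{2}{512\,\eta\pi}\Gamma^{\text{2bdy}}_{mn}c^m d^n=\tfrac{1}{2^8\eta\pi}\Gamma^{\text{2bdy}}_{mn}c^m d^n$, the claimed result.

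I expect the main obstacle to be bookkeeping rather than anything conceptual: correctly tracking the signs $(-1)^a$, the $1/(a+1)$ weight, and the shifted indices in $\Delta$ and $\Delta^*$ so that the unwanted odd-offset deltas cancel and the diagonal emerges as $2(m^2+m+1)$ after the overall halving. Pinning down that the factor of two relating $2\,\Gamma^{\text{2bdy}}_{mn}$ to the prefactor $1/512$ lands exactly on $1/2^8$ is the one place where an arithmetic slip would corrupt the stated normalization, so I would verify that constant with particular care.
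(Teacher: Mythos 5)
Your proposal is correct and follows essentially the same route as the paper's proof: expand $V_{\boldsymbol q'}$ and $s_{\boldsymbol q''}$ in Laguerre--Gaussian series, convert the hypergeometric angular kernel to bilinear Laguerre sums via the Hardy--Hille corollaries, collapse the radial integrals onto the overlaps $\Delta_{ma}$, $\Delta^*_{ma}$ (which the paper recomputes explicitly rather than citing Definition of the Kronecker-delta combinations, but they are the same integrals), and resum over $a$ with the $(-1)^a$ weights. Your observation that the $\delta_{n,m\pm1}$ contributions from the $\Delta^*\Delta^*$ and $\tfrac{1}{a+1}\Delta\Delta$ pieces cancel in pairs, and that the factor of $2$ against the $1/(512\,\eta\pi)$ prefactor yields $1/(2^8\eta\pi)$, matches exactly what happens in the paper's final resummation.
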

\begin{proof}
By substituting Eq.\ref{angleint} to Eq.\ref{eg3} and considering Lemma.\ref{Fourier}, we have
\begin{equation}
\delta \tilde{E}_{\boldsymbol{q} \rightarrow 0}
= \frac{1}{512 \eta \pi} \iint_{0}^{\infty} d q_{1} d q_{2} V_{q_{1}}\left(S_{q_{2}}-S_{\infty}\right) q_{1} q_{2}  \times\left[ _{0} \mathbf{F}_{1}\left(1,-\frac{q_{1} q_{2}}{4}\right)+_{0} \mathbf{F}_{1}\left(2,-\frac{q_{1} q_{2}}{4}\right)\right]
\label{eg4}
\end{equation}

Here the structure factor and the interaction can also be expanded as in the three-body cases:
\begin{equation}
V_{\boldsymbol{q}_1}= \sum_{m} c^{m}  L_{m}(q_1) e^{-\frac{q_1}{2}}, \quad s_{\boldsymbol{q}_2}=\sum_{n} d^{n} L_{n}(q_2) e^{-\frac{q_2}{2}}
\label{2bexpansion}
\end{equation}
to Eq.\ref{eg4} (where $m$ and $n$ are both odd and positive integers), we have
\begin{equation}
\delta \tilde{E}_{\boldsymbol{q} \rightarrow 0}= \frac{1}{512 \eta \pi}\sum_{m} \sum_{n} c^{m} d^{n} \iint_{0}^{\infty} d q_{1} d q_{2} e^{-\frac{q_{1}+q_{2}}{2}}  q_{1} q_{2} L_{m}(q_1) L_{n}(q_2)  \times\left[ _{0} \mathbf{F}_{1}\left(1,-\frac{q_{1} q_{2}}{4}\right)+_{0} \mathbf{F}_{1}\left(2,-\frac{q_{1} q_{2}}{4}\right)\right]
\label{eg5}
\end{equation}
then using the Einstein summation and denoting the matrix element $\Gamma_{m n}$ as:
\begin{equation}
\Gamma_{m n} \equiv \frac{1}{2} \iint_{0}^{\infty} d q_{1} d q_{2} e^{-\frac{q_{1}+q_{2}}{2}} q_{1} q_{2} L_{m}\left(q_{1}\right) L_{n}\left(q_{2}\right)\times\left[ _{0} \mathbf{F}_{1}\left(1,-\frac{q_{1} q_{2}}{4}\right)+_{0} \mathbf{F}_{1}\left(2,-\frac{q_{1} q_{2}}{4}\right)\right]
\label{Gamma}
\end{equation}

To transform the Hypergeometric functions to Laguerre Polynomials in $\Gamma^{\text{2bdy}}_{m n}$  we can use the Hardy-Hille Formula (Lemma.\ref{Hardy-Hille}). Then we get back to Eq.\ref{Gamma}:
\begin{equation}
\begin{aligned}
\Gamma_{m n} =& \frac{1}{2} \iint_{0}^{\infty} d q_{1} d q_{2} e^{-\frac{q_{1}+q_{2}}{2}} q_{1} q_{2} L_{m}\left(q_{1}\right) L_{n}\left(q_{2}\right)\times\left(_{0} \mathbf{F}_{1}\left(1,-\frac{q_{1} q_{2}}{4}\right)+_{0} \mathbf{F}_{1}\left(2,-\frac{q_{1} q_{2}}{4}\right)\right)\\
=& \iint_{0}^{\infty} d q_{1} d q_{2} e^{-(q_{1}+q_{2})} q_{1} q_{2} L^{(0)}_{m}\left(q_{1}\right) L^{(0)}_{n}\left(q_{2}\right)\times\frac{1}{2} e^{\frac{q_{1}+q_{2}}{2}} \left(_{0} \mathbf{F}_{1}\left(1,-\frac{q_{1} q_{2}}{4}\right)+_{0} \mathbf{F}_{1}\left(2,-\frac{q_{1} q_{2}}{4}\right)\right)\\
=& \iint_{0}^{\infty} d q_{1} d q_{2} e^{-(q_{1}+q_{2})} q_{1} q_{2} L^{(0)}_{m}\left(q_{1}\right) L^{(0)}_{n}\left(q_{2}\right) \times\sum_{k=0}^{\infty} (-1)^{k} \left(L_{k}^{(0)}(q_{1}) L_{k}^{(0)}(q_{2})+\frac{2}{k+1} L_{k}^{(1)}(q_{1}) L_{k}^{(1)}(q_{2}) \right)\\
=& \sum_{k=0}^{\infty} (-1)^{k} \left[\int_{0}^{\infty} d q_{1} e^{-q_{1}} q_{1} L^{(0)}_{m}\left(q_{1}\right) L_{k}^{(0)}(q_{1})  \times \int_{0}^{\infty} d q_{2} e^{-q_{2}} q_{2} L^{(0)}_{n}\left(q_{2}\right) L_{k}^{(0)}(q_{2} \right.\\
& \quad \left. +\frac{2}{k+1}\int_{0}^{\infty} d q_{1} e^{-q_{1}} q_{1} L^{(0)}_{m}\left(q_{1}\right) L_{k}^{(1)}(q_{1}) \times \int_{0}^{\infty} d q_{2} e^{-q_{2}} q_{2} L^{(0)}_{n}\left(q_{2}\right) L_{k}^{(1)}(q_{2}) \right]
\label{Gamma2}
\end{aligned}
\end{equation}
As we can see, $q_1$ and $q_2$ are symmetric in Eq.\ref{Gamma2} and the integrals for $q_1$ and $q_2$ can be separated from each other. So we can focus on one of them for now:
\begin{equation}
\begin{aligned}
& \int_{0}^{\infty} d q_{1} e^{-q_{1}} q_{1} L^{(0)}_{m}\left(q_{1}\right) L_{k}^{(0)}(q_{1})\\
=&\int_{0}^{\infty} d q_{1} e^{-q_{1}} q_{1} \left[L^{(1)}_{m}(q_{1})-L^{(1)}_{m-1}(q_{1}) \right]  \left[ L_{k}^{(1)}(q_{1})-L_{k-1}^{(1)}(q_{1})  \right]\\
=&\int_{0}^{\infty} d q_{1} e^{-q_{1}} q_{1} \left[ L^{(1)}_{m}(q_{1}) L_{k}^{(1)}(q_{1})-L^{(1)}_{m}(q_{1}) L_{k-1}^{(1)}(q_{1}) - L^{(1)}_{m-1}(q_{1})L_{k}^{(1)}(q_{1})+ L^{(1)}_{m-1}(q_{1})L_{k-1}^{(1)}(q_{1}) \right]\\
=&\frac{(m+1)!}{m!} \delta_{m,k} - \frac{(m+1)!}{m!} \delta_{m,k-1} -\frac{m!}{(m-1)!} \delta_{m-1,k} + \frac{m!}{(m-1)!} \delta_{m-1,k-1} \\
=&(m+1) \delta_{m,k} - (m+1) \delta_{m,k-1} -m \delta_{m-1,k} + m \delta_{m-1,k-1} 
\label{p1int}
\end{aligned}
\end{equation}
where we have considered the orthogonality of the generalized Laguerre polynomials, the recurrence relations of the generalized Laguerre polynomials. Also the second integrals could be written as:
\begin{equation}
\begin{aligned}
&\int_{0}^{\infty} d q_{1} e^{-q_{1}} q_{1} L^{(0)}_{m}\left(q_{1}\right) L_{k}^{(1)}(q_{1}) =\int_{0}^{\infty} d q_{1} e^{-q_{1}} q_{1} [L^{(1)}_{m}(q_{1})-L^{(1)}_{m-1}(q_{1})]L_{k}^{(1)}(q_{1})\\
=&\int_{0}^{\infty} d q_{1} e^{-q_{1}} q_{1}[L^{(1)}_{m}(q_{1})L_{k}^{(1)}(q_{1})-L^{(1)}_{m-1}(q_{1})L_{k}^{(1)}(q_{1})]=\frac{(m+1)!}{m!} \delta_{m,k} -\frac{m!}{(m-1)!} \delta_{m-1,k} \\
=&(m+1) \delta_{m,k} -m \delta_{m-1,k}
\label{p2int}
\end{aligned}
\end{equation}
The reason why we have not simplified the Kronecker Deltas above is that there is a coefficient  $(-1)^{k}$ that could influence the result significantly. Thus,

\begin{small}
\begin{equation}
\begin{aligned}
\Gamma_{m n} &= \sum_{k=0}^{\infty} (-1)^{k} \{[(m+1) \delta_{m,k} - (m+1) \delta_{m,k-1} -m \delta_{m-1,k} + m \delta_{m-1,k-1} ]\\
&\ \ \ \times[(n+1) \delta_{n,k} - (n+1) \delta_{n,k-1} -n \delta_{n-1,k} + n \delta_{n-1,k-1} ]+\frac{2}{k+1}[(m+1) \delta_{m,k} -m \delta_{m-1,k}][(n+1) \delta_{n,k} -n \delta_{n-1,k}]\}\\
&= \sum_{k=0}^{\infty} (-1)^{k}\{[(m+1)(n+1) \delta_{m,k} \delta_{n,k}+(m+1)n \delta_{m,k}\delta_{n-1,k-1}+(m+1) (n+1)\delta_{m,k-1} \delta_{n,k-1}\\
&\ \ \ +(m+1)n\delta_{m,k-1} \delta_{n-1,k}+m(n+1)\delta_{m-1,k}\delta_{n,k-1}+mn\delta_{m-1,k}\delta_{n-1,k} +m(n+1)\delta_{m-1,k-1} \delta_{n,k}\\
&\ \ \ +mn\delta_{m-1,k-1}\delta_{n-1,k-1}]+\frac{2}{k+1}[(m+1)(n+1) \delta_{m,k} \delta_{n,k}+mn\delta_{m-1,k} \delta_{n-1,k}]\}
\label{Gamma3}
\end{aligned}
\end{equation}
\end{small}
Then we can combine the Kronecker Deltas and get:
\begin{equation}
\begin{aligned}
\Gamma_{m n} &=(-1)^{m}\{[(m+1)^2 \delta_{m,n}+m(m+1) \delta_{m,n}-(m+1)^2 \delta_{m,n}-(m+1)(m+2)\delta_{m,n-2}\\
&\ \ \ -m(m-1) \delta_{m,n+2}-m^2 \delta_{m,n}+m(m+1)\delta_{m,n}+m^2\delta_{m,n}] +2[(m+1)\delta_{m,n}-m\delta_{m,n}]\}\\
&=(-1)^{m}\{[(m+1)^2+m(m+1)-(m+1)^2-m^2+m(m+1)+m^2+2(m+1)-2m]\delta_{m,n}\\
&\ \ \ -(m+1)(m+2)\delta_{m,n-2}-m(m-1) \delta_{m,n+2}\}\\
&=(-1)^{m}[2(m^2+m+1)\delta_{m,n}-(m+1)(m+2)\delta_{m,n-2}-m(m-1) \delta_{m,n+2}]=\Gamma^{\text{2bdy}}_{m n} \qedhere
\label{Gamma4}
\end{aligned}
\end{equation}
\end{proof}

\section{Anti-symmetric FQH three-body wavefunctions}

\begin{table}[]
\centering
\renewcommand{\arraystretch}{2.5}
\begin{tabular}{||c|c|c||}
\hline
$(k,l)$ & $\alpha = 2k + 3 l$ &  $|\Psi_{kl} \rangle= \sum_{n_1, n_2} \alpha^{n_1, n_2} | n_1 ,n_2\rangle $\\ \hline 
(0,1) & 3 &  $\frac{1}{2}|3,0\rangle-\frac{\sqrt{3}}{2}|1,2\rangle$\\ \hline
(1,1) & 4 &  $-\frac{\sqrt{5}}{4}|5,0\rangle+\frac{1}{2 \sqrt{2}}|3,2\rangle+\frac{3}{4}|1,4\rangle$\\ \hline
(0,2) & 6 &  $\frac{\sqrt{3}}{4}|5,1\rangle-\frac{1}{2} \sqrt{\frac{5}{2}}|3,3\rangle+\frac{\sqrt{3}}{4}|1,5\rangle$\\ \hline
(2,1)& 7 &  $-\frac{\sqrt{21}}{8}|7,0\rangle+\frac{1}{8}|5,2\rangle+\frac{\sqrt{15}}{8}|3,4\rangle+\frac{3 \sqrt{3}}{8}|1,6\rangle$\\ \hline
(1,2)& 8 &  $\frac{3}{4 \sqrt{2}}|7,1\rangle-\frac{\sqrt{7}}{4 \sqrt{2}}|5,3\rangle-\frac{\sqrt{7}}{4 \sqrt{2}}|3,5\rangle+\frac{3}{4 \sqrt{2}}|1,7\rangle$\\ \hline
(0,3)& 9 &  $\frac{1}{16}|9,0\rangle-\frac{3}{8}|7,2\rangle+\frac{3 \sqrt{7}}{8 \sqrt{2}}|5,4\rangle-\frac{\sqrt{21}}{8}|3,6\rangle+\frac{3}{16}|1,8\rangle$\\ \hline
(3,1)& 9 &  $-\frac{\sqrt{21}}{8}|9,0\rangle+\frac{\sqrt{3}}{4 \sqrt{2}}|5,4\rangle+\frac{1}{2}|3,6\rangle+\frac{\sqrt{21}}{8}|1,8\rangle$\\ \hline
\end{tabular}%
\caption{\textbf{The anti-symmetric three-body wavefunctions expanded in the basis of $| n_1, n_2 \rangle$.} The values of $\alpha^{n_1, n_2}$ can be found easily by looking at the coefficient of the corresponding basis \cite{yang2018three}. For example, $\alpha^{5, 0}$ is the coefficient of $|5,1\rangle$, i.e. $-\frac{\sqrt{5}}{4}$.}
\label{tabexpansion}
\end{table}

The complete orthonormal basis states of t anti-symmetric FQH three-body wavefunctions provided in Ref.\cite{laughlin1983quantized} are written as:
\begin{equation}
|\Psi_{kl} \rangle =\frac{1}{\left[2^{6 l+4 k+1}(3 l+k) ! k ! \pi^{2}\right]^{1 / 2}}\left[\frac{\left(z_{a}+i z_{b}\right)^{3 l}-\left(z_{a}-i z_{b}\right)^{3 l}}{2 i}\right]\left(z_{a}^{2}+z_{b}^{2}\right)^{k} e^{-(1 / 4)\left(\left|z_{a}\right|^{2}+\left|z_{b}\right|^{2}\right)}
\end{equation}
the expansion of which in the $n_1, n_2$ basis can be found in Table.\ref{tabexpansion}.


\end{document}